\providecommand{\Comments}{1}
\definecolor{Gred}{RGB}{219, 50, 54}
\definecolor{Ggreen}{RGB}{60, 186, 84}
\definecolor{Gblue}{RGB}{72, 133, 237}
\definecolor{Gyellow}{RGB}{204, 204, 0}
\definecolor{Gpurple}{RGB}{204, 0, 204}
\definecolor{Gorange}{RGB}{255, 200, 120}
\definecolor{Gbrown}{RGB}{0, 204, 204}
\newcommand{\mytodo}[1]{\ifnum\Comments=1{#1}\fi}
\let\epsilon\varepsilon
\newcommand{\AlgName}[1]{{\normalfont{\texttt{#1}}}}
\newcommand{\AlgCompare}{\AlgName{Compare}}
\newcommand{\AlgEstimateRatio}{\AlgName{EstimateRatio}}
\newcommand{\AlgQuicksortClustering}{\AlgName{QuicksortClustering}}
\newcommand{\AlgBuildEstimationForest}{\AlgName{BuildEstimationForest}}
\newcommand{\AlgGenerateWeights}{\AlgName{GenerateWeights}}
\newcommand{\AlgGenerateWeightsRec}{\AlgName{GenerateWeightsRec}}
\newcommand{\AlgClusterSort}{\AlgName{ClusterSort}}
\newcommand{\AlgBalancedEstimateRatio}{\AlgName{BalancedEstimateRatio}}
\newcommand{\AlgGetGeometric}{\AlgName{GetGeometric}}
\newcommand{\AlgGetEstimatesOnAllSlates}{\AlgName{GetEstimatesOnAllSlates}}
\newcommand{\AlgBuildForestNonAdaptive}{\AlgName{BuildBalancedEstimationForest}}
\DeclareMathOperator*{\myassign}{=}
\newcommand{\orderingerror}{\varepsilon_o}
\newcommand{\cluster}{\gamma}
\DeclareMathOperator{\geom}{Geom}
\DeclareMathOperator{\bin}{Bin}
\DeclareMathOperator{\identityperm}{id}
\title{Learning Multinomial Logits in $O(n \log n)$ time}
\author{Flavio Chierichetti\\
Sapienza University of Rome\\
\texttt{\large flavio@di.uniroma1.it}
\and
Mirko Giacchini\\
Sapienza University of Rome\\
\texttt{\large giacchini@di.uniroma1.it}
\and
Ravi Kumar\\
Google Research, Mountain View\\
\texttt{\large ravi.k53@gmail.com}
\and
Silvio Lattanzi\\
Google Research, Barcelona\\
\texttt{\large silviol@google.com}
\and
Alessandro Panconesi\\
Sapienza University of Rome\\
\texttt{\large ale@di.uniroma1.it}
\and
Erasmo Tani\\
Sapienza University of Rome\\
\texttt{\large tani@di.uniroma1.it}
\and
Andrew Tomkins\\
Google Research, Mountain View\\
\texttt{\large atomkins@gmail.com}
}
\date{}
\begin{document}

\maketitle
\thispagestyle{empty}
\addtocounter{page}{-1}

\begin{abstract}
A Multinomial Logit (MNL) model is composed of a finite universe of items $[n]=\{1,\dots, n\}$, each assigned a positive weight. A query specifies an admissible subset---called a \emph{slate}---and the model chooses one item from that slate with probability proportional to its weight. This query model is also known as the \emph{Plackett--Luce model} or \emph{conditional sampling} oracle in the literature. Although MNLs have been studied extensively, a basic computational question remains open: given query access to slates, how efficiently can we learn weights so that, for \emph{every} slate, the induced choice distribution is within total variation distance $\varepsilon$ of the ground truth? This question is central to MNL learning and has direct implications for modern recommender system interfaces.

We provide two algorithms for this task, one with adaptive queries and one with non‑adaptive queries. Each algorithm outputs an MNL $\hat{M}$ that induces, for each slate $S$, a distribution $\hat{M}_S$ on $S$ that is within $\varepsilon$ total variation distance of the true distribution. Our adaptive algorithm makes $O\left(\tfrac{n}{\varepsilon^{3}}\log n\right)$
queries, while our non-adaptive algorithm makes $O\left(\tfrac{n^{2}}{\varepsilon^{3}}\log n \log\tfrac{n}{\varepsilon}\right)$ queries.
Both algorithms query only slates of size two and run in time proportional to their query complexity.

We complement these upper bounds with  lower bounds of $\Omega\left(\tfrac{n}{\varepsilon^{2}}\log n\right)$ for adaptive queries and $\Omega\left(\tfrac{n^{2}}{\varepsilon^{2}}\log n\right)$ for non‑adaptive queries, thus proving that our adaptive algorithm is optimal in its dependence on the support size $n$, while the non-adaptive one is tight within a $\log n$ factor.
\end{abstract}

\newpage

\hypersetup{hidelinks}
\tableofcontents
\hypersetup{
  colorlinks=false,            %
  pdfborder={0 0 1},           %
  linkbordercolor={1 0 0},     %
  citebordercolor={0 1 0},     %
  urlbordercolor={0 1 1},
}
\thispagestyle{empty}
\addtocounter{page}{-1}

\newpage

\section{Introduction}

Multinomial Logit models (MNLs), also known as softmax or Plackett--Luce models, are widely used to model choice behavior in machine learning and economics. They describe winning distributions over alternatives parameterized by item weights: given a universe $U$ of items, an MNL $M$ assigns each $i \in U$ a weight $w_i$, and for any non-empty subset $S \subseteq U$, defines $M_S(i) = w_i / \sum_{j \in S} w_j$ as the probability of selecting $i$ from $S$. Such models underlie diverse applications, from token prediction in large language models to content selection in recommender systems, where they capture how preferences depend on the available slate.  

Most prior work focuses on estimating MNL parameters or the induced distribution on the universal slate $S=U$, which suffices for identifying the globally most preferred items or obtaining a top-$k$ ranking. In contrast, we address the more challenging task of approximating the MNL distribution for \emph{all} slates, motivated by practical needs in modern recommender systems.  Consider a platform such as Netflix offering choices of movies. It is now broadly understood that simply displaying a very long list of top titles does not provide a compelling user experience. Instead, these platforms define a large and rapidly changing number of relevant subsets of the entire movie catalog: action movies, foreign movies, movies similar to a particular anchor movie the user recently watched, and so forth. The interface then shows a sequence of carousels, perhaps a carousel of ``top movies'' followed by ``movies similar to Ponyo'' then ``new arrivals'', each one ordered to show the user's best options from the class. To drive such an interface, it is important to approximate the winning distribution for \emph{every} one of these subsets simultaneously, to be ready to display it when needed. As the possible subsets of interest are constantly updated by the platform, it is critical to approximate the MNL's output on all possible subsets $S \subseteq U$. 

Furthermore, for a particular subset, such as that containing all action movies, the platform will not show a single option, but will instead show a carousel with a moderate number of suggestions. While some previous work focused exclusively on ranking the items, practical recommender systems require scoring them for at least two key reasons. First, the number of items shown should depend on their scores: if there are four high-scoring movies, it might be better to only display those, rather than adding the next six, which may have very little chance of being selected. Second, the interface might have more richness than just the carousel itself. For instance, if the top movie of a carousel has a much higher score than the next one, the platform might feature this movie more prominently, for example, by using a specialized rendering or by allocating more space to it. Hence, it is crucial to obtain estimates of the weights that provide accurate winning distributions on all slates.
 
\paragraph{MNL and MNL Learning.}
A \emph{multinomial logit (MNL) model} supported on the universe $U=[n]=\{1,\dots, n\}$ is specified by a set $\{w_1, \ldots,w_n\}$ of $n$ positive values called \emph{weights}. 
A \emph{slate} is a non-empty subset of $[n]$.  An MNL $M$, for any given slate $S \subseteq [n]$, induces a conditional distribution denoted $M_S$ whose support is $S$ and where the probability of each item $i \in S$ is given by:\footnote{The terminology we adopt comes from the Economics literature \citep{t09}, this is called a \emph{logit} model because if we let $w_i=e^{\theta_i}$, then $M_S(i)=\text{softmax}(S)_i = e^{\theta_i} / \sum_{j\in S}e^{\theta_j}$.}
\[
    M_S(i) = \frac{w_i}{\sum_{j \in S} w_j}.  
\]
An MNL $M$ can be accessed by a $\maxsample$  oracle, which operates as follows: given a slate $S$, $\maxsample(S)$ returns $i \in S$ chosen according to the distribution $M_S$. Given MNLs $M$ and $M'$, we define two notions of distance between them:
\[
    d_{\infty}(M, M') := \max_{\substack{S\subseteq[n]\\S\neq \varnothing}} \,\norm{M_{S} - M'_{S}}_\infty
\quad \mbox{ and }\quad
    d_1(M, M') := \max_{\substack{S\subseteq[n]\\S\neq \varnothing}} \,\norm{M_{S} - M'_S}_1.
\]

In this paper we obtain algorithms that approximate an unknown MNL $M$ in $d_1$ distance, while our lower bounds apply even to the less challenging problem of obtaining estimates with small $d_\infty$ distance.
 
\begin{definition}[MNL Learning Problem]
Given $\maxsample$ oracle access to an MNL $M$ and an $\varepsilon \in (0, 1)$, the \emph{MNL learning problem} is to output an MNL $\hat{M}$ such that 
$d_1(M, \hat{M}) \leq \epsilon$. The MNL produced in output is represented using the logarithms of its weights.\footnote{Representing an MNL using the logarithms of its weights is standard in the ML and economics community \citep{sru20,t09}. Moreover, with a full representation of the $\hat{w}_i$'s, the weights could require $\Omega(n^2)$ bits just to be stored. For instance, consider the MNL on $[n]$ with weights $w_i=2^i$. Since $\frac{w_{i+1}}{w_i + w_{i+1}} = \frac{2}{3}$, any MNL $\hat{M}$ solving the MNL learning problem must satisfy $\hat{w}_{i+1} \geq \hat{w}_{i} \cdot (2-9\epsilon)$. Therefore, each weight $\{\hat{w}_{n/2}, \dots, \hat{w}_n\}$ requires $\Omega(n)$ bits for a total of $\Omega(n^2)$ bits. Hence, requiring that an algorithm outputs the weights, rather than their logarithms, would rule out the possibility of constructing any algorithm that runs in time $o(n^2)$. Other compact representations of the weights are also possible. %
} 
\end{definition}

\paragraph{Main Results.}
In this paper, we study algorithms for the MNL learning problem. We obtain two algorithms, one using adaptive queries and the other using non‑adaptive queries. Both algorithms query only slates of size two and run in time proportional to their query complexity.
Our adaptive algorithm makes $O\left(\tfrac{n}{\varepsilon^{3}}\log n\right)$ queries; we  give a lower bound of $\Omega({n\over \varepsilon^2} \log n)$ queries.  Summarizing:
\begin{theorem}[Informal]
    For any constant $\epsilon>0$, the complexity of learning an MNL within $d_1$-error $\varepsilon$ by making $\maxsample$ queries adaptively is $\Theta(n\log n)$.
\end{theorem}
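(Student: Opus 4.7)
The theorem asserts matching $\Theta(n\log n)$ bounds for constant $\varepsilon$, so I would prove the upper and lower bounds separately.

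For the upper bound, the plan is to reduce MNL learning to a comparison-based sorting problem with noisy queries. Each call $\maxsample(\{i,j\})$ returns $i$ with probability $w_i/(w_i+w_j)$, so a constant number of repetitions acts as a noisy comparator whose reliability depends on how close $w_i$ and $w_j$ are. I would first run a noisy-quicksort style procedure that picks a random pivot at each recursive level and, using $O(1)$ queries per item against the pivot, classifies the item as clearly lighter, clearly heavier, or approximately equal to the pivot. The ``approximately equal'' items get attached to the pivot in a clustering forest, and their weights are estimated directly against the pivot using $O(1)$ further queries; the rest recurse. A standard randomized-quicksort analysis gives $O(n\log n)$ comparisons in expectation, hence $O(n\log n)$ queries overall for constant $\varepsilon$. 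Weights are then chained along pivots to recover a global weight vector up to scaling.

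To lift this pairwise accuracy to a uniform $d_1$ guarantee across all slates, I would argue that for any slate $S$ with maximum-weight item $i^\star$, items with $w_j \ll w_{i^\star}$ contribute negligibly to $\|M_S - \hat M_S\|_1$, since the mass assigned to them under both $M_S$ and $\hat M_S$ is already small. The remaining ``competitive'' items in $S$ lie in the same region of the sorted order as $i^\star$, so their ratios against $i^\star$ are estimated accurately in the forest. Summing the resulting perturbations in the normalizer $\sum_{j\in S}\hat w_j$ and taking the maximum over $S$ would give the uniform $\varepsilon$ bound.

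For the lower bound, I would use a Fano-style argument. Consider the family of MNLs on $[n]$ where the weight multiset is fixed as $\{(1+c\varepsilon)^i\}_{i\in[n]}$ but the assignment of weights to items is an unknown permutation $\pi$. Swapping two adjacent-in-rank items changes $M_{\{i,j\}}$ in $\ell_\infty$ by $\Omega(\varepsilon)$, so any algorithm achieving $d_1$-error $\varepsilon/3$ must recover $\pi$ on all but $o(n)$ adjacent transpositions. Since each $\maxsample$ query returns at most one bit of information and pinning down $\pi$ up to a small number of adjacent swaps requires $\Omega(n\log n)$ bits (by a covering argument on permutations in Kendall-tau distance), $\Omega(n\log n)$ queries are required even adaptively.

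The main obstacle I anticipate is in the upper-bound analysis: ensuring that noisy sorting plus chained estimation produces weight estimates good enough for \emph{every} slate simultaneously, not just for pairs. The delicate case is a slate containing many items with nearly identical weights, where several ratio estimates must be simultaneously precise and chaining errors along the pivot path could compound. Making this work requires carefully calibrating the ``approximately equal'' threshold in the sort so that the sorting inaccuracy for close items and the chaining error along pivots are jointly balanced within the $\varepsilon$ budget, and then showing that only $O(1)$ pivot levels matter for any given slate.
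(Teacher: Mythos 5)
Your plan correctly identifies the right ingredients (noisy sorting to cluster, a forest of ratio estimates, and the observation that for any slate only the comparable-weight items matter), but both the upper- and lower-bound arguments have a real gap where the paper's technical novelty lives.

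For the upper bound, the step that fails is the chaining of ratio estimates through the quicksort pivot tree. You flag the right worry --- "only $O(1)$ pivot levels matter for any given slate" --- but this is false for the BST topology that randomized quicksort produces. Take weights $w_i = 2^i$ on $[n]$, so every item is its own cluster; the pivot tree is then a (near-)balanced BST, and the sorted-adjacent items $n/2$ and $n/2+1$ sit at distance $\Theta(\log n)$ in that tree. With a constant per-edge multiplicative error $1\pm\varepsilon$, the chained estimate of $w_{n/2+1}/w_{n/2}$ has error $(1\pm\varepsilon)^{\Theta(\log n)} = n^{\Theta(1)}$, so the estimated distribution on the slate $\{n/2, n/2+1\}$ is garbage even though the true answer is $(1/3, 2/3)$. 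Driving the per-edge error down to $\varepsilon/\log n$ would cost an extra $\log^2 n$ factor, missing the claimed bound. The paper's resolution is precisely a different, adaptively-shaped forest: items are first grouped into constant-width clusters, then cluster centers are linked into unbounded-arity trees whose arity is chosen on the fly so that any two items at hop-distance $> 5$ have a weight ratio small enough to be treated as $0$ (Definition of $(t,\varepsilon)$-estimation-forest, and the argument that a path or a complete binary tree cannot have this property). That is the missing idea you would need to supply.

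For the lower bound, the claim "each $\maxsample$ query returns at most one bit of information" is incorrect: a query to a size-$k$ slate has response entropy up to $\log k$, and under your construction with arbitrary slates the naive counting gives only $\Omega(n)$. You would need to argue (e.g., via per-query KL divergence or conditional entropy under the geometric-weight distribution) that arbitrary slates are not more informative than pairs, which is a non-trivial step you have not addressed. The paper avoids this by building a \emph{matching pseudo-MNL}: items are paired, pairs are hierarchically separated in weight, and the pairing is a uniformly random permutation, so any query to any slate collapses to either a deterministic answer or a single Bernoulli toss of one hidden coin. This reduces cleanly to the $\Omega(n\log n/\varepsilon^2)$ lower bound for approximate top-$k$ coin selection (Kalyanakrishnan et al.), handling arbitrary-slate queries for free. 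Your reduction to permutation recovery is a plausible alternative route, but the information accounting needs to be redone before it constitutes a proof.
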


Our non-adaptive algorithm makes $O\left(\tfrac{n^{2}}{\varepsilon^{3}}\log n \log\tfrac{n}{\varepsilon}\right)$ queries; this is complemented by a lower bound of $\Omega({{n^2}\over \varepsilon^2}\log n)$.  Summarizing:

\begin{theorem}[Informal]
    For any constant $\epsilon>0$, the complexity of learning an MNL within $d_1$-error $\varepsilon$ by making $\maxsample$ queries non-adaptively is between $O(n^2\log^2 n)$ and $\Omega(n^2\log n)$.
\end{theorem}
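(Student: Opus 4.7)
The plan for the upper bound is to non-adaptively emulate the adaptive algorithm: pre-query each of the $\binom{n}{2}$ pairs $\{i,j\}$ exactly $T = \Theta\!\left(\tfrac{\log n\cdot\log(n/\varepsilon)}{\varepsilon^{3}}\right)$ times, which already gives the claimed $O(n^2\log^2 n)$ queries for constant $\varepsilon$. From the resulting empirical ratios $\hat{p}_{ij}$, I would construct a \emph{balanced estimation forest}: a directed forest on $[n]$ in which every edge connects two items whose empirical log-ratio is $O(1)$ (so its estimate is reliable), and whose root-to-leaf paths all have length $O(\log n)$. Each weight estimate $\hat{w}_i$ is then the product of the empirical pair-ratios along the path from $i$ to its root. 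Logarithmic depth prevents multiplicative errors from compounding past $O(\varepsilon)$, and a standard calculation converts the resulting uniform multiplicative approximation of the weights into a $d_1$-guarantee on every slate.

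\paragraph{Building the forest.} To construct the forest without adaptive feedback, I would bucket items by empirical magnitude using a small set of pivot items: each item's bucket can be read off from a few of its pre-computed $\hat{p}_{ij}$ values. Within each bucket, parents are assigned so that each subtree halves from parent to child, enforcing depth $O(\log n)$ in the spirit of a balanced search tree. The extra $\log(n/\varepsilon)$ factor in the per-pair sample budget is exactly what is needed so that, after a union bound over all $\binom{n}{2}$ pairs, every edge of the resulting forest has per-edge log-scale error $O(\varepsilon/\log n)$; summed over the $O(\log n)$-length paths, this keeps total weight error within $O(\varepsilon)$.

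\paragraph{Non-adaptive lower bound.} For the $\Omega(n^2\log n)$ lower bound, I would use a multi-hypothesis Le Cam argument. Consider a family of hard MNLs containing a baseline (all weights equal) and $\Theta(n^2)$ perturbations, each obtained by shifting a single pair's ratio by $\Theta(\varepsilon)$ (e.g. nudging $w_i$ up and $w_j$ down by $\Theta(\varepsilon)$). A correct learner must identify the perturbed pair since otherwise it errs by $\Omega(\varepsilon)$ on the corresponding size-two slate. Because the query schedule is fixed in advance, an adversary can always place the perturbation on the pair that receives the fewest queries. A standard KL computation shows that $\Omega(\log n/\varepsilon^2)$ queries on that pair are required to distinguish it from the baseline with the error probability demanded by a union bound over the $\binom{n}{2}$ candidate locations; hence every pair must receive this many queries, for a total of $\Omega(n^2\log n/\varepsilon^2)$.

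\paragraph{Main obstacle.} The most delicate step is the forest construction: it must simultaneously (i) be decidable from the pre-computed estimates alone, (ii) have $O(\log n)$ depth, and (iii) contain only edges whose true ratios are bounded so the per-edge estimate is accurate. Items with outlier weights are the main challenge, as they threaten to form long chains; bucketing them by magnitude using only noisy pairwise estimates is what requires the additional $\log(n/\varepsilon)$ factor in the per-pair sample budget, and carrying this bucketing out non-adaptively is where I expect most of the technical work to lie.
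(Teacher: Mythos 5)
Your upper-bound sketch has an internal tension that you flag but do not resolve, and it is precisely the crux of the problem. You want every edge of the forest to connect items with $O(1)$ empirical ratio \emph{and} all root-to-leaf paths to have length $O(\log n)$. For an instance like $w_i = 2^i$, these two conditions force the forest to be disconnected: a connected tree in which adjacent items have bounded ratio must be a path of length $\Theta(n)$. So you need a disconnected forest, and then you need a way to compare items across components without any edge between them. The paper resolves exactly this via a careful thresholding argument (treat ratios above roughly $n/\varepsilon$ as infinite and show this contributes only $O(\varepsilon)$ in $d_1$), and the forest it builds has \emph{constant} depth with unbounded arity rather than $O(\log n)$ depth. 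That design choice matters quantitatively: if depth is $d = \Theta(\log n)$ and you want total multiplicative error $O(\varepsilon)$ after composing along a path, each edge needs error $O(\varepsilon/\log n)$, which (with a union bound over $\binom{n}{2}$ edges) costs $\Theta(\log^3 n/\varepsilon^2)$ samples per pair; your budget of $\Theta(\log^2 n/\varepsilon^3)$ is a $\log n$ factor too small once $\varepsilon$ is a constant and $n$ is large. The paper instead takes the route of \emph{simulating a bounded-per-pair adaptive algorithm}: it first designs an adaptive algorithm that never queries any pair more than $O(\log(n/\varepsilon)\log(n/\delta)/\varepsilon^3)$ times (using a Quicksort-style clusterer, a disconnected constant-depth forest with thresholding, and a spread-the-queries subroutine $\AlgBalancedEstimateRatio$), then pre-queries every pair that many times and replays the adaptive algorithm against the cached answers.

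Your lower-bound construction does not give the stated bound. With all baseline weights equal to $1$ and a single pair perturbed to $1\pm\varepsilon$, the perturbation is not localized to the slate $\{i,j\}$: on the full slate, item $i$ wins with probability $(1+\varepsilon)/(n+O(\varepsilon))$, shifted from the baseline by $\Theta(\varepsilon/n)$. About $\Theta(n\log n/\varepsilon^2)$ non-adaptive queries to the full slate therefore estimate every weight to additive error $O(\varepsilon)$ and recover an MNL that is $\varepsilon$-accurate on all slates, including $\{i,j\}$ — so the learner needs $\tilde O(n/\varepsilon^2)$ queries on your family, not $\Omega(n^2\log n/\varepsilon^2)$. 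The missing ingredient is severing the channel by which large-slate queries leak information about every pair. The paper's construction (\Cref{def:matching-pseudo-mnl}) separates the $n/2$ pairs by enormous weight gaps, so any query to a slate $S$ deterministically returns an element of the single highest pair intersecting $S$ and carries no information about lower pairs; only a query that ``highlights'' a pair (contains both its elements and nothing heavier) yields a biased coin flip about it. Combined with a \emph{random permutation} of items into pairs, a fixed non-adaptive query schedule highlights the relevant low-index pairs only an $O(1/n)$ fraction of the time (\Cref{lem:low-highlighting}), which is what produces the extra factor of $n$ over the adaptive bound. A Le~Cam/KL argument is a natural instinct, but without the weight separation the KL contribution does not decompose over pairs, and your adversary's ability to target the least-queried pair is neutralized by full-slate leakage.
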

As we mentioned above, the lower bounds described also hold for the weaker $d_\infty$ distance.

Our results are surprising: for a constant $\varepsilon$, our seemingly harder problem can be solved as fast as (noisy) sorting. Furthermore, our lower bounds hold for unit-time oracle queries of any slate size. Hence, restricting the algorithms to slates of size two incurs no loss in efficiency. 

\paragraph{Technical Challenges.}
Existing methods, especially ones that approximate the winning distribution over the universal slate $[n]$, do not seem to apply to our problem.  As a simple example of the difficulty, consider an algorithm that guarantees an $\ell_1$-estimate of the full slate distribution within an error of $\epsilon\in(0,1/2)$. Consider now the MNL on $\{1, 2, 3\}$ with weights $w_1=1-\epsilon$, $w_2=w_3=\epsilon/2$.  Suppose the algorithm returns the estimate $\hat{w}_1=1-\epsilon$, $\hat{w}_2=\frac{3\epsilon}{4}$, $\hat{w}_3=\frac{\epsilon}{4}$; clearly, $\norm{w-\hat{w}}_1 = \frac{\epsilon}{2} \leq \epsilon$. But, $\left|\frac{w_2}{w_2+w_3}- \frac{\hat{w}_2}{\hat{w}_2+\hat{w}_3}\right| \geq \frac{1}{4}$, and therefore the algorithm cannot guarantee small error on the slate $\{2, 3\}$. Similarly, as we discuss in Appendix~\ref{sec:additive-approximation-is-not-sufficient}, prior work that additively estimates the winning distributions on all size-two slates cannot be used to obtain a good approximation on every slate.

It is not difficult to obtain a cubic time algorithm for our problem.  Indeed, consider  the naive algorithm that works as follows. For each pair $\{i,j\}$ of items in the universe, estimate $w_i/w_j$ to within a $(1 \pm \epsilon)$-multiplicative error, or declare that their ratio (or its inverse) is larger than $\frac n \epsilon$. One can easily show that this algorithm will need to query each pair $\approx \frac{n \log n}{\epsilon^3}$ times to guarantee these bounds; the total cost would then be $\approx \frac{n^3 \log n}{\epsilon^3}$. From the output of this algorithm, it is easy %
to approximate the output distribution for any slate.\footnote{Indeed, if the items of this slate have weights that are within a $\frac n\epsilon$ factor of each other, the $(1\pm \epsilon)$-approximation error will make it possible to approximate the winning probability of any item to within a $(1 \pm O(\epsilon))$-factor (so that the total variation error will be at most $O(\epsilon)$). If, instead, the slate contains pairs $\{i,j\}$ of items such that $w_i / w_j < \frac{\epsilon} n$, then the lighter item $i$  will have a probability of winning in the slate not larger than $O(\frac{\epsilon}n)$, and hence we can estimate its winning probability to be zero---given that there are at most $n-1$ such light items in a slate, the total variation error is no larger than $O(\epsilon)$.}

With some effort, this algorithm can be improved.  The idea is to carefully control the pairs of items, querying only pairs that are nearby in the order induced by the weights. To avoid querying too many nearby pairs, one can first cluster items whose weights are within a constant factor, in a query-efficient manner, and then select a center from each cluster.  One can determine the weight ratio of each item to its cluster center, and the weight ratios of successive items in the sorted list of cluster centers. This method can be shown to produce an algorithm that compares $O(n)$ pairs of items, with each such comparison performing $\approx \frac{n}{\epsilon^3} \log^3 n$ queries.  While this yields a quadratic time algorithm, it is unclear how this can be further improved to being quasi-linear.

\paragraph{Overview of Methods.}

We construct our quasi-linear adaptive algorithm by building on the clustering idea described above. We first partition the universe into clusters of similar-weight items and select a center for each cluster. We then estimate the ratio of the weights $w_{i}/w_c$ for every item $i$ in the cluster with center $c$. Finally, we construct a forest on the cluster centers, where every edge is labeled with an estimate of the ratio between the weights of the two centers it connects. We call this data structure the \emph{estimation-forest}. This allows us to obtain estimates of the ratio of the weights for arbitrary pairs of items by combining these estimates along paths in the forest.

Following this strategy, the error compounds multiplicatively along the paths. To circumvent this issue without requiring more accurate ratio estimates---which would lead to a higher complexity--- we design the forest so that any two centers whose weights the algorithm might want to compare are at a short distance from each other. To achieve this property, the topology of the forest is constructed adaptively.
Additionally, to further improve the sample complexity (and runtime), we dynamically adjust the number of queries required to approximate the weight ratio between two centers. In particular, if the total weight of items lighter than a given item $i$ is not %
large enough with respect to 
the weight of $i$, then it becomes unimportant to estimate the   ratio of the weight of $i$ over the weight of any of these items.
Our estimation-forest data structure dynamically determines query sequences for weight-ratio estimation and enables all cluster-center–cluster-center and cluster-item–cluster-center comparisons in $O\left(\frac{n}{\epsilon^3}\log n\right)$ queries. %

While the above algorithm is adaptive, we also obtain a non-adaptive version.  The idea is to first design a new adaptive algorithm that queries each pair of items only $O\left(\frac{1}{\epsilon^3} \log n\log{n\over \varepsilon}\right)$ times.  We then query every pair a fixed number of times and then simulate this new  adaptive algorithm on the precomputed answers; this leads to a non-adaptive algorithm with $O\left({n^2\over \varepsilon^3}\log n\log{n\over \varepsilon}\right)$ queries.

The two lower bounds in our paper are proved using a reduction from the problem of identifying the $k$ coins with the highest heads probability in a collection of $n$ biased coins. %
In particular, we construct an MNL supported on an even-sized universe whose items are divided into pairs, each pair representing the two sides of a coin. We then order the pairs so that the weights of the items in a pair are much larger than those in preceding pairs. This way, %
we can assume without loss of generality that any learning algorithm is only querying slates corresponding to our original pairs.

\paragraph{Organization.}
In Section~\ref{sec:relwork} we review related work. Section~\ref{sec:preliminaries} introduces key tools and notation that we use throughout the paper. Section~\ref{sec:overview} gives a more detailed technical overview of our algorithms and techniques. In Section~\ref{sec:algorithmic-primitives} we introduce two estimation primitives used by our main algorithms, which we present in the subsequent two section: in Section~\ref{sec:adaptive-algorithm} we analyze our adaptive algorithm, while in Section~\ref{sec:non-adaptive-algorithm} we consider our non-adaptive one. Section~\ref{sec:lower-bounds} contains the proofs of our adaptive and non-adaptive lower bounds. We conclude in Section~\ref{sec:conc} with several open questions.

All the proofs missing from the main body of the paper can be found in the appendix.

\section{Related Work}\label{sec:relwork}

The problem of learning MNLs on \emph{all}  slates from $\maxsample$ queries arises naturally from several perspectives. Our work is related to, yet distinct from, the existing literature. First, prior work on MNL fitting has provided approximation guarantees only for the full slate or for pairs of items—both of which are strictly weaker than the guarantees we obtain. Second, our framework extends beyond classical MNL ranking and selection by capturing quantitative relationships among items, revealing how much and where certain items dominate, while preserving the $O(n \log n)$ efficiency of the best known ranking algorithms. Third, our problem can be viewed as a natural strengthening of distribution learning under conditional sampling, extending the “testing by learning” paradigm to recover all conditional distributions simultaneously in a more expressive and challenging setting. Finally, our results also strengthen prior work on learning Random Utility Models (RUMs), specialized to the MNL case. We elaborate on these connections below.

\paragraph{MNL Fitting.} 
A large body of the literature focuses on finding MNL weights maximizing the likelihood of a collected dataset. In this setting, usually, the queries are either fixed \citep{z29,f57,d60,n23} or sampled from a distribution \citep{oz24,nos12,nos17,mg15}. When the dataset actually comes from a hidden MNL model, some of these algorithms guarantee that the estimated (normalized) weights approximate the hidden (normalized) weights \citep{nos12,nos17,mg15,sru20,sbbprw15,spu19}. 

These works are not directly applicable to our setting because of the following two main issues.
(i) Approximately recovering the normalized weights is equivalent to providing a good estimate of the winning distribution of the full slate. However, this is insufficient to accurately estimate the winning distribution for smaller slates, as we discussed in the Introduction. 
(ii) Most of these works assume that the maximum ratio between two weights is upper bounded by a constant \citep{nos12}.  Note that such an assumption would greatly simplify our problem given that we could accurately estimate the weights with respect to any anchor item.  Therefore, the interesting setting is one where there is no \emph{a priori} bound on the ratio of the weights. Furthermore, as these algorithms are non-adaptive, they are subject to our lower bound of $\Omega(n^2 \log n)$ queries for our problem. 

Stepping outside the task of fitting the weights themselves, 
\cite{fjopr18} adaptively query $O\left(\frac{n \cdot \log (n) \cdot \min\{n, 1/\epsilon\}}{\epsilon^2}\right)$ slates of size two (i.e., pairs) and produce an \emph{additive} estimate within $\epsilon$ for all other \emph{pairs}, in a family of models that are more powerful than MNLs. However, in order for an additive approximation of the slates of size two to generalize to all other slates with an $\ell_\infty$-error of $\epsilon'$, one needs $\epsilon\leq \frac{\epsilon'}{n}$ (proved in \Cref{sec:additive-approximation-is-not-sufficient}). Therefore, applying their algorithm as a blackbox would require $\Omega(n^4 \log n)$ queries. Moreover, their algorithm heavily relies on providing an additive approximation---specifically, each estimated probability is rounded to the closest multiple of $\epsilon$. Hence, it appears hard to generalize their work to larger slates even in a non-blackbox manner. 

We also mention a separate line of work that focuses on developing statistical tests to determine whether a given dataset of comparisons is consistent with an MNL model \citep{su19,ms25,rbss22}. These works are complementary to ours in that they address model validation rather than estimation, and they do not provide algorithms for learning the underlying MNL parameters.

\paragraph{MNL Ranking/Selection.} Other classical problems involving MNLs include: (i) sorting/ranking the weights \citep{fops17, fjopr18, cgz22, sbph15, rls19}, (ii) finding the top-$k$ items with largest weight \citep{jkso17,cgms17,clm18,cs15, ktas12}, and (iii) finding the item of maximum weight \citep{emm02, mt04}. Some works make assumptions about the weights (e.g., adjacent weights are sufficiently separated) and seek an exact output with high probability \citep{jkso17}, while others do not make further assumptions but only require a probably approximately correct (PAC) output \citep{sbph15}. These problems have also been explored in the ``dueling bandits'' literature \citep{bbeh21}. 
While we will use an $O(\frac{n \log n}{\epsilon^2})$ approximate sorting algorithm by \cite{fjopr18} as a first step in our algorithm, these results are not sufficient by themselves to learn an MNL under our definition. Our lower bound will be proved by showing that the top-$\frac{n}{2}$ problem (and the ranking problem) reduces to our MNL learning problem. Interestingly, despite this, we obtain an $O(\frac{n \log n}{\epsilon^3})$ learning algorithm that is only $O(\frac{1}{\epsilon})$-factor worse than the best possible algorithm for MNL ranking \citep{fjopr18}. 

\paragraph{Distribution Testing with Conditional Samples.} Our problem can also be described in the context of conditional sampling. Let $\mu$ be a hidden distribution over $[n]$. Algorithms can, adaptively, make the following types of queries: chosen a set $S\subseteq[n]$, an oracle returns an item of $S$ sampled according to distribution $\mu$ conditioned on $S$.\footnote{If $S$ has probability 0, the oracle returns a uniform at random item from $S$.} The goal in distribution testing is usually to make the smallest number of queries to establish whether $\mu$ satisfies certain properties, such as, e.g., uniformity~\citep{c20}. However, the problem of estimating the probability $\mu(i)$, for $i\in [n]$, has also been considered \citep{cfgm13,crs15,afl25}. 

Our problem, on the other hand, asks for the minimum number of queries to accurately estimate $\mu(i\mid S)$ for each $i\in S\subseteq [n]$.\footnote{In our proofs, we will assume that the weights are strictly positive for simplicity. However, the same algorithms also work when weights of zero are allowed, as we show in \Cref{sec:extension-to-pseudo-mnl}.} Indeed, the distribution $\mu$ can be seen as the weights of an MNL $M$ and therefore, $\mu(i\mid S) = M_S(i)$ for $i\in S\subseteq[n]$. Observe that having an estimate only for $\mu(i)$ is equivalent to an estimate of the winning distribution on the full slate---insufficient to estimate the winning distribution for smaller slates. Some algorithms provide a multiplicative $(1\pm \epsilon)$ estimate for $\mu(i)$ for $i\notin B$ where $B$ is a set such that $\sum_{b\in B}\mu(b)\leq \epsilon$; this is a stronger property than an additive approximation of the full slate. However, it still cannot provide accurate estimates for slates that are either subsets of $B$ or that span across $B$ and $[n]\setminus B$. Note also that it can be $|B|=\Theta(n)$ meaning that the distribution of most slates cannot be estimated. From a technical standpoint, \cite{cfgm13} achieve this guarantee by building a complete binary tree where edges are labeled with probabilities. This idea bears some high level similarities with our estimation-forest, but details differ. Indeed, their tree is static while the topology of our forest is adaptively chosen, which is crucial for a tight $O(n\log n)$ bound. Moreover, their tree is populated by querying slates of arbitrary size, while we only query slates of size two. Finally, as argued above, the guarantees provided by their tree are insufficient to estimate the winning distributions of all slates. Recent work has also focused on different query models~\citep{a25,pr25,gmp25}; however, their results are incomparable to ours.

We remark that a common paradigm for designing distribution testing algorithms in the traditional (unconditional) setting is that of \emph{testing by learning} (see, e.g.\ \cite{CanonneTopicsDT2022}), in which a property is tested by first approximately learning the underlying distribution, and then checking whether the learned distribution has the property in question. Our work serves as a conditional counterpart to this paradigm that works for the more challenging case in which the property being tested requires approximating the behavior of all conditional distributions.

\paragraph{RUM Learning.} MNLs are a special case of RUMs; hence, algorithms for learning RUMs on all slates could be used to learn an MNL. However, the best known algorithms for general RUM learning require exponentially many queries to slates of size $\Theta(\sqrt{n})$ \citep{cgkpt24}. In contrast, we show that MNLs can be learned using only $O(n \log n)$ queries to slates of size two.

\section{Technical Preliminaries}\label{sec:preliminaries}

Let $U = [n] = \{1, \ldots, n\}$ be a universe of items. For a probability distribution $P$ over $[n]$, let $P(i)$ denote the probability of the item $i \in [n]$.  For distributions $P, Q$, let $\norm{P - Q}_1 := \sum_{i \in [n]} |P(i) - Q(i)|$ be the \emph{$\ell_1$-distance}, which is also twice the total variation distance, and let 
$\norm{P - Q}_\infty := \max_{i \in [n]} |P(i) - Q(i)|$ be the \emph{$\ell_\infty$-distance}. Let $X \sim \Bern(\mu)$ denote a random variable following a Bernoulli distribution with mean $\mu$ and let $X\sim \bin(n,p)$ denote a random variable following a binomial distribution with $n$ trials and head probability $p$. Also let $X\sim\geom(p)$ denote a random variable following a geometric distribution with parameter $p\in(0,1]$; in particular, $\Pr_{X\sim\geom(p)}[X=k] = (1-p)^{k-1} p$, for $k\geq 1$.
For any $x,y \in \R$, we denote by $x\pm y$ the interval $[x-y, x+y]$ and for 
any $x \in \R, \varepsilon\in (0,1)$, we denote by $(1\pm \epsilon)x$ the interval $[(1-\varepsilon)x,(1+\varepsilon)x]$.

\paragraph{Ordered Clusterings and Directed Weightings.}

An \emph{ordered clustering} of $[n]$ is given by an ordered partition $(C_1, \dots, C_T)$ of $[n]$ and a corresponding list $(c_1, \dots, c_T)$ of \emph{centers} such that $c_i\in C_i$ for each $i$.  Here, for $v\in [n]$, let $\cluster(v)\in[T]$ be the unique index such that $v\in C_{\cluster(v)}$; we call $\cluster(v)$ the \emph{cluster index} of $v$. 

Let $F=([n], E)$ be an undirected forest supported on $[n]$.  For $u,v\in [n]$ in the same connected component of $F$, let $P(u,v)$ be the (unique) path in $F$ from $u$ to $v$.  We use $d(u,v)$ to denote the (unweighted/hop) distance in $F$ between vertices $u$ and $v$, where if $u$ and $v$ are in different connected components, we define $d(u,v)=\infty$. 

Let $\vec{E} := \{(u,v) \in V^2 \mid \{u,v\} \in E\}$.  A \emph{directed weighting} of the edges of $F$ is a function $r: \vec{E} \to \R_{>0}$
such that $r(u,v) = \nicefrac{1}{r(v,u)}$.  For a path $P=u_1, \ldots, u_t$ in $F$ define $r(P) = \prod_{i=1}^{t-1} r(u_i, u_{i+1})$, and if $t=1$, let $r(P)=1$.

\section{Overview of Results and Techniques}\label{sec:overview}

\subsection{Learning MNLs Adaptively}
Our first result is an algorithm to learn an MNL $M$ by making $O\left({n\over\varepsilon^3} \log n\right)$ adaptive $\maxsample$ queries to output the weights of an MNL $\hat{M}$ such that $d_{1}(M,\hat{M}) \leq \varepsilon$. Observe that $M_S(i)= \frac{w_i}{\sum_{s\in S} w_s} = \frac{1}{\sum_{s\in S} \frac{w_s}{w_i}}$. Therefore, if we had access to a multiplicative estimate of the ratio $\nicefrac{w_i}{w_j}$ for each pair $i,j\in [n]$, we could provide a good estimate for $M_S$ for each slate $S$, in $\ell_1$-error. Unfortunately, this has two issues.  (i) In general, this ratio can be unbounded and therefore, producing a multiplicative estimate could in principle cost an unbounded number of queries. (ii) If we aim to obtain an algorithm with query complexity $o(n^2)$, we simply cannot afford to query all the pairs. 

To circumvent these issues, we instead construct a \emph{sparse}  graph on the items of $[n]$ that contains estimates of the ratio $w_i/w_j$ along each edge $\{i,j\}$, and then use this graph to compute $\hat{M}$. At a high level, we produce a forest $F$ such that: (i) if two items are close to each other in $F$, we can get an estimate of their ratios, (ii) if two items are far away in $F$, then their ratio is negligible. We will also need some technical properties to ensure that we can obtain a valid MNL $\hat{M}$ from the forest. The following definition formalizes the properties we need.

\begin{definition}[$(t,\epsilon)$-Estimation-Forest]\label{def:estimation-forest}
Let $t \in \Z, t \geq 2$ and let $\epsilon\in(0,1)$. A $(t,\epsilon)$-\emph{estimation-forest} for an MNL supported on $[n]$ with weights $\{w_1, \dots, w_n\}$ is a tuple
$
    \cF = (F, r, (C_1, \dots, C_T), (c_1, \dots, c_T)),
$
where $F = ([n],E)$ is an undirected forest, $r$ is a directed weighting on $F $, and $(C_1, \dots, C_T), (c_1, \dots , c_T)$ is an ordered clustering over $[n]$. For any $u,v\in [n]$ such that $\cluster(u)\geq \cluster(v)$:
\begin{enumerate}[nosep]
    \item  if $d(u,v)\leq t$, then $r(P(u,v)) \in (1\pm \epsilon) \cdot \frac{w_u}{w_v}$ and $r(P(v,u)) \in (1\pm \epsilon) \cdot \frac{w_v}{w_u}$.
    \item If $d(u,v) \in (t, \infty)$, then:
    \[
        \sum_{\substack{s\in [n]\\ \cluster(s) \leq \cluster(v)}} \frac{w_s}{w_u} \leq \epsilon \quad \text{ and} \quad         \sum_{\substack{s\in \cC\\ \cluster(s) \leq \cluster(v)}} r(P(s,u)) \leq \epsilon,
    \]
    where $\mathcal{C}$ is the connected component containing both $u$ and $v$.
    \item if $d(u,v)=\infty$, then: 
    \[
        \sum_{\substack{s\in [n]\\ \cluster(s) \leq \cluster(v)}} \frac{w_s}{w_u} \leq \epsilon.
    \]
    Also, for any $u'$ (resp. $v'$) in the same connected component of $u$ (resp. $v$), it holds that $\cluster(u') > \cluster(v')$.
    \item if $\cluster(u)=\cluster(v)$, then $d(u,v)\leq t$.
\end{enumerate}
\end{definition}

In Section~\ref{sec:MNL-from-forest-adaptive}, we show that we can use a $(t,\epsilon)$-estimation-forest for an MNL $M$ to obtain an MNL $\hat{M}$ such that $d_1(M, \hat{M})\leq O(\epsilon)$ (\Cref{thm:estimation-forest-produces-mnl}). 

\paragraph{On Choosing the Estimation-Forest Topology.} Interestingly, for the purpose of constructing $\hat{M}$, it turns out that the specific value of $t$ is irrelevant. This observation allows us to reduce the problem of learning $M$ to that of constructing a $(t,\varepsilon)$-estimation-forest for a single, arbitrary choice of $t$. The central challenge then lies in designing an efficient topology for the estimation-forest. %

The most natural topology would be a path on the items (after a noisy-sorting step). However, along a path, two items of comparable weight can be separated by a super-constant distance $d=\omega(1)$. To preserve property~2 of \Cref{def:estimation-forest}, one would then need to construct a $(d,\varepsilon)$-estimation-forest, which would incur a query cost of $\Omega(n d^2 \log n)$. Since $d$ can be as large as $\Theta(n)$, this  is clearly suboptimal, suggesting the need for a topology with low  diameter.
Note that even a complete binary tree also can yield super-constant length paths, implying an $\omega(n\log n)$ query cost. 

On the other hand, to achieve a very small diameter, one might consider a star or a tree topology with unbounded arity. However, in these cases, one would need to estimate extremely large weight ratios, leading to high query complexity. This, in fact, explains why a disconnected graph is required.

Another natural direction would be to consider general (non-acyclic) graphs. In fact, we could consider a path with skips to decrease the diameter (perhaps exploiting modern shortcutting results \citep{kp22}). The difficulty is that, in a cyclic graph, the weight of an item depends on the particular path chosen, and different paths can yield inconsistent estimates. Thus, acyclicity of the topology is essential to ensure that an explicit MNL can be extracted from it. %

In \Cref{sec:contructing-the-forest-adaptive}, we present an efficient algorithm for constructing an $(O(1),\varepsilon)$-estimation-forest. The resulting topology takes the form of a forest of \emph{lobster graphs}: items are first clustered together, as described below, and a forest of unbounded-arity trees is then constructed over the resulting cluster centers. The arity of each tree is not predetermined but is instead adaptively chosen as the algorithm progresses, in order to balance estimation accuracy and query efficiency. Interestingly, the diameter of the trees in our forest can be super-constant. However, each tree will have the property that if two items are at distance more than $O(1)$, then one of the two is so much larger than the other that their ratio can be taken to be infinite without incurring a large error. Thanks to this property, from the perspective of any single item, one can consider the tree to have constant diameter and lose at most $O(\epsilon)$ in the final estimate.

\paragraph{Building the Estimation-Forest.}
We now go more into the details of our solution to efficiently build an estimation-forest. When constructing the estimation-forest, some ratio estimates might be costlier to obtain than others. In order to maintain a low query complexity, we leverage the fact that if two items have similar weights, fewer queries are required to estimate the ratio of their weights. In the first step to build our estimation-forest, we exploit this observation via a pre-processing step, which sorts the items in approximately increasing order of weights, and produces clusters of similar items resulting in a \emph{cluster graph}, defined as follows. %

\begin{figure}
    \centering
    \tikzset{every picture/.style={line width=0.75pt}} %

\begin{tikzpicture}[x=0.75pt,y=0.75pt,yscale=-1,xscale=1]
\draw    (320.23,82.49) -- (321.44,130.36) ;
\draw    (104.23,84.4) -- (132.54,126.48) ;
\draw    (104.23,84.4) -- (91.63,135.26) ;
\draw    (104.23,84.4) -- (145.77,78.1) ;
\draw  [fill={rgb, 255:red, 0; green, 0; blue, 0 }  ,fill opacity=1 ] (84.87,135.26) .. controls (84.87,131.86) and (87.9,129.1) .. (91.63,129.1) .. controls (95.36,129.1) and (98.39,131.86) .. (98.39,135.26) .. controls (98.39,138.67) and (95.36,141.43) .. (91.63,141.43) .. controls (87.9,141.43) and (84.87,138.67) .. (84.87,135.26) -- cycle ;
\draw  [fill={rgb, 255:red, 0; green, 0; blue, 0 }  ,fill opacity=1 ] (102.92,34.2) .. controls (102.92,30.8) and (105.95,28.04) .. (109.68,28.04) .. controls (113.41,28.04) and (116.44,30.8) .. (116.44,34.2) .. controls (116.44,37.61) and (113.41,40.37) .. (109.68,40.37) .. controls (105.95,40.37) and (102.92,37.61) .. (102.92,34.2) -- cycle ;
\draw  [fill={rgb, 255:red, 0; green, 0; blue, 0 }  ,fill opacity=1 ] (206.22,36.4) .. controls (206.22,32.99) and (209.25,30.23) .. (212.98,30.23) .. controls (216.71,30.23) and (219.74,32.99) .. (219.74,36.4) .. controls (219.74,39.8) and (216.71,42.56) .. (212.98,42.56) .. controls (209.25,42.56) and (206.22,39.8) .. (206.22,36.4) -- cycle ;
\draw  [fill={rgb, 255:red, 0; green, 0; blue, 0 }  ,fill opacity=1 ] (208.63,129.26) .. controls (208.63,125.86) and (211.65,123.1) .. (215.39,123.1) .. controls (219.12,123.1) and (222.14,125.86) .. (222.14,129.26) .. controls (222.14,132.67) and (219.12,135.43) .. (215.39,135.43) .. controls (211.65,135.43) and (208.63,132.67) .. (208.63,129.26) -- cycle ;
\draw  [fill={rgb, 255:red, 0; green, 0; blue, 0 }  ,fill opacity=1 ] (57.2,64.51) .. controls (57.2,61.11) and (60.23,58.35) .. (63.96,58.35) .. controls (67.69,58.35) and (70.72,61.11) .. (70.72,64.51) .. controls (70.72,67.92) and (67.69,70.68) .. (63.96,70.68) .. controls (60.23,70.68) and (57.2,67.92) .. (57.2,64.51) -- cycle ;
\draw  [fill={rgb, 255:red, 0; green, 0; blue, 0 }  ,fill opacity=1 ] (345.61,50.04) .. controls (345.61,46.64) and (348.64,43.88) .. (352.37,43.88) .. controls (356.1,43.88) and (359.13,46.64) .. (359.13,50.04) .. controls (359.13,53.45) and (356.1,56.21) .. (352.37,56.21) .. controls (348.64,56.21) and (345.61,53.45) .. (345.61,50.04) -- cycle ;
\draw  [fill={rgb, 255:red, 0; green, 0; blue, 0 }  ,fill opacity=1 ] (314.68,130.36) .. controls (314.68,126.96) and (317.7,124.2) .. (321.44,124.2) .. controls (325.17,124.2) and (328.19,126.96) .. (328.19,130.36) .. controls (328.19,133.77) and (325.17,136.53) .. (321.44,136.53) .. controls (317.7,136.53) and (314.68,133.77) .. (314.68,130.36) -- cycle ;
\draw  [fill={rgb, 255:red, 0; green, 0; blue, 0 }  ,fill opacity=1 ] (290.82,52.74) .. controls (287.19,51.94) and (284.96,48.61) .. (285.84,45.3) .. controls (286.72,41.99) and (290.37,39.96) .. (294,40.76) .. controls (297.63,41.56) and (299.86,44.89) .. (298.98,48.2) .. controls (298.1,51.51) and (294.45,53.54) .. (290.82,52.74) -- cycle ;
\draw  [dash pattern={on 0.84pt off 2.51pt}] (50.08,84.4) .. controls (50.08,48.42) and (74.32,19.26) .. (104.23,19.26) .. controls (134.13,19.26) and (158.37,48.42) .. (158.37,84.4) .. controls (158.37,120.37) and (134.13,149.53) .. (104.23,149.53) .. controls (74.32,149.53) and (50.08,120.37) .. (50.08,84.4) -- cycle ;
\draw  [dash pattern={on 0.84pt off 2.51pt}] (182.26,86.23) .. controls (182.26,50.06) and (196.27,20.74) .. (213.54,20.74) .. controls (230.82,20.74) and (244.82,50.06) .. (244.82,86.23) .. controls (244.82,122.4) and (230.82,151.73) .. (213.54,151.73) .. controls (196.27,151.73) and (182.26,122.4) .. (182.26,86.23) -- cycle ;
\draw  [dash pattern={on 0.84pt off 2.51pt}] (263.69,85.99) .. controls (263.69,49.35) and (289.11,19.64) .. (320.47,19.64) .. controls (351.82,19.64) and (377.24,49.35) .. (377.24,85.99) .. controls (377.24,122.63) and (351.82,152.33) .. (320.47,152.33) .. controls (289.11,152.33) and (263.69,122.63) .. (263.69,85.99) -- cycle ;
\draw  [fill={rgb, 255:red, 0; green, 0; blue, 0 }  ,fill opacity=1 ] (554.96,43.46) .. controls (554.96,40.05) and (557.99,37.29) .. (561.72,37.29) .. controls (565.45,37.29) and (568.48,40.05) .. (568.48,43.46) .. controls (568.48,46.86) and (565.45,49.62) .. (561.72,49.62) .. controls (557.99,49.62) and (554.96,46.86) .. (554.96,43.46) -- cycle ;
\draw  [dash pattern={on 0.84pt off 2.51pt}] (461.46,84.71) .. controls (461.46,46.95) and (490.01,16.35) .. (525.23,16.35) .. controls (560.45,16.35) and (589,46.95) .. (589,84.71) .. controls (589,122.46) and (560.45,153.06) .. (525.23,153.06) .. controls (490.01,153.06) and (461.46,122.46) .. (461.46,84.71) -- cycle ;
\draw  [fill={rgb, 255:red, 0; green, 0; blue, 0 }  ,fill opacity=1 ] (555.57,119.31) .. controls (559.09,120.46) and (560.91,123.99) .. (559.65,127.19) .. controls (558.4,130.4) and (554.53,132.06) .. (551.01,130.91) .. controls (547.5,129.76) and (545.67,126.24) .. (546.93,123.03) .. controls (548.19,119.83) and (552.06,118.16) .. (555.57,119.31) -- cycle ;
\draw    (524.77,82.49) -- (561.72,43.46) ;
\draw    (524.77,82.49) -- (553.29,125.11) ;
\draw  [fill={rgb, 255:red, 0; green, 0; blue, 0 }  ,fill opacity=1 ] (508.04,35.77) .. controls (508.04,32.37) and (511.07,29.61) .. (514.8,29.61) .. controls (518.53,29.61) and (521.55,32.37) .. (521.55,35.77) .. controls (521.55,39.18) and (518.53,41.94) .. (514.8,41.94) .. controls (511.07,41.94) and (508.04,39.18) .. (508.04,35.77) -- cycle ;
\draw  [fill={rgb, 255:red, 0; green, 0; blue, 0 }  ,fill opacity=1 ] (471.94,68.7) .. controls (471.94,65.3) and (474.97,62.54) .. (478.7,62.54) .. controls (482.43,62.54) and (485.46,65.3) .. (485.46,68.7) .. controls (485.46,72.1) and (482.43,74.86) .. (478.7,74.86) .. controls (474.97,74.86) and (471.94,72.1) .. (471.94,68.7) -- cycle ;
\draw  [fill={rgb, 255:red, 0; green, 0; blue, 0 }  ,fill opacity=1 ] (485.18,121.38) .. controls (485.18,117.98) and (488.2,115.22) .. (491.94,115.22) .. controls (495.67,115.22) and (498.69,117.98) .. (498.69,121.38) .. controls (498.69,124.78) and (495.67,127.54) .. (491.94,127.54) .. controls (488.2,127.54) and (485.18,124.78) .. (485.18,121.38) -- cycle ;
\draw    (478.7,68.7) -- (524.77,82.49) ;
\draw    (514.8,35.77) -- (524.77,82.49) ;
\draw    (524.77,82.49) -- (491.94,121.38) ;
\draw  [fill={rgb, 255:red, 0; green, 0; blue, 0 }  ,fill opacity=1 ] (563.73,86.88) .. controls (563.73,83.48) and (566.76,80.72) .. (570.49,80.72) .. controls (574.22,80.72) and (577.25,83.48) .. (577.25,86.88) .. controls (577.25,90.29) and (574.22,93.05) .. (570.49,93.05) .. controls (566.76,93.05) and (563.73,90.29) .. (563.73,86.88) -- cycle ;
\draw    (524.77,82.49) -- (570.49,86.88) ;
\draw    (293.02,48.18) -- (320.23,82.49) ;
\draw    (320.23,82.49) -- (353.18,48.18) ;
\draw    (212.98,82.49) -- (215.39,129.26) ;
\draw    (212.98,36.4) -- (212.98,82.49) ;
\draw  [fill={rgb, 255:red, 0; green, 0; blue, 0 }  ,fill opacity=1 ] (125.78,126.48) .. controls (125.78,123.08) and (128.81,120.32) .. (132.54,120.32) .. controls (136.27,120.32) and (139.3,123.08) .. (139.3,126.48) .. controls (139.3,129.89) and (136.27,132.65) .. (132.54,132.65) .. controls (128.81,132.65) and (125.78,129.89) .. (125.78,126.48) -- cycle ;
\draw    (109.68,34.2) -- (104.23,84.4) ;
\draw    (63.96,64.51) -- (104.23,84.4) ;
\draw  [fill={rgb, 255:red, 0; green, 0; blue, 0 }  ,fill opacity=1 ] (139.02,78.1) .. controls (139.02,74.7) and (142.04,71.94) .. (145.77,71.94) .. controls (149.51,71.94) and (152.53,74.7) .. (152.53,78.1) .. controls (152.53,81.51) and (149.51,84.27) .. (145.77,84.27) .. controls (142.04,84.27) and (139.02,81.51) .. (139.02,78.1) -- cycle ;
\draw  [fill={rgb, 255:red, 255; green, 255; blue, 255 }  ,fill opacity=1 ] (205.59,77.28) -- (219,77.28) -- (219,90.69) -- (205.59,90.69) -- cycle ;
\draw  [fill={rgb, 255:red, 255; green, 255; blue, 255 }  ,fill opacity=1 ] (97.52,77.69) -- (110.93,77.69) -- (110.93,91.1) -- (97.52,91.1) -- cycle ;
\draw  [fill={rgb, 255:red, 255; green, 255; blue, 255 }  ,fill opacity=1 ] (313.53,75.79) -- (326.94,75.79) -- (326.94,89.2) -- (313.53,89.2) -- cycle ;
\draw  [fill={rgb, 255:red, 255; green, 255; blue, 255 }  ,fill opacity=1 ] (518.06,75.79) -- (531.48,75.79) -- (531.48,89.2) -- (518.06,89.2) -- cycle ;

\draw (80.09,79.85) node [anchor=north west][inner sep=0.75pt]    {$c_{1}$};
\draw (220.74,83.89) node [anchor=north west][inner sep=0.75pt]    {$c_{2}$};
\draw (330.23,78.38) node [anchor=north west][inner sep=0.75pt]    {$c_{3}$};
\draw (96.63,155.53) node [anchor=north west][inner sep=0.75pt]    {$C_{1}$};
\draw (205.95,155.53) node [anchor=north west][inner sep=0.75pt]    {$C_{2}$};
\draw (313.2,155.53) node [anchor=north west][inner sep=0.75pt]    {$C_{3}$};
\draw (518.05,155.53) node [anchor=north west][inner sep=0.75pt]    {$C_{T}$};
\draw (516.81,94.31) node [anchor=north west][inner sep=0.75pt]    {$c_{T}$};
\draw (401.86,76.13) node [anchor=north west][inner sep=0.75pt]  [font=\huge] [align=left] {{\fontfamily{ptm}\selectfont . . .}};

\end{tikzpicture}
    \vspace{-10mm}
    \caption{The structure of an $(A_1,A_2,\varepsilon)$-cluster graph. The vertices of the graph are the items $[n]$ of the MNL, the cluster centers are depicted as white-filled squares, while the other items are represented by black circles. Items in the same cluster have similar weight (within a factor of $A_1$ of each other). Clusters further to the right contain items of higher weights. Associated with each edge $\{u,v\}$, and each direction (say, $u \to v$), is an estimate $r(u,v)$ of the ratio $\nicefrac{w_u}{w_v}$.}
    \label{fig:cluster-graph}
\end{figure}
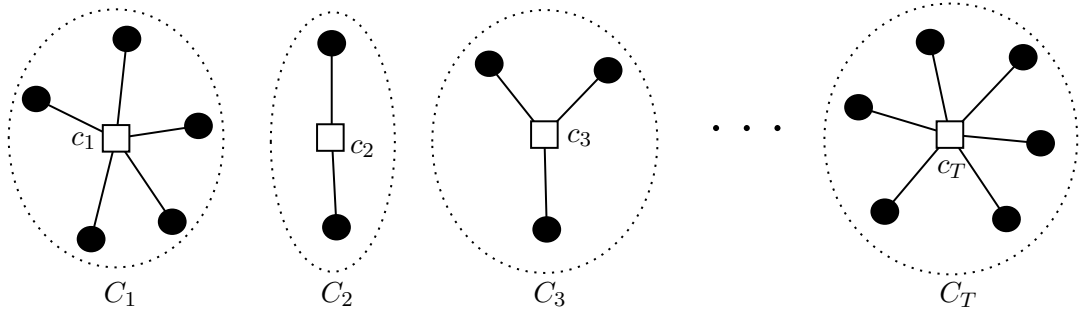
\begin{definition}[Cluster Graph]\label{def:cluster-graph}
An $(A_1,A_2,\varepsilon)$-cluster graph for an MNL supported on $[n]$ with weights $\{w_1, \dots, w_n\}$, is a tuple $\cG = (F,r,(C_1, \dots , C_T), (c_1, \dots , c_T))$, where $F=([n],E)$ is an undirected forest, $r$ is a directed weighting on $F$, and $(C_1,\dots , C_T), (c_1, \dots , c_T)$ is an ordered clustering over $[n]$, satisfying:
\begin{enumerate}[nosep]
    \item For any $i\in[T]$ and any item $u$ in the cluster $C_i$ we have:
    \[
        {1\over A_1} \leq {w_{u} \over w_{c_i}} \leq A_1.
    \]
    \item For any $i,j\in[T]$ with $i > j$ we have:
    \[
        {w_{c_i}\over w_{c_j}} \geq A_2.
    \]
    \item $E$ consists of all the edges of the form $\{c_i,u\}$ for all choices of $i$ and of $u \in C_i$.  Moreover the weight $r(u,v)$ of any edge $\{u,v\}\in E$ satisfies:
    \begin{align*}    
    &r(u,v) \in (1\pm \varepsilon) {w_u \over w_v} \quad \text{ and} \quad  r(v,u) = \frac{1}{r(u,v)} \in (1\pm \epsilon) \frac{w_v}{w_u}.
    \end{align*}
\end{enumerate}
\end{definition}

In Section~\ref{sec:cluster-graph-from-sorting}, we show how to obtain a cluster graph.  Our algorithms employs a noisy sorting procedure of \cite{fjopr18} as a subroutine and builds on it to partition the vertices and compute the edge weights $r$. We show in \Cref{fig:cluster-graph} a cluster graph produced by our algorithm.

\begin{figure}
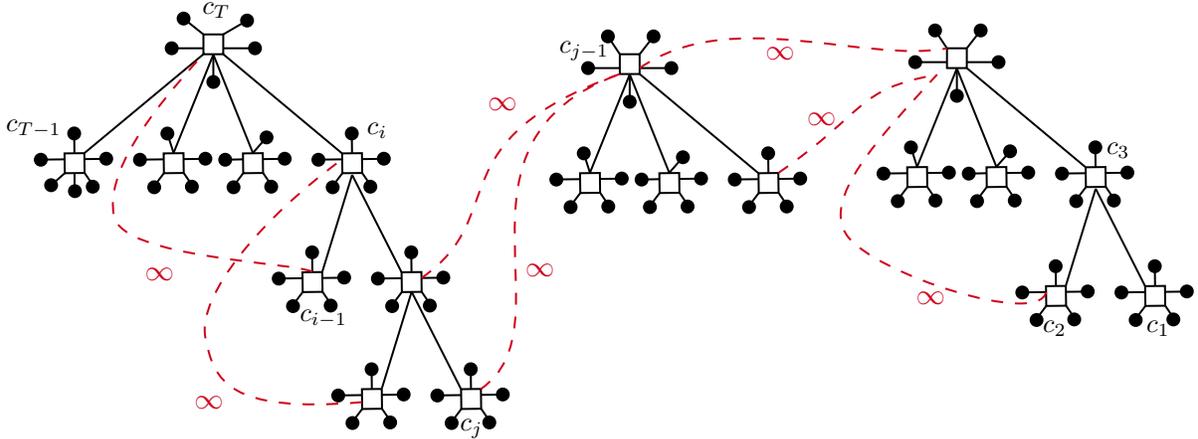

    \centering
    \include{img/estimation_forest}
    \vspace{-10mm}
    \caption{The structure of an estimation-forest constructed by \Cref{alg:nlogn-build-estimation-forest} in \Cref{sec:adaptive-algorithm}. White squares represent cluster centers, while black circles represent the other items of $[n]$. A new level in the forest is created when two nodes are compared and the estimate of their ratio is ``$\infty$''. If this happens twice consecutively (for the parent node and the children with smallest estimated weight), then a new tree is created. In the figure, we have $i < T-1$ and $j < i-1$.}
    \label{fig:estimation-forest}
\end{figure}

Observe that a cluster graph is not yet an estimation-forest. Indeed, there might be items in different clusters (but close in the ordering) whose ratio is constant. To obtain an estimation-forest, we add extra edges between some pairs of centers. We do so in an iterative way, starting from the center of the last cluster and moving backwards. \emph{A priori}, these multiplicative estimates can potentially be costly to obtain, since the ratio between the weights of distinct cluster centers could be arbitrarily large. In order to maintain a low query complexity, we employ a careful thresholding strategy. This ensures that we only require an accurate estimate of the ratio when this is not too large to make a significant difference in the MNL winning distributions. For instance, if the ratio between two items is greater than $\Omega({n\over \varepsilon})$, then it is safe to act as if the second item's weight is infinitely larger than the first, as this approximation only causes a $d_1$-error of magnitude $O(\varepsilon)$. When we find two clusters that are incomparable, we restart the iteration process from the last cluster that was comparable. It can be shown that this leads to an $(O(1),\epsilon)$-estimation-forest (see \Cref{thm:nlogn-build-estimation-forest}). We show in \Cref{fig:estimation-forest} a forest that can be produced by our algorithm. 

In summary, our algorithm has \emph{three} phases. In the first phase, we construct a $(\Theta(1), \Theta(1), \Theta(\epsilon))$-cluster graph. In the second phase, we extend the cluster graph to a $(\Theta(1), \Theta(\epsilon))$-estimation-forest. Finally, in the third phase, we use the forest to recover an estimate of the MNL weights. A representation of the steps in our algorithm is in Figure~\ref{fig:overview-diagram}. The first two phases require at most $O(\frac{n \log n}{\epsilon^3})$ queries, while the last one does not make any further queries, yielding our main result:

\begin{restatable}[]{theorem}{ThmAdaptiveNlogN}\label{thm:adaptive-nlogn}
Choose any $\epsilon\in (0,1)$ and $\delta=n^{-c}$ for a constant $c>0$. There exists an adaptive randomized algorithm that, with probability at least $1- \delta$, makes $O\left(\frac{n \log n}{\epsilon^3}\right)$ $\maxsample$ queries and solves the MNL Learning Problem on $[n]$ with accuracy parameter $\epsilon$. Moreover, the algorithm only queries pairs and runs in time proportional to the number of queries.
\end{restatable}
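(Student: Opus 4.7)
The plan is to instantiate the three-phase pipeline summarized at the end of the overview, arguing that each phase meets its query budget with the required probability and that composing the phases yields the stated guarantee.

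\textbf{Phase 1 (cluster graph).} I would first invoke the noisy sorting procedure of \cite{fjopr18} to produce an approximate ordering of $[n]$, then scan the sorted list left-to-right and greedily group consecutive items whose estimated weight-ratio lies in a constant range, choosing one representative $c_i$ per group. For each non-center $u$ in cluster $C_i$, I compare $u$ against $c_i$ using the \AlgEstimateRatio{} primitive from Section~\ref{sec:algorithmic-primitives} with target multiplicative accuracy $\Theta(\epsilon)$. Since items in the same cluster have ratio $\Theta(1)$, each such estimation costs only $O(\epsilon^{-2}\log n)$ queries, and there are $n-T$ intra-cluster edges, giving a total of $O(n\log n/\epsilon^2)$ queries plus the $O(n\log n/\epsilon^2)$ cost of noisy sorting. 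This realizes the $(A_1,A_2,\Theta(\epsilon))$-cluster graph of \Cref{def:cluster-graph} with high probability by a union bound over the $O(n)$ sub-routines.

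\textbf{Phase 2 (extension to estimation-forest).} Starting from the heaviest cluster center $c_T$ and moving backwards, I iteratively attach centers of previous clusters to the current tree, using a balanced comparison primitive that produces either a multiplicative $(1\pm\epsilon)$ estimate of $w_{c_i}/w_{c_j}$ \emph{or} the verdict $\infty$ whenever the ratio exceeds $\Theta(n/\epsilon)$. The thresholding is critical: once a comparison returns $\infty$, we start a new level (or, after two consecutive $\infty$ verdicts, a new tree), so the forest topology adapts to the weight profile. The key invariant to verify is that every pair $(u,v)$ that a final MNL query might need to compare lies within constant hop-distance in its tree, while pairs beyond that distance have $w$-ratio so large that clauses~2 and~3 of \Cref{def:estimation-forest} are satisfied by setting their contribution to zero. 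With the balanced primitive costing $O((\log n)/\epsilon^3)$ per edge estimation in amortized fashion (the $1/\epsilon$ overhead comes from the tighter multiplicative accuracy required along paths of length $O(1)$), and with $O(T)\le O(n)$ edges added, this phase also uses $O(n\log n/\epsilon^3)$ queries.

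\textbf{Phase 3 (reading off $\hat M$).} No additional queries are needed: by \Cref{thm:estimation-forest-produces-mnl}, applying the extraction procedure of Section~\ref{sec:MNL-from-forest-adaptive} to the $(O(1),\Theta(\epsilon))$-estimation-forest yields an MNL $\hat M$ with $d_1(M,\hat M)\le O(\epsilon)$; rescaling $\epsilon$ by a universal constant at the start produces the $\epsilon$-guarantee.

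\textbf{Main obstacle and wrap-up.} The delicate step is Phase~2: I must show that the adaptive rule which decides when to open a new level or tree guarantees both (i) the structural conditions of \Cref{def:estimation-forest} for every pair, including the crucial bound $\sum_{\cluster(s)\le \cluster(v)} w_s/w_u\le \epsilon$ whenever $d(u,v)>t$, and (ii) the per-edge query complexity amortizes to $O(\log n/\epsilon^3)$ despite potentially large weight gaps. This is handled by the thresholding choice $\Theta(n/\epsilon)$, which ensures that neglecting large-ratio contributions loses only $O(\epsilon)$ total mass across any slate, combined with the balanced variant of \AlgEstimateRatio{} whose cost is tuned to the returned estimate. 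A union bound over the $O(n)$ primitive calls in both phases, each set to fail with probability $\delta/\poly(n)$, delivers success probability at least $1-\delta=1-n^{-c}$. Summing the query costs of the three phases gives the claimed $O(n\log n/\epsilon^3)$, and since each primitive call runs in time proportional to its own query count, the total running time matches.
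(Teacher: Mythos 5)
Your overall pipeline (cluster graph $\to$ estimation-forest $\to$ weight extraction) is the same as the paper's, and Phases 1 and 3 match \AlgClusterSort{} and \AlgGenerateWeights{} closely. The genuine gap is in Phase 2, and it is precisely the step where the paper's main new idea lives. You declare the verdict ``$\infty$'' whenever a center--center ratio exceeds a \emph{uniform} threshold $\Theta(n/\epsilon)$. With that threshold, each call to the comparison primitive must be run with parameter $\alpha=\Theta(\epsilon/n)$, so by \Cref{lem:weight-pair-ratio-estimate} a single center--center comparison costs $\Theta\bigl(\frac{n\log n}{\epsilon^{3}}\bigr)$ queries, not the $O\bigl(\frac{\log n}{\epsilon^{3}}\bigr)$ you assert; and there can be $\Theta(n)$ such comparisons (e.g., weights $w_i=2^{i}$, where consecutive centers have ratio $2$ and an ``$\infty$'' only appears every $\Theta(\log\frac{n}{\epsilon})$ steps). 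The total is then $\Theta\bigl(\frac{n^{2}\log n}{\epsilon^{3}}\bigr)$ --- exactly the quadratic baseline the paper describes (and discards) in its ``Technical Challenges'' paragraph. Your claim that the cost ``amortizes'' has no mechanism behind it: the query cost of \AlgEstimateRatio{} is fixed by the threshold parameter chosen up front, and even an adaptive doubling of precision per pair does not bring the sum below quadratic on instances like the one above.

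What the paper actually does in \AlgBuildEstimationForest{} is make the $\infty$-threshold itself adaptive: for the pair $(c_i,c_j)$ it uses $\beta_j=\Theta\bigl(\frac{\alpha^{2}\epsilon}{Z_j}\bigr)$ with the potential $Z_j=\sum_{\ell\le j}\alpha^{\,j-\ell}|C_\ell|$, i.e., the precision spent on a comparison is proportional to the (geometrically discounted) number of lighter items whose winning mass that comparison must control. Then each call costs $O\bigl(\frac{Z_j\log n}{\epsilon^{3}}\bigr)$ and $\sum_j Z_j\le \frac{n}{1-\alpha}$ telescopes, giving $O\bigl(\frac{n\log n}{\epsilon^{3}}\bigr)$ overall (\Cref{lem:nlon-query-bound}). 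The price is that declaring ``$\infty$'' at the much smaller threshold $1/\beta_j$ no longer trivially loses only $O(\epsilon)$ mass per slate; the paper needs the potential-based error analysis of \Cref{lem:nlogn-distant-vertices-real} and, for estimates composed along long paths, the floor $r(c_i,c_j)\ge \alpha^{-(i-j)}$ together with \Cref{lem:nlogn-ancestors-property} and \Cref{lem:nlogn-distant-vertices-estimates}. Neither the adaptive threshold nor this error accounting appears in your proposal, so as written Phase 2 does not achieve the claimed $O\bigl(\frac{n\log n}{\epsilon^{3}}\bigr)$ bound. (A smaller omission: to get running time proportional to the query count you also need the paper's trick of outputting approximate logarithms of the weights, since the weights themselves can require $\Omega(n)$ bits each; see the proof of \Cref{thm:nlogn-algo-with-runtime}.)
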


\begin{figure}
    \centering
    \tikzset{every picture/.style={line width=0.75pt}} %

\begin{tikzpicture}[x=0.75pt,y=0.75pt,yscale=-1,xscale=1]
\draw   (108,38) -- (179,38) -- (179,95) -- (108,95) -- cycle ;
\draw   (268,38) -- (348,38) -- (348,95) -- (268,95) -- cycle ;
\draw   (438,38) -- (524,38) -- (524,95) -- (438,95) -- cycle ;
\draw    (23,70) -- (106,70) ;
\draw [shift={(108,70)}, rotate = 180] [color={rgb, 255:red, 0; green, 0; blue, 0 }  ][line width=0.75]    (10.93,-3.29) .. controls (6.95,-1.4) and (3.31,-0.3) .. (0,0) .. controls (3.31,0.3) and (6.95,1.4) .. (10.93,3.29)   ;
\draw    (180,71) -- (263,71) ;
\draw [shift={(265,71)}, rotate = 180] [color={rgb, 255:red, 0; green, 0; blue, 0 }  ][line width=0.75]    (10.93,-3.29) .. controls (6.95,-1.4) and (3.31,-0.3) .. (0,0) .. controls (3.31,0.3) and (6.95,1.4) .. (10.93,3.29)   ;
\draw    (350,71) -- (433,71) ;
\draw [shift={(435,71)}, rotate = 180] [color={rgb, 255:red, 0; green, 0; blue, 0 }  ][line width=0.75]    (10.93,-3.29) .. controls (6.95,-1.4) and (3.31,-0.3) .. (0,0) .. controls (3.31,0.3) and (6.95,1.4) .. (10.93,3.29)   ;

\draw (120,50) node [anchor=north west][inner sep=0.75pt]  [font=\small] [align=left] {\begin{minipage}[lt]{31.79pt}\setlength\topsep{0pt}
\begin{center}
Cluster\\Graph
\end{center}

\end{minipage}};
\draw (275,50) node [anchor=north west][inner sep=0.75pt]   [align=left] {\begin{minipage}[lt]{48.11pt}\setlength\topsep{0pt}
\begin{center}
{\small Estimation }\\{\small Forest}
\end{center}

\end{minipage}};
\draw (452,50) node [anchor=north west][inner sep=0.75pt]  [font=\small] [align=left] {\begin{minipage}[lt]{38.42pt}\setlength\topsep{0pt}
\begin{center}
 \ MNL 
\end{center}
Estimate
\end{minipage}};
\draw (27,50) node [anchor=north west][inner sep=0.75pt]  [font=\small] [align=left] {Algorithm~\ref{alg:cluster-sort}};
\draw (184,51) node [anchor=north west][inner sep=0.75pt]  [font=\small] [align=left] {Algorithm~\ref{alg:nlogn-build-estimation-forest}};
\draw (354,51) node [anchor=north west][inner sep=0.75pt]  [font=\small] [align=left] {Algorithm~\ref{alg:generate-weights} };
\end{tikzpicture}
\vspace{-6mm}
    \caption{The structure of our algorithm to learn MNLs adaptively. The non-adaptive algorithm follows the same overall structure, but the first two steps are replaced by $\AlgQuicksortClustering$ (described in \Cref{prop:quicksort-cluster-graph}) and by \Cref{alg:build-forest-non-adaptive} respectively.}
    \label{fig:overview-diagram}
\end{figure}
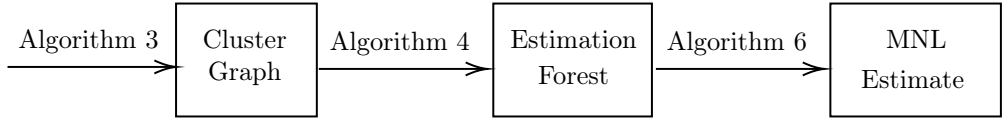

\subsection{Learning MNLs Non-Adaptively}
We next present an algorithm to learn MNLs non-adaptively, i.e.,\ by making a single batch of queries. In order to do this we leverage the following reduction.

\begin{lemma}\label{lem:reduction-from-non-adaptive-to-adaptive}
    Given an adaptive algorithm for learning MNLs with the $\maxsample$ oracle that queries any pair of items at most $m$ times, one can construct a non-adaptive algorithm for the same problem that makes at most $m \binom{n}{2} = O(mn^2)$ queries.
\end{lemma}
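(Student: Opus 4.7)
The plan is to exhibit a direct simulation: the non-adaptive algorithm pre-samples a pool of responses for every pair of items, and then runs the adaptive algorithm in its head, feeding it a pre-sampled response whenever it would have issued a query. Concretely, in its single non-adaptive batch the algorithm enumerates all $\binom{n}{2}$ unordered pairs $\{i,j\}\subseteq[n]$ and issues exactly $m$ calls $\maxsample(\{i,j\})$ for each, storing the results as i.i.d.\ samples $X^{i,j}_1,\ldots,X^{i,j}_m$, each distributed as $\Bern\!\bigl(w_i/(w_i+w_j)\bigr)$. The total query count is $m\binom{n}{2}=O(mn^2)$, matching the claim.

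In the (offline) post-processing phase, the algorithm runs the adaptive algorithm $A$, maintaining a counter $k_{ij}$ initialized to $0$ for each pair. Whenever $A$ requests a $\maxsample(\{i,j\})$ query, we increment $k_{ij}$ and return $X^{i,j}_{k_{ij}}$. Since $A$ queries any pair at most $m$ times by hypothesis, the counter never exceeds $m$ and the simulation is always well-defined. The output of the non-adaptive algorithm is whatever $A$ outputs at the end of this simulation.

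The correctness argument is a coupling: conditioned on the transcript of queries and responses issued by $A$ so far, the next response to $\{i,j\}$ in the real execution of $A$ is an independent sample from $\Bern(w_i/(w_i+w_j))$, and the same is true in the simulation because the pool $X^{i,j}_1,\ldots,X^{i,j}_m$ is i.i.d.\ and independent of everything else. By induction on the number of queries, the joint distribution of $A$'s transcript (and hence of its output) in the simulation matches the distribution in a real adaptive execution. Therefore, the accuracy guarantee of $A$ for the MNL Learning Problem transfers verbatim to the non-adaptive algorithm.

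The only subtle point — which I would treat as the main obstacle if one insisted on full rigor — is confirming that the coupling is valid even though the algorithm $A$ may be randomized and choose which pair to query based on all past responses. This is handled by noting that $A$'s internal randomness is independent of the pre-sampled pools, and that the pool $X^{i,j}_\cdot$ for pair $\{i,j\}$ is independent of all other pools; thus for any history of prior queries and responses, the next response served by the simulation is drawn from the correct conditional distribution. No other step requires calculation.
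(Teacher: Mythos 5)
Your proposal is correct and is essentially the paper's own argument: query each of the $\binom{n}{2}$ pairs $m$ times non-adaptively, then simulate the adaptive algorithm by serving each of its oracle calls a fresh pre-sampled response, which preserves the transcript distribution since the responses are i.i.d.\ and independent of the algorithm's internal randomness. The extra coupling discussion you include is a more careful spelling-out of what the paper leaves implicit, but it is the same reduction.
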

\begin{proof}
    The non-adaptive algorithm queries each pair $m$ times and then simulates the adaptive algorithm by replacing each $\maxsample$ oracle call with a revealed response from the set of non-adaptive queries. 
\end{proof}
The number of $\maxsample$ queries made to any pair $\{u,v\} \subseteq [n]$ of items by the adaptive algorithm described above could be as high as $\tilde{O}(n/\epsilon^3)$; this would naively yield an  $\tilde{O}({n^3 /\varepsilon^3})$-algorithm. Instead, we design an algorithm with query complexity $\tilde{O}(n^2/\varepsilon^3)$. To accomplish this, we modify the adaptive algorithm to obtain a new (adaptive) algorithm that has a worse overall query complexity than the algorithm of \Cref{thm:adaptive-nlogn}, but allows us to uniformly bound the number of queries made to each pair of items. In particular, in Section~\ref{sec:non-adaptive-algorithm} we show the following result.

\begin{restatable}[]{theorem}{ThmAdaptiveBalanced}\label{thm:adaptive-balanced}
Choose any $\epsilon,\delta \in (0,1)$. There exists an adaptive randomized algorithm that, with probability at least $1- \delta$, queries each pair at most $O\left(\frac{\log(n/\epsilon)\cdot \log(n/\delta)}{\epsilon^3}\right)$ times and solves the MNL Learning Problem on $[n]$ with accuracy parameter $\epsilon$. 
\end{restatable}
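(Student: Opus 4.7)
The plan is to obtain this theorem by taking the adaptive pipeline behind Theorem~\ref{thm:adaptive-nlogn}, namely \emph{cluster graph} $\to$ \emph{estimation-forest} $\to$ \emph{extracted MNL}, and replacing the original ratio-estimation primitive with the balanced variant $\AlgBalancedEstimateRatio$ that has a uniformly bounded per-invocation query cost. The third phase (reading $\hat M$ off the forest) is done purely in post-processing and makes no additional queries, so it suffices to cap the per-pair count in the first two phases.

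First, I would analyze the cluster-graph construction. The noisy-sorting subroutine of \cite{fjopr18} compares any single pair at most $O(\log(n/\delta)/\varepsilon^2)$ times, and the refinement into $(A_1,A_2,\varepsilon)$-cluster-graph edges of Definition~\ref{def:cluster-graph} only estimates ratios that are already known to be $O(1)$, each of which requires another $O(\log(n/\delta)/\varepsilon^2)$ queries per pair. Both contributions fit inside the budget $Q := O(\log(n/\varepsilon)\log(n/\delta)/\varepsilon^3)$.

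Second, and this is the heart of the argument, I would replace the estimator used while extending the cluster graph to an estimation-forest (Algorithm~\ref{alg:nlogn-build-estimation-forest}) with $\AlgBalancedEstimateRatio$. This primitive, introduced in Section~\ref{sec:algorithmic-primitives}, is designed so that a single call on $(u,v)$ uses $O(Q)$ queries of $\maxsample(\{u,v\})$ and returns either a $(1\pm\varepsilon)$-multiplicative estimate of $w_u/w_v$ (when this ratio is small enough to be resolved within the budget) or the sentinel $\infty$, with failure probability $\delta/n^{O(1)}$. I would then revisit the topology-construction proof of Section~\ref{sec:contructing-the-forest-adaptive} and verify that its case analysis only requires the primitive's \emph{qualitative} output: whenever a reported ratio is finite, it is used to label an edge of the forest, and whenever it is $\infty$, the infinity-thresholding argument of Section~\ref{sec:adaptive-algorithm} already bounds the induced $d_1$ error by $O(\varepsilon)$. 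The resulting object therefore satisfies Definition~\ref{def:estimation-forest} with the same constants $(t,\varepsilon)$ up to rescaling.

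Third, to control the per-pair query count, I would inspect Algorithm~\ref{alg:nlogn-build-estimation-forest} and argue that any fixed pair $\{u,v\}$ is the target of only $O(1)$ calls to $\AlgBalancedEstimateRatio$ over the entire execution: once the pair is classified as either $\infty$ or assigned an edge weight, the forest-building routine never revisits it. Combined with the contribution of the clustering phase, this yields an overall per-pair count of $O(Q)$. A union bound over the $O(n^2)$ pairs and $O(n)$ estimator invocations, each failing with probability $\delta/n^{O(1)}$, gives the stated $1-\delta$ success guarantee.

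The main obstacle will be in the second step, specifically in showing that the adaptive backtracking performed by Algorithm~\ref{alg:nlogn-build-estimation-forest} (when two clusters are found to be incomparable and the iteration restarts from the last comparable cluster) does not cause any pair to be re-queried beyond the target budget. I would handle this by charging each additional query to a structural event, either the creation of a new tree in the forest or the labeling of an edge as $\infty$, and arguing via a potential-function argument that each such event can be charged to a given pair only $O(1)$ times.
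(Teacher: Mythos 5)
There is a genuine gap, and it sits exactly where you flag "the main obstacle" --- but the obstacle is not the one you identify. You focus on bounding the number of \emph{invocations} of the ratio estimator per pair, which is indeed $O(1)$ and was never the problem. The real problem is that a \emph{single} invocation of the ratio estimator on a pair of cluster centers $\{c_i,c_j\}$ in \Cref{alg:nlogn-build-estimation-forest} costs $\Theta\!\left(\frac{Z_j}{\epsilon^3}\log\frac{n}{\delta}\right)$ queries to that one pair, where $Z_j=\sum_{\ell\le j}\alpha^{j-\ell}|C_\ell|$ can be $\Theta(n)$ (e.g.\ $|C_1|=n-2$, $|C_2|=|C_3|=1$). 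This cost is forced by the threshold $\beta_j=\Theta(\epsilon/Z_j)$: the correctness of the ``infinity'' branch (\Cref{lem:nlogn-distant-vertices-real}) uses quantitatively that $r(c_z,c_k)=\infty$ implies $w_{c_z}\ge w_{c_k}/\beta_k$, which is what lets the \emph{entire} prefix $\bigcup_{\ell\le k}C_\ell$ be written off with total error $\epsilon$. You cannot simply cap the budget at $O(Q)$ and return $\infty$ when the ratio is unresolved, because with a weaker (larger) threshold the infinity sentinel no longer certifies that the prefix weight is negligible, and the $d_1$ error analysis collapses. Your claim that the case analysis ``only requires the primitive's qualitative output'' is therefore false. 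The paper's fix requires three new ideas you do not supply: (i) weaken the threshold to $\beta_j=\Theta\!\left(\frac{\epsilon}{|C_j|\Lambda(n)}\right)$ with $\Lambda(n)=\Theta(\log\frac{n}{\epsilon})$, which is sound only because the forest construction is restructured to look back at most $\Lambda(n)$ clusters, test them in increasing index order, locate the first finite estimate $j_m$, and obtain the remaining ratios by composition through $c_{j_m}$ (clusters beyond the $\Lambda(n)$ window are handled by geometric decay of the center weights); (ii) spread the resulting $O(|C_j|\log^2 n)$ queries over all pairs $\{c_i,s\}$ with $s\in C_j$ via the geometric estimator $\AlgGetGeometric$ composed with the cluster-graph edge weights and a median-of-means aggregation --- this is what $\AlgBalancedEstimateRatio$ actually does, not a per-pair-capped version of $\AlgEstimateRatio$; and (iii) re-prove the estimation-forest properties for this new topology and threshold.

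A second, smaller but still real, error: you assert that the noisy-sorting subroutine of \cite{fjopr18} queries each pair at most $O(\log(n/\delta)/\epsilon^2)$ times. The paper states explicitly that this subroutine can compare some pairs $\Omega(\log^3 n)$ times, which exceeds the budget $Q=O(\log(n/\epsilon)\log(n/\delta)/\epsilon^3)$ for constant $\epsilon,\delta$; this is why the first phase is replaced by $\AlgQuicksortClustering$ (\Cref{prop:quicksort-cluster-graph}), a repeated-comparison Quicksort that trades a larger total query count for a uniform $O(\log(n/\delta)/\epsilon^2)$ per-pair bound.
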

To obtain this result, we use three new technical ingredients. First, we make use of a different algorithm to approximately order the items of the MNL (\Cref{sec:appendix-non-adaptive}). This algorithm, which is a straight-forward adaptation of the classical Quicksort algorithm, makes more queries than the previous one overall, but guarantees a uniform upper bound on the number of queries on each pair of items.
Second, we introduce a new algorithm to construct the estimation-forest. This algorithm only needs to make $O(|C_j| \log^2 n)$ comparisons between any pair $\{c_i,c_j\}$ of cluster centers (with $i > j$) whenever the ratio $\nicefrac{w_{c_i}}{w_{c_j}}$ is estimated.
Finally, we introduce a subroutine (Algorithm~\ref{alg:get-ratio-estimator}) that allows one to amortize the cost of estimating the ratio $\nicefrac{w_{c_i}}{w_{c_j}}$ among all the pairs of the form $\{c_i,s\}$, where $s$ belongs to the cluster $C_j$. This allows one to distribute the $O(|C_j| \log^2 n)$ cost nearly equally among all items in $C_j$, and hence to guarantee each pair is queried at most $O(\log^2n)$ times.

Combining \Cref{thm:adaptive-balanced} and \Cref{lem:reduction-from-non-adaptive-to-adaptive} yields: 
\begin{corollary}\label{cor:non-adaptive-algo}
Choose any $\epsilon,\delta \in (0,1)$. There exist a non-adaptive algorithm that, with probability at least $1-\delta$, makes at most ${O}\left({n^2\cdot \log(n/\epsilon)\cdot \log(n/\delta)\over \varepsilon^3}\right)$ queries and solves the MNL Learning Problem on $[n]$ with accuracy parameter $\epsilon$.
\end{corollary}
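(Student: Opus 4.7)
The plan is to combine the two ingredients stated immediately before the corollary: the adaptive algorithm of \Cref{thm:adaptive-balanced}, whose per-pair query count is uniformly bounded, together with the generic reduction in \Cref{lem:reduction-from-non-adaptive-to-adaptive}. Since both are already stated, the proof amounts to a bookkeeping step plus a short argument that the simulation in the reduction faithfully reproduces the behavior of the adaptive algorithm.

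More concretely, I would first invoke \Cref{thm:adaptive-balanced} with the given $\varepsilon$ and $\delta$ to obtain an adaptive algorithm $\mathcal{A}$ that, with probability at least $1-\delta$, solves the MNL learning problem with accuracy $\varepsilon$ and queries every (unordered) pair of items at most $m = O\!\left(\frac{\log(n/\varepsilon)\log(n/\delta)}{\varepsilon^{3}}\right)$ times. I would then instantiate the construction in \Cref{lem:reduction-from-non-adaptive-to-adaptive} with this value of $m$: the non-adaptive algorithm issues $m$ independent $\maxsample$ queries to each pair $\{u,v\}\subseteq[n]$ up front, storing the responses as a list $R_{u,v}$, and subsequently simulates $\mathcal{A}$, answering its $i$-th query to the pair $\{u,v\}$ with the $i$-th entry of $R_{u,v}$.

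The correctness of this simulation is where the small amount of care is needed. Because the $\maxsample$ oracle returns independent samples from $M_{\{u,v\}}$ on each invocation, the joint distribution of the (at most $m$) responses that $\mathcal{A}$ sees on any pair $\{u,v\}$ is identical whether these responses are drawn adaptively or fixed in advance; adaptivity only controls \emph{which} pair to query next, not the distribution of the response. Conditioning on the high-probability event that $\mathcal{A}$ never exceeds $m$ queries on any pair (guaranteed by \Cref{thm:adaptive-balanced}), the simulated execution is distributionally identical to a real execution of $\mathcal{A}$, and thus inherits its $(1-\delta)$ success guarantee.

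Finally, the total query count of the non-adaptive algorithm is
\[
m\cdot\binom{n}{2}
\;=\;O\!\left(\frac{n^{2}\log(n/\varepsilon)\log(n/\delta)}{\varepsilon^{3}}\right),
\]
which matches the claimed bound. The only place that requires any thought is the simulation-equivalence argument above; everything else is a direct application of the two cited results, so I would not expect a significant obstacle in writing this out formally.
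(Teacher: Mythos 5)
Your proposal is correct and matches the paper's argument exactly: the paper derives the corollary by combining \Cref{thm:adaptive-balanced} with \Cref{lem:reduction-from-non-adaptive-to-adaptive}, whose proof is precisely the "query every pair $m$ times up front, then replay" simulation you describe. Your additional remark on the distributional equivalence of the simulated and adaptive executions is a sound (and slightly more careful) elaboration of a point the paper leaves implicit.
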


\subsection{Lower Bounds}
We prove lower bounds that show that our adaptive algorithm has optimal dependence on $n$, and that our non-adaptive algorithm has nearly-optimal (at most a $\log n$ factor away from optimal) dependence on $n$.  Moreover, both algorithms are only a factor of $1/\varepsilon$ away from optimal in terms of their dependence on the accuracy parameter $\varepsilon$.  We prove lower bounds on the easier task of producing an estimate $\hat{M}$ with $d_{\infty}(M,\hat{M})\leq \varepsilon$, and these in turn imply lower bounds on obtaining an approximation in the $d_1$-distance.

For learning MNLs with  adaptive queries to $\maxsample$, in \Cref{sec:adaptive-lower-bound} we show the following.
\begin{restatable}
{theorem}{AdaptiveLowerBound}\label{thm:adaptive-lower-bound}
    Any (possibly randomized and adaptive) algorithm that, given in input $\varepsilon,\delta\in (0,1)$ and access to a $\maxsample$ oracle for any MNL $M$, outputs an MNL $\hat{M}$ satisfying:
    \[
        \Pr[d_{\infty}(M, \hat{M})\leq \varepsilon] \geq 1-\delta,
    \]
    must make $\Omega({n\over \varepsilon^2} \log {n\over \delta})$ queries in the worst case.
\end{restatable}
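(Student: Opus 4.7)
The plan is to reduce from the classical problem of identifying the signs of $n/2$ independent biased Bernoulli coins---equivalently, the top-$(n/2)$ identification problem in a multi-armed bandit with $n$ arms---for which a matching $\Omega(\tfrac{n}{\varepsilon^2}\log(n/\delta))$ sample-complexity lower bound is known in the bandit literature. The hard family of MNL instances is indexed by a sign vector $b\in\{-1,+1\}^{n/2}$: fix a large base $B=\mathrm{poly}(n/(\varepsilon\delta))$, pair up the items as $\{2i-1,2i\}$ for $i\in[n/2]$, and set
\[
    w_{2i-1}=B^i\cdot(\tfrac{1}{2}+2b_i\varepsilon),\qquad w_{2i}=B^i\cdot(\tfrac{1}{2}-2b_i\varepsilon).
\]
The induced pairwise distribution satisfies $(M_b)_{\{2i-1,2i\}}(2i-1)=\tfrac{1}{2}+2b_i\varepsilon$, so any estimate $\hat M$ with $d_\infty(M_b,\hat M)\leq\varepsilon$ must correctly identify $b_i$ for every $i\in[n/2]$.

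The next step is to argue that any adaptive slate query can be simulated, up to negligible total-variation loss, by a single pair query. For a slate $S$, let $i^\star(S):=\max\{i:\{2i-1,2i\}\cap S\neq\emptyset\}$. The items of $S$ from pairs with index strictly less than $i^\star(S)$ contribute total weight at most $2\sum_{j<i^\star}B^j=O(B^{i^\star-1})$, while pair $i^\star(S)$ contributes $\Theta(B^{i^\star})$; hence with probability $1-O(n/B)$ the answer to $\maxsample(S)$ lies in pair $i^\star(S)$ and, conditioned on that event, it is distributed as $\maxsample(S\cap\{2i^\star-1,2i^\star\})$. It suffices to consider algorithms whose total query budget $Q$ is bounded by the target lower bound (any larger $Q$ already proves the theorem), so choosing $B$ polynomially larger than $Qn/\delta$ and taking a union bound shows that any MNL learner making $Q$ adaptive queries and succeeding with probability $1-\delta$ on $M_b$ can be converted into an algorithm that identifies the signs $b_1,\dots,b_{n/2}$ of $n/2$ independent coins of bias $\tfrac{1}{2}\pm 2\varepsilon$, using at most $Q$ coin flips and succeeding with probability at least $1-2\delta$.

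The final step invokes the standard lower bound for multi-coin sign identification (the top-$k$ best-arm identification problem with $k=n/2$): correctly identifying the sign of all $m=n/2$ coins of bias $\tfrac{1}{2}\pm 2\varepsilon$ with probability at least $1-2\delta$ requires $\Omega(\tfrac{m}{\varepsilon^2}\log(m/\delta))=\Omega(\tfrac{n}{\varepsilon^2}\log(n/\delta))$ coin flips, even with adaptivity. Composing this bound with the reduction above yields the claimed lower bound for MNL learning.

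The main obstacle is making the slate-to-pair simulation precise: an adaptive adversary might construct slates that mix items from many pairs, and one must verify that no such slate can convey more per-query information than the corresponding single-pair query. The geometric separation between pair weights, controlled by $B$, handles this, but to make the argument hold uniformly over all adaptive query sequences we must set $B$ large enough to absorb a union bound over $Q$. Since we may WLOG restrict attention to $Q$ less than the target bound, this apparent circularity dissolves. A secondary concern is that the coin-identification lower bound must be cited in the stronger $\log(n/\delta)$ form, which requires the top-$k$ variant (with $k$ linear in $n$) rather than the weaker single-arm formulation that would only give $\log(1/\delta)$.
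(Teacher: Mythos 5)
Your proposal is correct and follows essentially the same route as the paper: pair up the items, separate consecutive pairs by a huge weight ratio so that every slate query is information-theoretically equivalent to a query on its highest pair, and reduce to the top-$k$ multi-armed-bandit identification lower bound of \citet{ktas12}. The only packaging differences are that the paper formalizes the "infinite separation" by taking limits of MNLs (pseudo-MNLs) and targets the approximate top-$\frac{n}{4}$ subset-selection (Explore-$m$) problem, whereas you use a finite base $B$ with a union bound over the (WLOG bounded) query budget and target exact sign identification of all $n/2$ coins; both work, but as you yourself flag, the $\log(n/\delta)$ factor must indeed be sourced from the Explore-$m$/top-$k$ form of the bandit lower bound (e.g., \citet[Theorem~8]{ktas12}) rather than the single-coin $\log(1/\delta)$ bound.
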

    
For the non-adaptive case, in Section~\ref{sec:non-adaptive-lower-bound} we show the following.
\begin{restatable}
{theorem}{NonAdaptiveLowerBound}\label{thm:non-adaptive-lower-bound}
Any (possibly randomized) non-adaptive algorithm that, given in input $\varepsilon\in (0,1)$ and access to a $\maxsample$ oracle for any MNL $M$,  outputs an MNL $\hat{M}$ satisfying:
    \[
        \Pr[d_{\infty}(M, \hat{M})\leq \varepsilon] \geq {9\over 10},
    \]
    must make $\Omega({n^2 \over \varepsilon^2} \log {n})$ queries in the worst case.
\end{restatable}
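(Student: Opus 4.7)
The plan is to apply Yao's minimax principle with a hard distribution of MNL instances that reduces the estimation problem to discovering the biases of many near-fair coins. Fix a very large weight gap $A=(n/\varepsilon)^{10}$. The hard instance is parameterized by a uniformly random perfect matching $\pi$ of $[n]$ (with pairs $(i_1,j_1),\dots,(i_{n/2},j_{n/2})$) and an independent uniform sign vector $\sigma\in\{\pm 1\}^{n/2}$; the weights are $w_{i_k}=(\tfrac12+2\sigma_k\varepsilon)A^k$ and $w_{j_k}=(\tfrac12-2\sigma_k\varepsilon)A^k$. Intra-pair conditional probabilities are $\tfrac12\pm 2\varepsilon$ (separated by $4\varepsilon$), so distinguishing $\sigma_k$ on pair $k$ is a biased-coin problem with bias $\Theta(\varepsilon)$; inter-pair ratios exceed $n/\varepsilon$, so on every slate the heaviest intersected pair dominates the response distribution.

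The first step reduces to pair queries. For a slate $S$ with $k^{\star}(S)=\max\{k:\{i_k,j_k\}\subseteq S\}$, the distribution of $\maxsample(S)$ is within $O(1/A)=o(\varepsilon/n)$ total variation of $\mathrm{Bern}(\tfrac12+2\sigma_{k^{\star}}\varepsilon)$ on the winner of pair $k^{\star}$; any slate in which no pair is fully contained yields a near-deterministic answer (the globally heaviest item wins) that does not depend on $\sigma$. We may therefore charge each query to at most one pair of $\pi$ and account for it as a single coin flip of that pair; in particular, restricting WLOG to pair queries is lossless for the lower bound. A $d_{\infty}$-error of at most $\varepsilon$ on pair $k$ forces correct recovery of $\sigma_k$, since the two possible conditional probabilities differ by $4\varepsilon$. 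With $n/2$ coins and required overall success probability $9/10$, a union bound demands per-coin error at most $1/(5n)$, which by standard Le Cam / KL-divergence arguments requires $T:=c_1\log n/\varepsilon^{2}$ samples of the specific pair $\{i_k,j_k\}$.

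The second step is the non-adaptive averaging. By Yao we consider a deterministic algorithm with fixed query multiset of size $m$, writing $q_{uv}$ for the multiplicity of pair $\{u,v\}$. Let $H=\{\{u,v\}:q_{uv}\geq T\}$ be the set of heavy edges; then $|H|\leq m/T$. The algorithm can succeed on a drawn instance only if every edge of $\pi$ lies in $H$; otherwise at least one pair of $\pi$ receives fewer than $T$ samples and its $\sigma_k$ cannot be determined with the required confidence. The key combinatorial estimate is that for any fixed edge set $H$, the probability that a uniformly random perfect matching of $[n]$ lies entirely in $H$ is at most $\binom{|H|}{n/2}/(n-1)!!\leq \bigl(O(|H|/n^{2})\bigr)^{n/2}$, combining Stirling's estimate $(n-1)!!=\Theta((n/e)^{n/2})$ with $\binom{|H|}{n/2}\leq (2e|H|/n)^{n/2}$. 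Taking $m\leq c_0 n^{2}\log n/\varepsilon^{2}$ for a sufficiently small constant $c_0$ yields $|H|\leq c_0 n^{2}/c_1$, making this probability $o(1)$: with probability $\geq 1/2$ over $\pi$, some pair of $\pi$ is outside $H$, and the algorithm then fails on that pair with constant probability over $\sigma$ and the oracle. Summing the failure events contradicts the $9/10$ success guarantee, forcing $m=\Omega(n^{2}\log n/\varepsilon^{2})$.

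The main obstacle is the reduction to pair queries together with the per-query information accounting: one must verify that a single oracle call to a slate of arbitrary size (possibly chosen to straddle many pairs of $\pi$) cannot be leveraged to simultaneously estimate several $\sigma_k$'s, and that the $o(\varepsilon/n)$ per-query total-variation slack cannot accumulate across the $m$ queries in a way that subverts the coin-distinguishing lower bound. The very large separation $A=(n/\varepsilon)^{10}$, together with a careful per-query KL bound showing that each query contributes $\Theta(\varepsilon^{2})$ of KL divergence to at most one coin, keeps the bookkeeping clean, but it is where most of the technical care is needed.
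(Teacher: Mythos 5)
Your high-level plan — Yao's principle, a random-permutation hard instance built from near-fair coins with exponentially separated scales, and an averaging argument showing that most queries are wasted on pairs that are not in the random matching — is the same blueprint the paper uses. However, there is a genuine gap in how you extract the $\log n$ factor, and it is the crux of the theorem.

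You set a per-pair sample threshold $T = c_1 \log n/\varepsilon^2$ and justify it by writing ``With $n/2$ coins and required overall success probability $9/10$, a union bound demands per-coin error at most $1/(5n)$.'' This inverts the direction of the union bound. The bound $\Pr[\exists k: \text{coin } k \text{ wrong}]\le\sum_k\Pr[\text{coin }k\text{ wrong}]$ is an \emph{upper} bound on the left side; the constraint $\Pr[\exists k:\text{wrong}]\le 1/10$ tells you nothing stronger about each marginal than $\Pr[\text{coin }k\text{ wrong}]\le 1/10$. Distinguishing a single $\pm2\varepsilon$-biased coin with constant marginal confidence needs only $\Theta(1/\varepsilon^2)$ flips. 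If you run your own argument with the correct threshold $T=\Theta(1/\varepsilon^2)$, the heavy-edge set has size $|H|\le m/T$, the matching-coverage step forces $|H|=\Omega(n^2)$, and you obtain only $m=\Omega(n^2/\varepsilon^2)$ — the $\log n$ is lost.

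The paper sidesteps this by not re-deriving any per-coin sample complexity at all. It reduces to the \emph{approximate top-$\lfloor n/(12e)\rfloor$ coin selection} problem, where the joint recovery of $\Theta(n)$ biases with fixed confidence is a ``many hypotheses'' problem, and cites the known $\Omega\left(\frac{n}{\varepsilon^2}\log n\right)$ lower bound for it (Kalyanakrishnan et al.), which internally does the Fano-style computation that produces the $\log n$. The averaging step is then packaged via Markov: the expected number of queries that ``highlight'' one of the bottom $\Theta(n)$ pairs under a random $\pi$ is $O(m/n)$, so with probability $\ge 9/10$ the useful query budget is $O(m/n)$, and the known lower bound on that budget yields $m=\Omega\left(\frac{n^2}{\varepsilon^2}\log n\right)$. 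Your random-matching-coverage calculation (bounding $\binom{|H|}{n/2}/(n-1)!!$) is a reasonable alternative way to run the averaging step, but you still need to plug it into a joint ``all-coins-simultaneously'' lower bound rather than a per-coin threshold; as written, the proposal does not establish the $\log n$ factor. As a minor additional point, your definition of $k^\star(S)$ tracks the highest \emph{fully contained} pair, but the oracle's behavior on $S$ is governed by the single highest-indexed item in $S$; a query that contains only one element of the top pair is deterministic even if lower pairs are fully contained, so the bookkeeping of which pair (if any) a query informs needs this extra condition, as in the paper's notion of ``highlighting.''
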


Both the lower bounds we provide are based on reductions from the problem of approximately identifying the $n\over 2$ coins with the largest probability of \emph{heads} in a set of $n$ biased coins.

\subsection{Future Work}
In this work we essentially resolved the complexity of learning MNLs via $\maxsample$ queries in the adaptive setting. Future work could, however, tackle a number of technical improvements. The main question we leave open is finding the optimal dependence on $\epsilon$ for adaptive algorithms. We highlight here some challenges in obtaining an algorithm with a better dependence in $\epsilon$. 

First, we observe that the analysis of our $O(\frac{n \log n}{\epsilon^3})$ algorithm is tight. Our algorithm constructs a forest with vertex set equal to the items, and each edge $(a,b)$ in the forest is labeled with a $(1\pm\epsilon)$-estimate of the ratio $w_a/w_b$. It can be shown that the topology of the forest can be obtained with $O(\frac{n \log n}{\epsilon^2})$ queries---our algorithm pays an extra $\epsilon^{-1}$ factor in estimating the ratios on the edges. Specifically, consider the instance $w_i=(2\epsilon)^i$ for $i\in[n]$, $\epsilon\in(0,1/4)$. Since $w_i > 2w_{i+1}$ but $w_{i+1} / w_{i} > \epsilon$, one can show that our algorithm will build a forest with $\Theta(n)$ edges. The ratio on each such edge is upper bounded by $O(\epsilon)$ and therefore estimating it within $(1\pm \epsilon)$ with high probability would require $\Theta(\frac{\log n}{\epsilon^3})$ queries---thus, our algorithm makes $\Omega(\frac{n \log n}{\epsilon^3})$ queries on this instance.    

We also mention that our algorithm, in general, requires estimates as accurate as $1\pm \epsilon$. Consider a subset of the instance containing one large item of weight $w_1=1$ and $\frac{1}{\varepsilon}$ small items ($w_2, \dots, w_{t}$ for $t=1/\epsilon+1$) of weight $\varepsilon.$ Our algorithm would separate these items into two clusters, one containing only $w_1$ and the other containing $w_2, \dots, w_t$, and then it would estimate the ratio of the two centers. If this estimate is off by significantly more than $1\pm\varepsilon$ (say $1\pm{v}$, with $v>\epsilon$), then the ratio of the large item’s weight to the total of the small items also has error $1\pm{v}$, causing the estimated winning probability of the large item against all the small ones to be wrong by an additive $\Theta(v)$.

A natural direction to explore would be choosing the precision on the edges dynamically rather than always using $1\pm\epsilon$. However, this would require a substantially different analysis and a different estimation-forest (or estimation-graph) topology. Indeed, our current topology can create stars where an item of weight $w_1=1$ gets attached to two items: one of weight $w_2=2\epsilon$ and the other of weight $w_3=5\epsilon$. Thus, one is forced to estimate the ratios between $\{w_1, w_2\}$ and $\{w_1,w_3\}$ within $1\pm\epsilon$ so to maintain a good estimate for $\{w_2,w_3\}$ as well---even though $w_1$ is much larger than $w_2$ and $w_3$. Note that it is easy to construct an instance where this construction appears $\Theta(n)$ times, resulting in a cost of $\Theta(\frac{n \log n}{\epsilon^3})$ if one uses the topology produced by our algorithm. Thus, a substantially different algorithm and analysis would be required to improve the dependency on $\epsilon$. 

Finally, it is unclear if an  $O(\frac{n \log n}{\epsilon^2})$ algorithm exists at all. In a slightly more general model than MNLs, \citet{fjopr18} showed that if one wants to approximate the distributions on all pairs by querying only pairs, then $\Omega(\frac{n \log n}{\epsilon^3})$ queries are necessary (under the assumption that $n\geq 1/\epsilon$). While this result does not apply to our setting, since it was proved in a more general model, it provides some evidence that $\epsilon^{-2}$ is not necessarily achievable. On the other hand, there is a trivial $O(n2^n/\epsilon^2)$ algorithm if we can query slates of arbitrary size (see \Cref{sec:exponential-time-algorithm-gives-epsilon-2})---however, this is better than $O(\frac{n \log n}{\epsilon^3})$ only when $\epsilon< 2^{-n}\log n$. Under the natural assumption that $n\geq 1/\epsilon$, it is not clear whether one can do better than $O(\frac{n \log n}{\epsilon^3})$.

\section{Algorithmic Primitives}\label{sec:algorithmic-primitives}
    We will make use of the following two key subroutines, $\AlgCompare$ and $\AlgEstimateRatio$, and we will frequently refer to their guarantees provided below.

    \begin{algorithm}[h]
    \caption{\AlgCompare$(i,j,c,\varepsilon,\delta)$}
    \begin{algorithmic}[1]
    \State \textbf{Input:} Two items $i$ and $j$ of $[n]$, parameters $c, \varepsilon, \delta \in (0,1)$, and access to a $\maxsample$ oracle for an MNL $M$ supported on $[n]$.
    \State \textbf{Output:} Estimates $\hat{p}_{i}$ and $\hat{p}_{j}$ of $M_{\{i,j\}}(i)$ and $M_{\{i,j\}}(j)$ respectively.
    \State Make $m = {20\over c \cdot \varepsilon^2}\ln\left({6\over \delta}\right)$ queries to $\maxsample(\{i,j\})$ and let $m_i$ and $m_j$ be the number of queries that return $i$ and $j$ respectively.
    \State Let $\hat{p}_i = {m_i\over m}$ and $\hat{p}_j = {m_j\over m}$
    \If{$\hat{p}_i <c/2$}
        \State \textbf{return} $(0, \hat{p}_j)$
    \EndIf
    \If{$\hat{p}_j <c/2$}
        \State \textbf{return} $(\hat{p}_i,0)$
    \EndIf
    \State \textbf{return} $(\hat{p}_i, \hat{p}_j)$
    \end{algorithmic}
\end{algorithm}

In particular, a simple consequence of standard tail bounds is the following guarantee, which we prove for completeness in \Cref{sec:appendix-algo-primitives}.

\begin{restatable}[\AlgCompare{} guarantees]{lemma}{CompareGuarantees}\label{lem:compare-guarantees}
For any $c,\epsilon, \delta \in(0,1)$, \AlgCompare$(i,j,c,\varepsilon,\delta)$ makes $O\left({1\over c\varepsilon^2}\log{1\over \delta}\right)$ queries and outputs a pair $(\hat{p}_i,\hat{p}_j)$ that, with probability at least $1-\delta$ satisfies, for $k\in\{i,j\}$:

\begin{enumerate}[nosep]
    \item If $M_{\{i,j\}}(k) \leq c/4$, then $\hat{p}_k = 0$,
    \item If $M_{\{i,j\}}(k) \geq c$, then $\hat{p}_k \neq 0$,
    \item If $\hat{p}_k \neq 0$ then $(1-\epsilon) M_{\{i,j\}}(k) \leq \hat{p}_k \leq (1+\epsilon) M_{\{i,j\}}(k)$.
\end{enumerate}
\end{restatable}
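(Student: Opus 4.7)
The plan is to apply a Chernoff-bound analysis to $\hat{p}_i$ and $\hat{p}_j$, which are the empirical means of $m = \frac{20}{c\varepsilon^2}\ln(6/\delta)$ i.i.d.\ Bernoulli samples with parameters $p_k := M_{\{i,j\}}(k)$ for $k \in \{i,j\}$. The query complexity bound $O\!\left(\frac{1}{c\varepsilon^2}\log\frac{1}{\delta}\right)$ is immediate from the value of $m$.

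Fix $k\in\{i,j\}$. I would decompose the failure event for item $k$ into three sub-events:
\begin{align*}
    E_k^{(1)} &= \{p_k \leq c/4 \text{ and } \hat{p}_k \geq c/2\},\\
    E_k^{(2)} &= \{p_k \geq c \text{ and } \hat{p}_k < c/2\},\\
    E_k^{(3)} &= \{p_k > c/4 \text{ and } |\hat{p}_k - p_k| > \varepsilon p_k\}.
\end{align*}
The first is precisely the failure of condition~1 and the second is the failure of condition~2. The third captures the failure of condition~3 whenever the output is non-zero and $p_k > c/4$; if instead $p_k \leq c/4$ but the algorithm outputs a non-zero value (i.e., $\hat{p}_k \geq c/2$), then $|\hat{p}_k-p_k| \geq c/4 > \varepsilon p_k$ automatically, so the condition-3 violation coincides with $E_k^{(1)}$.

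For each sub-event I would invoke a standard form of the multiplicative Chernoff bound. For $E_k^{(1)}$, writing $c/2 = (1+\lambda)p_k$ with $\lambda \geq 1$, the large-deviation upper tail $\Pr[\hat{p}_k \geq (1+\lambda)p_k] \leq \exp(-mp_k\lambda/3)$ yields a bound of $\exp(-mc/12)$. For $E_k^{(2)}$, writing $c/2 = (1-\mu)p_k$ with $\mu \geq 1/2$, the lower-tail form $\Pr[\hat{p}_k \leq (1-\mu)p_k] \leq \exp(-mp_k\mu^2/2)$ yields $\exp(-mc/8)$. For $E_k^{(3)}$, the two-sided form $\Pr[|\hat{p}_k - p_k| > \varepsilon p_k] \leq 2\exp(-mp_k\varepsilon^2/3)$ combined with $p_k > c/4$ gives $2\exp(-mc\varepsilon^2/12)$. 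Substituting the chosen value of $m$ shows that each of the three probabilities is at most $\delta/6$, and a union bound over the six events (three sub-events for each of the two items) yields total failure probability at most $\delta$.

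The step that requires the most attention is purely arithmetic rather than conceptual: verifying that the constant $20$ in the definition of $m$ is large enough to drive each Chernoff tail below $\delta/6$. This reduces to checking inequalities of the form $(6/\delta)^{-\alpha} \leq \delta/6$ for certain exponents $\alpha \geq 5/3$, all of which hold for every $\delta \in (0,1)$.
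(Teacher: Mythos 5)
Your proposal is correct and follows essentially the same route as the paper: a case split on where $M_{\{i,j\}}(k)$ lies relative to the thresholds $c/4$ and $c$, a standard tail bound for each resulting sub-event, and a union bound over the two items. The only cosmetic difference is that you use multiplicative Chernoff bounds (including the $\lambda\ge 1$ upper-tail form) where the paper invokes Bernstein's inequality for the threshold-crossing events; the arithmetic with $m=\frac{20}{c\varepsilon^2}\ln(6/\delta)$ indeed drives each of the six sub-event probabilities below $\delta/6$ (including the factor $2$ in the two-sided bound), so the argument goes through.
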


This in turn implies the following lemma, which shows guarantees on the behavior of $\AlgEstimateRatio$. This is also proved in \Cref{sec:appendix-algo-primitives}.

\begin{algorithm}
\caption{\AlgEstimateRatio$(i,j,\alpha,\varepsilon,\delta)$}
\label{alg:estimage-ratio}
\begin{algorithmic}[1]
\State \textbf{Input:} Two items $i,j\in [n]$, parameters $\alpha, \varepsilon, \delta \in (0,1)$, and access to a $\maxsample$ oracle for an MNL $M$ supported on $[n]$ with weights $\{w_1, \dots, w_n\}$.
\State \textbf{Output:} An estimate $r(i,j)$ of the ratio of the weights $w_i/w_j$ in the MNL.
\State Let $c=\alpha/(\alpha+1)$.
\State $(\hat{p}_i,\hat{p}_j) \myassign $ \AlgCompare$(i,j,c, \nicefrac{\varepsilon}{3}, \delta)$
\If{$\hat{p}_i=0$}
    \State \textbf{return }$r(i,j) \myassign 0$
\EndIf
\If{$\hat{p}_j=0$}
    \State\textbf{return }$r(i,j) \myassign \infty$ 
\EndIf
\State \textbf{return }$r(i,j) \myassign {\hat{p}_i \over \hat{p}_j}$
\end{algorithmic}
\end{algorithm}

\begin{restatable}[\AlgEstimateRatio{} Guarantees]{lemma}{EstimateRatioGuarantees}\label{lem:weight-pair-ratio-estimate}
    Given two items $i$ and $j$ of $[n]$, and parameters $\alpha$, $\varepsilon$, and $\delta$ in $(0,\frac12]$, the algorithm $\AlgEstimateRatio(i,j,\alpha,\varepsilon, \delta)$ makes $O({1\over \alpha \varepsilon^2 } \log {1\over \delta})$ queries and produces an estimate $r(i,j)$ of the ratio $\frac{w_i}{w_j}$ that, with probability $1-\delta$, satisfies the following guarantees:
    \begin{enumerate}
        \item If $\frac{w_i}{w_j} \le \frac{\alpha}{3\alpha+4}$, then $r(i,j) = 0$.
        \item If $\frac{w_i}{w_j} \ge \frac{3\alpha+4}{\alpha}$, then $r(i,j) = \infty$.
        \item If $\frac{w_i}{w_j} \le \frac{1}{\alpha}$, then $r(i,j) \neq \infty$, and if $\frac{w_i}{w_j} \ge \alpha$ then $r(i,j) \neq 0$.
        \item Whenever $r(i,j) \not\in\{0, \infty\}$:
        \[
            r(i,j) \in (1\pm\varepsilon) {w_i\over w_j}\hspace{3mm}\text{ and }\hspace{3mm}
             \frac{1}{r(i,j)} \in (1\pm \varepsilon) {w_j\over w_i}.
        \]
    \end{enumerate}
\end{restatable}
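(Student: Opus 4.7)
The plan is to derive each of the four claims from \Cref{lem:compare-guarantees} applied to the single call $\AlgCompare(i,j,c,\varepsilon/3,\delta)$ with $c = \alpha/(\alpha+1)$. Throughout, I would condition on the high-probability event, of probability at least $1-\delta$, under which \AlgCompare{} satisfies its three guarantees simultaneously. The query-complexity claim is immediate: since $\alpha \in (0,\tfrac12]$ gives $c = \alpha/(\alpha+1) \geq \alpha/2$, the cost of \AlgCompare{} with parameters $c$ and $\varepsilon/3$ is $O(1/(c(\varepsilon/3)^2)\log(1/\delta)) = O(1/(\alpha\varepsilon^2)\log(1/\delta))$, as required.

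The calibration of the constant $c = \alpha/(\alpha+1)$ is the heart of the argument. Using the identity
\[
    M_{\{i,j\}}(i) \;=\; \frac{w_i/w_j}{1 + w_i/w_j},
\]
which is strictly increasing in $w_i/w_j$, a direct algebraic check gives $M_{\{i,j\}}(i) = c$ exactly when $w_i/w_j = \alpha$, and $M_{\{i,j\}}(i) = c/4$ exactly when $w_i/w_j = \alpha/(3\alpha+4)$. These are precisely the probability thresholds appearing in guarantees (2) and (1) of \Cref{lem:compare-guarantees}, so the ratio thresholds stated in the lemma are an exact translation.

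From here, properties 1--3 follow by monotonicity. If $w_i/w_j \leq \alpha/(3\alpha+4)$, then $M_{\{i,j\}}(i) \leq c/4$, so \Cref{lem:compare-guarantees}(1) forces $\hat{p}_i = 0$ and the algorithm returns $0$, yielding property~1. Swapping the roles of $i$ and $j$ (and noting $M_{\{i,j\}}(j)$ is the ratio-function of $w_j/w_i$) gives property~2. For property~3, $w_i/w_j \geq \alpha$ implies $M_{\{i,j\}}(i) \geq c$, so \Cref{lem:compare-guarantees}(2) yields $\hat{p}_i \neq 0$, hence $r(i,j) \neq 0$; the statement for $\neq \infty$ is symmetric.

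Property~4 is the only step that requires a small estimate. When both $\hat{p}_i,\hat{p}_j \neq 0$, \Cref{lem:compare-guarantees}(3) gives $\hat{p}_k \in (1\pm \varepsilon/3)\,M_{\{i,j\}}(k)$ for $k \in \{i,j\}$, so
\[
    r(i,j) \;=\; \frac{\hat{p}_i}{\hat{p}_j} \;\in\; \left[\frac{1-\varepsilon/3}{1+\varepsilon/3},\,\frac{1+\varepsilon/3}{1-\varepsilon/3}\right]\cdot \frac{w_i}{w_j}.
\]
A short verification using $\varepsilon \leq 1/2$ shows that this multiplicative interval sits inside $[1-\varepsilon,\,1+\varepsilon]$, which establishes the bound for $r(i,j)$; the bound for $1/r(i,j)$ follows by inverting. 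There is no real technical obstacle: the proof is essentially bookkeeping, and the only subtlety worth pausing over is convincing oneself that the particular choice $c = \alpha/(\alpha+1)$ is exactly what makes the probability thresholds of \AlgCompare{} line up with the ratio thresholds demanded by the statement.
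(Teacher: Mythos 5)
Your proposal is correct and follows essentially the same route as the paper's proof: condition on the event that \AlgCompare{} with parameter $c=\alpha/(\alpha+1)$ satisfies its guarantees, translate the probability thresholds $c/4$ and $c$ into the ratio thresholds $\alpha/(3\alpha+4)$ and $\alpha$ via the monotone map $x\mapsto x/(1+x)$, and obtain property~4 from the quotient of two $(1\pm\varepsilon/3)$ estimates using $\frac{1+\varepsilon/3}{1-\varepsilon/3}\le 1+\varepsilon$. The paper carries out the same algebra directly as inequalities rather than as exact threshold equalities, but the argument is identical in substance.
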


\section{Adaptive Algorithm}\label{sec:adaptive-algorithm}
In this section we describe our adaptive algorithm. The algorithm comprises three phases: (i) construct a cluster graph to approximately sort and group together items of similar weights (\Cref{sec:cluster-graph-from-sorting}), (ii) extend the cluster graph to an estimation-forest (\Cref{sec:contructing-the-forest-adaptive}), and (iii) extract an MNL from the estimation-forest (\Cref{sec:MNL-from-forest-adaptive}). 

\subsection{Constructing a Cluster Graph with $O\left({n\log n\over \varepsilon^2 }\right)$ Queries}\label{sec:cluster-graph-from-sorting}
In this section, we describe and analyze an algorithm to construct a cluster graph as defined in \Cref{def:cluster-graph}. This makes up the first phase of our adaptive algorithm to learn MNLs.

The first ingredient to obtain this result is an adaptive algorithm developed by \cite{fjopr18} to sort items in approximately increasing order of weight by querying a noisy pairwise comparison oracle. We consider the following definition.

\begin{restatable}[$\orderingerror$-ordering]{definition}{EpsilonOrdering}
    An \emph{$\orderingerror$-ordering} for an MNL $M$ supported on $[n]$ with weights $\{w_1, \dots , w_n\}$ is an ordering $(s_1, \dots , s_n)$ of the items of $[n]$ such that, for any pair $i,j$ with $i<j$: $(1-\orderingerror) w_{s_i} \le w_{s_j}.$
\end{restatable}

The following is a consequence of \citep[Theorem 9]{fjopr18}, where we boosted the success probability and made the runtime explicit (proof in \Cref{sec:computing-epsilon-orderings} provided for completeness).

\begin{restatable}{theorem}{EpsilonOrderingAlgorithm}\label{cor:falahatgar-for-MNLs}%
    Let $\orderingerror, \delta \in (0,1)$. 
    There is an algorithm that given access to a $\maxsample$ oracle for an MNL $M$ supported on $[n]$, with probability at least $1-\delta$, makes $O\left({n \log (n/\delta)\over \orderingerror^2} \cdot  \left(1+\frac{\log(1/\delta)}{\log n}\right)\right)$ queries and returns an $\orderingerror$-ordering of the items of $M$. Moreover, all the queries made by the algorithm are to slates of size two and the algorithm runs in time proportional to the number of queries.
\end{restatable}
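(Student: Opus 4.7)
The plan is to invoke Theorem~9 of \cite{fjopr18} as a base subroutine and to boost its success probability from its native value (essentially $1 - 1/n$ for the $O(n\log n / \varepsilon_o^2)$-query sorter) to $1 - \delta$ for arbitrary $\delta$. When $\delta \geq 1/n$, a direct invocation already yields the claimed bound (since the second factor $1 + \log(1/\delta)/\log n$ is $O(1)$), so the interesting regime is $\delta < 1/n$, in which the native confidence parameter of the base algorithm cannot be pushed down that far without either amplification or verification.

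For that regime I would set $k := \lceil 1 + \log(1/\delta) / \log n \rceil$ and run the base sorter, instantiated at accuracy $\varepsilon_o/2$, independently $k$ times. The probability that every run fails is at most $(1/n)^k \leq \delta/2$, so with probability at least $1 - \delta/2$ at least one of the candidate orderings $\pi_1, \dots, \pi_k$ is a true $(\varepsilon_o/2)$-ordering. The next step is to identify one: I would run a verification pass by invoking $\AlgCompare$ on every consecutive pair $(\pi_i(j), \pi_i(j+1))$ of every candidate with a constant threshold parameter $c$, precision $\nu = \Theta(\varepsilon_o)$ strictly smaller than the gap between the $\varepsilon_o/2$-ordering and $\varepsilon_o$-ordering thresholds, and failure probability $\delta/(2kn)$. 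By Lemma~\ref{lem:compare-guarantees}, each such call returns $(1\pm\nu)$-multiplicative estimates of the pair probabilities using $O\big(\log(kn/\delta)/\varepsilon_o^2\big)$ queries, and a union bound over the $k(n-1)$ comparisons makes every verification decision correct with probability at least $1 - \delta/2$. The algorithm then outputs the first candidate that passes all its verifications.

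Summing the query costs yields $k$ invocations of the base sorter at $O(n \log n / \varepsilon_o^2)$ each plus $k(n-1)$ verification calls at $O\big(\log(kn/\delta)/\varepsilon_o^2\big)$ each. Using $k = O(1 + \log(1/\delta)/\log n)$ and observing that $\log(kn/\delta) = O(\log(n/\delta))$, this telescopes to
\[
O\!\left(\frac{n \log(n/\delta)}{\varepsilon_o^2}\cdot\left(1 + \frac{\log(1/\delta)}{\log n}\right)\right),
\]
which is the desired bound. Every query (both to the base sorter and to $\AlgCompare$) is to a slate of size two, and each subroutine runs in time linear in the number of queries it makes, so the overall runtime is proportional to the query complexity.

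The main obstacle I expect is calibrating the verification so that it is simultaneously complete (no good candidate is rejected because of sampling noise) and sound (no bad candidate survives). I would address this by exploiting the multiplicative guarantees of Lemma~\ref{lem:compare-guarantees}: running the base sorter at accuracy $\varepsilon_o/2$ leaves a constant multiplicative gap between ``valid'' and ``invalid'' consecutive pairs, which can be detected reliably by $\AlgCompare$ at any precision proportional to (but strictly smaller than) that gap. With this calibration, a single pass over the $n-1$ consecutive pairs of each candidate is enough to certify or reject it, keeping the per-pair verification cost at $O\big(\log(kn/\delta)/\varepsilon_o^2\big)$ and thus respecting the advertised query budget.
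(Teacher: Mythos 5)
Your amplification scheme (run the base sorter $k=\Theta\bigl(1+\frac{\log(1/\delta)}{\log n}\bigr)$ times and certify one candidate with a cheaper verification pass) is exactly the structure of the paper's proof, but your verifier has a genuine soundness gap: checking only \emph{consecutive} pairs of a candidate cannot certify an $\orderingerror$-ordering, because the ordering property is a condition on \emph{all} pairs and small multiplicative slacks compound along the chain. Concretely, take weights $w_i=(1-\orderingerror/2)^{-(i-1)}$ and suppose a failed run outputs the reversed ordering $s_j=n+1-j$. Every consecutive pair then satisfies $w_{s_{j+1}}=(1-\orderingerror/2)\,w_{s_j}$, i.e.\ it sits exactly at the $\orderingerror/2$-ordering boundary, so any verifier that is complete for genuine $(\orderingerror/2)$-orderings must accept each of these pairs even with noiseless comparisons; yet the pair $(s_1,s_n)$ has $w_{s_n}=(1-\orderingerror/2)^{n-1}w_{s_1}\ll(1-\orderingerror)w_{s_1}$, so the candidate is nowhere near an $\orderingerror$-ordering. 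Thus no calibration of $\AlgCompare$'s precision on consecutive pairs can be simultaneously complete and sound, and the "first candidate that passes" may be bad. The paper avoids this by using the \AlgName{Rank-Check} subroutine of Falahatgar et al.\ (their Lemma~23), a global verifier with a built-in completeness/soundness gap (accepts additive $\beta$-orderings, rejects anything that is not an additive $3\beta$-ordering) at cost $O\bigl(\frac{n}{\beta^2}\log\frac{n}{\delta}\bigr)$ per candidate, running the base sorter at accuracy $\beta/3$ and then converting the additive $\beta$-ordering to an $\orderingerror$-ordering with $\beta=\orderingerror/4$.

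Two smaller points: the base routine of Falahatgar et al.\ yields \emph{additive} $\beta$-orderings, so your proof also needs the (easy) additive-to-multiplicative conversion step; and the claim that the base sorter runs in time proportional to its query complexity is not automatic --- the paper has to modify the sorting step inside \AlgName{Interval-Binary-Search} (and call \AlgName{Build-Binary-Search-Tree} only once) to avoid an $O(n\log n\log\log n)$ overhead, so this needs an argument rather than an assertion.
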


Below, we introduce the main algorithm of this section: $\AlgClusterSort$ (Algorithm~\ref{alg:cluster-sort}). At a high level, the algorithm first computes an $O(1)$-ordering of the items as described above, and then it partitions them into clusters that are adjacent in this ordering. The center of a cluster is always chosen to be the first item in the cluster to appear in the $O(1)$-ordering. At each iteration the algorithm tries to add the $\ell$th item $s_\ell$ in the ordering to the cluster centered at some item $c_i$. If the ratio $\nicefrac{w_{s_\ell}}{w_{c_i}}$ is estimated to be too large, the algorithm simply starts a new cluster.

We begin by analyzing the algorithm's running time and query complexity.

\begin{algorithm}
\caption{$\AlgClusterSort(\alpha, \varepsilon,\delta)$}\label{alg:cluster-sort}
\begin{algorithmic}[1]
\State \textbf{Input:} Access to a $\maxsample$ oracle for an MNL $M$ supported on $[n]$ with weights $\{w_1, \dots, w_n\}$, parameters $\varepsilon\in (0, 1/7)$, $\alpha, \delta \in (0,1)$.

\State \textbf{Output:} A $(\frac{2}{\alpha}, \frac{1}{\alpha}, \varepsilon)$-cluster graph $\mathcal{G}=(F, r,(C_1, \dots, C_T), (c_1, \dots, c_T))$.

\State $\tau\myassign {3(1+\varepsilon)\over 2\alpha }$.
\State $\ClusterList \myassign \varnothing, \Centers \myassign \varnothing, E\myassign \varnothing$
\State Construct an $1\over3$-ordering $S =(s_1, \dots, s_n)$ for $M$ using the algorithm of \Cref{cor:falahatgar-for-MNLs} with error probability $\delta/2$.
\State $i\myassign 1$, $j\myassign 1$, $\ell \myassign 2$
\State $c_i \myassign s_1$
\While{$\ell \leq n$}
    \State $r(s_\ell,c_i) \myassign \Call{\AlgEstimateRatio}{s_{\ell},c_{i},\frac{2\alpha}{3},\varepsilon, {\delta \over 2n}}$
    
    \If{$r(s_\ell,c_i) > \tau$} 
        \State $C_i \myassign \{s_j, \dots, s_{\ell-1}\}$
        \State $\ClusterList \myassign \ClusterList\circ (C_i)$, $\Centers \myassign \Centers \circ (c_i)$
        \State Add to $E$ edges $\{s_a, c_i\}$ for $a \in \{j+1, \dots, \ell-1\}$ with weight $r(s_a,c_i)$
        \State $i \myassign i+1$, $c_i \myassign s_\ell$, $j\myassign \ell$
    \EndIf
    \State $\ell \myassign \ell+1 $
\EndWhile
\State $C_i \myassign \{s_j, \dots, s_{n}\}$
\State $\ClusterList \myassign \ClusterList \circ (C_i)$, $\Centers \myassign \Centers \circ (c_i)$
\State Add to $E$ edges $\{s_a, s_j\}$ for $a \in \{j+1, \dots, n\}$ with weight $r(s_a,s_j)$
\State \textbf{return } $(F=([n],E),r, \ClusterList, \Centers)$
\end{algorithmic}
\end{algorithm}

\begin{proposition}[Complexity of $\AlgClusterSort$]\label{lem:complexity-of-clustersort}
   With probability at least $1-\delta$, \AlgClusterSort$(\alpha, \epsilon,\delta)$ makes $O\left(n \cdot \log({n\over \delta}) \cdot \left(\frac{1}{\alpha\epsilon^2} + \frac{\log(1/\delta)}{\log n}\right) \right)$ queries to the $\maxsample$ oracle.%
\end{proposition}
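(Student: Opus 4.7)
The plan is to bound the query complexity by separately analyzing the two sources of queries in $\AlgClusterSort$: (i) the initial $\frac{1}{3}$-ordering computation, and (ii) the $\AlgEstimateRatio$ calls inside the \textbf{while} loop. Then the failure probability follows from a union bound.

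First, I would invoke \Cref{cor:falahatgar-for-MNLs} with $\orderingerror = 1/3$ and failure parameter $\delta/2$, which produces the $\frac{1}{3}$-ordering $S$ using at most
\[
O\!\left(\frac{n \log(n/\delta)}{(1/3)^2}\cdot \left(1 + \frac{\log(1/\delta)}{\log n}\right)\right) = O\!\left(n \log(n/\delta)\cdot \left(1 + \frac{\log(1/\delta)}{\log n}\right)\right)
\]
queries, with failure probability at most $\delta/2$.

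Next, I would bound the contribution of the \textbf{while} loop. The loop advances $\ell$ from $2$ to $n$, so it performs at most $n-1$ calls to $\AlgEstimateRatio(s_\ell, c_i, 2\alpha/3, \varepsilon, \delta/(2n))$. By \Cref{lem:weight-pair-ratio-estimate}, each such call uses
\[
O\!\left(\frac{1}{(2\alpha/3)\,\varepsilon^{2}}\log\frac{2n}{\delta}\right) = O\!\left(\frac{\log(n/\delta)}{\alpha\varepsilon^{2}}\right)
\]
queries and fails with probability at most $\delta/(2n)$. Summing over the (at most) $n-1$ iterations gives a total contribution of $O\!\left(\frac{n\log(n/\delta)}{\alpha\varepsilon^{2}}\right)$ queries, and by a union bound the probability that any of these calls fails is at most $\delta/2$.

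Finally, I would combine the two bounds:
\[
O\!\left(n\log(n/\delta)\left(1 + \frac{\log(1/\delta)}{\log n}\right)\right) + O\!\left(\frac{n\log(n/\delta)}{\alpha\varepsilon^{2}}\right) \;=\; O\!\left(n\log(n/\delta)\cdot \left(\frac{1}{\alpha\varepsilon^{2}} + \frac{\log(1/\delta)}{\log n}\right)\right),
\]
where the ``$1$'' term is absorbed into $\frac{1}{\alpha\varepsilon^{2}}$ since $\alpha,\varepsilon\in(0,1)$ implies $\frac{1}{\alpha\varepsilon^{2}}\ge 1$. A second union bound over the two failure events from the ordering step and the ratio-estimation calls gives total failure probability at most $\delta$. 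There is no real obstacle here --- the argument is a direct accounting exercise combining \Cref{cor:falahatgar-for-MNLs} and \Cref{lem:weight-pair-ratio-estimate}; the only subtlety is to set the per-call failure parameters to $\delta/2$ and $\delta/(2n)$ so that the union bound yields the claimed $\delta$.
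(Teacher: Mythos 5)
Your proof is correct and follows essentially the same route as the paper's: bound the $\tfrac{1}{3}$-ordering step via \Cref{cor:falahatgar-for-MNLs}, bound the $O(n)$ calls to $\AlgEstimateRatio$ via \Cref{lem:weight-pair-ratio-estimate}, and combine. The extra details you supply (absorbing the ``$1$'' term into $\tfrac{1}{\alpha\varepsilon^2}$ and the explicit union bound over failure events) are consistent with what the paper leaves implicit.
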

\begin{proof}
     The first part of the algorithm (the $1\over 3$-ordering) makes $O\left({n\log (n/\delta)} \cdot \left(1 + \frac{\log (1/\delta)}{\log n}\right)\right)$ queries by \Cref{cor:falahatgar-for-MNLs}. The while loop is executed $O(n)$ times. At each execution we call \AlgEstimateRatio{} with parameters $2\alpha/3, \varepsilon$, and  $\delta/n$ and hence the number of $\maxsample$ queries per call is $O\left( {1\over \alpha \varepsilon^2} \log {n\over \delta} \right)$ by \Cref{lem:weight-pair-ratio-estimate}. Hence, the result follows. %
\end{proof}

We then show that the algorithm correctly computes the cluster graph.
\begin{theorem}[Guarantees for $\AlgClusterSort$]\label{lem:cluster-sort-produces-cluster-graph}
    Let $\alpha,\delta\in(0,1)$, and $\epsilon\in(0,1/7)$. Let $\mathcal{G}$ be the output of $\AlgClusterSort(\alpha, \epsilon,\delta)$. Then, with probability at least $1-\delta$, $\mathcal{G}$ is a $({2\over \alpha}, {1\over \alpha}, \varepsilon)$-cluster graph.
\end{theorem}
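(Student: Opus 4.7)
The plan is to verify, with probability at least $1-\delta$, each of the three conditions of Definition~\ref{def:cluster-graph} with parameters $(A_1, A_2) = (2/\alpha,\ 1/\alpha)$. I would first set up the favorable event $\mathcal{E}$ under which (a) the ordering $(s_1,\ldots,s_n)$ built on Line~5 is a valid $1/3$-ordering, and (b) every invocation of $\AlgEstimateRatio$ inside the while loop behaves as guaranteed by Lemma~\ref{lem:weight-pair-ratio-estimate}. \Cref{cor:falahatgar-for-MNLs} with error parameter $\delta/2$ controls (a), and a union bound over the at most $n$ calls, each with failure probability $\delta/(2n)$, controls (b); so $\Pr[\mathcal{E}]\geq 1-\delta$. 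From here on I would work conditionally on $\mathcal{E}$.

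A useful preliminary is that whenever $\AlgEstimateRatio(s_\ell,c_i,2\alpha/3,\varepsilon,\cdot)$ is invoked, the center $c_i=s_j$ precedes $s_\ell$ in the ordering, so the $1/3$-ordering guarantee gives $w_{s_\ell}/w_{c_i}\geq 2/3$. Since $\alpha\leq 1$ implies $2\alpha/3\leq 2/3$, part~3 of Lemma~\ref{lem:weight-pair-ratio-estimate} rules out the output value $0$ throughout the execution, which in turn means that every edge weight is finite and positive.

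I would then check the three conditions in order. For Condition~1, each non-center $s_\ell\in C_i$ satisfied $r(s_\ell,c_i)\leq \tau = 3(1+\varepsilon)/(2\alpha)$ at the moment it was incorporated, so $r\in(0,\tau]$; applying part~4 of Lemma~\ref{lem:weight-pair-ratio-estimate} gives $w_{s_\ell}/w_{c_i}\leq \tau/(1-\varepsilon)$, which is $\leq 2/\alpha$ exactly when $\varepsilon\leq 1/7$, while the lower bound $w_{s_\ell}/w_{c_i}\geq 2/3\geq \alpha/2$ follows from the ordering. For Condition~2, whenever a new cluster opens with $c_{i+1}=s_\ell$, the test $r(s_\ell,c_i)>\tau$ succeeded: if $r=\infty$, part~3 gives $w_{s_\ell}/w_{c_i}> 3/(2\alpha)$; if $r$ is finite, part~4 gives $w_{s_\ell}/w_{c_i}\geq r/(1+\varepsilon)>\tau/(1+\varepsilon)=3/(2\alpha)$; telescoping across cluster boundaries then yields $w_{c_i}/w_{c_j}\geq (3/(2\alpha))^{i-j}\geq 1/\alpha$ for all $i>j$. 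Condition~3 follows by inspection: the edge set is exactly $\{\{c_i,u\}\,:\, i\in[T],\ u\in C_i\setminus\{c_i\}\}$ with weight $r(u,c_i)$, and since these values lie in $(0,\infty)$, part~4 of Lemma~\ref{lem:weight-pair-ratio-estimate} directly supplies the $(1\pm\varepsilon)$-multiplicative guarantees in both orientations.

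The main obstacle is arithmetic rather than conceptual: the values $\tau = 3(1+\varepsilon)/(2\alpha)$, the parameter $2\alpha/3$ passed to $\AlgEstimateRatio$, and the $1/3$-ordering slack must be calibrated so that, for all $\varepsilon<1/7$, both the upper bound in Condition~1 and the lower bound in Condition~2 drop out of $\tau/(1\pm\varepsilon)$ with no room to spare. Checking that a single comparison threshold simultaneously enforces both intra-cluster containment and inter-cluster separation is the crux of the proof; once this algebraic tuning is verified, the rest is bookkeeping.
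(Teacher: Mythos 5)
Your proposal is correct and follows essentially the same route as the paper: set up the favorable event via a union bound, rule out the output $0$ using the $1/3$-ordering, and verify the three cluster-graph conditions by the same case analysis on $r\in\{0\}\cup(0,\tau]\cup(\tau,\infty]$, with the same calibration $\tau/(1-\varepsilon)\leq 2/\alpha$ forcing $\varepsilon\leq 1/7$. The only (harmless) deviation is in Condition~2, where you bound adjacent centers by $3/(2\alpha)$ and telescope, while the paper instead bounds every item of every later cluster against $c_i$ directly by combining the ordering slack $2/3$ with the single estimate at the boundary item $s_{\ell^*}$; both arguments are valid and yield $w_{c_i}/w_{c_j}\geq 1/\alpha$.
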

\begin{proof}

Let $\mathcal{G}=(F=([n],E),r,(C_1, \dots, C_T), (c_1, \dots, c_T))$ be the output of $\AlgClusterSort(\alpha, \epsilon,\delta)$.
We note that the ${1\over 3}$-ordering procedure succeeds with probability at least $1-{\delta\over 2}$, and that each call to \AlgEstimateRatio{} succeeds with probability at least $1 - {\delta \over 2n}$. By a union bound they all succeed with probability at least $1-\delta$.  For the rest of the proof, we will assume this holds.

We start by proving the first property of being a cluster graph (\Cref{def:cluster-graph}). When the cluster $C_i$ is created, its center $c_i$ is chosen to be the item in $C_i$ that comes first in the $1 \over 3$-ordering. In particular, for any $s_\ell \in C_i$ we have that: 
\begin{equation}\label{eq:cluster-sort-lb-ratio-same-cluster}
    {w_{s_\ell}\over w_{c_i}} \geq {2\over 3} \ge \frac{\alpha}{2}.
\end{equation}
On the other hand, for each item $s_\ell \in C_i$, the algorithm computes the estimate $r(s_{\ell},c_i)$ of ${w_{s_\ell} \over w_{c_i}}$ via $\AlgEstimateRatio$ and finds that $r(s_{\ell},c_i) \leq \tau$ (otherwise $\ell$ would have been placed in a different cluster). Since
$r(s_{\ell},c_i) \leq \tau$, there are two possibilities: either $r(s_{\ell},c_i)=0$ or $r(s_{\ell},c_i) \in (0,\tau]$. We consider the cases separately (and show that the first case cannot happen).
\begin{description}
    \item[Case 1.] Suppose $r(s_{\ell}, c_i)= 0$, then by the guarantees of $\AlgEstimateRatio$ it must have been the case that ${w_{s_\ell} \over w_{c_i}} < {2\alpha\over 3}$. But the $1\over 3$-ordering guarantees imply ${w_{s_\ell} \over w_{c_i}} \geq {2\over 3}$, giving a contradiction.
    
    \item[Case 2.] On the other hand, if $r(s_{\ell}, c_i)\in(0,\tau]$ then $\AlgEstimateRatio$ must have returned an accurate estimate for $w_{s_{\ell}} \over w_{c_i}$ and in particular we have:
    \begin{align*}
        {w_{s_\ell}\over w_{c_i}} \le \frac{r(s_\ell, c_i)}{1-\epsilon} \leq \frac{\tau}{1-\epsilon} \leq \frac{2}{\alpha},
    \end{align*}
    since $\varepsilon\leq {1\over 7}$, concluding the proof of the first property.
\end{description}

We now prove the second property of \Cref{def:cluster-graph}. Fix a choice of $i\in [T-1]$. Let $\ell^*>i$ be the smallest index such that $r(s_{\ell^*}, c_i) > \tau$. Note that, since there is at least one cluster following $C_i$, this choice $\ell^*$ must exist and we must have $\ell^*\le n$. By construction, we have:
\[
    C_{i+1} \cup \cdots \cup C_{T}= \{s_{\ell^*}, \dots , s_n\}.
\]

For all $\ell \geq \ell^*$, we have, by the definition of $1\over 3$-ordering:
\begin{equation}\label{eq:ordering-equation-1}
    {w_{s_\ell} \over w_{s_{\ell^*}}} \geq 1-{1\over 3} = {2\over 3}.
\end{equation}
We now have two possibilities: either $r(s_{\ell^*}, c_i) = \infty$ or $r(s_{\ell^*}, c_i) \in (0,\infty)$. Note that since $r(s_{\ell^*}, c_i) > \tau$, we have that $r(s_{\ell^*}, c_i)$ is non-zero. We consider the two cases separately.

\begin{description}
    \item[Case 1:] If $r(s_{\ell^*}, c_i) = \infty$ then due to the $\AlgEstimateRatio$ guarantees, we must have:
    \begin{equation}\label{eq:ratio-is-larger-than-alpha2}
        \frac{w_{s_{\ell^*}}}{w_{c_i}} > {3\over 2\alpha},
    \end{equation}
    and hence:
    \[
        \frac{w_{s_{\ell}}}{w_{c_i}} = \frac{w_{s_{\ell}}}{w_{s_{\ell^*}}}\cdot \frac{w_{s_{\ell^*}}}{w_{c_i}} \overset{\eqref{eq:ordering-equation-1},\eqref{eq:ratio-is-larger-than-alpha2}}{>}  {1\over \alpha}.
    \]
    
    \item[Case 2:] If  $r(s_{\ell^*}, c_i) \in (0,\infty)$ then the guarantees of $\AlgEstimateRatio$ imply that $r(s_{\ell^*}, c_i)$ must be a good approximation to $\frac{w_{s_{\ell^*}}}{w_{c_i}}$. We then have:
    \begin{align*}
    {w_{s_\ell} \over w_{c_i}} &= {w_{s_\ell} \over w_{s_{\ell^*}}}\cdot \frac{w_{s_{\ell^*}}}{ w_{c_i}}  \overset{\eqref{eq:ordering-equation-1}}{\geq}  {2\over 3} \cdot {w_{s_{\ell^*}} \over w_{c_i}} \geq {2\over 3(1+\varepsilon)} \cdot r(s_{\ell^*},c_i) > {2\over 3(1+\varepsilon)} \tau= {1\over \alpha}.
\end{align*}
\end{description}
This yields the second property in \Cref{def:cluster-graph}.

Finally, we prove the third property. For this, we simply argue that the estimates $r(s_{\ell},c_i)$ produced by the algorithm are never $0$ or $\infty$ for any item $s_\ell$ that is part of cluster $C_i$; the result will then follow from the guarantees of $\AlgEstimateRatio$. Note that if an item $s_{\ell}$ is placed in the same cluster as $c_i$ then $r(s_{\ell},c_i) \leq \tau < \infty$ and hence $r(s_{\ell},c_i) \neq \infty$. On the other hand, as we previously argued, by \Cref{eq:cluster-sort-lb-ratio-same-cluster} and properties of \AlgEstimateRatio{} it also holds $r(s_\ell,c_i)\neq 0$.
\end{proof}

\subsection{Constructing an Estimation-Forest with $O\left(\frac{n\log n}{\epsilon^3}\right)$ Queries}\label{sec:contructing-the-forest-adaptive}

We now present \Cref{alg:nlogn-build-estimation-forest} to construct an $(O(1), \epsilon)$-estimation-forest.  The algorithm starts by obtaining a $(\frac{2}{\alpha}, \frac{1}{\alpha},\Theta(\epsilon))$-cluster graph with ordered clusters $(C_1,\dots, C_T)$ and their corresponding centers $(c_1, \dots, c_T)$. Then, starting from $c_T$ the algorithm attempts to estimate the ratios $w_{c_i}/w_{c_j}$ for pairs $\{c_i,c_j\}$ of cluster centers in order to construct an estimation-forest from the cluster graph.  However, if this  ratio is very large, estimating it accurately will require too many queries. Luckily, in this case, we can pretend as if one center is infinitely heavier (in terms of its MNL weight) than the other; this will contribute to only a small error in the estimated $\maxsample$ distributions.  Intuitively, if $c_T$ wins with probability at least $1-\epsilon$ in the slate $\{c_T\} \cup C_{j} \cup \dots \cup C_1$, then it is not worth estimating the ratio between $w_{c_T}$ and $w_{c_j}$. Instead, we can just conclude that the ratio is very large and continue.

In order to get a low query complexity, and guarantee an accurate MNL estimate, it is crucial to design a good thresholding condition to establish when the weight of a cluster center is very large compared to another one for their ratio to be estimated accurately. A threshold too large would force the algorithm to estimate very large ratios, hence have high query complexity.  On the other hand, a threshold too small would cause a large error in the estimated $\maxsample$ distributions.

For each center $c_j$, we define a potential $Z_j$ that keeps into account both the sizes of the clusters $C_j,\dots, C_1$ and also their distance in the ordering: $Z_j = \sum_{i=1}^j\alpha^{j-i} |C_i|$. We will show that we can deem $c_T$ too large compared to $c_j$ if $\frac{w_{c_j}}{w_{c_T}} \leq \beta_j$ where $\beta_j = \Theta\left(\frac{\epsilon}{Z_j}\right)$.  
Even if $c_T$ is too large compared to $c_j$, it might still be that $c_{j+1}$ is comparable with $c_j$; therefore, we continue the process with $c_{j+1}$ instead of $c_T$. If $c_{j+1}$ is also deemed too large, then we start building a new tree starting from $c_j$. 

\begin{algorithm}[h]
\caption{\AlgBuildEstimationForest$(\alpha, \varepsilon,\delta)$}
\label{alg:nlogn-build-estimation-forest}
\begin{algorithmic}[1]
\State \textbf{Input:} Parameters $\alpha, \varepsilon, \delta \in (0,1)$, and access to a $\maxsample$ oracle for an MNL $M$ supported on $[n]$ with weights $\{w_1, \dots, w_n\}$.
\State \textbf{Output:} A $(5,\epsilon)$-estimation-forest $\mathcal{F}=(F, r, (C_1, \dots, C_T), (c_1, \dots, c_T))$.
\State $\epsilon_1 \myassign \frac{\epsilon}{10}$
\State $(F=([n],E), r, (C_1, \dots, C_T), (c_1, \dots, c_T)) \myassign \Call{\AlgClusterSort}{ \alpha, \epsilon_1, \frac{\delta}{3}}$ 
\State $Z_0 \myassign 0$
\For{$i=1, \dots, T$}
    \State $Z_i \myassign \alpha \cdot Z_{i-1} + |C_{i}|$
    \State $\beta_i \myassign \frac{\alpha^2 \cdot \epsilon}{8 \cdot Z_i}$
\EndFor

\State $i\myassign T$
\State $j\myassign T-1$
\While{$j > 0$}
    \State $r(c_{i},c_{j}) \myassign \max\left\{\Call{\AlgEstimateRatio}{c_{i}, c_{j}, \beta_{j}, \epsilon_1, \frac{\delta}{6n}}, \frac{1}{\alpha^{i-j}}\right\}$  
    \If{$r(c_{i},c_{j}) \neq \infty$}
        \State $E\myassign E\cup \left\{\{c_i, c_j\}\right\}$
        \State $j\myassign j-1$
    \ElsIf{$i = j+1$}
        \State $i\myassign j$
        \State $j\myassign j -1$
    \Else
        \State $i\myassign j+1$
    \EndIf
\EndWhile

\State \textbf{return} $(F=([n], E), r, (C_1, \dots, C_T), (c_1, \dots, c_T))$ 
\end{algorithmic}
\end{algorithm}

We now analyze \Cref{alg:nlogn-build-estimation-forest}.  (With a slight abuse of notation, the directed weighting $r$ produced by the algorithm is defined also for some pairs that are not in the set of edges.)  Our goal for this section is to prove the following result.

\begin{theorem}\label{thm:nlogn-build-estimation-forest}
Let $\epsilon, \alpha, \delta\in(0,1)$, then, with probability at least $1-\delta$, $\AlgBuildEstimationForest(\alpha, \epsilon, \delta)$ makes $O\left(n \log (\frac{n}{\delta}) \cdot \left(\frac{1}{(1-\alpha)\alpha^2\epsilon^3} + \frac{\log(1/\delta)}{\log n}\right)\right)$ $\maxsample$ queries and returns a $(5,\epsilon)$-estimation-forest.
\end{theorem}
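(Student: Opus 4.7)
The proof naturally splits into two parts: bounding the query complexity (with overall success probability) and verifying that the output satisfies the four conditions of \Cref{def:estimation-forest} with $t=5$ and accuracy $\epsilon$. I would condition on the success of every randomized subroutine call: ClusterSort succeeds with probability at least $1-\delta/3$, and each of the at most $2(T-1)$ \AlgEstimateRatio{} invocations with probability at least $1-\delta/(6n)$; a union bound yields overall success probability at least $1-\delta$.

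For the query complexity, the \AlgClusterSort{} cost is handled by \Cref{lem:complexity-of-clustersort}. For the while-loop, the key structural observation is that the loop index $j$ strictly decrements at least once every two iterations: if the estimate is finite, $j$ decrements directly; if $\infty$ with $i=j+1$, both $i$ and $j$ decrement; if $\infty$ with $i>j+1$, the next iteration must be at state $(c_{j+1}, c_j)$, which either decrements $j$ (finite) or falls back into the $i=j+1$ $\infty$ case. Hence for each $j\in\{1,\ldots,T-1\}$, \AlgEstimateRatio{} is called with second argument $c_j$ at most twice, each call costing $O((\log(n/\delta))/(\beta_j\epsilon_1^2))$ queries by \Cref{lem:weight-pair-ratio-estimate}. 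Combined with the telescoping identity
\[
\sum_{j=1}^{T} Z_j \;=\; \sum_{i=1}^{T} |C_i|\sum_{j=i}^{T}\alpha^{j-i} \;\le\; \frac{n}{1-\alpha},
\]
and $1/\beta_j=8Z_j/(\alpha^2\epsilon)$, this yields a while-loop cost of $O\bigl(n\log(n/\delta)/((1-\alpha)\alpha^2\epsilon^3)\bigr)$, matching the claim after absorbing $\epsilon_1=\epsilon/10$ into constants.

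For correctness, I would first record the pivotal threshold: whenever an $\infty$ event occurs at state $(c_I, c_J)$, \Cref{lem:weight-pair-ratio-estimate}'s third clause forces $w_{c_J}/w_{c_I}<\beta_J$. Property~4 (same-cluster distance $\le t$) is immediate since same-cluster items are both adjacent to the cluster center. Property~3's second clause (well-ordered components) follows by induction, since a new tree is spawned only when the state is of the form $(c_{J+1}, c_J)$, partitioning the cluster indices into contiguous intervals. For Property~1 (accurate short-path estimates), I would check that every edge in $F$ carries a $(1\pm\epsilon_1)$-estimate of the true ratio: for intra-cluster edges this is directly from \Cref{lem:cluster-sort-produces-cluster-graph}, and for inter-cluster edges the $\max(\cdot,1/\alpha^{i-j})$ rounding is only active when the raw estimate falls within a $(1-\epsilon_1)^{-1}$ factor of its minimum $1/\alpha^{i-j}$, still leaving $r$ in $(1\pm\epsilon_1)w_{c_i}/w_{c_j}$. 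Multiplying at most five such factors gives $(1\pm\epsilon_1)^5\subseteq(1\pm\epsilon)$ by the choice $\epsilon_1=\epsilon/10$.

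The main obstacle is verifying the sum conditions of Properties~2 and~3. I would describe each connected component as a caterpillar tree with spine $c_{s_0}, c_{s_1},\ldots, c_{s_L}$, in which each spine edge $c_{s_p}\!\to\! c_{s_{p+1}}$ corresponds to an $\infty$ event at state $(c_{s_p}, c_{s_{p+1}-1})$. A case analysis on whether $c_a, c_b$ lie on the spine (and on whether $u,v$ are cluster centers) shows that for any $u\in C_a, v\in C_b$ with $a>b$ that are in different components or at tree distance $>5$, one can always locate an $\infty$-event $(c_I, c_J)$ with $b\le J<I\le a$ along the relevant path; consequently $w_{c_b}/w_{c_a}\le w_{c_J}/w_{c_I}<\beta_J\le\beta_b$ (using monotonicity of $\beta$ in its index when $Z$ increases). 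Combined with $\sum_{\cluster(s)\le b} w_s\le(2/\alpha) Z_b w_{c_b}$ from the cluster-graph property and $w_u\ge(\alpha/2)w_{c_a}$, this gives the first sum $\sum_{\cluster(s)\le b}w_s/w_u\le(4/\alpha^2)Z_b\beta_b=\epsilon/2$, handling Property~3 and the first bound of Property~2. For the second sum of Property~2 ($\sum r(P(s,u))\le\epsilon$), I would split on tree distance: items $s$ within some constant tree distance $D$ of $u$ contribute at most $(1+\epsilon_1)^D$ times their first-sum contribution, and items at greater distance are exponentially lighter, since each spine step decreases the cluster index by at least one and hence the weight by at least a factor $\alpha$ via the cluster-graph bound $w_{c_j}/w_{c_i}\le\alpha^{i-j}$. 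Thus whenever $\alpha(1+\epsilon_1)<1$ the per-edge error blowup $(1+\epsilon_1)$ is dominated by the weight decay $\alpha$, and a geometric series controls the tail, yielding the desired bound $\epsilon$ after tuning constants.
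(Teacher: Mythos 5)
Your overall architecture matches the paper's: at most two \AlgEstimateRatio{} calls per index $j$ plus the telescoping bound $\sum_j Z_j\le n/(1-\alpha)$ for the query count, per-edge $(1\pm\epsilon_1)$ accuracy multiplied along paths of length $\le 5$ for Property~1, and locating an ``$\infty$-event'' $(c_I,c_J)$ on the relevant path for the sum conditions. However, the step $\beta_J\le\beta_b$ is false: $Z_i=\alpha Z_{i-1}+|C_i|$ is \emph{not} monotone in $i$ (take $|C_1|$ large and all later clusters singletons; then $Z_2=\alpha Z_1+1<Z_1$), so $J\ge b$ does not give $Z_J\ge Z_b$. With your flattened bound $\sum_{\cluster(s)\le b}w_s\le(2/\alpha)Z_bw_{c_b}$ and only $w_{c_b}/w_{c_a}\le\beta_J$, you get $(\epsilon/2)\cdot Z_b/Z_J$, which can be $\gg\epsilon$. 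The repair is to retain the per-cluster decay rather than collapsing it: $w_{c_b}/w_{c_J}\le\alpha^{J-b}$ together with $Z_b\,\alpha^{J-b}\le Z_J$ restores the bound; this is precisely why the paper's argument keeps the weighted sum $\sum_{\ell\le J}|C_\ell|\alpha^{J-\ell}=Z_J$ intact instead of bounding everything against $Z_b$.

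The second sum of Property~2 has a more structural gap. Your geometric series with ratio $\alpha(1+\epsilon_1)$ needs $\alpha(1+\epsilon_1)<1$, which fails for $\alpha$ near $1$, while the theorem is claimed for every $\alpha\in(0,1)$; moreover the decay you invoke is a statement about the true weights, whereas the quantity summed is a product of \emph{estimates}, each only controlled within $(1\pm\epsilon_1)$. The paper avoids accumulating $(1\pm\epsilon_1)^{L}$ over long paths altogether: since every center-to-center edge weight is floored at $1/\alpha^{i-j}$ by the $\max$ in the algorithm, any ancestor-to-descendant product satisfies $r(P(c_x,c_y))\ge 1/\alpha^{x-y}$ \emph{exactly} (\Cref{lem:nlogn-ancestors-property}), so the only $(1\pm\epsilon_1)$ factors ever needed are on the constantly many edges adjacent to $u$ and to $s$. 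If you replace ``per-edge blowup $(1+\epsilon_1)$ versus per-step decay $\alpha$'' with this deterministic floor (or with the observation that every two spine levels the estimated ratio drops by a factor $\beta_k\le\alpha^2\epsilon/8$, which dwarfs $(1+\epsilon_1)^2$ for every $\alpha$), your tail argument closes without the extra hypothesis.
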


We start by bounding the query complexity.

\begin{lemma}\label{lem:nlon-query-bound}
Let $\epsilon, \alpha, \delta\in(0,1)$, then, with probability at least $1-\delta/3$, $\AlgBuildEstimationForest(\alpha, \epsilon, \delta)$ makes $O\left(n \log (\frac{n}{\delta}) \cdot \left(\frac{1}{(1-\alpha)\alpha^2\epsilon^3} + \frac{\log(1/\delta)}{\log n}\right)\right)$ $\maxsample$ queries. 
\end{lemma}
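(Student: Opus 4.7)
The plan is to decompose the total query cost into two contributions: the cost of the initial call to $\AlgClusterSort$, and the cost of the calls to $\AlgEstimateRatio$ inside the while loop. The first is handled directly by \Cref{lem:complexity-of-clustersort} applied with parameters $(\alpha,\epsilon_1,\delta/3)$ and $\epsilon_1 = \epsilon/10$, which, with probability at least $1-\delta/3$, yields a contribution of $O\!\left(n\log(n/\delta)\cdot\bigl(\tfrac{1}{\alpha\epsilon^2} + \tfrac{\log(1/\delta)}{\log n}\bigr)\right)$ queries. The heart of the argument is therefore to bound the contribution of the while loop, which is essentially deterministic (every invocation of $\AlgEstimateRatio$ performs a fixed number of queries, determined by its parameters) and so does not cost any additional failure probability.

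I would first bound the number of iterations of the while loop in a purely combinatorial way, independent of the random outcomes of the $\AlgEstimateRatio$ calls. Inspecting the three branches, in the two cases where $r(c_i,c_j)\neq\infty$ or $r(c_i,c_j)=\infty$ with $i=j+1$ the index $j$ strictly decreases, while in the remaining case ($r(c_i,c_j)=\infty$ and $i>j+1$) the variable $i$ is reset to $j+1$, and so on the very next iteration we have $i=j+1$, forcing one of the first two branches and hence another decrement of $j$. Starting from $j=T-1$, this proves that the loop terminates after at most $2(T-1)\le 2n$ iterations, and that each value of $j\in\{1,\dots,T-1\}$ is visited as the ``outer'' index of an $\AlgEstimateRatio$ call at most twice.

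Next I would account for the per-iteration query cost. By \Cref{lem:weight-pair-ratio-estimate}, the call $\AlgEstimateRatio(c_i,c_j,\beta_j,\epsilon_1,\delta/(6n))$ makes
\[
O\!\left(\tfrac{1}{\beta_j\,\epsilon_1^2}\log\tfrac{n}{\delta}\right)
\;=\; O\!\left(\tfrac{Z_j}{\alpha^2\,\epsilon^3}\log\tfrac{n}{\delta}\right)
\]
queries, using $\beta_j = \alpha^2\epsilon/(8Z_j)$ and $\epsilon_1=\epsilon/10$. Summing this cost over all iterations, and using that each $j$ is visited at most twice, the total while-loop cost is at most
\[
O\!\left(\tfrac{\log(n/\delta)}{\alpha^2\epsilon^3}\right)\cdot 2\sum_{j=1}^{T-1} Z_j.
\]
The key computation is then the telescoping-type bound
\[
\sum_{j=1}^{T} Z_j \;=\; \sum_{j=1}^{T}\sum_{i=1}^{j}\alpha^{j-i}|C_i| \;=\; \sum_{i=1}^{T}|C_i|\sum_{j=i}^{T}\alpha^{j-i} \;\le\; \frac{1}{1-\alpha}\sum_{i=1}^{T}|C_i| \;=\; \frac{n}{1-\alpha},
\]
which yields a while-loop cost of $O\!\left(\tfrac{n\log(n/\delta)}{(1-\alpha)\alpha^2\epsilon^3}\right)$.

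Finally I would sum the two contributions and observe that the first is dominated by $O(n\log(n/\delta)\cdot\tfrac{\log(1/\delta)}{\log n})$ plus a term absorbed into the while-loop bound, giving the claimed overall query count. The failure probability accounting is straightforward: only the $\AlgClusterSort$ step contributes probabilistically to the query-count bound (since the queries per $\AlgEstimateRatio$ call are deterministic given its parameters), and its failure probability is at most $\delta/3$. I expect the main subtle point to be the combinatorial iteration-count argument: one must be careful that the ``$i$ gets reset'' branch cannot chain on itself and inflate the number of iterations, and that each $j$ is truly charged at most twice so that the geometric sum $\sum_j 1/\beta_j$ telescopes into the clean $n/(1-\alpha)$ bound.
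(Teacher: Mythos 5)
Your proposal is correct and follows essentially the same route as the paper's proof: charge the $\AlgClusterSort$ cost via \Cref{lem:complexity-of-clustersort}, observe that each $j$ is the target of at most two $\AlgEstimateRatio$ calls, bound each such call by $O\bigl(\tfrac{Z_j}{\alpha^2\epsilon^3}\log\tfrac{n}{\delta}\bigr)$ via \Cref{lem:weight-pair-ratio-estimate}, and sum $\sum_j Z_j \le \tfrac{n}{1-\alpha}$ by exchanging the order of summation. Your explicit two-step argument for why the ``reset $i$'' branch cannot chain, and your remark that only $\AlgClusterSort$ contributes to the failure probability, are just more careful spellings of the paper's one-line observations.
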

\begin{proof}
By \Cref{lem:complexity-of-clustersort}, $\AlgClusterSort(\alpha,\epsilon_1, \delta/3)$ makes $O\left(n \log (\frac{n}{\delta}) \cdot \left(\frac{1}{\alpha\epsilon^2} + \frac{\log(1/\delta)}{\log n}\right)\right)$ queries with probability at least $1-\delta/3$. Observe that for each $j\in[T]$ there are at most two $i$'s for which we make a call to $\AlgEstimateRatio(c_i,c_j, \beta_j, \epsilon_1, \frac{\delta}{4n})$.
By \Cref{lem:weight-pair-ratio-estimate}, each call to $\AlgEstimateRatio$ costs:
\[
    O\left(\frac{1}{\beta_j \cdot \epsilon^2} \cdot \log\left(\frac{n}{\delta}\right)\right) = O\left(\frac{Z_j}{\alpha^2\epsilon^3}\cdot \log\left(\frac{n}{\delta}\right)\right).
\]
Moreover,  
\begin{align*}
    \sum_{i=1}^T Z_i & = \sum_{i=1}^T \sum_{j=1}^i |C_j| \alpha^{i-j} \leq \sum_{i=1}^T |C_i| \left(\sum_{j=0}^T \alpha^j\right) \leq \sum_{i=1}^T |C_i| \left(\sum_{j=0}^\infty \alpha^j\right) = \sum_{i=1}^T \frac{|C_i|}{1-\alpha} = \frac{n}{1-\alpha},
\end{align*}
where the first inequality follows by the fact that each $C_i$ gets summed up at most $T$ times and each time it is multiplied by a different value in $\{\alpha^0, \alpha^1, \dots, \alpha^T\}$. Thus, the total number of queries made by the algorithm after $\AlgClusterSort$ is 
\begin{align*}
    \sum_{i=1}^T 2\cdot O\left(\frac{Z_i \cdot \log \frac{n}{\delta}}{\alpha^2\epsilon^3}\right) & = O\left(\frac{\log \frac{n}{\delta}}{\alpha^2 \epsilon^3} \cdot \sum_{i=1}^T Z_i \right) \leq O\left(\frac{n\log \frac{n}{\delta}}{(1-\alpha)\alpha^2 \epsilon^3}\right).
\qedhere
\end{align*}
\end{proof}
We now move on to proving that the algorithm returns a $(5,\epsilon)$-estimation-forest. To do this, we need to show that the forest respects the four properties of \Cref{def:estimation-forest}. We will prove this in a series of lemmas. First, we show that items with close cluster indices end up in the same connected component. Recall that given an ordered partition $C_1, \dots, C_T$ of $[n]$, $\gamma(i)$ for $i \in [n]$ is the unique value $j\in [T]$ such that $i \in C_j$. We have the following result, which entails the fourth and part of the third property of \Cref{def:estimation-forest}.

\begin{lemma}\label{lem:nlogn-graph-structure}
For $\epsilon, \alpha, \delta\in(0,1)$, let $\mathcal{F}=(F=([n], E), r, (C_1, \dots, C_T), (c_1, \dots, c_T))$ be the output of $\AlgBuildEstimationForest(\alpha, \epsilon, \delta)$. Then, if $u,v\in[n]$ are in different connected components and $\cluster(u) > \cluster(v)$, we have that, for any $u'$ (resp. $v'$) in the same connected component of $u$ (resp. $v$), it holds that $\cluster(u') > \cluster(v')$. Moreover, if $\cluster(u)=\cluster(v)$, then $d(u,v)\leq 2$,
\end{lemma}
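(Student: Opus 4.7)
The plan is to show that, upon termination of \AlgBuildEstimationForest, the connected components of $F$ correspond to unions of clusters whose indices form consecutive integer intervals; both claims of the lemma then follow immediately. The second claim ($d(u,v)\le 2$ within a cluster) is essentially free from the cluster graph structure: by Definition~\ref{def:cluster-graph}, the edges produced by $\AlgClusterSort$ form, within each cluster $C_k$, a star centered at $c_k$. Hence if $\cluster(u)=\cluster(v)=k$, both items are adjacent to $c_k$ in $F$, giving $d(u,v)\le 2$. The same observation reduces the first claim to a statement about the subgraph induced on the cluster centers $\{c_1,\dots,c_T\}$ by the edges added in the \textbf{while} loop: the connected component of any item $u\in C_k$ in $F$ equals the union of $C_{k'}$ over all $k'$ in the center-component of $c_k$.

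The key step is to establish by induction on loop iterations the following invariant on the center subgraph at the start of each iteration: (a) the processed centers $\{c_{j+1},\dots,c_T\}$ are partitioned into connected components whose cluster indices form consecutive integer intervals; and (b) $i\ge j+1$ and $c_i$ lies in the component containing $c_{j+1}$. The base case is immediate since initially only $c_T$ is processed and $i=T=j+1$. For the inductive step I will distinguish the three branches of the \textbf{if/elif/else}:
\begin{itemize}
\item If the edge $\{c_i,c_j\}$ is added and $j$ decreases by $1$, then by (b) the old component of $c_i$ was an interval $[j+1,b]$ for some $b$, so adding the edge extends it to $[j,b]$; (a) is preserved, and $c_i$ remains in this component, which now contains $c_{j^{\text{new}}+1}=c_j$.
\item If $r(c_i,c_j)=\infty$ with $i=j+1$, then $c_j$ enters as a fresh singleton component $[j,j]$ below the previous lowest component; the new $i=j$ satisfies $c_i\in[j,j]$, preserving (b).
\item If $r(c_i,c_j)=\infty$ with $i>j+1$, the processed set is unchanged so (a) is preserved trivially; the new $i=j+1$ is in the component of $c_{j+1}$ by the old (b), so (b) still holds.
\end{itemize}

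At termination $j=0$, so all centers are processed and their connected components form consecutive intervals $I_1,\dots,I_p$ of $[T]$. Lifting back to $F$, the connected components of $F$ are exactly $\bigcup_{k\in I_\ell}C_k$ for $\ell=1,\dots,p$. Now suppose $u$ and $v$ lie in different components with $\cluster(u)>\cluster(v)$; then $\cluster(u)\in I_\ell$, $\cluster(v)\in I_{\ell'}$ with $\ell\ne\ell'$, and since the intervals are disjoint we must have $\min I_\ell>\max I_{\ell'}$. Therefore for every $u'$ in the component of $u$ and every $v'$ in the component of $v$, $\cluster(u')\ge \min I_\ell>\max I_{\ell'}\ge \cluster(v')$, which is the desired conclusion. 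The only delicate point is the case analysis for the invariant; in particular, the $i>j+1$ branch requires noting that $c_{j+1}$ was already placed in the same component as the old $c_i$ on the iteration in which the edge $\{c_{i^\circ},c_{j+1}\}$ (with $i^\circ\ge i$) was added, which is exactly guaranteed by (b) applied inductively.
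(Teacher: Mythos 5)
Your proof is correct and follows essentially the same route as the paper: the paper's argument also rests on the facts that each connected component of $F$ is a union of clusters $C_\phi,\dots,C_\psi$ with consecutive indices (which it asserts ``by construction'') and that \AlgClusterSort{} makes each cluster a star around its center, giving $d(u,v)\leq 2$. Your loop-invariant induction simply fills in the ``by construction'' step explicitly, and the case analysis is sound.
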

\begin{proof}
Let $\mathcal{C}$ be a connected component of $F$. Let $\psi$ (resp.\ $\phi$) be the maximum (resp.\ minimum) index such that $c_{\psi}\in \mathcal{C}$ (resp.\ $c_{\phi}\in \mathcal{C}$). By construction of $F$, we have $\mathcal{C} = \bigcup_{i=\phi}^\psi C_i$. This implies the first part of the lemma. Moreover, since the edges output by \AlgClusterSort{} form a star on each cluster $C_i$, for any pair of vertices $u$ and $v$ such that $\cluster(u)=\cluster(v)$, we have $d(u,v)\leq 2$. 
\end{proof}
We now prove that short paths produce good estimates of the weight ratios; this establishes the first requirement in \Cref{def:estimation-forest}.

\begin{lemma}\label{lem:nlogn-close-vertices}
For $\epsilon, \alpha, \delta\in(0,1)$, set $\epsilon_1=\frac{\epsilon}{10}$ and let $\mathcal{F}=(F=([n], E), r, (C_1, \dots, C_T), (c_1, \dots, c_T))$ be the output of $\AlgBuildEstimationForest(\alpha, \epsilon, \delta)$, and suppose that each call to $\AlgEstimateRatio$ as well as the call to $\AlgClusterSort$ is successful. Then, for any integer $t\geq 1$ and any $u,v\in [n]$ such that $d(u,v)\leq t$, 
\[
(1-\epsilon_1)^t \cdot \frac{w_u}{w_v} \leq r(P(u,v)) \leq (1+\epsilon_1)^t \cdot \frac{w_u}{w_v}. 
\]
In particular, if $d(u,v)\leq 5$, $r(P(u,v)) \in (1\pm \epsilon) \cdot \frac{w_u}{w_v}$.
\end{lemma}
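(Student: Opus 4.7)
The plan is to prove the bound by induction on the path length, based on a uniform edge-level guarantee: I will first show that every edge $\{x,y\}\in E$ satisfies $r(x,y)\in (1\pm\epsilon_1)\cdot w_x/w_y$. Given this, for a path $P(u,v)=u_1u_2\cdots u_k$ of length $k-1\leq t$, the definition $r(P(u,v))=\prod_{i=1}^{k-1} r(u_i,u_{i+1})$ telescopes: the true-ratio factors $w_{u_i}/w_{u_{i+1}}$ collapse to $w_u/w_v$ and the $(1\pm\epsilon_1)$ errors multiply, giving $r(P(u,v))\in (1\pm\epsilon_1)^{k-1}\cdot w_u/w_v \subseteq (1\pm\epsilon_1)^t\cdot w_u/w_v$.

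It therefore suffices to prove the edge-level guarantee, and here the edges split into two classes. The first are \emph{cluster edges} $\{s_a,c_i\}$ with $s_a\in C_i$, added by \AlgClusterSort{} with weight $r(s_a,c_i)$. The proof of \Cref{lem:cluster-sort-produces-cluster-graph} already shows (on the success event) that this weight is both finite and nonzero; combined with \Cref{lem:weight-pair-ratio-estimate}(4), it lies in $(1\pm\epsilon_1)\cdot w_{s_a}/w_{c_i}$, exactly as desired.

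The subtle case is \emph{center edges} $\{c_i,c_j\}$ with $i>j$, added by \AlgBuildEstimationForest{}, whose weight is $r(c_i,c_j)=\max\{\hat r,\,1/\alpha^{i-j}\}$, where $\hat r$ denotes the output of $\AlgEstimateRatio(c_i,c_j,\beta_j,\epsilon_1,\delta/(6n))$. Iterating property~2 of the cluster graph gives $w_{c_i}/w_{c_j}\geq 1/\alpha^{i-j}\geq 1>\beta_j$, so by \Cref{lem:weight-pair-ratio-estimate}(3) we have $\hat r\neq 0$. Thus either $\hat r=\infty$ (in which case no edge is added and there is nothing to check) or $\hat r$ is a $(1\pm\epsilon_1)$-multiplicative estimate of $w_{c_i}/w_{c_j}$. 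For the upper bound on $r(c_i,c_j)$, both arguments of the $\max$ are at most $(1+\epsilon_1)\cdot w_{c_i}/w_{c_j}$ since $1/\alpha^{i-j}\leq w_{c_i}/w_{c_j}$. For the lower bound, I will split into two sub-cases: if the $\max$ selects $\hat r$, the bound is immediate; if it selects $1/\alpha^{i-j}$, then $\hat r<1/\alpha^{i-j}$ combined with $\hat r\geq (1-\epsilon_1)w_{c_i}/w_{c_j}$ forces $w_{c_i}/w_{c_j}\leq 1/((1-\epsilon_1)\alpha^{i-j})$, whence $r(c_i,c_j)=1/\alpha^{i-j}\geq (1-\epsilon_1)w_{c_i}/w_{c_j}$. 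This $\max$-step is the main obstacle: without it, an underestimate on a single edge could compound across multi-hop paths and yield inconsistent ratios, and pinning it down requires carefully combining the $\AlgEstimateRatio$ guarantee with the cluster-graph lower bound on $w_{c_i}/w_{c_j}$.

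Finally, to get the closing ``in particular'' claim for $d(u,v)\leq 5$, I will specialize to $t=5$ and $\epsilon_1=\epsilon/10$, observing that $(1+\epsilon/10)^5\leq e^{\epsilon/2}\leq 1+\epsilon$ for $\epsilon\in(0,1)$ and $(1-\epsilon/10)^5\geq 1-\epsilon/2\geq 1-\epsilon$ by Bernoulli's inequality, which yields $r(P(u,v))\in (1\pm\epsilon)\cdot w_u/w_v$.
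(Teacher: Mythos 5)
Your proof is correct and takes essentially the same approach as the paper: establish the per-edge $(1\pm\epsilon_1)$ guarantee (handling the cluster-graph edges by the cluster-graph definition and the center edges by combining \Cref{lem:weight-pair-ratio-estimate} with the lower bound $w_{c_i}/w_{c_j}\ge 1/\alpha^{i-j}$ to control the $\max$), then telescope along the path and bound $(1\pm\epsilon_1)^5$ by $(1\pm\epsilon)$. Your case-split for the lower bound on $\max\{\hat r,1/\alpha^{i-j}\}$ is slightly more elaborate than the paper's direct observation that the $\max$ is at least $\hat r$, and your closing inequality uses $e^{\epsilon/2}\le 1+\epsilon$ where the paper uses $(1+a)^b\le 1+2ab$, but these are only cosmetic differences.
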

\begin{proof}
By \Cref{lem:cluster-sort-produces-cluster-graph}, $\AlgClusterSort(\alpha, \epsilon_1, \delta/2)$ returns a $(\frac{2}{\alpha}, \frac{1}{\alpha}, \epsilon_1)$-cluster graph. Consider $\{c_i, c_j\}\in E$ with $i>j$. By \Cref{def:cluster-graph}, it holds that $w_{c_i} \geq w_{c_j}$. Moreover, since $\{c_i,c_j\}\in E$, $r(c_i,c_j)\neq \infty$, hence by the guarantees of $\AlgEstimateRatio$ (\Cref{lem:weight-pair-ratio-estimate}) we must have that the value $\rho:=\AlgEstimateRatio(c_i,c_j,\beta_j,\varepsilon_1, {\delta \over 4n}) \neq 0$ %
and specifically:
\begin{equation}\label{eq:rho-good-estimate-close-vertices}
    \rho \in (1\pm \epsilon_1) \frac{w_{c_i}}{w_{c_j}} \quad\text{and}\quad  {1\over \rho} \in (1\pm \epsilon_1)\frac{w_{c_j}}{w_{c_i}}.
\end{equation} Moreover, by the guarantees of a $(\frac{2}{\alpha}, \frac{1}{\alpha}, \epsilon_1)$-cluster graph, 
\begin{equation}\label{eq:ratio-guarantees-compose}
    \frac{w_{c_i}}{w_{c_j}} = {w_{c_i}\over w_{c_{i-1}}}\cdot {w_{c_{i-1}}\over w_{c_{i-2}}}\cdots {w_{c_{j+1}}\over w_{c_{j}}} \geq \frac{1}{\alpha^{i-j}}.
\end{equation}
Thus,
\[
r(c_i,c_j) = \max\left\{\rho, \frac{1}{\alpha^{i-j}}\right\} \overset{\eqref{eq:rho-good-estimate-close-vertices},\eqref{eq:ratio-guarantees-compose}}{\leq} \max\left\{(1+\epsilon_1)\frac{w_{c_i}}{w_{c_j}}, \frac{w_{c_i}}{w_{c_j}}\right\} \leq (1+\epsilon_1)\frac{w_{c_i}}{w_{c_j}},
\]
and clearly $r(c_i,c_j) \geq \rho \geq  (1-\epsilon_1)\frac{w_{c_i}}{w_{c_j}}$. Similarly, 

\[
r(c_j,c_i) = \frac{1}{\max\{\rho, \frac{1}{\alpha^{i-j}}\}} = \min\left\{\frac{1}{\rho}, \alpha^{i-j}\right\} \overset{\eqref{eq:rho-good-estimate-close-vertices},\eqref{eq:ratio-guarantees-compose}}{\geq} (1-\epsilon_1)\frac{w_{c_j}}{w_{c_i}},
\]
and clearly $r(c_j,c_i) \leq {1\over \rho} \leq (1+\epsilon_1)\frac{w_{c_j}}{w_{c_i}}$. Note that for each edge $\{u,v\}\in E$, where $\{u,v\}\not\subseteq\{c_1, \dots , c_T\}$, the value of $r(u,v)$ was computed during the construction of the cluster graph. Therefore $r(u,v) \in (1\pm \epsilon_1) \frac{w_u}{w_v}$ and $r(v,u) \in (1\pm \epsilon_1) \frac{w_v}{w_u}$ by the definition of cluster graph. Hence these guarantees hold for every pair $\{u,v\}\in E$. 

Consider now any $u,v \in [n]$ such that $d(u,v)\leq t$. Let $P(u,v)=a_1, \dots, a_{t+1}$. We have,
\begin{align*}
    r(P(u,v)) &= \prod_{i=1}^{t} r(a_i,a_{i+1}) \leq (1+\epsilon_1)^{t} \cdot \prod_{i=1}^{t} \frac{w_{a_i}}{w_{a_{i+1}}} = (1+\epsilon_1)^{t} \cdot \frac{w_{a_1}}{w_{a_{t+1}}} = (1+\epsilon_1)^{t} \cdot \frac{w_{u}}{w_{v}}.
\end{align*}
Note that for $t\leq 5$, $2t\cdot \epsilon_1 \in (0,1)$, thus, by using that $(1+a)^b \leq 1+2ab$ for $a\in[0,1], b\geq0, 2ab\in(0,1)$, we obtain: 
\begin{align*}
    r(P(u,v))&\leq (1+2\cdot t\cdot \epsilon_1) \frac{w_{u}}{w_{v}} \leq (1+10\cdot \epsilon_1) \frac{w_{u}}{w_{v}} \leq (1+\epsilon) \frac{w_{u}}{w_{v}}.
\end{align*}
Similarly,
\begin{align*}
    r(P(u,v)) &= \prod_{i=1}^{t} r(a_i,a_{i+1}) \geq (1-\epsilon_1)^{t} \cdot \frac{w_{u}}{w_{v}} \geq (1-2\cdot t \cdot \epsilon_1) \frac{w_{u}}{w_{v}} \geq (1-\epsilon) \frac{w_{u}}{w_{v}},
\end{align*}
where the last two inequalities hold for $t\leq 5$. In particular, we used that $(1-a)^b \geq 1-2ab$, for all 
$a\in[0,1]$, $b\geq0$, $2ab\in(0,1)$.
\end{proof}

\begin{figure}
    \centering
    \tikzset{every picture/.style={line width=0.75pt}} %

\begin{tikzpicture}[x=0.75pt,y=0.75pt,yscale=-1,xscale=1]
\draw   (72,42.94) -- (82,42.94) -- (82,52.94) -- (72,52.94) -- cycle ;
\draw   (170,153) -- (180,153) -- (180,163) -- (170,163) -- cycle ;
\draw   (45,123) -- (55,123) -- (55,133) -- (45,133) -- cycle ;
\draw   (206,221) -- (216,221) -- (216,231) -- (206,231) -- cycle ;
\draw  [fill={rgb, 255:red, 0; green, 0; blue, 0 }  ,fill opacity=1 ] (40.82,143.9) .. controls (40.84,145.56) and (39.51,146.92) .. (37.86,146.94) .. controls (36.2,146.96) and (34.84,145.63) .. (34.82,143.98) .. controls (34.8,142.32) and (36.13,140.96) .. (37.78,140.94) .. controls (39.44,140.92) and (40.8,142.25) .. (40.82,143.9) -- cycle ;
\draw  [fill={rgb, 255:red, 0; green, 0; blue, 0 }  ,fill opacity=1 ] (231.82,243.9) .. controls (231.84,245.56) and (230.51,246.92) .. (228.86,246.94) .. controls (227.2,246.96) and (225.84,245.63) .. (225.82,243.98) .. controls (225.8,242.32) and (227.13,240.96) .. (228.78,240.94) .. controls (230.44,240.92) and (231.8,242.25) .. (231.82,243.9) -- cycle ;
\draw   (192,183) -- (202,183) -- (202,193) -- (192,193) -- cycle ;
\draw   (101,123) -- (111,123) -- (111,133) -- (101,133) -- cycle ;
\draw   (72,83.94) -- (82,83.94) -- (82,93.94) -- (72,93.94) -- cycle ;
\draw    (37.78,142.94) -- (45,133) ;
\draw    (216,231) -- (228.82,243.94) ;
\draw    (197,193.33) -- (211,220.33) ;
\draw    (180,163) -- (192,183) ;
\draw [color={rgb, 255:red, 208; green, 2; blue, 27 }  ,draw opacity=1 ] [dash pattern={on 4.5pt off 4.5pt}]  (164,227.33) .. controls (144,230.33) and (138,221.33) .. (135,202.33) .. controls (132,183.33) and (151,171.33) .. (170,163) ;
\draw    (111,133) .. controls (113.11,131.96) and (114.69,132.5) .. (115.74,134.61) .. controls (116.78,136.72) and (118.36,137.26) .. (120.47,136.21) .. controls (122.58,135.17) and (124.16,135.71) .. (125.21,137.82) .. controls (126.25,139.93) and (127.83,140.47) .. (129.94,139.42) .. controls (132.05,138.38) and (133.63,138.92) .. (134.68,141.03) .. controls (135.72,143.14) and (137.3,143.68) .. (139.41,142.63) .. controls (141.52,141.59) and (143.1,142.13) .. (144.15,144.24) .. controls (145.19,146.35) and (146.77,146.89) .. (148.88,145.84) .. controls (150.99,144.8) and (152.57,145.34) .. (153.62,147.45) .. controls (154.66,149.56) and (156.24,150.1) .. (158.35,149.05) .. controls (160.46,148.01) and (162.04,148.55) .. (163.09,150.66) .. controls (164.13,152.77) and (165.71,153.31) .. (167.82,152.26) -- (170,153) -- (170,153) ;
\draw    (77,53.33) .. controls (78.67,55) and (78.67,56.66) .. (77,58.33) .. controls (75.33,60) and (75.33,61.66) .. (77,63.33) .. controls (78.67,65) and (78.67,66.66) .. (77,68.33) .. controls (75.33,70) and (75.33,71.66) .. (77,73.33) .. controls (78.67,75) and (78.67,76.66) .. (77,78.33) .. controls (75.33,80) and (75.33,81.66) .. (77,83.33) -- (77,83.33) ;
\draw    (50,122.33) -- (77,94.33) ;
\draw    (77,94.33) -- (101,123) ;
\draw   (166,222) -- (176,222) -- (176,232) -- (166,232) -- cycle ;
\draw    (192,193) -- (176,222) ;

\draw (67,27.94) node [anchor=north west][inner sep=0.75pt]  [font=\small]  {$c_{R}$};
\draw (172,140) node [anchor=north west][inner sep=0.75pt]  [font=\small]  {$c_{z}$};
\draw (20,108) node [anchor=north west][inner sep=0.75pt]  [font=\small]  {$c_{\gamma( u)}$};
\draw (219,214) node [anchor=north west][inner sep=0.75pt]  [font=\small]  {$c_{\gamma ( v)}$};
\draw (26.82,146.98) node [anchor=north west][inner sep=0.75pt]  [font=\small]  {$u$};
\draw (232.07,245) node [anchor=north west][inner sep=0.75pt]  [font=\small]  {$v$};
\draw (102,110) node [anchor=north west][inner sep=0.75pt]  [font=\small]  {$c_{x}$};
\draw (112,182.79) node [anchor=north west][inner sep=0.75pt]  [color={rgb, 255:red, 208; green, 2; blue, 27 }  ,opacity=1 ]  {$\infty $};
\draw (161,208) node [anchor=north west][inner sep=0.75pt]  [font=\small]  {$c_{k}$};

\end{tikzpicture}
    \vspace{-10mm}
    \caption{Highlight of some vertices of the estimation-forest computed by \Cref{alg:nlogn-build-estimation-forest} that are used in the proofs of \Cref{lem:nlogn-distant-vertices-real} and \Cref{lem:nlogn-distant-vertices-estimates}. White squares are cluster centers and black dots are items that are not cluster centers. Curly edges represent paths of length $\geq 0$. The dashed edge is not present in the tree but indicates that \Cref{alg:nlogn-build-estimation-forest} observed $r({c_z}, {c_k})=\infty$. Vertex $c_x$ is referenced only in the proof of \Cref{lem:nlogn-distant-vertices-estimates}. Note that in the figure we assume that $u$ and $v$ are not centers, but it might also be $u=c_{\cluster(u)}$ or $v=c_{\cluster(v)}$. Moreover, we assume that $c_{\cluster(u)}$ and $c_x$ are siblings but it might also be $c_{\cluster(u)}=c_x$ (similarly for $c_k$ and $c_{\cluster(v)}$). In particular, it holds that $R\geq \cluster(u) \geq x \geq z > k \geq \cluster(v)$, and also $\cluster(u) > z$.}
    \label{fig:proof-tree}
\end{figure}
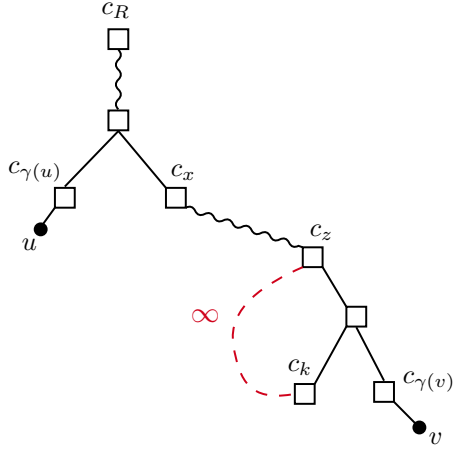

We now prove that items far away in the forest have negligible ratios, concluding the proof of the third point and part of the second point of \Cref{def:estimation-forest}.

\begin{lemma}\label{lem:nlogn-distant-vertices-real}
For $\epsilon, \alpha, \delta\in(0,1)$, let $\mathcal{F}=(F=([n], E), r, (C_1, \dots, C_T), (c_1, \dots, c_T))$ be the output of $\AlgBuildEstimationForest(\alpha, \epsilon, \delta)$, and suppose that each call to $\AlgEstimateRatio$ as well as the call to $\AlgClusterSort$ is successful. Then, for any $u,v\in [n]$ such that $d(u,v)>5$ and $\cluster(u)>\cluster(v)$, it holds that $\sum_{s\in H_v} \frac{w_s}{w_u} \leq \epsilon$, where $H_v = \{s\in [n] \mid \cluster(s) \leq \cluster(v)\}$.
\end{lemma}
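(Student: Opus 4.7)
The plan is to locate, within the tree of $F$ that contains $u$ and $v$, a pair of cluster centers $(c_{i^*},c_{j^*})$ at which \Cref{alg:nlogn-build-estimation-forest} observed the ratio to be $\infty$, and then translate the resulting quantitative bound into a bound on $\sum_{s\in H_v}w_s/w_u$ via the cluster-graph properties. To this end, I would first describe the \emph{spine} of the tree, obtained by tracking the values the variable $i$ assumes while that tree is being built: this yields a chain $c_{r_0},c_{r_1},\dots,c_{r_L}$ with $r_0>r_1>\cdots>r_L$, connected by the tree-edges $\{c_{r_{t-1}},c_{r_t}\}$, while every other center of the tree is a leaf-child of some $c_{r_t}$ (cf.\ \Cref{fig:estimation-forest}). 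The key observation is that each spine transition $t{-}1\to t$ was triggered by the third branch of the main \textbf{if}-\textbf{else} on the pair $(c_{r_{t-1}},c_{r_t-1})$, which forces $r(c_{r_{t-1}},c_{r_t-1})=\infty$. By \Cref{lem:weight-pair-ratio-estimate} applied with parameter $\beta_{r_t-1}=\alpha^2\varepsilon/(8Z_{r_t-1})$, this yields
\[
Z_{r_t-1}\,w_{c_{r_t-1}}\;<\;\frac{\alpha^2\varepsilon}{8}\,w_{c_{r_{t-1}}}.
\]

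Next, setting $\kappa:=\cluster(u)$ and $k:=\cluster(v)$, I would let $s,t$ be the spine levels to which $c_\kappa,c_k$ respectively attach, so that
\[
d(u,v)=\mathbb{1}[u\neq c_\kappa]+\mathbb{1}[c_\kappa\text{ non-spine}]+(t-s)+\mathbb{1}[c_k\text{ non-spine}]+\mathbb{1}[v\neq c_k].
\]
The hypothesis $d(u,v)>5$ then forces $t-s\ge 2$ in general, and $t-s\ge 3$ whenever $c_k$ itself is a spine node (i.e.\ $k=r_t$), since in this subcase the third indicator vanishes. From this one easily deduces $\kappa>r_{t-1}$ always, and $\kappa>r_{t-2}$ when $k=r_t$, using $\kappa\ge r_{s+1}+1$ in the non-spine case and $\kappa=r_s$ in the spine case. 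I would then split into two cases. If $k<r_t$, I would use the transition $t{-}1\to t$ and set $(i^*,j^*)=(r_{t-1},r_t-1)$; if $k=r_t$, I would use the transition $t{-}2\to t{-}1$ (available precisely thanks to $t-s\ge 3$) and set $(i^*,j^*)=(r_{t-2},r_{t-1}-1)$. In both cases $k\le j^*$ and $i^*<\kappa$.

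The endgame is a short chain of inequalities. Cluster-graph property~2 applied to consecutive pairs yields $w_{c_a}\le\alpha^{b-a}w_{c_b}$ for $a\le b$; in particular, for $k\le j^*$ one simultaneously gets $w_{c_k}\le\alpha^{j^*-k}w_{c_{j^*}}$ and $Z_k\le Z_{j^*}/\alpha^{j^*-k}$, so the $\alpha$-factors cancel and $Z_k w_{c_k}\le Z_{j^*}w_{c_{j^*}}$; likewise $w_{c_{i^*}}\le w_{c_\kappa}$. Combining with the $\infty$-bound of the first paragraph and with cluster-graph property~1 (which gives $\sum_{s\in H_v}w_s\le (2/\alpha)\,Z_k w_{c_k}$ and $w_u\ge (\alpha/2)\,w_{c_\kappa}$), one concludes
\[
\frac{\sum_{s\in H_v}w_s}{w_u}\;\le\;\frac{4}{\alpha^2}\cdot\frac{Z_{j^*}w_{c_{j^*}}}{w_{c_{i^*}}}\;<\;\frac{4}{\alpha^2}\cdot\frac{\alpha^2\varepsilon}{8}\;=\;\frac{\varepsilon}{2}\;\le\;\varepsilon.
\]

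The main obstacle I anticipate is the subcase $k=r_t$: a naive use of the $\infty$-observation at the transition $t{-}1\to t$ only bounds $Z_{r_t-1}w_{c_{r_t-1}}$, which omits the contribution of $|C_{r_t}|\cdot w_{c_{r_t}}$ that appears in $\sum_{s\in H_v}w_s$. Overcoming this requires stepping one transition further up the spine (using the pair $(c_{r_{t-2}},c_{r_{t-1}-1})$), and this is exactly what the strengthened constraint $t-s\ge 3$ -- forced by the very hypothesis that $c_k$ is on the spine -- makes available.
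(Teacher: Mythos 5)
Your argument for the case in which $u$ and $v$ lie in the same tree is correct and, despite the different bookkeeping, is essentially the paper's proof: your spine node $c_{r_{t-1}}$ (resp.\ $c_{r_{t-2}}$ in the sub-case $k=r_t$) is exactly the paper's ``ancestor $c_z$ of $c_{\cluster(v)}$ at distance $2$,'' your chosen pair $(i^*,j^*)$ is exactly the $\infty$-comparison $r(c_z,c_k)=\infty$ the paper invokes, and your derivation of $t-s\ge 2$ (resp.\ $\ge 3$) from $d(u,v)>5$ plays the role of the paper's observation that $d(c_{\cluster(u)},c_{\cluster(v)})\ge 4$ guarantees $c_z$ exists and satisfies $\cluster(u)>z$. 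Your endgame, which aggregates $\sum_{s\in H_v}w_s\le\frac{2}{\alpha}Z_kw_{c_k}$ and uses $Z_kw_{c_k}\le Z_{j^*}w_{c_{j^*}}$, is an equivalent repackaging of the paper's term-by-term bound $w_u\ge \frac{Z_k}{\epsilon\,\alpha^{k-\cluster(s)}}w_s$ followed by summation; both yield the telescoping of the $\alpha$-powers against the definition of $Z$.

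There is, however, one genuine omission: the lemma also covers the case $d(u,v)=\infty$, i.e.\ $u$ and $v$ in \emph{different} connected components of $F$ (this case is what certifies property~3 of \Cref{def:estimation-forest}, so it cannot be dropped). Your entire setup --- the common spine and the distance decomposition $d(u,v)=\mathbb{1}[u\neq c_\kappa]+\cdots+\mathbb{1}[v\neq c_k]$ --- presupposes a shared tree and says nothing when the components differ. The fix is short and fits your framework: if $z$ is the smallest index with $c_z$ in $u$'s component, the tree containing $u$ was terminated by the branch with $i=j+1$, which forces $r(c_z,c_{z-1})=\infty$ and hence $Z_{z-1}w_{c_{z-1}}<\frac{\alpha^2\epsilon}{8}w_{c_z}$; since $z>\cluster(v)$ one may take $(i^*,j^*)=(z,z-1)$, and the endgame goes through verbatim once you note that $i^*\le\kappa$ (rather than $i^*<\kappa$) already gives $w_{c_{i^*}}\le w_{c_\kappa}$. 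You should add this case explicitly.
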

\begin{proof}
Since $d(u,v)\geq 6$, it must be the case that $d(c_{\cluster(u)}, c_{\cluster(v)}) \geq 4$. Consider first the case where $u$ and $v$ are in the same tree. Let $R$ be the largest number in $[T]$ such that $c_{R}$ is in the same connected component as $u$ and $v$. We root the tree so that $c_{R}$ is its root (see \Cref{fig:proof-tree} for reference). Let $c_z$ be the ancestor of $c_{\cluster(v)}$ at distance 2 from $c_{\cluster(v)}$ in $F$. Note that $c_z$ exists and is on a level of the tree no smaller than the level of $c_{\cluster(u)}$ since $d(c_{\cluster(u)}, c_{\cluster(v)}) \geq 4$. Therefore, by construction of the tree, it must also hold $\cluster(u) > z$ and by the definition of $(\frac{2}{\alpha}, \frac{1}{\alpha},\epsilon)$-cluster graph, 
\begin{equation}\label{eq:nlogn-close-vertices-ccu-cz-relation}
    w_{c_{\cluster(u)}} \geq w_{c_z}.
\end{equation}
Observe that, during the construction of this tree, $c_z$ must have been compared with a sibling of $c_{\cluster(v)}$ (or with $c_{\cluster(v)}$ itself) and the result of the estimation must have been $\infty$, which caused the creation of a new level in the tree. Formally, there must exists $c_k$, with $z>k\geq \cluster(v)$, such that $r(c_z,c_k) =\infty$. In particular, the result of $\AlgEstimateRatio(c_z, c_k, \beta_k, \epsilon_1, \frac{\delta}{4n})$ must have returned $\infty$. Thus, by \Cref{lem:weight-pair-ratio-estimate}, 
\begin{equation}\label{eq:nlogn-close-vertices-cz-ck-relation}
    w_{c_z} \geq \frac{w_{c_k}}{\beta_k}.
\end{equation}
Let $H_{c_k} = \bigcup_{\ell=1}^{k} C_\ell$. Note that $H_v = \bigcup_{\ell=1}^{\cluster(v)} C_\ell \subseteq H_{c_k}$. Consider any $s\in H_{c_k}$. By the definition of $(\frac{2}{\alpha}, \frac{1}{\alpha},\epsilon)$-cluster graph, we have
\begin{align}\label{eq:nlogn-close-wu-ws}
    w_{u} &\geq \frac{\alpha}{2} \cdot w_{c_{\cluster(u)}} \overset{\eqref{eq:nlogn-close-vertices-ccu-cz-relation}}{\geq} \frac{\alpha}{2} \cdot w_{c_z} \overset{\eqref{eq:nlogn-close-vertices-cz-ck-relation}}{\geq} \frac{\alpha}{2 \cdot \beta_k} \cdot w_{c_k} \geq \frac{\alpha}{2\beta_k} \cdot \frac{1}{\alpha^{k-\cluster(s)}} \cdot w_{c_{\cluster(s)}} \nonumber\\
    & \geq \frac{\alpha^2}{4\cdot \beta_k \cdot \alpha^{k-\cluster(s)}} \cdot w_{s} \geq \frac{Z_k}{\epsilon \cdot \alpha^{k-\cluster(s)}} \cdot w_s, 
\end{align}
where the last inequality is by the definition of $\beta_k$. Thus,
\begin{align*}
    \sum_{s\in H_{c_k}} \frac{w_{s}}{w_u} &\overset{\eqref{eq:nlogn-close-wu-ws}}{\leq} \sum_{s\in H_{c_k}} \frac{\alpha^{k-\cluster(s)} \cdot \epsilon}{Z_{k}} = \sum_{\ell=1}^k |C_\ell| \cdot \frac{\alpha^{k-\ell} \cdot \epsilon}{Z_k} = \frac{\epsilon}{Z_k} \sum_{\ell=1}^k |C_\ell| \cdot\alpha^{k-\ell} = \frac{\epsilon}{Z_k} \cdot Z_k = \epsilon.
\end{align*}
Therefore, by $H_v\subseteq H_{c_k}$, $\sum_{s\in H_v} \frac{w_s}{w_u} \leq \epsilon$. This concludes the proof for the case where $u$ and $v$ are in the same tree.

Consider now the case where $u$ and $v$ are in different connected components of $F$. Let $z$ be the smallest integer such that $c_z$ is in the same connected component as $u$. Then, since $\cluster(u)\geq z$, we have $w_{c_{\cluster(u)}} \geq w_{c_z}$. Note also that it must be $z > \cluster(v)$. Moreover, by construction, $r(c_z,c_{z-1}) =\infty$, otherwise $c_{z-1}$ would be in the same connected component as $u$. Thus, by the guarantees on $\AlgEstimateRatio$ in \Cref{lem:weight-pair-ratio-estimate}, $w_{c_z} \geq \frac{1}{\beta_{z-1}} \cdot w_{c_{z-1}}$. Let $H_{c_{z-1}} = \bigcup_{\ell=1}^{z-1} C_\ell$, and note that $H_v \subseteq H_{c_{z-1}}$ given that $z-1 \geq \cluster(v)$. Let $s\in H_{c_{z-1}}$.  Similarly to the computation of \Cref{eq:nlogn-close-wu-ws}:
\begin{align*}    
    w_u &\geq \frac{\alpha}{2} \cdot w_{c_{\cluster(u)}} \geq \frac{\alpha}{2} \cdot w_{c_{z}} \geq \frac{\alpha}{2\cdot \beta_{z-1}} \cdot w_{c_{z-1}} \geq \frac{\alpha}{2\cdot \beta_{z-1}\cdot \alpha^{z-1-\cluster(s)}} \cdot w_{c_{\cluster(s)}} \\
    &\geq \frac{\alpha^2}{4\cdot \beta_{z-1} \cdot \alpha^{z-1-\cluster(s)}} \cdot w_{s} \geq \frac{Z_{z-1}}{\epsilon\cdot \alpha^{z-1-\cluster(s)}} \cdot w_s.
\end{align*}
Therefore, similar to previous calculations,
\[
\sum_{s\in H_v} \frac{w_s}{w_u} \leq \sum_{s\in H_{c_{z-1}}} \frac{w_s}{w_u} \leq \frac{\epsilon}{Z_{z-1}} \sum_{\ell=1}^{z-1} |C_\ell| \cdot \alpha^{z-1-\ell} = \epsilon,
\]
where the first inequality follows by $H_v\subseteq H_{c_{z-1}}$. 
\end{proof}
We are only left with showing that the estimates are well-behaved also along long paths. Before proving this, we show an auxiliary property that applies specifically to paths going from a cluster center to its descendants. 

\begin{lemma}\label{lem:nlogn-ancestors-property}
For $\epsilon, \alpha, \delta\in(0,1)$, let $\mathcal{F}=(F=([n], E), r, (C_1, \dots, C_T), (c_1, \dots, c_T))$ be the output of $\AlgBuildEstimationForest(\alpha, \epsilon, \delta)$, and suppose that each call to $\AlgEstimateRatio$ as well as the call to $\AlgClusterSort$ is successful. Consider any connected component $\mathcal{C}$ in forest $F$ and let $i$ be the largest index such that $c_i \in \mathcal{C}$. If $c_x$ is an ancestor of $c_y$ in the tree $\mathcal{C}$ rooted at $c_i$, then $r(P(c_x, c_y)) \geq \frac{1}{\alpha^{x-y}}$.
\end{lemma}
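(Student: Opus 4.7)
The plan is to prove this by a direct telescoping argument using the explicit lower bound $\tfrac{1}{\alpha^{i-j}}$ that \Cref{alg:nlogn-build-estimation-forest} enforces on each estimated edge-weight via the $\max$ in the assignment $r(c_i, c_j) \myassign \max\{\AlgEstimateRatio(\ldots), 1/\alpha^{i-j}\}$.

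First I would establish the basic structural fact that, in any connected component $\mathcal{C}$ rooted at its largest-index center $c_i$, every parent-child edge $\{c_a, c_b\}$ satisfies $a > b$ (i.e., indices strictly decrease along any root-to-leaf path). This follows from inspecting the while-loop of \Cref{alg:nlogn-build-estimation-forest}: whenever an edge $\{c_i, c_j\}$ is added, we have $i > j$ (the index $j$ is strictly less than the current $i$ throughout the loop), and the newly added child is $c_j$ while the parent is $c_i$. Hence, walking from an ancestor $c_x$ to a descendant $c_y$ along $P(c_x, c_y)$ traverses a strictly decreasing sequence of center indices $x = a_0 > a_1 > \cdots > a_k = y$.

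Next, for every such edge $\{c_{a_t}, c_{a_{t+1}}\}$ on the path, with $a_t > a_{t+1}$, the construction guarantees
\[
    r(c_{a_t}, c_{a_{t+1}}) = \max\!\left\{\AlgEstimateRatio(c_{a_t}, c_{a_{t+1}}, \beta_{a_{t+1}}, \epsilon_1, \tfrac{\delta}{6n}),\ \tfrac{1}{\alpha^{a_t - a_{t+1}}}\right\} \geq \tfrac{1}{\alpha^{a_t - a_{t+1}}}.
\]
Multiplying these inequalities over the edges of $P(c_x, c_y)$ yields
\[
    r(P(c_x, c_y)) = \prod_{t=0}^{k-1} r(c_{a_t}, c_{a_{t+1}}) \geq \prod_{t=0}^{k-1} \frac{1}{\alpha^{a_t - a_{t+1}}} = \frac{1}{\alpha^{\sum_{t=0}^{k-1}(a_t - a_{t+1})}} = \frac{1}{\alpha^{x-y}},
\]
where the exponent telescopes to $a_0 - a_k = x - y$.

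There is no real obstacle here: the lemma is essentially a bookkeeping consequence of the $\max$ in the edge assignment together with the monotonicity of indices along root-to-leaf paths. The only subtle point to double-check is the claim that indices strictly decrease from ancestor to descendant; this should be verified carefully from the control flow (in particular, confirming that after a successful comparison the parent of $c_j$ is indeed the current $c_i$ with $i > j$, and that the rooting convention "$c_i$ is the largest-index center in $\mathcal{C}$" is consistent with the algorithm's behavior of only ever attaching smaller-indexed children).
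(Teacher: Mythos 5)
Your proof is correct and matches the paper's own argument: the paper likewise observes that each edge on the path satisfies $r(c_{i_j}, c_{i_{j+1}}) \geq \frac{1}{\alpha^{i_j - i_{j+1}}}$ by the $\max$ in the assignment, and then telescopes the product along the path. Your extra care in verifying that indices strictly decrease from ancestor to descendant is a detail the paper leaves implicit ("by construction"), but it is the same proof.
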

\begin{proof}
By construction, we must have $x\geq y$. Let $P(c_x, c_y):=c_x = c_{i_1}, \dots, c_{i_k} = c_y$ be the unique path from $c_x$ to $c_y$. Note that, by construction, for each $j\in[k-1]$, $r(c_{i_j}, c_{i_{j+1}}) \geq \frac{1}{\alpha^{i_j - i_{j+1}}}$. Thus,
\[
    r(P(c_x, c_y)) = \prod_{j=1}^{k-1} r(c_{i_j}, c_{i_{j+1}}) \geq \prod_{j=1}^{k-1} \frac{1}{\alpha^{i_j - i_{j+1}}} = \frac{1}{\alpha^{i_1 - i_k}} = \frac{1}{\alpha^{x-y}}. \qedhere
\]
\end{proof}

We are now ready to show that the estimates are well-behaved on long paths. This concludes the proof of the second point of \Cref{def:estimation-forest} and hence all four properties of \Cref{def:estimation-forest} are proved. 

\begin{lemma}\label{lem:nlogn-distant-vertices-estimates}
For $\epsilon, \alpha, \delta\in(0,1)$, let $\mathcal{F}=(F=([n], E), r, (C_1, \dots, C_T), (c_1, \dots, c_T))$ be the output of $\AlgBuildEstimationForest(\alpha, \epsilon, \delta)$, and suppose that each call to $\AlgEstimateRatio$ as well as the call to $\AlgClusterSort$ is successful. Suppose that $u,v\in[n]$ are in the same connected component $\mathcal{C}$, and $\cluster(u)>\cluster(v)$ and $d(u,v)>5$. Then, $\sum_{s\in K_v} r(P(s,u)) \leq \epsilon$, where $K_v = \{s\in \mathcal{C} \mid \cluster(s)\leq \cluster(v)\}$.
\end{lemma}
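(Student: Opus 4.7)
The plan is to mirror the proof of Lemma~\ref{lem:nlogn-distant-vertices-real} at the level of the $r$-estimates, using two tools in combination: Lemma~\ref{lem:nlogn-close-vertices} (close-vertices, tight for short sub-paths) and Lemma~\ref{lem:nlogn-ancestors-property} (ancestor property, which controls descending paths independently of their tree-length). First I would set up exactly as before: root $\mathcal{C}$ at $c_R$, take $c_z$ to be the ancestor of $c_{\cluster(v)}$ at tree-distance $2$, and let $c_k$ (with $z>k\geq \cluster(v)$) be the index for which \Cref{alg:nlogn-build-estimation-forest} recorded $\AlgEstimateRatio(c_z,c_k,\beta_k,\epsilon_1,\delta/(6n))=\infty$. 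As in the previous lemma, the hypothesis $d(u,v)>5$ forces $\cluster(u)>z$. The caterpillar structure produced by the algorithm---each backbone node $c_b$ has as its subtree exactly the set of in-$\mathcal{C}$ centers of index at most $b$---implies that every $s\in K_v$ has $c_{\cluster(s)}$ in the subtree of $c_z$, whereas $c_{\cluster(u)}$ does not. Hence $P(s,u)$ passes through $c_z$, and I write $r(P(s,u)) = r(P(s,c_z)) \cdot r(P(c_z,u))$.

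Second, I would bound each factor by imitating the corresponding step in the real-weights chain. On the $s$-side: $r(s,c_{\cluster(s)}) \leq (1+\epsilon_1)\cdot 2/\alpha$ via close-vertices on the length-$1$ edge combined with the cluster-graph item-to-center bound, and $r(P(c_{\cluster(s)},c_z)) \leq \alpha^{z-\cluster(s)}$ via the ancestor property (since $c_z$ is an ancestor of $c_{\cluster(s)}$ in the rooted caterpillar). On the $u$-side, a symmetric analysis---with a small case split on whether $c_{\cluster(u)}$ is itself on the backbone of the caterpillar or is attached as a leaf to it---gives $r(P(c_z,u)) \leq 2(1+\epsilon_1)\,\alpha^{\cluster(u)-z-1}$.

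Third, the key tightening---analogous to the use of $w_{c_z}\geq w_{c_k}/\beta_k$ in the real-weights chain---comes from applying Lemma~\ref{lem:nlogn-close-vertices} to the length-$2$ sub-path $c_k\to c_p\to c_z$, where $c_p$ is the parent of $c_{\cluster(v)}$, which must be a child of $c_z$ with $c_k$ as a sibling-leaf (this follows from the algorithm: if $r(c_p,c_{p-1})$ had been $\infty$, a new tree would have been started at $c_{p-1}$, making it impossible for $c_{\cluster(v)}$ to be a child of $c_p$ in $\mathcal{C}$). Since $\AlgEstimateRatio(c_z,c_k,\beta_k,\dots)=\infty$ forces $w_{c_k}/w_{c_z}\leq \beta_k$ by Lemma~\ref{lem:weight-pair-ratio-estimate}, this yields $r(P(c_k,c_z))\leq (1+\epsilon_1)^2 \beta_k$. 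Combined with the telescoping cluster-graph bound $w_{c_{\cluster(s)}}/w_{c_k}\leq \alpha^{k-\cluster(s)}$, I would upgrade the bound on $r(P(s,c_z))$ to one involving $\beta_k$ rather than only $\alpha^{z-\cluster(s)}$; substituting $\beta_k = \alpha^2\epsilon/(8Z_k)$ then gives $r(P(s,u)) \leq C\cdot\epsilon\,\alpha^{k-\cluster(s)}/Z_k$ for an absolute constant $C$. Summing and using $\sum_{s\in K_v}\alpha^{k-\cluster(s)} \leq \sum_{\ell\leq k}|C_\ell|\alpha^{k-\ell} = Z_k$ collapses the total to $C\epsilon$.

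The main obstacle I anticipate is constant-tracking together with the fact that $P(s,c_z)$ can be long: I cannot use close-vertices on the full sub-path $c_{\cluster(s)}\to c_z$ without a $(1+\epsilon_1)^{\Theta(\text{length})}$ overhead, so $\beta_k$ must be inserted into the bound indirectly (via close-vertices on the length-$2$ path $c_k\to c_p\to c_z$ together with cluster-graph telescoping) rather than end-to-end. Ensuring that the accumulated $(1+\epsilon_1)$-factors from the various short applications of close-vertices combine with the $\alpha^{O(1)}$ slack available in the exponents (because $\cluster(u)\geq z+1$ and $p\geq k+1$) to keep $C\leq 1$ with the algorithm's choice $\epsilon_1=\epsilon/10$ is the delicate part of the argument.
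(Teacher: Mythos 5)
Your setup, your two tools (close-vertices for short sub-paths, the ancestor property for descending sub-paths), the decomposition of $r(P(s,u))$ at $c_z$, and your target bound $r(P(s,u)) \leq O(\epsilon\,\alpha^{k-\cluster(s)}/Z_k)$ all match the paper's proof. The gap is in your third step, i.e., in how $\beta_k$ actually enters the bound on $r(P(s,c_z))$. The combination you describe --- close-vertices on the length-$2$ path $c_k \to c_p \to c_z$ giving $r(P(c_k,c_z)) \leq (1+\epsilon_1)^2\beta_k$, ``combined with'' the true-weight telescoping $w_{c_{\cluster(s)}}/w_{c_k} \leq \alpha^{k-\cluster(s)}$ --- does not type-check. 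For a generic $s\in K_v$ the path $P(s,c_z)$ does \emph{not} pass through $c_k$ ($c_k=c_{p-1}$ is just one child of $c_p$; the path from a deep $s$ ascends through the \emph{smallest-index} child of $c_p$), so the $r$-bound on $P(c_k,c_z)$ cannot be multiplied into $r(P(s,u))$. And the telescoping inequality is a statement about true weights; it converts into a bound on a product of $r$-values only along a sub-path that is either short (via close-vertices) or monotonically descending from an ancestor (via the ancestor property). The segment of $P(s,c_z)$ from $c_{\cluster(s)}$ up to $c_z$ is long, and the ``path'' from $c_{\cluster(s)}$ to $c_k$ is not descending, so neither tool applies to the object you would need. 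If you instead keep your step-2 ancestor-property bound $r(P(c_{\cluster(s)},c_z))\leq\alpha^{z-\cluster(s)}$, you never acquire the $\beta_k$ factor and the final sum is $\Theta(\sum_\ell |C_\ell|\alpha^{z-\ell})$, which can be $\Theta(n)$.

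The missing idea is a case split on where $c_{\cluster(s)}$ sits. For $s$ with $c_{\cluster(s)}$ a child of $c_p$ (a sibling of $c_k$), $d(c_z,s)\leq 3$, so close-vertices applies to the \emph{whole} path $P(c_z,s)$ and transfers the true-weight bound $w_{c_z}/w_s \geq \tfrac{\alpha}{2\beta_k\alpha^{k-\cluster(s)}}$ directly. For deeper $s$, let $c_\psi$ be the smallest-index child of $c_p$ (the next backbone node); then $c_\psi$ is an ancestor of $c_{\cluster(s)}$ and $d(c_z,c_\psi)=2$, so you apply close-vertices to $P(c_z,c_\psi)$ together with $w_{c_\psi}/w_{c_z}\leq \beta_k\,\alpha^{k-\psi}$ (it is essential to keep the $\alpha^{k-\psi}$ here, not merely $w_{c_\psi}\leq w_{c_k}$), and the ancestor property on $P(c_\psi,c_{\cluster(s)})$ to get the factor $\alpha^{\psi-\cluster(s)}$; the two exponents telescope to $\alpha^{k-\cluster(s)}$ and the summation then collapses to $Z_k$ exactly as you intend. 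This is precisely the $A$/$B$ split and the routing through $c_\psi$ in the paper's proof; without it, your step 3 cannot be executed.
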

\begin{proof}
Let $R\in[T]$ be the maximum index such that $c_R \in \mathcal{C}$, and consider the tree $\mathcal{C}$ rooted at $c_R$ (see \Cref{fig:proof-tree} for reference). Since $d(u,v)\geq 6$, it must be $d(c_{\cluster(u)}, c_{\cluster(v)}) \geq 4$. Similarly to \Cref{lem:nlogn-distant-vertices-real}, let $c_z$ be the ancestor of $c_{\cluster(v)}$ at distance 2 from $c_{\cluster(v)}$ in $F$. Note that $c_z$ exists and $\cluster(u) \geq z$ since $d(c_{\cluster(u)},c_z)\geq 2$. Let $c_x$ be the sibling of $c_{\cluster(u)}$ with minimum cluster index (possibly, $c_x=c_{\cluster(u)}$ or it might also be $c_x=c_z$; but it surely holds $x\leq \cluster(u)$). Note that $d(u,c_x)\leq 3$ and $c_x$ is an ancestor of $c_z$ (hence, $x\geq z$). Thus, by \Cref{lem:nlogn-close-vertices} and \Cref{lem:nlogn-ancestors-property}, 
\begin{align}\label{eq:nlogn-distant-vertices-est-P-u-cz}
    r(P(u,c_z)) &= r(P(u,c_x)) \cdot r(P(c_x, c_z)) \geq (1-\epsilon_1)^3 \frac{w_{u}}{w_{c_x}} \cdot \frac{1}{\alpha^{x-z}} \nonumber\\
    & \geq (1-\epsilon_1)^3 \frac{w_{u}}{w_{c_{\cluster(u)}}} \cdot \frac{w_{c_{\cluster(u)}}}{w_{c_x}} \geq (1-\epsilon_1)^3 \cdot \frac{\alpha}{2} \cdot \frac{1}{\alpha^{\cluster(u)-x}}\geq (1-\epsilon_1)^3 \cdot \frac{\alpha}{2}.
\end{align}

Again similarly to \Cref{lem:nlogn-distant-vertices-real}, there must exists $c_k$, with $z>k\geq \cluster(v)$, such that $r(c_z,c_k) =\infty$. Note that $d(c_z, c_k)=2$ and, in particular, $c_k$ and $c_{\cluster(v)}$ are siblings (or $c_k=c_{\cluster(v)}$). Let $K_{c_k} = \mathcal{C} \cap \left(\bigcup_{\ell=1}^{k} C_\ell \right)$. Note that $K_v = \mathcal{C} \cap \left(\bigcup_{\ell=1}^{\cluster(v)} C_\ell \right) \subseteq K_{c_k}$. Consider any $s\in K_{c_k}$, by the same argument as in \Cref{lem:nlogn-distant-vertices-real}, 
\begin{equation}\label{eq:nlogn-distant-vertices-est-cz-s}
    w_{c_z} \geq \frac{\alpha}{2\cdot \beta_k \cdot \alpha^{k-\cluster(s)}} \cdot w_s.
\end{equation} 
Consider now $A=\{s\in K_{c_k} \mid c_{\cluster(s)} \text{ is a sibling of } c_k\}$. Note that any vertex in $A$ is at distance either 2 or 3 from $c_z$. Thus, for any $a\in A$, by \Cref{lem:nlogn-close-vertices}, 
\begin{equation}\label{eq:nlogn-distant-est-P-cz-a}
    r(P(c_z, a)) \geq (1-\epsilon_1)^3 \cdot \frac{w_{c_z}}{w_a} \overset{\eqref{eq:nlogn-distant-vertices-est-cz-s}}{\geq} \frac{(1-\epsilon_1)^3 \alpha}{2\cdot \beta_k \cdot \alpha^{k-\cluster(a)}}.
\end{equation}

Let $\psi$ be the smallest index such that $c_\psi \in A$. Let $B= K_{c_k} \setminus A$. Note that for any $b \in B$, $c_\psi$ is an ancestor of $b$. Then for any $b\in B$, by \Cref{lem:nlogn-ancestors-property}, 
\begin{equation}\label{eq:nlogn-distant-est-P-cpsi-b}
    r(P(c_\psi, b)) \geq \frac{1}{\alpha^{\psi - \cluster(b)}}.
\end{equation} 
Thus, for any $b\in B$,
\begin{align}\label{eq:nlogn-distant-est-P-cz-b}
    r(P(c_z, b)) &= r(P(c_z, c_\psi)) \cdot r(P(c_\psi, b)) \overset{\eqref{eq:nlogn-distant-est-P-cz-a},\eqref{eq:nlogn-distant-est-P-cpsi-b}}{\geq} \frac{(1-\epsilon_1)^3\alpha}{2\cdot \beta_k \cdot \alpha^{k-\psi}} \cdot \frac{1}{\alpha^{\psi - \cluster(b)}} = \frac{(1-\epsilon_1)^3\alpha}{2\cdot \beta_k \cdot \alpha^{k-\cluster(b)}}. 
\end{align}
Therefore, for any $s\in K_{c_k}$, 
\begin{align}\label{eq:nlogn-distant-est-P-us}
    r(P(u,s)) &= r(P(u,c_z)) \cdot r(P(c_z, s)) \overset{\eqref{eq:nlogn-distant-vertices-est-P-u-cz},\eqref{eq:nlogn-distant-est-P-cz-a},\eqref{eq:nlogn-distant-est-P-cz-b}}{\geq} \frac{(1-\epsilon_1)^3\alpha}{2} \cdot \frac{(1-\epsilon_1)^3\alpha}{2\cdot \beta_k \cdot \alpha^{k-\cluster(s)}} \nonumber\\
    &= \frac{(1-\epsilon_1)^6 \alpha^2}{4\cdot\beta_k \cdot \alpha^{k-\cluster(s)}} \geq \frac{\alpha^2}{8\cdot \beta_k \cdot \alpha^{k-\cluster(s)}} \geq \frac{Z_k}{\epsilon\cdot \alpha^{k-\cluster(s)}},
\end{align}
where we used that $\epsilon_1\leq \frac{1}{10}$ and $(1-x)^6 \geq 1/2$ for all $x\leq \frac{1}{10}$. Note that $r(P(s,u))=\frac{1}{r(P(u,s))}$. Therefore,
\begin{align*}
    \sum_{s\in K_v} r(P(s,u)) \overset{\eqref{eq:nlogn-distant-est-P-us}}{\leq} \sum_{s\in K_{c_k}} \frac{\epsilon\cdot  \alpha^{k - \cluster(s)}}{Z_k} \leq \frac{\epsilon}{Z_k} \sum_{\ell=1}^k |C_\ell| \cdot \alpha^{k-\ell} = \frac{\epsilon}{Z_k} \cdot Z_k = \epsilon, 
\end{align*}
where we used that $K_v \subseteq K_{c_k} \subseteq \bigcup_{\ell=1}^k C_\ell$.
\end{proof}

We now have all the ingredients to prove that \Cref{alg:nlogn-build-estimation-forest} produces a $(5,\epsilon)$-estimation-forest.

\begin{proof}[Proof of \Cref{thm:nlogn-build-estimation-forest}]
By \Cref{lem:cluster-sort-produces-cluster-graph}, $\AlgClusterSort(\alpha, \epsilon_1, \delta/3)$ correctly returns a $(\frac{2}{\alpha}, \frac{1}{\alpha}, \epsilon_1)$-cluster graph with probability at least $1- \frac{\delta}{3}$. Moreover, the next part of the algorithm makes at most $2n$ calls to $\AlgEstimateRatio$, and each call is successful with probability at least $1-\frac{\delta}{6n}$. Therefore, all the $\AlgEstimateRatio$ calls are successful with probability at least $1-\frac{\delta}{3}$ and therefore each call to $\AlgEstimateRatio$ as well as the call to $\AlgClusterSort$ is successful with probability at least $1 - \frac{2}{3}\cdot \delta$. If this event happens, then \Cref{lem:nlogn-graph-structure,lem:nlogn-close-vertices,lem:nlogn-distant-vertices-real,lem:nlogn-distant-vertices-estimates} ensure that the algorithm returns a $(5, \epsilon)$-estimation-forest. Finally, with probability at least $1-\frac{\delta}{3}$, the upper bound on the number of queries follows by \Cref{lem:nlon-query-bound}.
\end{proof}

\subsection{Learning the MNL from the Estimation-Forest}\label{sec:MNL-from-forest-adaptive}

In this section, we show how to use a $(t,\epsilon)$-estimation-forest to produce MNL weights that approximate the hidden MNL on each slate within $O(\epsilon)$.

Intuitively, an estimation-forest ensures that multiplying the estimates along a path gives a good estimate for the ratio of the weights of the two endpoints. These estimates are, in some sense, well-behaved even if the path is long. These properties suggest the following natural algorithm to generate the weights from a tree of the estimation-forest: assign an arbitrary weight to an initial vertex, and then assign all the other weights following the unique path from the initial one to all of the others. If we have multiple trees, by the properties of the estimation-forest it must be that any item in the tree with larger cluster indices wins against all the items of the other tree with very large probability (at least $1-\epsilon$). To mimic this property with our estimated weights, we rescale our estimates for the second tree by a sufficiently small value. The algorithm boils down to a depth-first search that we present in \Cref{alg:generate-weights}.

\begin{algorithm}
\caption{\AlgGenerateWeightsRec$(F, r, v, parent, \hat{w})$}
\label{alg:generate-weights-rec}
\begin{algorithmic}[1]
\State \textbf{Input:} A forest $F=([n], E)$ with a directed weighting $r$, the current vertex $v\in [n]$, the parent $parent\in[n]$ of $v$, and a vector of values $\hat{w}_1, \dots, \hat{w}_n$.
\State \textbf{Output:} For each $i$ in the subtree of $v$, it sets $\hat{w}_i$ to a positive weight, and it returns a set $W$ containing all the indices that have been modified.
\State $W\myassign \varnothing$
\For{$\{u,v\}\in E$ such that $u\neq parent$}
    \State $\hat{w}_u \myassign \hat{w}_v \cdot r(u,v)$ \label{line:gen-weight-follow-edge}
    \State $W\myassign W \cup \{u\} \cup \Call{\AlgGenerateWeightsRec}{F, r, u, v, \hat{w}}$
\EndFor
\State \textbf{return} $W$
\end{algorithmic}
\end{algorithm}

\begin{algorithm}
\caption{\AlgGenerateWeights$(\mathcal{F})$}
\label{alg:generate-weights}
\begin{algorithmic}[1]
\State \textbf{Input:} A $(t,\epsilon)$-estimation-forest $\mathcal{F}=(F=([n], E), r, (C_1, \dots, C_T), (c_1, \dots, c_T))$ for an MNL $M$.
\State \textbf{Output:} A new MNL $\hat{M}$ such that $d_1(M,\hat{M})\leq 9 \cdot \epsilon$.

\State Initialize an array $\hat{w}=(\hat{w}_1, \dots, \hat{w}_n)$, setting each entry to $\bot$
\State $w_{\text{min}} \myassign 1$
\For{$c=c_T, c_{T-1}\dots, c_1$} \Comment{iterate over the centers from the one of largest weight}
    \If{$\hat{w}_{c} = \bot$}
        \State $\hat{w}_{c} \myassign 1$ \Comment{initially assign weights w.r.t.\ 1}
        \State $W \myassign \{c\} \cup \Call{\AlgGenerateWeightsRec}{F, r, c, -1, \hat{w}}$
        \State $\Upsilon \myassign \max\limits_{j\in W}\{\hat{w}_j\}$
        \If{$c\neq c_T$}
            \For{$j\in W$} \Comment{rescale weights to account for the trees with larger weights}
                \State $\hat{w}_j \myassign \hat{w}_j \cdot \frac{\epsilon}{\Upsilon\cdot n} \cdot w_{\text{min}}$ \label{line:rescale-weights}
            \EndFor
        \EndIf
        \State $w_{\text{min}} \myassign \min\left\{w_{\text{min}}, \min\limits_{j\in W} \{\hat{w}_j\}\right\}$
    \EndIf
\EndFor
\State \textbf{return} MNL $\hat{M}$ induced by weights $\hat{w}_1, \dots, \hat{w}_n$
\end{algorithmic}
\end{algorithm}

We now show that the ratios of the estimated weights are well-behaved. 

\begin{lemma}\label{lem:estimated-weights-guarantees}
Let $\epsilon\in(0,\frac13)$, $\alpha\in(0,1)$, and let $\mathcal{F}=(F=([n], E), r, (C_1, \dots, C_T), (c_1, \dots, c_T))$ be a $(t,\epsilon)$-estimation-forest for MNL $M$ and let $\hat{M}$ be the MNL returned by $\AlgGenerateWeights(\mathcal{F})$. The following holds:
\begin{enumerate}
    \item for any $u,v\in [n]$ such that $d(u,v)\leq t$, $\frac{\hat{w}_u}{\hat{w}_v} \in (1\pm \epsilon) \cdot \frac{w_u}{w_v}$,
    \item for any $u,v\in [n]$ such that $d(u,v)>t$ and $\cluster(u)>\cluster(v)$, it holds that $\sum_{s\in H_v} \frac{\hat{w}_s}{\hat{w}_u} \leq 2\epsilon$, where $H_v = \{s\in [n] \mid \cluster(s) \leq \cluster(v)\}$.
\end{enumerate}
\end{lemma}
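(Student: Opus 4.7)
The plan is to exploit the recursive structure of \Cref{alg:generate-weights}, which assigns weights inside each connected component $\mathcal{C}$ of $F$ by a DFS that multiplies the directed edge weights $r(\cdot,\cdot)$ along tree edges and then rescales the whole component by a single multiplicative factor. A short telescoping argument, together with $r(u,v)\cdot r(v,u)=1$, shows that inside any component one has $\hat{w}_u/\hat{w}_v = r(P(u,v))$ for all $u,v\in\mathcal{C}$: the portion of the DFS-paths to the root that $u$ and $v$ share cancels, and the uniform intra-component rescaling preserves ratios.

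For part~1, the hypothesis $d(u,v)\le t<\infty$ forces $u$ and $v$ to lie in the same component, so the identity above applies; applying property~1 of \Cref{def:estimation-forest} (after a WLOG swap of labels if $\cluster(v)>\cluster(u)$) yields $\hat{w}_u/\hat{w}_v = r(P(u,v))\in(1\pm\epsilon)\,w_u/w_v$, as required.

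For part~2, I would split $H_v$ into items that share a component with $u$ and items that do not. For those sharing a component, the same identity combined with property~2 of \Cref{def:estimation-forest} bounds the partial sum by $\epsilon$; in the $d(u,v)=\infty$ case this subsum is actually empty, because property~3 implies that every $s$ in $u$'s component satisfies $\cluster(s)>\cluster(v)$. The cross-component contribution requires a rescaling analysis: using property~3 to conclude that components are totally ordered by cluster index (so that they can be enumerated as $\mathcal{C}_1,\dots,\mathcal{C}_K$ with strictly increasing indices), I would let $w^{(j)}_{\min}$ denote the value of $w_{\text{min}}$ just after the algorithm processes $\mathcal{C}_j$, and argue that Line~\ref{line:rescale-weights} forces $\max_{s\in\mathcal{C}_j}\hat{w}_s = (\epsilon/n)\,w^{(j+1)}_{\min}$ for every $j<K$. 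Since $u\in\mathcal{C}_k$ with $k>j$ implies $w^{(j+1)}_{\min}\le \hat{w}_u$ (because $\mathcal{C}_k$ was already folded into the running minimum when $w^{(j+1)}_{\min}$ was recorded), every cross-component $s$ contributes at most $(\epsilon/n)\,\hat{w}_u$, and the at most $n$ such terms sum to at most $\epsilon$.

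The main obstacle will be the cross-component rescaling bookkeeping: one has to track how $\Upsilon$ and $w_{\text{min}}$ evolve as the outer loop processes components from largest to smallest cluster index, and then argue cleanly that the in-component contribution (governed by the estimation-forest axioms) and the cross-component contribution (governed by rescaling), each at most $\epsilon$, combine to give the stated $2\epsilon$ bound. The rest of the proof is essentially telescoping.
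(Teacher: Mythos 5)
Your proposal is correct and follows essentially the same route as the paper's proof: the identity $\hat{w}_u/\hat{w}_v = r(P(u,v))$ within a component, the split of $H_v$ into the same-component part (bounded by $\epsilon$ via property~2 of \Cref{def:estimation-forest}) and the cross-component part (bounded by $\epsilon$ via the $\Upsilon$/$w_{\text{min}}$ rescaling, each term being at most $(\epsilon/n)\hat{w}_u$). The rescaling bookkeeping you flag as the main obstacle is exactly the paper's one-line computation $\hat{w}_s \le \Upsilon\cdot\hat{w}_{c_j} = \frac{\epsilon}{n}\,w_{\text{min}} \le \frac{\epsilon}{n}\,\hat{w}_u$, so no further idea is needed.
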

\begin{proof}
Let us start with the first point. Since $d(u,v)\leq t$, $u$ and $v$ are in the same connected component $\mathcal{C}$. Let $i\in [T]$ be the maximum index such that $c_i\in \mathcal{C}$. By construction, for any $x\in \mathcal{C}$, $\hat{w}_x = r(P(x,c_i)) \cdot \hat{w}_{c_i}$. Thus, we have, 
\begin{equation}\label{eq:estimated-weights-equal-r-path}
\frac{\hat{w}_u}{\hat{w}_v} = \frac{r(P(u,c_i)) \cdot \hat{w}_{c_i}}{r(P(v,c_i)) \cdot \hat{w}_{c_i}} = r(P(u,c_i)) \cdot r(P(c_i,v)) = r(P(u,v)),
\end{equation}
where the last equality holds since $\mathcal{C}$ is a tree and there is a unique path between every pair of vertices. Therefore, since $\mathcal{F}$ is a $(t,\epsilon)$-estimation-forest, $\frac{\hat{w}_u}{\hat{w}_v} = r(P(u,v)) \in (1\pm\epsilon) \cdot \frac{w_u}{w_v}$. 

We now prove the second point. Fix any $u,v\in [n]$ such that $d(u,v)>t$ and $\cluster(u)>\cluster(v)$. Let $\mathcal{C}$ be the connected component of $u$. Let us partition $H_v = \{s\in [n] \mid \cluster(s) \leq \cluster(v)\}$ into $A=\{s\in \mathcal{C} \mid \cluster(s)\leq \cluster(v)\}$ and $B=H_v \setminus A$. Observe that either $A$ is empty, or $v\in A$, by \Cref{def:estimation-forest}. Consider any $s\in A$. Since $s$ and $u$ are in the same connected component, by \Cref{eq:estimated-weights-equal-r-path}, $\frac{\hat{w}_s}{\hat{w}_u} = r(P(s,u))$. Then, by \Cref{def:estimation-forest},
\[
    \sum_{s\in A} \frac{\hat{w}_s}{\hat{w}_u} \overset{\eqref{eq:estimated-weights-equal-r-path}}{=} \sum_{s\in A} r(P(s, u)) \leq \epsilon.
\]
Consider now $s\in B$ and let $j$ be the largest index such that $c_j$ is in the same connected component as $s$. By construction of $\hat{M}$, it must be the case that 
\begin{equation}\label{eq:estimated-weights-cj-eps-n}
\hat{w}_{c_j} = w_{\text{min}}\cdot \frac{\epsilon}{\Upsilon \cdot n},
\end{equation}
for some $w_{\text{min}} \leq \hat{w}_u$ given that $\cluster(u) > j$ and $u$ and $c_j$ are in different trees. Moreover, by the definition of $\Upsilon$, $r(P(s, c_j)) \leq \Upsilon$. Observe that $\hat{w}_s = r(P(s,c_j)) \cdot \hat{w}_{c_j}$, and thus, 
\[
\hat{w}_s = r(P(s,c_j))\cdot \hat{w}_{c_j} \leq \Upsilon \cdot \hat{w}_{c_j}\overset{\eqref{eq:estimated-weights-cj-eps-n}}{\leq} \frac{\epsilon}{n} \cdot \hat{w}_u.
\]
Since this holds for each $s\in B$,
\[
    \sum_{s\in B} \frac{\hat{w}_s}{\hat{w}_u} \leq \sum_{s\in B} \frac{\epsilon}{n} = \epsilon\cdot \frac{|B|}{n} \leq \epsilon.
\]
Thus, $\sum_{s\in H_v} \frac{\hat{w}_s}{\hat{w}_u} \leq 2\epsilon$.
\end{proof}

We can finally prove that $\hat{M}$ has the desired guarantees. 

\begin{theorem}\label{thm:estimation-forest-produces-mnl}
Let $\epsilon\in (0,\frac{1}{9})$, $t\geq 2$, and let $\mathcal{F}=(F=([n], E), r, (C_1, \dots, C_T), (c_1, \dots, c_T))$ be a $(t,\epsilon)$-estimation-forest for MNL $M$ and let $\hat{M}$ be the MNL returned by $\AlgGenerateWeights(\mathcal{F})$. Then, $d_1(M, \hat{M}) \leq 9\epsilon$.
\end{theorem}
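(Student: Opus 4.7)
Fix an arbitrary slate $S \subseteq [n]$ and let $u^* \in S$ be an element of maximum cluster index (ties broken arbitrarily). Define the \emph{near} set $N = \{s \in S : d(s, u^*) \leq t\}$ (which contains $u^*$) and the \emph{far} set $F = S \setminus N$. By property 4 of \Cref{def:estimation-forest}, any $v \in S$ with $\cluster(v) = \cluster(u^*)$ lies in $N$, so every $s \in F$ has $\cluster(s) < \cluster(u^*)$. If $F \neq \varnothing$, let $v^*$ be the element of $F$ with largest cluster index; then $d(u^*, v^*) > t$, $\cluster(u^*) > \cluster(v^*)$, and every $s \in F$ satisfies $\cluster(s) \leq \cluster(v^*)$. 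Applying property 2 of \Cref{lem:estimated-weights-guarantees} and properties 2--3 of \Cref{def:estimation-forest} with $u = u^*$, $v = v^*$ yields the two key bounds
\[
W_F := \sum_{s \in F} \frac{w_s}{w_{u^*}} \leq \epsilon, \qquad \hat{W}_F := \sum_{s \in F} \frac{\hat{w}_s}{\hat{w}_{u^*}} \leq 2\epsilon.
\]
For $s \in N$, property 1 of \Cref{lem:estimated-weights-guarantees} gives $\hat{w}_s/\hat{w}_{u^*} = (1 + \delta_s) \, w_s/w_{u^*}$ with $|\delta_s| \leq \epsilon$; writing $W_N$ and $\hat{W}_N$ for the corresponding partial sums, this implies $|\hat{W}_N - W_N| \leq \epsilon W_N$ and $W_N, \hat{W}_N \geq 1-\epsilon$ since $u^* \in N$ contributes $1$ to $W_N$.

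Let $W = W_N + W_F$ and $\hat{W} = \hat{W}_N + \hat{W}_F$, so that $M_S(s) = (w_s/w_{u^*})/W$ and $\hat{M}_S(s) = (\hat{w}_s/\hat{w}_{u^*})/\hat{W}$. I split $\|M_S - \hat{M}_S\|_1$ into contributions from $F$ and from $N$. The far contribution is controlled by the triangle inequality: using $W \geq 1$ and $\hat{W} \geq 1-\epsilon$,
\[
\sum_{s \in F} |M_S(s) - \hat{M}_S(s)| \;\leq\; W_F + \frac{\hat{W}_F}{1-\epsilon} \;\leq\; \epsilon + \frac{2\epsilon}{1-\epsilon}.
\]

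For the near contribution, writing $p_s = w_s/w_{u^*}$ and using the identity $p_s/W - p_s(1+\delta_s)/\hat{W} = (p_s(\hat{W}-W) - p_s \delta_s W)/(W\hat{W})$, I bound
\[
\sum_{s\in N} |M_S(s) - \hat{M}_S(s)| \;\leq\; \frac{W_N \, |\hat{W}-W|}{W \hat{W}} + \frac{\epsilon \, W_N}{\hat{W}}.
\]
Using $W \geq W_N$, $\hat{W} \geq (1-\epsilon) W_N \geq 1-\epsilon$, and $|\hat{W}-W| \leq |\hat{W}_N - W_N| + W_F + \hat{W}_F \leq \epsilon W_N + 3\epsilon$, the first term is at most $\tfrac{\epsilon W_N + 3\epsilon}{(1-\epsilon)W_N} \leq \tfrac{4\epsilon}{1-\epsilon}$ (here I use $W_N \geq 1$), and the second is at most $\tfrac{\epsilon}{1-\epsilon}$. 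Combining the near and far contributions gives $\|M_S - \hat{M}_S\|_1 \leq \tfrac{7\epsilon}{1-\epsilon} + \epsilon$, which for $\epsilon < 1/9$ is at most $9\epsilon$.

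The main technical obstacle is constant tracking: since the bound $9\epsilon$ is tight up to the constant, each triangle-inequality and division step must avoid being wasteful. The nontrivial ingredient is that one cannot directly approximate $W$ by $\hat{W}$ multiplicatively, because $\hat{W}_F$ may be larger than $W_F$ by an absolute additive term; instead I must carry the two contributions separately and exploit that $W_N$ is bounded below by $1$ to absorb the additive $3\epsilon$ term in $|\hat W - W|$ without a blow-up.
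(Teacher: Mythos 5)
Your proposal is correct and follows essentially the same route as the paper: the same near/far decomposition of the slate around the maximum-cluster-index element, the same use of properties 2--4 of \Cref{def:estimation-forest} and of \Cref{lem:estimated-weights-guarantees} to get $W_F \leq \epsilon$ and $\hat{W}_F \leq 2\epsilon$, and the same final accounting summing to at most $9\epsilon$. The only (cosmetic) difference is in the near-set bookkeeping—the paper establishes a pointwise bound $\hat{M}_S(v) \in (1\pm 6\epsilon)M_S(v)$ for each near item, whereas you bound the summed error directly via the identity separating the normalization error $|\hat{W}-W|$ from the per-item ratio error—and both yield the stated constant.
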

\begin{proof}
Consider any slate $S \subseteq [n]$. Let $m\in S$ be such that for any other $s\in S$, $\cluster(s) \leq \cluster(m)$. Let $S_1 = \{s\in S \mid d(s,m)\leq t\}$ and $S_2 = S \setminus S_1$. Let $m_2\in S_2$ be such that for any $s\in S_2$, $\cluster(s) \leq \cluster(m_2)$. Let $H_{m_2} = \{s\in [n] \mid \cluster(s) \leq \cluster(m_2)\}$. 

By definition of $S_2$, $d(m, m_2)>t$. Note that, by \Cref{def:estimation-forest}, this implies $\cluster(m_2)\neq\cluster(m)$. By definition of $m_2$ and $m$, this in turn implies $\cluster(m_2)<\cluster(m)$.

Note also that $S_2 \subseteq H_{m_2}$ by the choice of $m_2$. Given these observations, by \Cref{def:estimation-forest}, we have,
\begin{align}\label{eq:proof-estimated-weights-real-S2}
    \sum_{s\in S_2} M_S(s) \leq \sum_{s\in S_2} \frac{w_s}{w_m} \leq \sum_{s\in H_{m_2}} \frac{w_s}{w_m} \leq \epsilon.
\end{align}
Similarly, by \Cref{lem:estimated-weights-guarantees},
\begin{align}\label{eq:proof-estimated-weights-est-S2}
    \sum_{s\in S_2} \hat{M}_S(s) \leq \sum_{s\in S_2} \frac{\hat{w}_s}{\hat{w}_m} \leq \sum_{s\in H_{m_2}} \frac{\hat{w}_s}{\hat{w}_m} \leq 2\epsilon.
\end{align}

Let us now focus on $S_1$. Note that, for any $u,v\in S_1$, $\frac{\hat{w}_u}{\hat{w}_v} \in (1\pm 3\epsilon) \frac{w_u}{w_v}$. Specifically, by \Cref{lem:estimated-weights-guarantees}, $\frac{\hat{w}_u}{\hat{w}_m} \in (1\pm \epsilon) \frac{w_u}{w_m}$ and $\frac{\hat{w}_m}{\hat{w}_v} \in (1\pm \epsilon) \frac{w_m}{w_v}$, thus,
\begin{equation}\label{eq:proof-estimated-weights-mult-approx}
     (1-2\epsilon) \cdot \frac{w_u}{w_v}\leq (1-\epsilon)^2 \cdot \frac{w_u}{w_v} \leq \frac{\hat{w}_u}{\hat{w}_m} \cdot \frac{\hat{w}_m}{\hat{w}_v} \leq (1+\epsilon)^2 \cdot \frac{{w}_u}{{w}_v} \leq (1+3\epsilon) \cdot \frac{w_u}{w_v},
\end{equation}
where we used that $(1+\epsilon)^2 \leq 1+3\epsilon$, for $\epsilon\in(0,1)$, and $(1-\epsilon)^2 \geq 1-2\epsilon$ for $\epsilon> 0$.

Note that by \Cref{eq:proof-estimated-weights-real-S2} and \Cref{eq:proof-estimated-weights-est-S2}, we also know that:
\begin{equation}\label{eq:proof-estimated-weights-sum-S2}
    \sum_{s\in S_2} w_s \leq \epsilon \cdot w_m, \quad \text{and}\quad  \sum_{s\in S_2} \hat{w}_s \leq 2\epsilon \cdot \hat{w}_m.
\end{equation} 
Therefore, for any $v\in S_1$, we have,
\begin{align}\label{eq:proof-estimated-weights-multapprox-1}
    \hat{M}_S(v) &= \frac{1}{\sum_{s\in S} \frac{\hat{w}_s}{\hat{w}_v}} \leq \frac{1}{\sum_{s\in S_1} \frac{\hat{w}_s}{\hat{w}_v}} \overset{\eqref{eq:proof-estimated-weights-mult-approx}}{\leq} \frac{1}{(1-2\epsilon)\sum_{s\in S_1} \frac{w_s}{w_v}} = \frac{1}{(1-2\epsilon)\left(\sum_{s\in S} \frac{w_s}{w_v} - \sum_{s\in S_2} \frac{w_s}{w_v}\right)} \nonumber\\
    & \overset{\eqref{eq:proof-estimated-weights-sum-S2}}{\leq} \frac{1}{(1-2\epsilon)\left(\sum_{s\in S} \frac{w_s}{w_v} - \epsilon \frac{w_m}{w_v}\right)} \leq \frac{1}{(1-2\epsilon)\left(\sum_{s\in S} \frac{w_s}{w_v} - \epsilon \sum_{s\in S}\frac{w_s}{w_v}\right)} \nonumber\\
    & = \frac{1}{(1-2\epsilon)(1-\epsilon) \sum_{s\in S} \frac{w_s}{w_v}} \leq \frac{1}{(1-3\epsilon) \sum_{s\in S} \frac{w_s}{w_v}} \nonumber\\
    & \leq (1+6\epsilon) \frac{1}{\sum_{s\in S} \frac{w_s}{w_v}} = (1+6\epsilon) \cdot  M_S(v),
\end{align}
where we used that $(1-2\epsilon)(1-\epsilon) \geq 1-3\epsilon$ for $\epsilon>0$ and $\frac{1}{1-a} \leq 1+2a$ for $a\in(0,\frac12)$. Similarly,
\begin{align}\label{eq:proof-estimated-weights-multapprox-2}
    \hat{M}_S(v) &= \frac{1}{\sum_{s\in S} \frac{\hat{w}_s}{\hat{w}_v}} = \frac{1}{\sum_{s\in S_1} \frac{\hat{w}_s}{\hat{w}_v} + \sum_{s\in S_2} \frac{\hat{w}_s}{\hat{w}_v}} \overset{\eqref{eq:proof-estimated-weights-sum-S2}}{\geq} \frac{1}{\sum_{s\in S_1} \frac{\hat{w}_s}{\hat{w}_v} + 2\epsilon \cdot \frac{\hat{w}_m}{\hat{w}_v}} \nonumber\\
    & \geq \frac{1}{\sum_{s\in S_1} \frac{\hat{w}_s}{\hat{w}_v} + 2\epsilon \cdot \sum_{s\in S_1}\frac{\hat{w}_s}{\hat{w}_v}} = \frac{1}{(1+2\epsilon)\sum_{s\in S_1} \frac{\hat{w}_s}{\hat{w}_v}} \overset{\eqref{eq:proof-estimated-weights-mult-approx}}{\geq} \frac{1}{(1+2\epsilon)(1+3\epsilon)\sum_{s\in S_1} \frac{{w}_s}{{w}_v}} \nonumber\\
    & \geq \frac{1}{(1+6\epsilon)\sum_{s\in S_1} \frac{{w}_s}{{w}_v}} \geq (1-6\epsilon) \cdot M_{S_1}(v) \geq (1-6\epsilon) \cdot M_S(v),
\end{align}
where we used that $(1+2\epsilon)(1+3\epsilon) \leq 1+6\epsilon$ for $\epsilon \in(0,\frac16)$ and $\frac{1}{1+a} \geq 1-a$ for $a\geq 0$. Thus, $\hat{M}_S(v) \in (1\pm 6\epsilon) M_S(v)$ for any $v\in S_1$. Now we bound the error on slate $S$:

\begin{align*}
    \norm{M_S  -\hat{M}_S}_1 &= \sum_{s\in S} \left|M_S(s) - \hat{M}_S(s)\right| = \sum_{s\in S_1} \left|M_S(s) - \hat{M}_S(s)\right| + \sum_{s\in S_2} \left|M_S(s) - \hat{M}_S(s)\right|\\
    & \overset{\eqref{eq:proof-estimated-weights-multapprox-1},\eqref{eq:proof-estimated-weights-multapprox-2}}{\leq} 6\epsilon \sum_{s\in S_1} M_S(s) + \sum_{s\in S_2} \left(|M_S(s)| + |\hat{M}_S(s)|\right)\\
    & \overset{\eqref{eq:proof-estimated-weights-real-S2},\eqref{eq:proof-estimated-weights-est-S2}}{\leq} 6\epsilon + \epsilon  +2\epsilon = 9\epsilon.
\end{align*}
Since this holds for each slate $S$, we have $d_1(M, \hat{M})\leq 9\epsilon$.
\end{proof}

Putting everything together we obtain our main result as a corollary. The proof simply consists in first building an estimation-forest and then extracting the weights from it. We also show how to deal with the fact that weights could be large to store. Note that \Cref{thm:adaptive-nlogn} is a special case of the following result when $\delta=n^{-c}$ for some constant $c$. 

\begin{theorem}\label{thm:nlogn-algo-with-runtime}
There exists an adaptive randomized algorithm that, takes as input $\varepsilon \in(0,1)$, $\delta \in (0,1)$, and access to $\maxsample$ oracle for an MNL $M$ supported on $[n]$ for $n \in \mathbb{N}$, and with probability at least $1- \delta$, makes $O\left(n \log (\frac{n}{\delta})\cdot \left(\frac{1}{\epsilon^3} + \frac{\log(1/\delta)}{\log n}\right)\right)$ $\maxsample$ queries and solves the MNL learning problem with accuracy parameter $\epsilon$. The algorithm runs in time proportional to the query complexity. 
\end{theorem}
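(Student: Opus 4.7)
The plan is to simply compose the two algorithms developed in the preceding subsections. First, I would invoke $\AlgBuildEstimationForest(\alpha, \epsilon/9, \delta)$ with $\alpha$ set to an absolute constant such as $\alpha = 1/2$. By \Cref{thm:nlogn-build-estimation-forest}, with probability at least $1-\delta$, this call produces a $(5, \epsilon/9)$-estimation-forest $\mathcal{F}$ for $M$ using $O\bigl(n \log(n/\delta) \cdot (1/\epsilon^3 + \log(1/\delta)/\log n)\bigr)$ queries, since the factor $1/((1-\alpha)\alpha^2)$ is absorbed into the constant when $\alpha = 1/2$. Second, I would feed $\mathcal{F}$ to $\AlgGenerateWeights$; since $\epsilon/9 < 1/9$, \Cref{thm:estimation-forest-produces-mnl} applies and the resulting MNL $\hat{M}$ satisfies $d_1(M, \hat{M}) \leq 9 \cdot (\epsilon/9) = \epsilon$, as required.

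The query bound is immediate from the first step, since $\AlgGenerateWeights$ makes no additional $\maxsample$ queries. For the runtime, $\AlgBuildEstimationForest$ already runs in time proportional to its query complexity (its noisy-sort subroutine does so by \Cref{cor:falahatgar-for-MNLs}, and the remaining work is $O(n)$ calls to $\AlgEstimateRatio$ with constant-time bookkeeping). The second phase is a depth-first search over a forest with $n$ vertices and $O(n)$ edges, plus at most $O(n)$ rescaling updates, so its raw operation count is $O(n)$.

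The main (and essentially only) obstacle is that the weights $\hat{w}_i$ constructed in $\AlgGenerateWeights$ can span exponentially many orders of magnitude: each $\hat{w}_i$ is a product of up to $O(n)$ edge ratios together with at most one rescaling factor of $\epsilon w_{\min}/(\Upsilon n)$, so storing the weights explicitly could require $\Omega(n)$ bits apiece and break the runtime bound. I would circumvent this exactly as anticipated in the definition of the MNL learning problem by maintaining only $\log \hat{w}_i$ for each $i$: the multiplication on Line~\ref{line:gen-weight-follow-edge} and the rescaling on Line~\ref{line:rescale-weights} each become a single addition of $O(\log(n/\epsilon))$-bit numbers, executable in constant time in the standard word-RAM model, so the whole DFS runs in $O(n)$ time. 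These logarithmic values are exactly the output representation required by the problem definition, so no further post-processing is needed; correctness and the query bound are then direct consequences of \Cref{thm:nlogn-build-estimation-forest} and \Cref{thm:estimation-forest-produces-mnl}, with no new probabilistic analysis.
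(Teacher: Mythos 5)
Your overall plan is the same as the paper's: compose \Cref{thm:nlogn-build-estimation-forest} (with $\alpha$ a constant) with $\AlgGenerateWeights$ and \Cref{thm:estimation-forest-produces-mnl}, and handle the bit-length blow-up of the weights by working with their logarithms. The query-complexity part and the identification of the representation issue are both right.

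There is, however, a gap in the final step. You claim that one can maintain $\log \hat{w}_i$ exactly, so that ``no further post-processing is needed'' and correctness follows ``with no new probabilistic analysis.'' But the edge labels $r(u,v)$ are ratios of empirical counts (rationals of the form $m_i/m_j$), and the rescaling factor involves $\epsilon/(\Upsilon n)$; their logarithms are irrational in general and cannot be stored exactly in $O(\log(n/\epsilon))$ bits. One must therefore store a \emph{rounded} value $\lambda(x)\approx \ln x$ for each factor, and the rounding errors accumulate additively over the up to $\Theta(n)$ factors entering each weight. This forces a quantitative choice of rounding precision (e.g., $\pm\epsilon'/n^2$ per factor) and a deterministic error-propagation argument showing that the MNL induced by the values $e^{\lambda(\hat{w}_i)}$ is still within $O(\epsilon')$ in $d_1$ of the MNL induced by the exact $\hat{w}_i$. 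The paper does exactly this, and consequently runs the forest construction with accuracy $\epsilon/13$ rather than $\epsilon/9$, reserving the extra slack to absorb the $O(\epsilon')$ loss from the approximate logarithms. Your parameter choice $\epsilon/9$ leaves zero slack, so as written your composition yields $d_1(M,\tilde{M})\le \epsilon + O(\text{rounding error})$ rather than $\le\epsilon$. The fix is routine (tighten the constant and add the rounding analysis), but it is a step your argument currently omits rather than one that can be waved away.
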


\begin{proof}
Let $\epsilon'=\frac{\epsilon}{13}$. Obtain a $(5, \frac{\epsilon'}{9})$-estimation-forest $\mathcal{F}$ by calling $\AlgBuildEstimationForest(\frac{1}{2}, \frac{\epsilon'}{9}, \delta)$. Note that the algorithm has the desired query complexity. Then, obtain the MNL $\hat{M}$ by running algorithm  $\AlgGenerateWeights(\mathcal{F})$. By \Cref{thm:estimation-forest-produces-mnl}, we have $d_1(M, \hat{M})\leq \epsilon'$. 

We now focus on the computational complexity. Note that \Cref{alg:nlogn-build-estimation-forest} has a running time equal to its query complexity. Also, \Cref{alg:generate-weights} performs $O(n)$ multiplications and makes no further queries. Let us assume without loss of generality that the output weights $\hat{w}_1, \dots, \hat{w}_n$ are sorted so that $\hat{w}_1 \geq \dots \geq \hat{w}_n$. Note that: 
\begin{equation}\label{eq:bound-maximum-size-est-weights}
\hat{w}_i \leq \hat{w}_{i+1}\cdot \frac{300\cdot n}{\epsilon'}.
\end{equation}
This is because each estimated weight is separated by its adjacent weight in the order by either: (i) a factor of $\frac{n}{\epsilon'}$ as in Line \ref{line:rescale-weights} of \Cref{alg:generate-weights}, or (ii) a factor of $r(u,v)$ as in Line \ref{line:gen-weight-follow-edge} of \Cref{alg:generate-weights-rec}. In the second case, by the properties of \AlgEstimateRatio{} (\Cref{lem:weight-pair-ratio-estimate}) and by inspecting the pseudocode of \Cref{alg:nlogn-build-estimation-forest}, each value $r(u,v)$ computed by $\AlgBuildEstimationForest(\frac{1}{2}, \frac{\epsilon'}{9}, \delta)$ is at most $(1+\frac{\epsilon'}{10}) \cdot \frac{288\cdot n}{\epsilon'} \leq \frac{300\cdot n}{\epsilon'}$. Note that storing all the values $\{\hat{w}_i\}_{i\in[n]}$ directly would require $O(n^2\log\frac{n}{\epsilon})$ bits. 

In order to address this issue, instead of storing the values $\{\hat{w}_i\}_{i\in[n]}$, we store their natural logarithm approximately and compute directly on these values. In particular, for any number $x$ used in the algorithm, we maintain a value $\lambda(x)$ that approximates $\ln(x)$. For all the values $x = r(u,v)$, which was represented as a fraction and computed using standard (exact) arithmetic by the previous subroutines, we compute a value $\lambda(x)$ such that:
\[
    \lambda(x) \in \ln x \pm {\varepsilon' \over n^2}.
\]
We do the same for the value $x=\frac{\epsilon'}{n}$. In general, for any number $x$, this value can be represented with $O(\ln \ln x  + \ln {n \over \varepsilon})$ bits. Whenever the algorithm needs to multiply two numbers $x$ and $y$ to obtain some $z= x\cdot y$  (e.g., Line \ref{line:gen-weight-follow-edge} of $\AlgGenerateWeightsRec$ or Line \ref{line:rescale-weights} of $\AlgGenerateWeights$), we instead compute $\lambda(z) = \lambda(x) + \lambda(y)$. Similarly, for $z=\max\{x,y\}$, we compute $\lambda(z)=\max\{\lambda(x),\lambda(y)\}$.

Note that every value $\frac{1}{\Upsilon}$ is determined by the product of at most $n$ numbers, and therefore $\lambda(\frac{1}{\Upsilon})$ is correct within an additive error of $\frac{\epsilon'}{n}$. Now, each weight is computed as the product of at most $2n$ numbers (considering the values of $r(u,v)$, $\frac{1}{\Upsilon}$, and $\frac{\epsilon'}{n}$), and for each of these numbers $x$ in the product, the value $\lambda(x)$ is correct within an additive error of at most $\frac{\epsilon'}{n}$. Therefore, $\lambda(\hat{w}_i)$ is correct within a $2\epsilon'$ additive error.  
This implies that:
\begin{equation}\label{eq:approximate-log-guarantees}
    (1-4\varepsilon') \cdot{\hat{w}_u} \leq e^{-{2\varepsilon'}} \cdot{\hat{w}_u} \leq e^{\ln(\hat{w}_u) - {2\varepsilon'}} \leq e^{\lambda(\hat{w}_u)} \leq e^{\ln(\hat{w}_u) + {2\varepsilon'}} \leq e^{2\varepsilon'} \cdot{\hat{w}_u} \leq (1+4\varepsilon') \cdot{\hat{w}_u}
\end{equation}

The algorithm then outputs the approximate logarithms of the weights $\{\lambda(\hat{w}_{i})\}_{i\in[n]}$. If one were to use the values $\{e^{\lambda(\hat{w}_{i})}\}_{i\in[n]}$ as proxies for the weights $\{\hat{w}_{i}\}_{i\in[n]}$ these would be correct to within multiplicative error $4\varepsilon'$ (by \eqref{eq:approximate-log-guarantees}). In particular, we have $d_1(\tilde{M}, \hat{M})\leq 12\epsilon'$, where $\hat{M}$ is the MNL supported on $\{\hat{w}_i\}_{i\in[n]}$ and $\tilde{M}$ is the MNL supported on $\{e^{\lambda(\hat{w_i})}\}_{i\in[n]}$. Since by construction we have $d_1(\hat{M}, M)\leq \epsilon'$, we have $d_1(\tilde{M}, M) \leq 13\epsilon' = \epsilon$.

With these changes, all arithmetic operations performed need to be executed on numbers of at most $O(\log \,{n\over \varepsilon} )$ bits (by \eqref{eq:bound-maximum-size-est-weights}), and thus each of them can be executed in time $O(\log \,{n\over \varepsilon} )$. Therefore, the runtime of $O(n \log \frac{n}{\epsilon})$ of $\AlgGenerateWeights(\mathcal{F})$ is no larger than the query complexity. 
\end{proof}

\paragraph{Supporting items of weight zero.} In the context of MNLs, weights are assumed to be strictly positive. However, in the conditional sampling literature it is common to allow items of weight zero, and if a slate consists only of items of weight zero its distribution is uniform \citep{c20}. It turns out that any algorithm that can learn MNLs can also learn MNLs where items of zero weight are allowed. This is because these latter models arise as limits of MNLs and any algorithm that learns MNLs must necessarily learn these limiting models too, as we prove in \Cref{sec:extension-to-pseudo-mnl}.

\section{The Non-Adaptive Algorithm}\label{sec:non-adaptive-algorithm}
In this section we show that we can learn an MNL within a $d_1$-error of $\epsilon$ by making at most $O\left(\frac{n^2 \log(n/\epsilon)\log (n/\delta)}{\epsilon^3}\right)$ non-adaptive queries. Specifically, by \Cref{lem:reduction-from-non-adaptive-to-adaptive}, it is sufficient to prove \Cref{thm:adaptive-balanced}, as this will imply the wanted result (\Cref{cor:non-adaptive-algo}). We recall the statement of \Cref{thm:adaptive-balanced}.

\ThmAdaptiveBalanced*

Recall that given a $(t,\varepsilon)$-estimation-forest, one can find the weights of an estimate MNL $\hat{M}$ with $d(M,\hat{M})\leq O(\varepsilon)$ without making any more $\maxsample$ queries by using \AlgGenerateWeights{} (\Cref{alg:generate-weights}). Therefore, our goal is to design an adaptive algorithm that queries each pair of items at most $O(\log^2 n)$ times (with potentially some dependency on the error parameters $\delta$ and $\epsilon$) and then produces a $(t,\epsilon)$-estimation-forest. Observe that, unfortunately, \Cref{alg:nlogn-build-estimation-forest} does not have this property for two reasons.

First, the algorithm used to compute the $\orderingerror$-ordering  (described in \Cref{cor:falahatgar-for-MNLs}) can compare some items $\Omega(\log^3n)$ times. This can easily be fixed by creating the cluster graph via a variant of the classic Quicksort algorithm where each comparison is repeated sufficiently many times. In doing so, we obtain a smaller upper bound on the number of queries on each pair, in exchange for a higher overall worst-case query complexity. In fact, this algorithm will make $O(n\log^2 n)$ queries in total (for constant $\delta$ and $\epsilon$), but each pair will be queried at most $O(\log n)$ times. Formally, we have the following result, that we prove in \Cref{sec:appendix-non-adaptive}.

\begin{restatable}{proposition}{QuicksortClusterGraph}%
\label{prop:quicksort-cluster-graph}
There exists an algorithm $\AlgQuicksortClustering(\alpha, \varepsilon,\delta)$ that, given parameters $\alpha,\epsilon,\delta\in(0,1)$ and access to a $\maxsample$ oracle for an MNL $M$ supported on $[n]$, queries each pair of items at most $O\left(\frac{\log(n/\delta)}{\alpha \epsilon^2}\right)$ times and that, with probability at least $1-\delta$, returns a $(\frac{7}{\alpha}, \frac{1}{\alpha}, \epsilon)$-cluster graph.
\end{restatable}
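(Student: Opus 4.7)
The plan is to design $\AlgQuicksortClustering$ as a noisy-Quicksort-style recursion that interleaves sorting and clustering. On a subproblem $S\subseteq [n]$, the algorithm picks an arbitrary pivot $p\in S$ and, for every $u\in S\setminus\{p\}$, invokes $\rho(u,p):=\AlgEstimateRatio(u,p,\alpha,\epsilon,\delta')$ with $\delta'=\delta/n^2$. It routes $u$ to the left recursive call when $\rho(u,p)=0$, to the right recursive call when $\rho(u,p)=\infty$, and otherwise adds $u$ to the cluster $C_p$ centered at $p$, recording the star edge $\{u,p\}$ with $r(u,p):=\rho(u,p)$ and $r(p,u):=1/\rho(u,p)$. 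The ordered clustering $(C_1,\dots,C_T),(c_1,\dots,c_T)$ is the in-order traversal of the resulting recursion tree.

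Since any pair $\{u,v\}$ is handled in exactly one $\AlgEstimateRatio$ call---namely, in the first recursive call in which one of $u,v$ is selected as pivot while the other is still present---a union bound over the at most $\binom{n}{2}$ such calls shows that all invocations of $\AlgEstimateRatio$ satisfy the guarantees of \Cref{lem:weight-pair-ratio-estimate} with probability at least $1-\delta$. Conditioning on this event, property 3 of \Cref{def:cluster-graph} is immediate from the $(1\pm\epsilon)$ guarantee. For property 1 with $A_1=7/\alpha$, note that when $u\in C_p$ we have $\rho(u,p)\notin\{0,\infty\}$, and hence $\frac{w_u}{w_p}\in\left(\frac{\alpha}{3\alpha+4},\frac{3\alpha+4}{\alpha}\right)\subseteq\left[\frac{\alpha}{7},\frac{7}{\alpha}\right]$, using $\alpha\leq 1$ so that $3\alpha+4\leq 7$.

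The heart of the argument is property 2 with $A_2=1/\alpha$. Given centers $c_i,c_j$ with $i>j$ in the in-order ordering, I distinguish three cases based on their relative position in the recursion tree. If $c_j$ lies in the left subtree of $c_i$, then $c_j$ was still present when $c_i$ was picked as pivot and was therefore compared directly to $c_i$; this comparison returned $\rho(c_j,c_i)=0$, and the contrapositive of guarantee 3 of \Cref{lem:weight-pair-ratio-estimate} gives $w_{c_j}/w_{c_i}<\alpha$, i.e., $w_{c_i}/w_{c_j}>1/\alpha$. The symmetric case, $c_i$ in the right subtree of $c_j$, gives $\rho(c_i,c_j)=\infty$ and hence $w_{c_i}/w_{c_j}>1/\alpha$ analogously. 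Otherwise, $c_i$ and $c_j$ lie in the left and right subtrees respectively of a common ancestor pivot $p$; chaining the two direct comparisons with $p$ yields $w_{c_j}/w_p<\alpha$ and $w_{c_i}/w_p>1/\alpha$, and hence $w_{c_i}/w_{c_j}>1/\alpha^2\geq 1/\alpha$. The subtle point, which I expect to be the main step to verify carefully, is that whenever a center $c$ is a descendant of $c_i$ in the recursion tree, $c$ belonged to the subproblem being processed when $c_i$ was chosen as pivot and was therefore directly compared to $c_i$ in that call; this is what allows a single comparison rather than a chain to certify the $1/\alpha$ separation within a subtree.

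For the per-pair query complexity, each pair is compared at most once as noted above, and $\AlgEstimateRatio(u,v,\alpha,\epsilon,\delta')$ costs $O\left(\frac{1}{\alpha\epsilon^2}\log(1/\delta')\right)=O\left(\frac{\log(n/\delta)}{\alpha\epsilon^2}\right)$ $\maxsample$ queries by \Cref{lem:weight-pair-ratio-estimate}, matching the claimed bound.
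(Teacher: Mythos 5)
Your proposal is correct and matches the paper's proof in all essentials: the same Quicksort-style recursion with $\AlgEstimateRatio$ at confidence $\delta/n^2$, the same once-per-pair observation driving both the union bound and the per-pair query count, and the same derivation of the three cluster-graph properties (the paper verifies property 2 only for consecutive centers, using that in-order-adjacent nodes of the recursion tree are in an ancestor--descendant relation, and then chains, whereas you give a direct three-case analysis for arbitrary $i>j$; the two arguments are equivalent). One cosmetic slip: in your third case the phrase ``left and right subtrees respectively'' should be swapped, since $c_j$ (the lighter, earlier center) lies in the \emph{left} subtree of the common ancestor pivot and $c_i$ in the \emph{right}; the inequalities you then derive are the correct ones, so nothing breaks.
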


The second reason is that \Cref{alg:nlogn-build-estimation-forest} might make $\Omega(n)$ queries to some pairs after the construction of the cluster graph. Indeed, consider an instance with $T=3$ clusters with $|C_1|=n-2$ and $|C_2|=|C_3|=1$. For constant $\alpha$ and $\epsilon$, we have $Z_2 = \Theta(n)$. In this case, \Cref{alg:nlogn-build-estimation-forest} queries the pair $\{c_3, c_2\}$ for $\Theta(Z_2)=\Theta(n)$ times and therefore does not have the property we seek to obtain an efficient non-adaptive algorithm. 

We fix this issue by introducing two technical ingredients. First, we modify our algorithm that constructs the estimation-forest so that it requires at most $O(|C_j| \cdot\log^2 (n))$ queries between any pair of cluster centers $\{c_i, c_j\}$. Second, instead of estimating the ratio $\nicefrac{w_{c_i}}{w_{c_j}}$ using only queries to the slate $\{c_i,c_j\}$, we make use of a new subroutine that constructs an estimate of $\nicefrac{w_{c_i}}{w_{c_j}}$ by querying each pair $\{c_i, e\}$, with $e\in C_j$, a balanced number of times. By dividing the $O(|C_j| \cdot\log^2 (n))$ queries equally among the $|C_j|$ items of cluster $C_j$, we obtain an algorithm that queries each pair at most $O(\log^2(n))$ times.

We first show how to spread the queries over the cluster in \Cref{sec:non-adaptive-spreading-queries}, and then we show an algorithm to build the estimation-forest by querying each pair at most $O(\log^2n)$ times in \Cref{sec:non-adaptive-balanced-algorithm}.

\subsection{Spreading the Queries Among the Cluster Items}\label{sec:non-adaptive-spreading-queries}

    \begin{algorithm}
    \caption{\AlgGetGeometric$(u, v)$}
    \begin{algorithmic}[1]
    \State \textbf{Input:} Two items $u$ and $v$ in $[n]$ and access to a $\maxsample$ oracle for an MNL $M$ supported on $[n]$ with weights $\{w_1, \dots, w_n\}$.
    \State \textbf{Output:} A natural number representing the number of samples taken from $\maxsample(\{u,v\})$ until $u$ is the winner (last one not included).
    
    \State $i \myassign{} 0$
    \While{True}
        \State winner $\myassign \maxsample(\{u,v\})$
        \If{winner $= u$}
            \State \textbf{return } $i$
        \EndIf
        \State $i \myassign i+1$
    \EndWhile
    \end{algorithmic}
    \end{algorithm}

    \begin{algorithm}
    \caption{\AlgBalancedEstimateRatio$(\mathcal{G},i,j,A_1, A_2, \varepsilon,\alpha, \delta)$}\label{alg:get-ratio-estimator}
    \begin{algorithmic}[1]
    \State \textbf{Input:} An $(A_1, A_2, \varepsilon)$-cluster graph $\mathcal{G}=(F, r, (C_1, \dots, C_T), (c_1,\dots, c_T))$, two natural numbers $i$ and $j$ representing the index of two clusters $C_i$ and $C_j$ in $\mathcal{G}$, parameters $A_1, A_2, \varepsilon,\alpha,$ and $\delta$.
    \State \textbf{Output:} An estimate $r(c_i,c_j)$ of the ratio $\nicefrac{w_{c_i}}{w_{c_j}}$.
    \State $B_1 = \max \left\{{2\varepsilon \over 1-\varepsilon -{3\over 4}}, {6\over (1-\varepsilon)}, {24\varepsilon \over23-4\varepsilon}\right\}$\Comment{Note that $B_1=O(1)$ for $\epsilon\in(0,\frac{1}{5})$.}
    \State $N(\alpha, \varepsilon) = { B_1^2\over \alpha \varepsilon^2}$
    \State $M\myassign{} \left\lceil{8 \log(2/\delta)}\right\rceil$
    \State $N \myassign{} \left\lceil 2\cdot A_1 \cdot \left(1+{A_1\over A_2}\right) N(\alpha, \varepsilon)\right\rceil$
    \State $\xi \myassign \left\lceil{MN \over |C_j|}\right\rceil$ 
    \For{$s \in C_j$}
        \For{$\ell=1, \dots, \xi$}
           \State $X_{\ell,s} \myassign {r(c_j,s)}\cdot \AlgGetGeometric(c_i, s)$
        \EndFor
    \EndFor
    \State Divide the first $MN$ values of $\{X_{\ell,s}\}_{\ell,s}$ into $M$ groups of size $N$:
    $\{X^{(1)}_1, \dots , X^{(1)}_{N}\}$, $\{X^{(2)}_1, \dots , X^{(2)}_{N}\}, \dots , \{X^{(M)}_1, \dots , X^{(M)}_{N}\}$.
    
    \For{$\ell=1, \dots , M$}
        \State $Y_\ell = {1\over N} \sum_{q=1}^{N}  X^{(\ell)}_{q}$
    \EndFor
    \State $Y=$ median$(\{Y_\ell\})$ \Comment{Computes the median of the values $Y_1, \dots , Y_{M}$}
    \If{$Y \leq {3\over 4} \cdot {\alpha}$}
        \State \textbf{return }$r(c_i, c_j) =\infty$
    \EndIf
    \State \textbf{return} $r(c_i, c_j) = 1/Y$
    \end{algorithmic}
    \end{algorithm}
In this section, we show a subroutine $\AlgBalancedEstimateRatio$ that produces an estimate $r(c_i,c_j)$ of $w_{c_i}\over w_{c_j}$ by spreading the queries among all items of the cluster $C_j$, instead of simply querying the pair $(c_i,c_j)$ repeatedly.

The key observation behind this algorithm, is that one can obtain an unbiased estimator of the ratio $\frac{w_v}{w_u}$ by counting the number of $\maxsample$ queries to $\{u,v\}$ needed before the oracle returns $u$ as the winner (\Cref{lem:geometric-guarantees}). Moreover, since the cluster graph produced in the first phase of the algorithm contains estimates of the ratios $\frac{w_u}{w_c}$ where $c$ is the center of $C_{\gamma(u)}$, one can compose these estimates with estimates of ratios of the form $\frac{w_{v}}{w_u}$ to obtain estimators for $\frac{w_v}{w_c}$. In the end, the algorithm simply uses a median-of-means estimator to aggregate the result. This produces an accurate estimate by standard concentration results.

We first show the following.
    \begin{lemma}\label{lem:geometric-guarantees}
        Let $Y$ be the output of \AlgGetGeometric$(u,v)$, then:
        \begin{equation*}
            \E{}{Y} = {w_v \over w_u} \hspace{1cm}\text{and} \hspace{1cm} \Var[Y] = {{w_{v} \over w_u+w_v} \over \left({w_u \over w_u+ w_v}\right)^2} = {w_v \over w_u}\left(1 + {w_v \over w_u}\right).
        \end{equation*}
    \end{lemma}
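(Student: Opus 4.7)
The plan is to recognize that the random variable $Y$ produced by $\AlgGetGeometric(u,v)$ is (a shift of) a geometric random variable, and then directly compute its mean and variance. Let $p_u = M_{\{u,v\}}(u) = \frac{w_u}{w_u+w_v}$ and $p_v = M_{\{u,v\}}(v) = \frac{w_v}{w_u+w_v}$. Every call to $\maxsample(\{u,v\})$ returns $u$ independently with probability $p_u$, so if we let $K$ denote the index of the first call on which $u$ wins, then $K \sim \geom(p_u)$, and the algorithm returns $Y = K-1$.

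From here, I would use the standard formulas for the geometric distribution: $\E{K} = 1/p_u$ and $\Var[K] = (1-p_u)/p_u^2 = p_v / p_u^2$. Substituting yields
\[
\E{Y} = \E{K} - 1 = \frac{w_u+w_v}{w_u} - 1 = \frac{w_v}{w_u},
\]
and, since variance is shift-invariant,
\[
\Var[Y] = \Var[K] = \frac{p_v}{p_u^2} = \frac{w_v/(w_u+w_v)}{\bigl(w_u/(w_u+w_v)\bigr)^2} = \frac{w_v(w_u+w_v)}{w_u^2} = \frac{w_v}{w_u}\left(1 + \frac{w_v}{w_u}\right),
\]
exactly matching the two expressions in the statement.

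There is essentially no obstacle here: the only subtlety is to be careful about the off-by-one in the definition of $Y$ (the algorithm does not count the winning sample, so $Y = K-1$ rather than $Y = K$), which is why the mean is $w_v/w_u$ rather than $(w_u+w_v)/w_u$. The variance calculation is unaffected by this shift. For completeness in the write-up I would also briefly justify that the while loop terminates almost surely, which follows immediately from $p_u > 0$ (since the MNL weights are positive), so $K$ is finite with probability one and the expectations above are well-defined and finite.
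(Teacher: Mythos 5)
Your proof is correct and takes exactly the same route as the paper, which simply observes that $Y+1 \sim \geom\left(\frac{w_u}{w_u+w_v}\right)$ and invokes the standard mean and variance formulas. Your write-up just makes explicit the off-by-one shift and the computation that the paper leaves implicit.
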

    \begin{proof}
        The statement follows directly from the fact that $Y+1 \sim \geom\left(\frac{w_u}{w_u+w_v}\right)$. 
    \end{proof}

    We will also make use of the following concentration bound:
    \begin{lemma}\label{lem:majority-boost}
        Let $X_1, \ldots, X_N$ be independent r.v's, where $X_i \sim \Bern(\mu_i)$ and for each $i$, $\mu_i \geq {3/4}$. Then, for $N \geq 8\ln{1\over \delta}$,
        \[
            \Pr\left[\sum_{i=1}^N X_i \leq \frac{N}{2} \right] \leq \delta.
        \]
    \end{lemma}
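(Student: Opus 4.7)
The plan is to reduce this to a standard Hoeffding-type concentration inequality for sums of bounded independent random variables. Let $S = \sum_{i=1}^N X_i$ and $\mu = \E[S] = \sum_{i=1}^N \mu_i$. Since $\mu_i \geq 3/4$ for every $i$, we have $\mu \geq 3N/4$. In particular, the threshold $N/2$ sits strictly below the mean, with gap $\mu - N/2 \geq 3N/4 - N/2 = N/4$.

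Next, I would invoke Hoeffding's inequality for independent random variables $X_1, \ldots, X_N$ taking values in $[0,1]$, which yields
\[
    \Pr\left[S \leq \mu - t\right] \leq \exp\!\left(-\frac{2t^2}{N}\right)
\]
for any $t \geq 0$. Choosing $t = \mu - N/2$ and using $t \geq N/4$ gives
\[
    \Pr\left[S \leq \tfrac{N}{2}\right] \leq \exp\!\left(-\frac{2(N/4)^2}{N}\right) = \exp\!\left(-\frac{N}{8}\right).
\]
Finally, plugging in the hypothesis $N \geq 8\ln(1/\delta)$ yields the bound $\exp(-N/8) \leq \delta$, completing the proof.

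There is essentially no obstacle here; the only subtle point is that the $\mu_i$'s need not be equal, so one cannot simply apply a multiplicative Chernoff bound to i.i.d.\ Bernoullis. Using the additive form of Hoeffding (which only requires the $X_i$'s to be independent and bounded in $[0,1]$) sidesteps this issue and, as a bonus, gives exactly the constant $8$ claimed in the statement. A multiplicative Chernoff bound applied to $1-X_i$ would also work but produce a slightly worse constant, so Hoeffding is the natural choice.
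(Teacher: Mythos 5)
Your proof is correct and follows essentially the same route as the paper: both bound the gap between the mean and the threshold $N/2$ by $N/4$ and apply the additive Hoeffding (Chernoff--Hoeffding) inequality to obtain $\exp(-N/8)\leq \delta$. No issues to report.
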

    \begin{proof}
        Note that $\E{}{\sum_{i=1}^N X_i} \geq \frac{3N}{4}$, thus, if $\sum_{i=1}^N X_i \leq \frac{N}{2}$ we also have $\E{}{\sum_{i=1}^N X_i} - \sum_{i=1}^N X_i \geq \frac{N}{4}$. By Chernoff-Hoeffding inequality (see, e.g., \citep[Theorem 1.1]{dp09}):
        \begin{align*}
            \Pr\left[\sum_{i=1}^N X_i \leq {N\over 2}\right] &\leq \Pr\left[\E{}{\sum_{i=1}^N X_i} - \sum_{i=1}^N X_i  \geq {N\over 4}\right] \leq \exp\left(- 2 \cdot {N^2\over 16} \cdot {1\over N}\right) \leq \delta.
            \qedhere
        \end{align*}
    \end{proof}
    We then show the following.

    \begin{lemma}\label{lem:correctness-of-ratio-estimator}
        Let $i,j \in [k]$ be such that $i > j$, and let $\mathcal{G} =(F, r, (C_1, \dots, C_T), (c_1, \dots, c_T))$ be an $(A_1,A_2,\varepsilon)$-cluster graph for some $\varepsilon \in (0,1/5)$. Then, with probability at least $1-\delta$, the algorithm \AlgBalancedEstimateRatio$(\mathcal{G},i,j,A_1, A_2, \varepsilon,\alpha, \delta)$ outputs $r(c_i,c_j) \in (0,\infty]$ such that: %
        \begin{enumerate}
            \item If ${w_{c_i}\over w_{c_j}} \leq {1\over \alpha}$ then $r(c_i,c_j) \neq \infty$,
            \item If ${w_{c_i}\over w_{c_j}} \geq {9 \over \alpha}$ then $r(c_i,c_j) = \infty$,
            \item If $r(c_i,c_j)  \neq \infty$ then:
            \[
                r(c_i,c_j)  \in (1\pm 10\varepsilon){w_{c_i} \over w_{c_j}} \hspace{4mm}\text{ and }\hspace{4mm} r(c_j,c_i) = \frac{1}{r(c_i,c_j)} \in (1\pm 10\varepsilon){w_{c_j} \over w_{c_i}}.
            \]
        \end{enumerate}
    \end{lemma}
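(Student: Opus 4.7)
The plan is to analyze the median-of-means estimator in three stages: (1) compute the expectation and variance of each sample $X_{\ell,s}$, (2) apply Chebyshev's inequality to each block mean $Y_\ell$, (3) boost to high probability via \Cref{lem:majority-boost}, and (4) conclude the three guarantees.

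For (1), I combine \Cref{lem:geometric-guarantees} (which gives $\mathbb{E}[\AlgGetGeometric(c_i,s)] = w_s/w_{c_i}$ and the corresponding variance) with the cluster-graph edge guarantee $r(c_j,s) \in (1\pm\varepsilon)(w_{c_j}/w_s)$. Writing $\mu := w_{c_j}/w_{c_i}$, this yields $\mathbb{E}[X_{\ell,s}] = r(c_j,s)\cdot(w_s/w_{c_i}) \in (1\pm\varepsilon)\mu$. For the variance, using $w_s \in [w_{c_j}/A_1, A_1 w_{c_j}]$ (same-cluster guarantee) together with $\mu \le 1/A_2$ (center-separation guarantee) gives $r(c_j,s)^2\cdot(w_s/w_{c_i}) \le (1+\varepsilon)^2 A_1 \mu$ and $w_s/w_{c_i} \le A_1/A_2$, hence $\Var[X_{\ell,s}] \le V := (1+\varepsilon)^2 A_1(1+A_1/A_2)\mu$. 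Since all calls to $\AlgGetGeometric$ are independent, the $N$ samples in each block are independent; hence $\mathbb{E}[Y_\ell] \in (1\pm\varepsilon)\mu$ and $\Var[Y_\ell] \le V/N$.

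For (2), I apply Chebyshev with three different deviation thresholds, each matched to one term in the definition of $B_1$. First, with $t = \varepsilon\mu$ and $\mu \ge \alpha/9$, the choice of $N$ together with $B_1 \ge 6/(1-\varepsilon)$ yields $\Pr[|Y_\ell - \mathbb{E}[Y_\ell]| \ge \varepsilon\mu] \le 1/4$, so $Y_\ell \in (1\pm 2\varepsilon)\mu$ with probability $\ge 3/4$. Second, with $t = (1/4 - \varepsilon)\alpha$ and $\mu \ge \alpha$, the term $B_1 \ge 2\varepsilon/(1/4-\varepsilon)$ yields $Y_\ell > (3/4)\alpha$ with probability $\ge 3/4$. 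Third, with $t = ((23-4\varepsilon)/36)\alpha = (3/4)\alpha - (1+\varepsilon)\alpha/9$ and $\mu \le \alpha/9$, the term $B_1 \ge 24\varepsilon/(23-4\varepsilon)$ yields $Y_\ell < (3/4)\alpha$ with probability $\ge 3/4$. Each algebraic verification uses the identity $V/N = (1+\varepsilon)^2 \mu \alpha\varepsilon^2/(2B_1^2)$. For step (3), apply \Cref{lem:majority-boost} with $M \ge 8\ln(2/\delta)$ to the indicators of each of the relevant events; a union bound over the (at most two) events needed in any given case gives overall failure probability at most $\delta$. Because the median of $M$ values, at least half of which lie in a given interval, itself lies in that interval, the median $Y$ inherits the target concentration.

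For (4), conclude the three cases. If $\mu \ge \alpha$, then the first and second events above both apply, so $Y > (3/4)\alpha$ (guarantee 1 holds: $r(c_i,c_j) \ne \infty$) and $Y \in (1\pm 2\varepsilon)\mu$; thus $r(c_i,c_j) = 1/Y \in (1\pm 3\varepsilon)(w_{c_i}/w_{c_j}) \subseteq (1\pm 10\varepsilon)(w_{c_i}/w_{c_j})$ via $1/(1\pm 2\varepsilon) \subseteq (1\pm 3\varepsilon)$ for $\varepsilon < 1/5$, and symmetrically for $1/r(c_i,c_j)$. If $\mu \le \alpha/9$, the third event applies, so $Y < (3/4)\alpha$ and $r(c_i,c_j) = \infty$ (guarantee 2). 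Finally, guarantee 3 holds whenever $r(c_i,c_j) \ne \infty$: this forces $Y > (3/4)\alpha$ which in turn (using the third-event contrapositive) forces $\mu \ge \alpha/9$, so the first event applies and gives $Y \in (1\pm 2\varepsilon)\mu$, whence the same multiplicative bound on $1/Y$ as before.

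The main obstacle is the careful matching of constants: the three terms in $\max\{2\varepsilon/(1-\varepsilon-3/4),\ 6/(1-\varepsilon),\ 24\varepsilon/(23-4\varepsilon)\}$ defining $B_1$ are not arbitrary but are chosen exactly so that each of the three Chebyshev applications drops below $1/4$ in its regime; tracking the $(1\pm\varepsilon)$ slack from the cluster-graph guarantee through both the mean and variance, and then through the final inversion $Y \mapsto 1/Y$ without exceeding the $10\varepsilon$ budget, is where most of the bookkeeping lies. Everything else is standard median-of-means concentration.
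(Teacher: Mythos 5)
Your proposal follows the paper's proof essentially verbatim: the same mean and variance computation for $X_{\ell,s}$ (composing \Cref{lem:geometric-guarantees} with the cluster-graph edge guarantee), the same three Chebyshev applications on the block means $Y_\ell$ matched one-to-one with the three terms in the definition of $B_1$, the same boosting of the median via \Cref{lem:majority-boost}, and the same case analysis on $\mu = w_{c_j}/w_{c_i}$ relative to $\alpha$, $\alpha/9$. The one quibble is that in your second Chebyshev application the deviation threshold should be $(1/4-\varepsilon)\mu$ rather than $(1/4-\varepsilon)\alpha$ (with $t=(1/4-\varepsilon)\alpha$ the ratio $\Var[Y_\ell]/t^2 \le \mu\alpha\varepsilon^2/(B_1^2 t^2)$ is not bounded when $\mu \gg \alpha$), but since $\mathbb{E}[Y_\ell]-\tfrac34\alpha \ge (1-\varepsilon)\mu - \tfrac34\mu = (1/4-\varepsilon)\mu$ in that regime, the corrected threshold is exactly what the term $B_1 \ge 2\varepsilon/(1/4-\varepsilon)$ is designed for, and the argument goes through unchanged.
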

    \begin{proof}

    Since $\mathcal{G}$ is a $(A_1,A_2,\varepsilon)$-cluster graph, we have, for every $s \in C_j$: 
    \begin{equation}\label{eq:ratio-je-is-almost-correct}
        r(c_j,s) \in(1\pm\varepsilon) {w_{c_j}\over w_s},
    \end{equation}
    and:
    \[
        {w_{c_i} \over w_{s}} = {w_{c_i} \over w_{c_j}} \cdot {w_{c_j} \over w_{s}} \geq A_2 \cdot {1\over A_1}=  {A_2\over A_1},
    \]
    and hence:
    \begin{equation}\label{eq:bound-wpi-we}
        {w_{c_i}\over w_s + w_{c_i}} = {1\over {w_s\over w_{c_i}} + 1} \geq {1\over {A_1\over A_2} + 1} = {A_2\over A_1+A_2}.
    \end{equation}
    
    By Lemma~\ref{lem:geometric-guarantees} and Equation \eqref{eq:ratio-je-is-almost-correct}, for any choice of $\ell\in [\xi]$ and any choice of $s \in C_j$:
    \begin{equation*}
        \E{}{X_{\ell,s}} = r(c_j,s)\cdot  \E{}{\AlgGetGeometric(c_i,s)} = r(c_j,s)\cdot {w_{s}\over w_{c_i}} \in (1\pm \varepsilon) {w_{c_j}\over w_{c_i}},
    \end{equation*}
    
    and:
    \begin{align*}
        \Var[X_{\ell,s}] &= \Var [r(c_j,s)\cdot{\text{\AlgGetGeometric}(c_i,s)}]\\
        &= r(c_j,s)^2 \cdot\Var [{\text{\AlgGetGeometric}(c_i,s)}]\\
        &= r(c_j,s)^2 \cdot {w_s\over w_{c_i}}\cdot \left(1+{w_s \over w_{c_i}}\right) \\
        &\leq (1+\varepsilon)^2\cdot \left({w_{c_j}\over w_s}\right)^2 \cdot {w_s\over w_{c_i}}\cdot \left(1+{w_s \over w_{c_i}}\right) \\
        &= (1+\varepsilon)^2 \cdot {w_{c_j}\over w_s} \cdot \left(1+{w_s \over w_{c_i}}\right)\cdot  {w_{c_j}\over w_{c_i}}  \\
        &\leq (1+\varepsilon)^2  \cdot A_1 \cdot \left(1+{A_1 \over A_2}\right)\cdot  {w_{c_j}\over w_{c_i}}\\
         &\leq 2 \cdot A_1 \cdot \left(1+{A_1 \over A_2}\right)\cdot  {w_{c_j}\over w_{c_i}}.
    \end{align*}

    In particular consider $Y_\ell$ obtained as the average of $\{X_1^{(\ell)}, \dots, X_N^{(\ell)}\}$, then:

    \begin{equation}\label{eq:approximately-correct-mean}
        \E{}{Y_{\ell}} \in (1\pm \varepsilon) \cdot {w_{c_j}\over w_{c_i}}
    \end{equation}

    and by independence:
    
    \[
        \Var[Y_{\ell}] \leq {1\over N} \max_{\substack{z\in[\xi]\\s\in C_j}} \Var[X_{z,s}]\leq {1\over N}\cdot 2\cdot A_1 \cdot \left(1+{A_1 \over A_2}\right) \cdot {w_{c_j}\over w_{c_i}} \leq {\alpha \varepsilon^2 \over B_1^2} \cdot {w_{c_j}\over w_{c_i}},
    \]
    giving:
    \[
        2\cdot \sigma(Y_\ell) = 2\cdot \sqrt{\Var[Y_\ell]}\leq 2{\varepsilon\over B_1}\sqrt{\alpha \cdot {w_{c_j}\over w_{c_i}}}.
    \]

    \noindent
    We are now ready to prove the three properties of the statement. We divide the proof depending on the value of $\frac{w_{c_i}}{w_{c_j}}$. First, if ${w_{c_i} \over w_{c_j}}\geq {9 \over \alpha} $, then the algorithm can fail only if it returns a value different from $\infty$. Under the assumption that ${w_{c_j} \over w_{c_i}}\leq { \alpha\over 9}$, we have:
    \[
        \E{}{Y_\ell}\leq (1+\varepsilon) \frac{w_{c_j}}{w_{c_i}} \leq {1+\varepsilon \over 9}\cdot \alpha 
    \]
        
    \[
        2\sigma(Y_{\ell}) \leq  2{\varepsilon\over B_1}\sqrt{\alpha \cdot {w_{c_j}\over w_{c_i}}} \leq 2{\varepsilon\over B_1}\sqrt{\frac{\alpha^2}{9}} \leq {2\over 3}\cdot {\varepsilon\over B_1}\cdot \alpha  \leq \left({3\over 4} - {1+\varepsilon \over 9}\right)\alpha
    \]

    By Chebyshev's Inequality:
    \begin{align*}
        \Pr\left[ Y_{\ell} \geq {3\over 4}\cdot \alpha \right] &\leq 
        \Pr\left[ Y_{\ell} - \E{}{Y_{\ell}}\geq {3\over 4}\cdot \alpha - \E{}{Y_{\ell}} \right]\\
        &\leq \Pr\left[ Y_{\ell} - \E{}{Y_{\ell}}\geq {3\over 4}\cdot \alpha - {(1+\varepsilon_1) \over 9}\cdot \alpha  \right]\\
        &\leq \Pr\left[ Y_{\ell} - \E{}{Y_{\ell}}\geq \left({3\over 4}  - {(1+\varepsilon_1) \over 9}\right)\cdot \alpha  \right]\\
        &\leq \Pr\left[ Y_{\ell} - \E{}{Y_{\ell}}\geq 2\cdot \sigma(Y_\ell) \right]\\
        &\leq {1\over 4}.
    \end{align*}
    The algorithm only returns $\infty$ if the value $Y$, which the median of $M$ $Y_{\ell}$'s, is smaller than ${3\over 4}\cdot {\alpha}$. This happens if and only if most of the $Y_{\ell}$'s are smaller than ${3\over 4}\cdot \alpha $. By Lemma~\ref{lem:majority-boost} this happens with probability at least $1- \delta$, as needed.

    Suppose now that $\frac{1}{\alpha}< {w_{c_i} \over w_{c_j}} \leq {9\over \alpha}$. In this case the algorithm fails if it simultaneously returns a value different from $\infty$ and such value is not a good estimate for the ratios. We now show that if ${w_{c_i} \over w_{c_j}} \leq {9\over\alpha}$ then $Y$ and $\frac{1}{Y}$ are good estimates with probability at least $1-\frac{\delta}{2}$. Under the assumption that ${w_{c_j} \over w_{c_i}} \geq {\alpha \over 9}$, we have:
    \[
        2\cdot \sigma(Y_{\ell}) \leq 2{\varepsilon\over B_1}\sqrt{\alpha \cdot {w_{c_j}\over w_{c_i}}} \leq 6\cdot{\varepsilon\over B_1} \cdot{w_{c_j}\over w_{c_i}}\leq \varepsilon\cdot (1-\varepsilon) \cdot{w_{c_j} \over w_{c_i}},
    \]
    and hence, again by Chebyshev's Inequality:
    
    \begin{align*}
        \Pr\left[ Y_{\ell} \not\in  (1\pm 3\varepsilon) \cdot {w_{c_j}\over w_{{c_i}}} \right]&\leq\Pr\left[ Y_{\ell} \not\in  (1\pm\varepsilon) \E{}{Y_{\ell}} \right]\\
        &= \Pr\left[ |Y_{\ell} - \E{}{Y_{\ell}}| \geq \varepsilon \E{}{Y_{\ell}} \right]\\
        &\leq \Pr\left[ |Y_{\ell} - \E{}{Y_{\ell}}| \geq \varepsilon\cdot (1-\varepsilon)\cdot{w_{c_j}\over w_{c_i}} \right]\\
        &\leq \Pr\left[ |Y_{\ell} - \E{}{Y_{\ell}}| \geq 2\sigma(Y_{\ell}) \right]\\
        &\leq {1\over 4}.
    \end{align*}
    where the first inequality follows from the fact that $(1\pm 3\varepsilon){w_{c_j}\over w_{c_i}} \supseteq (1\pm \varepsilon) \E{}{Y_\ell}$ by \eqref{eq:approximately-correct-mean}.     Hence with probability at least $3/4$ each $Y_{\ell}$ lies in the range $(1\pm 3\varepsilon){w_{c_j}\over w_{c_i}}$. Therefore, by the same argument as above, the median $Y$ lies in the interval $(1\pm 3\varepsilon){w_{c_j}\over w_{c_i}}$ with probability at least $1- {\delta\over 2}$. If this holds then we also have:
    \[
        (1-10\varepsilon)\cdot {w_{c_i} \over w_{c_j}}\leq {1\over 1+3\varepsilon}\cdot {w_{c_i} \over w_{c_j}}\leq {1\over Y} \leq {1\over 1-3\varepsilon}\cdot {w_{c_i} \over w_{c_j}}\leq (1+ 10\varepsilon) \cdot {w_{c_i} \over w_{c_j}}.
    \]
    Finally, suppose that ${w_{c_i} \over w_{c_j}} \leq \frac{1}{\alpha}$. In this case the algorithm can fail if it returns $\infty$ or if it returns an inaccurate estimate. As shown above, the latter happens with probability at most $\frac{\delta}{2}$. Moreover, the algorithm will return a number different from $\infty$ with probability at least $1-{\delta \over 2}$. Indeed, under the assumption ${w_{c_j} \over w_{c_i}} \geq {\alpha}$ we have:
    \[
        2\cdot \sigma(Y_{\ell}) \leq 2\frac{\epsilon}{B_1} \sqrt{  \alpha{w_{c_j} \over w_{c_i}}}\leq 2\frac{\epsilon}{B_1}\sqrt{ \left({w_{c_j} \over w_{c_i}}\right)^2}= 2  {\varepsilon\over B_1} \cdot {w_{c_j} \over w_{c_i}} \leq \left(1-\varepsilon -{3\over 4}\right){w_{c_j} \over w_{c_i}}.
    \]
    Applying Chebyshev's inequality, we have:
    \begin{align*}
       \Pr\left[Y_{\ell} \leq {3\over 4}\cdot \alpha \right] & = \Pr\left[\E{}{Y_\ell}  - Y_{\ell} \geq \E{}{Y_\ell} - {3\over 4}\cdot \alpha \right]\\
       & \leq  \Pr\left[|\E{}{Y_\ell}  - Y_{\ell}| \geq \E{}{Y_\ell} - {3\over 4}\cdot \alpha \right]\\
       & \leq  \Pr\left[|\E{}{Y_\ell}  - Y_{\ell}| \geq (1-\varepsilon)\cdot {w_{c_j} \over w_{c_i}} - {3\over 4}\cdot \alpha \right]\\
       & \leq  \Pr\left[|\E{}{Y_\ell}  - Y_{\ell}| \geq \left(1-\varepsilon - {3\over 4}\right)\cdot {w_{c_j} \over w_{c_i}}\right]\\
       &\leq \Pr\left[|\E{}{Y_\ell} - Y_\ell| \geq 2\sigma(Y_\ell)\right]\leq {1\over 4}.
    \end{align*}
    The algorithm returns $\infty$ only if most of the $Y_{\ell}$'s are smaller than ${3\over 4}\cdot \alpha$. By Lemma~\ref{lem:majority-boost} this happens with probability at most ${\delta \over 2}$, as needed. A union bound concludes the proof.
    \end{proof}

We now give a bound on the query complexity of $\AlgBalancedEstimateRatio$. We will use the following concentration bound for the sum of geometric random variables. 

\begin{lemma}\label{lem:new-geometric-sum-bound}
    Let $\lambda \in \R_{>0}$ and $X_1, \dots , X_n \sim \geom(p)$ be independent, identically distributed geometric random variables with parameter $p$, then:
    \[
        \Pr\left[ \sum_{i=1}^n X_i \geq {n\over p} + {\lambda\over p} +1\right]\le  \operatorname{exp} \left( -{2 p\lambda ^2\over n + \lambda + p }\right).
    \]
\end{lemma}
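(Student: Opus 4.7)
The plan is to reduce the upper tail of the geometric sum $S_n := \sum_{i=1}^n X_i$ to a binomial lower tail via the standard coupling with a sequence of i.i.d.\ Bernoulli($p$) trials. Interpret each $X_i$ as the waiting time from the $(i-1)$-th success to the $i$-th success in such a sequence; then $S_n$ equals the index of the trial on which the $n$-th success occurs. Consequently, for every positive integer $m$ the event $\{S_n \geq m\}$ is identical to the event that the first $m-1$ trials contain at most $n-1$ successes, i.e.\ $\{Y_{m-1} \leq n-1\}$ where $Y_{m-1} \sim \bin(m-1,p)$.

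Since $S_n$ is $\mathbb{Z}_{\geq 1}$-valued, I would set $m := \lceil n/p + \lambda/p + 1 \rceil$, so that
\[
\Pr\!\left[S_n \geq n/p + \lambda/p + 1\right] \;=\; \Pr[S_n \geq m] \;=\; \Pr\!\left[Y_N \leq n-1\right], \qquad N := m-1.
\]
By construction $N \leq n/p + \lambda/p + 1 = (n+\lambda+p)/p$, and the mean of $Y_N$ satisfies $Np \geq n+\lambda$. Hoeffding's inequality applied to the lower tail of a binomial then yields
\[
\Pr[Y_N \leq n-1] \;\leq\; \exp\!\left(-\frac{2(Np-(n-1))^2}{N}\right) \;\leq\; \exp\!\left(-\frac{2(\lambda+1)^2\, p}{n+\lambda+p}\right) \;\leq\; \exp\!\left(-\frac{2 p \lambda^2}{n+\lambda+p}\right),
\]
which is precisely the stated estimate.

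The argument is essentially all bookkeeping once the negative-binomial/binomial duality is in place; the only point worth flagging is that the $+p$ term in the denominator arises exactly because the ceiling in the definition of $m$ forces $N \leq (n+\lambda+p)/p$ rather than the sharper $(n+\lambda)/p$ one would get if $n/p + \lambda/p + 1$ happened to be an integer. I do not anticipate any substantive obstacle: the coupling is textbook, and Hoeffding's bound on the lower tail of a binomial is immediate.
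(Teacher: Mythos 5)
Your proposal is correct and follows essentially the same route as the paper's proof: both convert the upper tail of the geometric sum into a binomial lower-tail event via the negative-binomial/binomial duality (with a ceiling on the threshold) and then apply the Chernoff--Hoeffding bound, absorbing the ceiling into the $+p$ term of the denominator. The only cosmetic difference is that you bound $\Pr[Y_N \leq n-1]$ with deviation $\lambda+1$ while the paper uses $\Pr[Y < n]$ with deviation $\lambda$, which changes nothing.
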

\begin{proof}
    Let $\nu \in  \R_{>0}$ such that $(\nu - 1)p > n$.
    The probability that $\sum_{i=1}^n X_i \ge \nu$ is the probability that the sum of $\lceil{\nu}\rceil-1$ independent Bernoulli random variables with parameter $p$ is less than $n$. Formally, let $Y\sim\bin(\lceil \nu \rceil-1, p)$, we have:
    \begin{align*}
        \Pr\left[ \sum_{i=1}^n X_i \ge \nu \right] &= \Pr\left[Y < n\right]=\Pr[Y - (\ceiling{\nu} - 1)p < n - (\ceiling{\nu} - 1)p]\\
        &\leq \exp\left(-\frac{2(n-(\ceiling{\nu}-1)p)^2}{\ceiling{\nu}-1}\right)\le \exp\left( -{2 (({\nu}-1) p- n)^2\over {\nu}}\right),
    \end{align*}
    where the first inequality follows by Chernoff-Hoeffding inequality (see, e.g., \cite[Theorem 1.1]{dp09}). Picking $\nu = {n\over p} + {\lambda\over p} + 1$ yields the result.
\end{proof}

\begin{lemma}\label{lem:complexity-get-ratio-estimator}
    For any choice of $\delta \in (0,1)$ and for $A_1=\Theta(1)$, $A_2=\Theta(1)$, a call to the algorithm \AlgBalancedEstimateRatio$(\mathcal{G},i,j,A_1, A_2, \varepsilon,\alpha, \delta)$, with probability at least $1-\delta$ queries each pair $\{c_i,s\}$, with $s\in C_j$, at most $O\left(\left(1+{1\over \alpha \varepsilon^2|C_j|}\right) \log {|C_j| \over \delta}\right)$ times.
\end{lemma}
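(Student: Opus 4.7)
The plan is to bound, for each fixed $s \in C_j$, the total number of queries that the algorithm issues to the pair $\{c_i, s\}$, and then take a union bound over the $|C_j|$ items in $C_j$. For a fixed $s$, the algorithm invokes $\AlgGetGeometric(c_i,s)$ exactly $\xi = \lceil MN/|C_j|\rceil$ times, and each such invocation draws an independent $\geom(p_s)$ number of $\maxsample(\{c_i,s\})$ queries, where $p_s = w_{c_i}/(w_{c_i}+w_s)$. The total number of queries to the pair $\{c_i,s\}$ is therefore the sum of $\xi$ independent $\geom(p_s)$ variables.

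First I would establish the lower bound $p_s \ge A_2/(A_1+A_2) = \Omega(1)$, exactly as in \eqref{eq:bound-wpi-we} of the proof of \Cref{lem:correctness-of-ratio-estimator}: $w_{c_i}/w_s \ge A_2/A_1$ by the cluster-graph properties, which translates into the claimed bound on $p_s$. Next, I would apply \Cref{lem:new-geometric-sum-bound} to the sum of $\xi$ independent $\geom(p_s)$ variables. Choosing $\lambda = \Theta\!\bigl(\sqrt{\xi \log(|C_j|/\delta)} + \log(|C_j|/\delta)\bigr)$ makes $\exp(-2p_s\lambda^2/(\xi+\lambda+p_s)) \le \delta/|C_j|$ in either of the two regimes $\lambda \le \xi$ or $\lambda > \xi$. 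This yields that, with probability at least $1-\delta/|C_j|$, the total number of queries made to the pair $\{c_i,s\}$ is at most
\[
\frac{\xi}{p_s} + \frac{\lambda}{p_s} + 1 \;=\; O\!\bigl(\xi + \log(|C_j|/\delta)\bigr),
\]
where I used $\sqrt{\xi\log(|C_j|/\delta)} \le \tfrac{1}{2}(\xi+\log(|C_j|/\delta))$ by AM-GM and absorbed $p_s = \Omega(1)$ into the constant.

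Now I would substitute the actual values: since $A_1,A_2 = \Theta(1)$ and $B_1=O(1)$, we have $N = O(1/(\alpha\varepsilon^2))$ and $M = O(\log(1/\delta))$, so
\[
\xi \;=\; \left\lceil \frac{MN}{|C_j|}\right\rceil \;=\; O\!\left(1 + \frac{\log(1/\delta)}{\alpha\varepsilon^2\,|C_j|}\right).
\]
Combining with the previous display, the per-pair query count is bounded by
\[
O\!\left(1 + \frac{\log(1/\delta)}{\alpha\varepsilon^2\,|C_j|} + \log\frac{|C_j|}{\delta}\right)
\;=\; O\!\left(\left(1 + \frac{1}{\alpha\varepsilon^2\,|C_j|}\right)\log\frac{|C_j|}{\delta}\right),
\]
which is exactly the claim. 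Finally, a union bound over $s \in C_j$ gives that this holds simultaneously for every pair $\{c_i,s\}$, $s \in C_j$, with probability at least $1-\delta$.

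The only slightly delicate step is the use of \Cref{lem:new-geometric-sum-bound}: one must verify the two cases $\lambda \le \xi$ versus $\lambda > \xi$ separately so that the ratio $\lambda^2/(\xi+\lambda+p_s)$ is controlled correctly, but once the choice $\lambda \asymp \sqrt{\xi \log(|C_j|/\delta)} + \log(|C_j|/\delta)$ is made the computation is routine. No obstacle beyond this parameter-juggling is expected.
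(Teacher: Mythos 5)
Your proposal is correct and follows essentially the same route as the paper: fix $s\in C_j$, note the $\xi=\lceil MN/|C_j|\rceil$ calls to $\AlgGetGeometric(c_i,s)$ each contribute a geometric number of queries with success probability at least $A_2/(A_1+A_2)=\Omega(1)$, apply \Cref{lem:new-geometric-sum-bound}, and union bound over $C_j$. The only difference is cosmetic — the paper picks $\lambda=\xi+\frac{2}{p}\ln\frac{10|C_j|}{\delta}$ so that $\xi+\lambda+p\le 3\lambda$ without case analysis, whereas you use the slightly tighter $\lambda\asymp\sqrt{\xi\log(|C_j|/\delta)}+\log(|C_j|/\delta)$ with a two-case check; both give the same $O(\xi+\log(|C_j|/\delta))$ bound.
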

\begin{proof}
    Note that \AlgBalancedEstimateRatio$(\mathcal{G}, i,j,A_1, A_2, \varepsilon_2, \delta)$ calls \AlgGetGeometric$(c_i,s)$ for each $s \in C_j$, $\xi =O(\nicefrac{MN}{|C_j|})$ times. Let $X_\ell$ be the number of $\maxsample$ queries made by the $\ell$th call to $\AlgGetGeometric(c_i,s)$. Note that $X_\ell\sim \geom(\frac{w_{c_i}}{w_s+w_{c_i}})$. By \Cref{eq:bound-wpi-we} we have:
    \[
        {w_{c_i}  \over w_{s}+w_{c_i}} \geq {A_2 \over A_1+A_2},
    \]
    and hence $\sum_{\ell=1}^\xi X_\ell$ is stochastically dominated by the sum of $\xi$ independent identically distributed geometric random variables with parameter $ p = {A_2 \over A_1 + A_2}$.

    By \Cref{lem:new-geometric-sum-bound}, we have that for $\lambda = \xi + {2\over p} \ln {10|C_j| \over \delta}$:
    \begin{align*}
        \Pr\left[ \sum_{\ell=1}^\xi X_\ell \geq {2\xi\over p} + {2 \over p^2 }\ln {10|C_j| \over \delta} + 1\right] &= \Pr\left[ \sum_{\ell=1}^\xi X_\ell \geq {\xi\over p} + {\lambda \over p} + 1\right] \leq \exp\left( - {2p \lambda^2 \over \xi + \lambda + p } \right) \\
        &\leq \exp\left( - {2p \lambda^2 \over 3\lambda } \right) =\exp\left( - {2p \lambda \over 3} \right)\\
        &\leq \exp\left( - \ln {10|C_j|\over \delta} \right) = {\delta \over 10 |C_j|}.
    \end{align*}
    Since ${2\xi\over p} + {2 \over p^2 }\ln {10|C_j| \over \delta} + 1 = O\left(\left(1+{1\over \alpha \varepsilon^2|C_j|}\right) \log {|C_j| \over \delta}\right)$ the lemma follows by the union bound.
\end{proof}

\subsection{Computing an MNL: Obtaining an Estimation-Forest}\label{sec:non-adaptive-balanced-algorithm}
We conclude this section by discussing how, given the subroutines described above, we can compute a $(t,\varepsilon)$-estimation-forest for an MNL $M$ by making at most $O\left(\frac{\log (n/\epsilon)\cdot \log(n/\delta)}{\epsilon^3}\right)$ queries per pair.   

We provide the pseudocode of our algorithm in \Cref{alg:build-forest-non-adaptive}. Before proceeding, we give some intuition for the algorithm. Recall that our $O(n\log n)$ adaptive algorithm (\Cref{alg:nlogn-build-estimation-forest}), starting from some center $c_i$, iteratively attempts to estimate the ratio of the weights of $c_i$ and those of the previous centers $c_{i-1}, c_{i-2}, \dots$ until it observes an ``$\infty$''. The first idea of this new algorithm is that we can limit ourselves to estimating the ratios between the weight of $c_i$ and those of $\{c_{i-1}, \dots, c_{i-\Lambda(n)}\}$ for $\Lambda(n)=\Theta(\log n)$---this is because each time that we jump to a cluster with smaller cluster index the weights of the items decrease by at least a constant. 
The second idea is to choose a different threshold for establishing when the ratio of two items is infinite. In \Cref{alg:nlogn-build-estimation-forest} the threshold was decided so that if $r(c_i,c_j)=\infty$ then $c_i$ alone wins with probability at least $1-\epsilon$ against all the items in $C_j \cup \dots \cup C_1$ put together. In this second algorithm, intuitively, we would like $r(c_i,c_j)=\infty$ if $c_i$ alone wins against all the items in $C_j$ with probability at least $1-\frac{\epsilon}{\Lambda(n)}$, as this is sufficient to obtain an $(t,O(\varepsilon))$-estimation-forest. 
This can be achieved by testing whether $\frac{w_{c_j}}{w_{c_i}} \leq \beta_j$ for $\beta_j=\Theta\left(\frac{\epsilon}{|C_j|\cdot \Lambda(n)}\right)$. This new threshold will lead to overall more queries but it will allow us to query each pair at most $O(\log^2n)$ many times. 

A negative byproduct of this weaker thresholding is that we cannot stop estimating the ratios between $c_i$ and the other centers as soon as we observe an infinity, but we should test $c_i$ against all the items in $\{c_{i-1}, \dots, c_{i-\Lambda(n)}\}$. In turn, this might create cases where we observe $r(c_i,c_j)=\infty$ but still obtain a good estimate for the ratio of $c_i$ and $c_\ell$ for $\ell>j$. To avoid this situation, we estimate the ratios between $c_i$ and, in order, $c_{i-\Lambda(n)}, c_{i-\Lambda(n)+1}, \dots, c_{i-1}$. We call $j_m$ the maximum index for which the estimate $r(c_i,c_{j_m})$ is not infinite---which is also the first one to occur. Now, for $\ell\in \{i-1, \dots, j_m+1\}$ we estimate the ratio between $c_i$ and $c_\ell$ by using $r(c_i, c_{j_m})$ and $r(c_\ell, c_{j_m})$. We can then leverage the properties of the cluster graph to ensure that, with high probability, for all these choices of $\ell$, $r(c_\ell, c_{j_m})$ will not be infinite.

We now analyze \Cref{alg:build-forest-non-adaptive}. Our goal is to prove the following result.
\begin{theorem}\label{thm:non-adaptive-estimation-forest-creation}
For any choice of three numbers $\alpha, \epsilon, \delta\in(0,1)$, with $\alpha=\Theta(1)$, with probability at least $1 - \delta$, $\AlgBuildForestNonAdaptive(\alpha, \epsilon,\delta)$ makes at most $O\left(\frac{\log(n/\epsilon) \cdot \log(n/\delta)}{\epsilon^3}\right)$ queries to each pair of items and returns a $(5,\epsilon)$-estimation-forest.
\end{theorem}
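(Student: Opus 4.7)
The plan is to mirror the analytical skeleton of Theorem~\ref{thm:nlogn-build-estimation-forest}, with \AlgClusterSort{} replaced by \AlgQuicksortClustering{} and \AlgEstimateRatio{} replaced by \AlgBalancedEstimateRatio{}, and with the backward center-comparison loop restricted to a window of size $\Lambda(n)=\Theta(\log(n/\epsilon))$. I begin by conditioning on the success of every randomized subroutine call: Proposition~\ref{prop:quicksort-cluster-graph} gives the cluster graph with failure probability $\delta/3$, and the $O(n\log n)$ calls to \AlgBalancedEstimateRatio{} each have failure probability $\delta/\mathrm{poly}(n)$ for both Lemma~\ref{lem:correctness-of-ratio-estimator} and the per-pair query bound of Lemma~\ref{lem:complexity-get-ratio-estimator}. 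A union bound then yields overall failure probability at most $\delta$.

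For the per-pair query bound I split queries to a fixed pair $\{u,v\}$ into (i) the cluster-graph phase, contributing at most $O(\log(n/\delta)/\epsilon^2)$ queries by Proposition~\ref{prop:quicksort-cluster-graph}, and (ii) at most one call $\AlgBalancedEstimateRatio(\mathcal{G},i,j,\cdot)$ in which $\{u,v\}=\{c_i,s\}$ for some $s\in C_j$, since each ordered pair of clusters is compared at most once in the windowed backward scan. Choosing the threshold $\beta_j = \Theta(\epsilon/(|C_j|\,\Lambda(n)))$ makes the per-pair contribution of this call equal to $O\bigl((1+\tfrac{1}{\beta_j\epsilon^2|C_j|})\log(n/\delta)\bigr) = O\bigl(\log(n/\epsilon)\log(n/\delta)/\epsilon^3\bigr)$ by Lemma~\ref{lem:complexity-get-ratio-estimator}; summing the two contributions gives the stated bound.

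Correctness reduces to checking the four properties of Definition~\ref{def:estimation-forest}. Property~4 and the component-ordering clause of Property~3 are immediate from the construction: inside a cluster, items form a depth-one star around their center, and components consist of contiguous ranges of center indices enforced by the windowed scan. For Property~1, short paths yield $(1\pm\epsilon)$ ratio estimates by Lemma~\ref{lem:correctness-of-ratio-estimator} composed with the anchor trick: whenever the algorithm derives $r(c_i,c_\ell)$ from $r(c_i,c_{j_m})$ and $r(c_\ell,c_{j_m})$, the composition incurs only $O(1)$ multiplicative $(1\pm\Theta(\epsilon))$ factors, which I absorb into the final $(1\pm\epsilon)$ bound by shrinking the inner accuracy parameter.

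The hard part will be Property~2 together with the tail condition of Property~3. The key observation is that $\beta_j=\Theta(\epsilon/(|C_j|\,\Lambda(n)))$ is calibrated so that a single $\infty$-verdict between $c_i$ and $c_j$ already implies $\sum_{s\in C_j} w_s/w_{c_i}\le \epsilon/\Lambda(n)$; telescoping this along the at most $\Lambda(n)$ cross-cluster jumps one traverses between distant items---either within a tree or to pass from one component to the previous one---gives the $\epsilon$ budget. The main technical obstacle is combining this tail accounting with the anchor trick: one must show that whenever the algorithm records $r(c_i,c_\ell)=\infty$ through a failed comparison routed via $c_{j_m}$, the true ratio $w_{c_\ell}/w_{c_i}$ is small enough that ignoring $c_\ell$ and everything below it collectively costs at most $O(\epsilon/\Lambda(n))$, and conversely that a recovered finite estimate is never polluted by a spurious $\infty$ elsewhere in the window. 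This case analysis, together with an argument analogous to Lemmas~\ref{lem:nlogn-distant-vertices-real}--\ref{lem:nlogn-distant-vertices-estimates} but driven by the fixed window $\Lambda(n)$ rather than by the potential $Z_j$ of the adaptive algorithm, will be the technically involved step.
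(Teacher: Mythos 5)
Your plan follows the paper's proof almost exactly: condition on all subroutine successes, get the per-pair query bound from Proposition~\ref{prop:quicksort-cluster-graph} plus the fact that each pair of cluster indices triggers at most one call to \AlgBalancedEstimateRatio{} with threshold $\beta_j=\Theta(\epsilon/(|C_j|\Lambda(n)))$, and verify the four properties of Definition~\ref{def:estimation-forest} via analogues of Lemmas~\ref{lem:nlogn-close-vertices}--\ref{lem:nlogn-distant-vertices-estimates}. Your handling of Property~1 (composing the anchor estimate through $c_{j_m}$ and checking that the second call cannot return a spurious $\infty$ because $w_{c_j}/w_{c_{j_m}}\le w_{c_i}/w_{c_{j_m}}\le 9/\beta_{j_m}$, which is why the inner threshold is $\beta_{j_m}/9$) matches Lemma~\ref{lem:non-adaptive-close-vertices}.

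The one concrete gap is in your accounting for Properties~2 and~3. Your telescoping argument --- each $\infty$-verdict between $c_z$ and a cluster $C_j$ buys a budget of $\epsilon/\Lambda(n)$ for that cluster, and there are at most $\Lambda(n)$ such clusters --- only covers items whose cluster index lies within $\Lambda(n)$ of $c_z$. Items in clusters more than $\Lambda(n)$ below $c_z$ are \emph{never compared to anything at that level}, so no $\infty$-verdict exists for them and the telescoping does not reach them; there can be $\Theta(n)$ such items. The paper handles them separately (the set $A$ in Lemmas~\ref{lem:non-adaptive-distant-vertices-real} and~\ref{lem:non-adaptive-distant-vertices-estimates}) using Property~2 of the cluster graph: consecutive centers are separated by a factor $1/\alpha$, so $w_a\le \frac{7}{\alpha}\alpha^{\Lambda(n)+1}w_{c_z}\le \frac{\alpha\epsilon_1}{7n}w_{c_z}$, and summing over at most $n$ items costs another $O(\epsilon)$. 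This is the step that forces $\Lambda(n)=\lceil\log_{1/\alpha}(49n/(\alpha\epsilon_1))\rceil$ rather than any smaller window, and it must appear explicitly in both the ``true ratios'' lemma and the ``estimated ratios'' lemma (where the same split is carried out on $r(P(\cdot,u))$ using the invariant $r(P(c_x,c_y))\ge\alpha^{-(x-y)}$ along ancestor paths). With that case added, your outline completes to the paper's proof.
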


Observe that, thanks to \Cref{thm:estimation-forest-produces-mnl}, the above result immediately implies \Cref{thm:adaptive-balanced}. We will now prove \Cref{thm:non-adaptive-estimation-forest-creation} via a series of lemmas. We begin by analyzing the query complexity. 
\begin{algorithm} \caption{\AlgBuildForestNonAdaptive$(\alpha,\varepsilon,\delta)$}\label{alg:build-forest-non-adaptive}
\begin{algorithmic}[1]
\State \textbf{Input:} Parameters $\alpha, \varepsilon, \delta\in(0,1)$, and access to a $\maxsample$ oracle for an MNL $M$ supported on $[n]$ with weights $\{w_1, \dots, w_n\}$.
\State \textbf{Output:} with probability $\geq 1 - \delta$, a $(5,\varepsilon)$-estimation-forest 
\State $\epsilon_1 \myassign \nicefrac{\epsilon}{10}$
\State $\epsilon_2 \myassign \nicefrac{\epsilon_1}{30}$
\State $(F=([n], E), r, (C_1, \dots, C_T), (c_1, \dots, c_T)) \myassign \AlgQuicksortClustering(\alpha, \epsilon_2, \frac{\delta}{4})$ %
\State Let $\mathcal{G}$ be a copy of the cluster graph $(F=([n], E), r, (C_1, \dots, C_T), (c_1, \dots, c_T))$
\State $\Lambda(n)=\lceil \log_{1/\alpha}(\frac{49\cdot n}{\alpha\cdot \epsilon_1}) \rceil$
\For{$i\in [T]$}
    \State $\beta_i \myassign \frac{\alpha^2 \cdot \epsilon_1}{49\cdot |C_i| \cdot \Lambda(n)}$
\EndFor
\State $i\myassign T$
\While{$i > 1$}
    \State $j_m \myassign -1$
    \State $j \myassign \max\{1, i-\Lambda(n)\}$
    \While{$j < i$ and $j_m =-1$}
        \State $r(c_i, c_j) \myassign \max\left\{\AlgBalancedEstimateRatio(\mathcal{G}, i, j,\frac{7}{\alpha}, \frac{1}{\alpha}, \epsilon_2, \beta_j, \frac{\delta}{4n^2}), \frac{1}{\alpha^{i-j}}\right\}$\label{BBEF:line:First-Call-to-BER}
        \If{$r(c_i,c_j)<\infty$}
            \State $j_m \myassign j$
            \State $E \myassign E \cup \{(c_i, c_j)\}$ 
        \EndIf
        \State $j \myassign j +1$
    \EndWhile
    
    \If{$j_m=-1$}
        \State {\color{gray} // {If all the estimates of $r(c_i,c_j)$ are $\infty$, the current tree ends here}}
        \State $i\myassign i-1$
    \Else 
        \State {\color{gray}// Otherwise, the values of $r(c_i,c_j)$ for $j \in \{i-1, ... , j_m+1\}$ is estimated using estimates for the ratios $\nicefrac{w_{c_i}}{w_{c_{j_m}}}$ and $\nicefrac{w_{c_{j_m}}}{w_{c_{j}}}$}
        \For{$j = i-1, \dots, j_m+1$}
            \State $\rho \myassign \AlgBalancedEstimateRatio(\mathcal{G}, j, j_m, \frac{7}{\alpha}, \frac{1}{\alpha}, \epsilon_2, \frac{\beta_{j_m}}{9}, \frac{\delta}{4n^2})$\label{BBEF:line:Second-Call-to-BER}
            \If{$\rho\in\{0,\infty\}$}
                \State \textbf{return} failure
            \EndIf
            \State $r(c_i, c_j) \myassign \max\left\{\frac{r(c_i,c_{j_m})}{\rho}, \frac{1}{\alpha^{i-j}}\right\}$
            \State $E \myassign E \cup \{(c_i, c_j)\}$
        \EndFor
        \State $i\myassign j_m$
    \EndIf
\EndWhile
\State \textbf{return} $(F=([n], E), r, (C_1, \dots, C_T), (c_1, \dots, c_T))$
\end{algorithmic}
\end{algorithm}

\begin{lemma}\label{lem:non-adaptive-query-cost-algorithm}
Let $\alpha,\epsilon,\delta\in(0,1)$, with $\alpha=\Theta(1)$. Suppose that the call made to $\AlgQuicksortClustering$ correctly computes a cluster graph. Then, with probability at least $1-\frac{\delta}{2}$, we have that the algorithm $\AlgBuildForestNonAdaptive(\alpha,\epsilon, \delta)$ makes at most $O\left(\frac{\log(n/\epsilon) \cdot \log(n/\delta)}{\epsilon^3}\right)$ $\maxsample$ queries on each pair of items, and it makes $O\left(\frac{n \log^2(n/\epsilon) \log(n/\delta)}{\epsilon^3}\right)$ queries in total. 
\end{lemma}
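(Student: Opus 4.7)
The algorithm's queries come from two sources: (i) the single call to $\AlgQuicksortClustering$, and (ii) the calls to $\AlgBalancedEstimateRatio$ inside the main loop (lines~\ref{BBEF:line:First-Call-to-BER} and~\ref{BBEF:line:Second-Call-to-BER}). By Proposition~\ref{prop:quicksort-cluster-graph}, source (i) contributes at most $O(\log(n/\delta)/\epsilon^2)$ queries per pair, which is dominated by the per-pair bound we want to establish and whose total cost is absorbed into the aggregate bound. The heart of the plan is thus a careful analysis of source (ii), both per pair and in aggregate.

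\textbf{Combinatorial accounting of the $\AlgBalancedEstimateRatio$ calls.} The main-loop structure guarantees that the outer index $i$ strictly decreases across iterations (either by $1$ or by a jump to $j_m$), so each value of $j_m \ne -1$ appears in at most one outer iteration. With this observation in hand I plan to establish two bookkeeping facts. First, a pair $\{c_i, s\}$ with $s \in C_j$ and $i > j$ can be touched only by calls $\AlgBalancedEstimateRatio(\mathcal{G}, i, j, \ldots)$: this occurs at most once from the first while loop (during the outer iteration whose index is exactly $i$, if the scanned $j$ is actually reached) and at most once from the for loop (during the at most one outer iteration whose $j_m$ equals $j$, provided the for-loop variable equals $i$). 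Hence every pair is touched by at most two calls. Second, for any fixed cluster $C_j$, at most $2\Lambda(n)$ calls reference it: at most $\Lambda(n)$ from the first while loops of outer iterations with index in $[j+1, j+\Lambda(n)]$, and at most $\Lambda(n)$ from the for loop of the unique outer iteration (if any) with $j_m = j$.

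\textbf{Per-call bounds via Lemma~\ref{lem:complexity-get-ratio-estimator} and union bound.} Since $\beta_j = \Theta(\epsilon/(|C_j|\,\Lambda(n)))$, one computes $1/(\beta_j\, \epsilon_2^2\, |C_j|) = \Theta(\Lambda(n)/\epsilon^3) = \Theta(\log(n/\epsilon)/\epsilon^3)$. Applying Lemma~\ref{lem:complexity-get-ratio-estimator} with failure parameter $\delta/(4n^2)$ (using that $A_1, A_2 = \Theta(1)$ because $\alpha = \Theta(1)$), each call succeeds with probability at least $1 - \delta/(4n^2)$; on success it queries each relevant pair at most $O(\log(n/\epsilon)\log(n/\delta)/\epsilon^3)$ times and makes $O(|C_j|\log(n/\epsilon)\log(n/\delta)/\epsilon^3)$ queries overall. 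Multiplying the per-pair bound by the at-most-two calls from the previous paragraph yields the claimed per-pair bound. Summing the per-call total-query bound with multiplicity at most $2\Lambda(n)$ over all clusters gives $\sum_j O(|C_j|\log^2(n/\epsilon)\log(n/\delta)/\epsilon^3) = O(n\log^2(n/\epsilon)\log(n/\delta)/\epsilon^3)$ in aggregate. Finally, since there are at most $2n\Lambda(n)$ calls in total, a union bound over them yields an overall failure probability of $O(\delta\,\Lambda(n)/n) \le \delta/2$ (for $n$ sufficiently large that $\Lambda(n) \le n$).

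\textbf{Main obstacle.} The crux is the combinatorial bookkeeping in the per-pair argument: without exploiting the fact that each value $j_m$ appears in at most one outer iteration, one would overcount the number of calls touching a given pair by a factor of $\Lambda(n) = \Theta(\log(n/\epsilon))$, inflating the per-pair bound by an extra logarithmic factor and breaking the claimed guarantee. The for-loop step of the algorithm is precisely what makes this bookkeeping work, by absorbing all the would-be redundant comparisons into a single amortized batch.
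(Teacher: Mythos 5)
Your proposal is correct and follows essentially the same route as the paper's proof: decompose the cost into the $\AlgQuicksortClustering$ call plus the $\AlgBalancedEstimateRatio$ calls, compute $1/(\beta_j\epsilon_2^2|C_j|)=\Theta(\Lambda(n)/\epsilon^3)$ to get the per-call per-pair bound from Lemma~\ref{lem:complexity-get-ratio-estimator}, count at most $O(1)$ calls per ordered pair of cluster indices and at most $O(\Lambda(n))$ calls referencing any fixed cluster $C_j$, and union-bound over all calls. Your combinatorial bookkeeping (at most two calls per pair, $2\Lambda(n)$ per cluster) is slightly looser than the paper's (which observes each ordered pair is called exactly at most once and each cluster is referenced at most $\Lambda(n)$ times), but this makes no asymptotic difference and your argument for it is sound.
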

\begin{proof}
$\AlgQuicksortClustering$ makes at most $O\big(\frac{\log(n/\delta)}{\epsilon^2}\big)$ queries for each pair as stated in \Cref{prop:quicksort-cluster-graph}.

After that, all the $\maxsample$ queries performed by $\AlgBuildForestNonAdaptive$ occur during a call to $\AlgBalancedEstimateRatio$. 

Note that, for each pair $(i,j)$, $\AlgBalancedEstimateRatio(\mathcal{G}, i,j,\dots)$ is called at most once. %
By \Cref{lem:complexity-get-ratio-estimator}, with probability $1-\frac{\delta}{4n^2}$, $\AlgBalancedEstimateRatio(\mathcal{G}, i, j,\dots)$ makes at most 
\[
    O\left(\left(1+\frac{1}{\beta_j \cdot |C_j| \cdot \epsilon^2}\right) \log\left(\frac{n^2\cdot |C_j|}{\delta}\right)\right) =     O\left(\frac{\Lambda(n)}{\epsilon^3} \log\left(\frac{n}{\delta}\right)\right) = O\left(\frac{\log (\frac{n }{\epsilon}) \cdot\log\left(\frac{n}{\delta}\right)}{\epsilon^3}\right)
\]
queries to each pair $(c_i, v)$, with $v\in C_j$.    

Given that $\AlgBalancedEstimateRatio$ gets called less than $n^2$ times, with probability at least $1 - \frac{\delta}{4}\geq 1 - \frac{\delta}{2}$, each pair $(c_i,v)$ is queried at most $O\left(\frac{\log(n/\epsilon) \log(n/\delta)}{\epsilon^3}\right)$ times after the clustering algorithm.

Note also that for each $j$ there are at most $\Lambda(n)$ values of $i$ for which the algorithm makes a call to $\AlgBalancedEstimateRatio(\mathcal{G}, i,j, \dots)$. Thus, the total number of queries is upper bounded by,
\[
    \sum_{j\in [T]} \Lambda(n) \cdot |C_j| \cdot  O\left(\frac{\log(n/\epsilon) \cdot \log(n/\delta)}{\epsilon^3}\right) \leq O\left(\frac{n \log^2(n/\epsilon)\log(n/\delta)}{\epsilon^3}\right). \qedhere
\]
\end{proof}

We now show that the algorithm computes a $(5,\epsilon)$-estimation-forest. This consists in proving four properties. We will do so in a series of lemmas. We start by showing that short paths provide good estimates of the weights ratio. 

\begin{lemma}\label{lem:non-adaptive-close-vertices}
Let $\epsilon, \alpha, \delta\in(0,1)$, and suppose that each call to $\AlgBalancedEstimateRatio$ as well as the call to $\AlgQuicksortClustering$ is successful. Then, $\AlgBuildForestNonAdaptive(\alpha, \epsilon, \delta)$ does not return failure. Moreover, let $\mathcal{F}=(F=([n], E), r, (C_1, \dots, C_T), (c_1, \dots, c_T))$ be the output of the algorithm. Then, for $\epsilon_1=\frac{\epsilon}{10}$ and for any integer $t\geq 1$ and any $u,v\in [n]$ such that $d(u,v)\leq t$, 
\[
(1-\epsilon_1)^t \cdot \frac{w_u}{w_v} \leq r(P(u,v)) \leq (1+\epsilon_1)^t \cdot \frac{w_u}{w_v}. 
\]
In particular, if $d(u,v)\leq 5$, $r(P(u,v)) \in (1\pm \epsilon) \cdot \frac{w_u}{w_v}$.
\end{lemma}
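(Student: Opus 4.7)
The plan is to mirror the structure of the proof of \Cref{lem:nlogn-close-vertices}, but with two new twists: first, we must rule out the ``failure'' branch of \Cref{alg:build-forest-non-adaptive}, and second, we must track the extra approximation error introduced when an edge weight $r(c_i,c_j)$ is obtained via the \emph{composite} estimate $r(c_i,c_{j_m})/\rho$ rather than by a single call to $\AlgBalancedEstimateRatio$.

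First I would show that the failure branch never triggers. Since we are in the inner \texttt{for}-loop, the outer loop already chose $j_m$ with $c_i$ such that the call in \Cref{BBEF:line:First-Call-to-BER}, with parameter $\beta_{j_m}$, returned a finite value; by the contrapositive of part~2 of \Cref{lem:correctness-of-ratio-estimator} this forces $w_{c_i}/w_{c_{j_m}}<9/\beta_{j_m}$. For any $j$ with $j_m<j<i$ the cluster-graph property gives $w_{c_j}<w_{c_i}$, so
\[
    \frac{w_{c_j}}{w_{c_{j_m}}} \;<\; \frac{w_{c_i}}{w_{c_{j_m}}} \;<\; \frac{9}{\beta_{j_m}} \;<\; \frac{9}{\beta_{j_m}/9}.
\]
Hence, for the call in \Cref{BBEF:line:Second-Call-to-BER} (which uses parameter $\beta_{j_m}/9$), part~2 of \Cref{lem:correctness-of-ratio-estimator} prevents $\rho=\infty$. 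Similarly, because $c_j$ and $c_{j_m}$ belong to different clusters of a $(7/\alpha,1/\alpha,\epsilon_2)$-cluster graph with $j>j_m$, we have $w_{c_j}/w_{c_{j_m}}\geq 1/\alpha \geq 1$, which is far above the threshold of part~1 of \Cref{lem:correctness-of-ratio-estimator} (which is $O(\beta_{j_m})$), so $\rho=0$ is also excluded.

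Next I would show that \emph{every} edge of $F$ carries a weight that lies within $(1\pm \epsilon_1)$ of the true ratio. For intra-cluster edges this is the third property of \Cref{def:cluster-graph} applied to the output of $\AlgQuicksortClustering$ with parameter $\epsilon_2\le\epsilon_1$. For an inter-cluster edge set in \Cref{BBEF:line:First-Call-to-BER}, \Cref{lem:correctness-of-ratio-estimator} together with the $\max\{\cdot,1/\alpha^{i-j}\}$ clipping gives $r(c_i,c_j)\in(1\pm 10\epsilon_2)\,w_{c_i}/w_{c_j}$, exactly as in the analogous step of \Cref{lem:nlogn-close-vertices} (the clipping only increases small estimates up to the true-ratio lower bound $w_{c_i}/w_{c_j}\geq 1/\alpha^{i-j}$ guaranteed by property~2 of the cluster graph). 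For a composite edge, both $r(c_i,c_{j_m})$ and $\rho$ are $(1\pm 10\epsilon_2)$-accurate, so their ratio lies in $\bigl(\tfrac{1-10\epsilon_2}{1+10\epsilon_2},\tfrac{1+10\epsilon_2}{1-10\epsilon_2}\bigr)\cdot w_{c_i}/w_{c_j}\subseteq (1\pm 30\epsilon_2)\,w_{c_i}/w_{c_j}$, and the $\max$ with $1/\alpha^{i-j}$ preserves the upper bound while only helping the lower one; since $30\epsilon_2=\epsilon_1$, the edge weight is within $(1\pm\epsilon_1)$ of the true ratio.

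Given edge-level accuracy, the path bound follows exactly as in \Cref{lem:nlogn-close-vertices}: if $P(u,v)=a_1,\dots,a_{t+1}$ then $r(P(u,v))=\prod_{k} r(a_k,a_{k+1})$ is sandwiched between $(1\pm\epsilon_1)^t\cdot w_u/w_v$ after telescoping. The final statement, $r(P(u,v))\in(1\pm\epsilon)\,w_u/w_v$ whenever $d(u,v)\leq 5$, then comes from $(1\pm\epsilon_1)^5\subseteq(1\pm 10\epsilon_1)=(1\pm\epsilon)$ via the elementary inequalities $(1+a)^b\leq 1+2ab$ and $(1-a)^b\geq 1-2ab$ for $a\in[0,1]$, $2ab\in(0,1)$. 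The main obstacle is the first paragraph—correctly propagating the threshold used in the first call to the second call through the factor of $9$ in $\beta_{j_m}/9$, and checking that the cluster-graph lower bound $w_{c_j}/w_{c_{j_m}}\geq 1/\alpha$ keeps us safely away from both the ``$0$'' and ``$\infty$'' regions of \Cref{lem:correctness-of-ratio-estimator}; everything else is a careful but routine accounting of constants.
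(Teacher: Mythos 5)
Your proposal is correct and follows essentially the same route as the paper: rule out the failure branch by propagating the finiteness of $r(c_i,c_{j_m})$ through the $\beta_{j_m}/9$ threshold, establish $(1\pm\epsilon_1)$ accuracy on every edge (including the composite ones, where the $(1\pm 10\epsilon_2)/(1\mp 10\epsilon_2)\subseteq(1\pm 30\epsilon_2)=(1\pm\epsilon_1)$ accounting matches the paper's), and telescope along the path. Two minor citation slips: excluding $\rho=\infty$ uses part~1 of \Cref{lem:correctness-of-ratio-estimator} (ratio $\le 1/(\beta_{j_m}/9)=9/\beta_{j_m}$ implies not $\infty$), not the contrapositive of part~2; and $\rho=0$ is impossible simply because that lemma's output range is $(0,\infty]$, so no threshold argument is needed.
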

\begin{proof}
Just as in to the proof of \Cref{lem:nlogn-close-vertices}, it is sufficient to show that each edge $(u,v)$ in the forest is associated with a $(1\pm \epsilon_1)$ estimate of the ratio $\frac{w_u}{w_v}$. Since $\AlgQuicksortClustering$ returned a $(\frac{7}{\alpha}, \frac{1}{\alpha}, \epsilon_2)$-cluster graph with $\epsilon_2=\frac{\epsilon_1}{30}$, we have $r(c_i,v)\in (1\pm \epsilon_2) \frac{w_{c_i}}{w_v}$ and $r(v,c_i)\in (1\pm \epsilon_2) \frac{w_{v}}{w_{c_i}}$ for each edge $(c_i,v)$ for $i\in[T]$, $v\in C_i$. 

Consider now an edge $(c_i, c_j)$, for $i>j$. This is either added in the internal \textbf{while} loop that looks for $j_m$ or it is included in the internal \textbf{for} loop. 

We first consider the \textbf{while} loop case. Since $i>j$ and $r(c_i,c_j)\neq \infty$, by \Cref{lem:correctness-of-ratio-estimator}, the call to $\AlgBalancedEstimateRatio$ returned a value $\zeta$ such that $\zeta\in (1\pm 10\epsilon_2) \frac{w_{c_i}}{w_{c_j}}$ and $1/\zeta \in (1\pm 10\epsilon_2) \frac{w_{c_j}}{w_{c_i}}$. Moreover, by the properties of the $(\frac{7}{\alpha}, \frac{1}{\alpha}, \epsilon_2)$-cluster graph, $\frac{w_{c_i}}{w_{c_j}} \geq \frac{1}{\alpha^{i-j}}$. This implies that $r(c_i,c_j)=\max\{\zeta, 1/\alpha^{i-j}\} \in (1\pm 10\epsilon_2) \frac{w_{c_i}}{w_{c_j}}$ and $r(c_j,c_i)=1/r(c_i,c_j) = \min\{1/\zeta, \alpha^{i-j}\} \in (1\pm 10\epsilon_2)\frac{w_{c_j}}{w_{c_i}}$.

Consider now the \textbf{for} loop case. Note that we have $i> j > j_m$. Note that when the \textbf{for} loop is executed, the value of $r(c_i,c_{j_m})$ is never $\infty$. In particular the call made on Line~\ref{BBEF:line:First-Call-to-BER} to $\AlgBalancedEstimateRatio(\mathcal{G}, i, j_m,\frac{7}{\alpha}, \frac{1}{\alpha}, \epsilon_2, \beta_{j_m}, \frac{\delta}{4n^2})$ returned a value other than $\infty$, and hence, by \Cref{lem:correctness-of-ratio-estimator}, it must be true that $\frac{w_{c_i}}{w_{c_{j_m}}} \leq \frac{9}{\beta_{j_m}}$. Moreover, we have $w_{c_i} \geq w_{c_j}$, and thus, $\frac{w_{c_j}}{w_{c_{j_m}}} \leq \frac{9}{\beta_{j_m}}$. Then, by \Cref{lem:correctness-of-ratio-estimator}, the call made to $\AlgBalancedEstimateRatio(\mathcal{G}, j, j_m, \frac{7}{\alpha}, \frac1\alpha, \epsilon_2, \frac{\beta_{j_m}}{9}, \frac{\delta}{4n^2})$ on Line~\ref{BBEF:line:Second-Call-to-BER}, must return $\rho\neq \infty$, and therefore $\rho\in (1\pm 10\epsilon_2)\frac{w_{c_j}}{w_{c_{j_m}}}$. Therefore, the algorithm does not return ``failure'' and moreover, we have,
\[
    (1-20\epsilon_2) \frac{w_{c_i}}{w_{c_j}}\leq \frac{(1-10\epsilon_2)}{(1+10\epsilon_2)} \cdot \frac{w_{c_i}}{w_{c_j}}\leq \frac{r(c_i, c_{j_m})}{\rho} \leq \frac{(1+10\epsilon_2)}{(1-10\epsilon_2)} \cdot \frac{w_{c_i}}{w_{c_{j_m}}} \cdot \frac{w_{c_{j_m}}}{w_{c_j}} \leq (1+30\epsilon_2) \frac{w_{c_i}}{w_{c_j}},
\]
where we used that $\frac{1-a}{1+a} \geq 1-2a$ for $a\geq 0$ and $\frac{1+a}{1-a} \leq 1+3a$ for $a\in(0,1/3)$. Moreover, by the fact that $\frac{w_{c_i}}{w_{c_j}} \geq \frac{1}{\alpha^{i-j}}$ and since $\epsilon_2 \leq \frac{\epsilon_1}{30}$, we have $r(c_i, c_j)=\max\left\{\frac{r(c_i,c_{j_m})}{\rho}, \frac{1}{\alpha^{i-j}} \right\}\in (1\pm\epsilon_1) \frac{w_{c_i}}{w_{c_j}}$. One can similarly prove the result for $1/r(c_i,c_j)$. This directly concludes the proof. 
\end{proof}

We now show that if two vertices are far away in the forest, then the ratio of their weights is negligible. We will do so by analyzing the structure of the tree similarly to the argument of \Cref{lem:nlogn-distant-vertices-real}. \Cref{fig:proof-tree} can be used as a reference to visualize the disposition of the vertices: although the tree in the figure is generated by \Cref{alg:nlogn-build-estimation-forest}, the disposition of the vertices is similar in this proof.

\begin{lemma}\label{lem:non-adaptive-distant-vertices-real}
Let $\epsilon, \alpha, \delta\in(0,1)$, and suppose that each call to $\AlgBalancedEstimateRatio$ as well as the call to $\AlgQuicksortClustering$ is successful. Let $\mathcal{F}=(F=([n], E), r, (C_1, \dots, C_T), (c_1, \dots, c_T))$ be the output of $\AlgBuildForestNonAdaptive(\alpha, \epsilon, \delta)$. Then, for any $u,v\in [n]$ such that $d(u,v)>5$ and $\cluster(u)>\cluster(v)$, it holds that $\sum_{s\in H_v} \frac{w_s}{w_u} \leq \epsilon$, where $H_v = \{s\in [n] \mid \cluster(s) \leq \cluster(v)\}$.
\end{lemma}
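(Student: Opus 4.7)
My plan is to closely follow the template of the analogous adaptive argument (\Cref{lem:nlogn-distant-vertices-real}), adapting the accounting for the two features that distinguish \Cref{alg:build-forest-non-adaptive}: the per-cluster threshold $\beta_j = \alpha^2\epsilon_1/(49|C_j|\Lambda(n))$, which charges only a single cluster's mass rather than the cumulative potential $Z_j$ used in the adaptive construction, and the bounded look-back $\Lambda(n) = \lceil \log_{1/\alpha}(49n/(\alpha\epsilon_1)) \rceil$. I will split the argument into two cases depending on whether $u,v$ lie in the same tree of $F$ or in distinct trees.

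In both cases the core idea is to isolate a ``pivot'' center $c_z$ in the connected component of $u$ whose processing by \Cref{alg:build-forest-non-adaptive} witnessed several $\AlgBalancedEstimateRatio$ calls returning $\infty$, so that via \Cref{lem:correctness-of-ratio-estimator} each such event translates to $w_{c_j} \leq \beta_j w_{c_z}$. When $u,v$ are in the same tree, I set $c_z$ to be the ancestor of $c_{\gamma(v)}$ at distance $2$ in $F$; using the caterpillar structure of trees produced by the algorithm (each spine node $c_{a_p}$ has children $c_{a_{p+1}}, c_{a_{p+1}+1}, \dots, c_{a_p-1}$, all leaves except $c_{a_{p+1}}$) together with $d(c_{\gamma(u)}, c_{\gamma(v)}) \geq 4$, I verify that such a $c_z$ exists and satisfies $z < \gamma(u)$, exactly mirroring the adaptive argument. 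When $u,v$ are in different trees, I set $c_z$ to be the smallest-index center in $u$'s tree and use the fact that its processing terminated the tree, so $j_m^{(z)}=-1$ and $\AlgBalancedEstimateRatio$ returned $\infty$ for every $j \in [\max(1, z-\Lambda(n)), z-1]$.

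With the pivot in hand, I split $\sum_{s\in H_v} w_s/w_u$ into a \emph{near} part over $s$ with $\gamma(s)\in[\max(1, z-\Lambda(n)), \min(\gamma(v), z-1)]$ and a \emph{far} part over $s$ with $\gamma(s) < z - \Lambda(n)$. For the near part, plugging $\beta_\ell = \alpha^2\epsilon_1/(49|C_\ell|\Lambda(n))$ into the $\infty$-witnessed bound together with the cluster-graph estimates $w_s \leq (7/\alpha) w_{c_{\gamma(s)}}$ and $w_u \geq (\alpha/7) w_{c_z}$ yields a per-cluster contribution of $\epsilon_1/\Lambda(n)$, summing to at most $\epsilon_1$ across the at most $\Lambda(n)$ clusters in the window. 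For the far part, the cluster-graph decay gives $w_{c_\ell}/w_{c_z} \leq \alpha^{z-\ell} \leq \alpha^{\Lambda(n)+1} \leq \alpha^2\epsilon_1/(49n)$ by the choice of $\Lambda(n)$, which after the same cluster-graph constants produces a per-item bound of $\epsilon_1/n$ and a total contribution of at most $\epsilon_1$. Combining both pieces and recalling $\epsilon_1 = \epsilon/10$ gives $\sum_{s\in H_v} w_s/w_u \leq 2\epsilon_1 \leq \epsilon$.

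The main obstacle I expect is the same-tree case: I must carefully handle the corner case in which the spine jump out of $c_z$ equals $\Lambda(n)$ exactly, so that $c_z$'s $\infty$-tested range is empty and the grandparent pivot yields no usable inequality. In that situation I plan to retreat to a deeper spine ancestor of $c_{\gamma(v)}$ that does witness at least one $\infty$ rejection, and verify that the near/far split continues to give the required $O(\epsilon_1)$ bound with the corresponding $\beta$ threshold and the same index-monotonicity argument used to establish $z < \gamma(u)$.
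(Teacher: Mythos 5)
Your proposal follows the paper's proof essentially verbatim: the same two cases (same tree vs.\ different trees), the same pivot $c_z$ (the ancestor of $c_{\gamma(v)}$ at distance $2$, resp.\ the smallest-index center in $u$'s tree), and the same split of $H_v$ into clusters within the $\Lambda(n)$ look-back window (bounded via the $\infty$-witnessed inequality $w_{c_{\gamma(s)}}\le\beta_{\gamma(s)} w_{c_z}$) and clusters beyond it (bounded via the geometric decay $\alpha^{\Lambda(n)}\le \alpha\epsilon_1/(49n)$), yielding $2\epsilon_1\le\epsilon$. The corner case you flag requires no retreat to a deeper ancestor: if the spine jump out of $c_z$ equals $\Lambda(n)$, then every $s\in H_v$ satisfies $z-\gamma(s)>\Lambda(n)$ and is absorbed entirely by your ``far'' part, so the dichotomy is already exhaustive.
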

\begin{proof}
We first consider the case where $u$ and $v$ are in the same connected component $\mathcal{C}$. Since $d(u,v)\geq 6$, $d(c_{\cluster(u)}, c_{\cluster(v)})\geq 4$. Note that $\mathcal{C}$ is a tree. Let $R\in[T]$ be the maximum index such that $c_R\in \mathcal{C}$ and suppose the tree is rooted at $c_R$. Let $c_z$ be the ancestor of $c_{\cluster(v)}$ at distance 2 from $c_{\cluster(v)}$. Note that $c_z$ exists and by construction $\cluster(u) \geq z > \cluster(v)$. Thus, $w_{\cluster(u)} \geq w_{c_z}$. Since $c_{\cluster(v)}$ is not a child of $c_z$, for any $s\in H_v$, $c_{\cluster(s)}$ is not a child of $c_z$. By construction this means exactly one of two things must hold: either $z - \cluster(s) > \Lambda(n)$ or we observed $\AlgBalancedEstimateRatio(\mathcal{G}, z,\cluster(s), \dots, {\beta_{\cluster(s)}},\dots)=\infty$ during the construction of the forest. Let us consider these two situations individually. Formally, let $A=\{s\in H_v \mid z-\cluster(s) > \Lambda(n)\}$ and $B=H_v \setminus A$. Consider any $b\in B$. We have, $1\leq z-\cluster(b) \leq \Lambda(n)$. Moreover,
\begin{align}\label{eq:non-adaptive-real-distant-b-cz}
    w_b \leq \frac{7}{\alpha} \cdot w_{c_{\cluster(b)}} \leq \frac{7}{\alpha} \cdot \beta_{\cluster(b)} \cdot w_{c_z} \leq \frac{\alpha\cdot \epsilon_1}{7\cdot |C_{\cluster(b)}| \cdot \Lambda(n)} \cdot w_{c_z},
\end{align}
where the first inequality is by definition of $(\frac{7}{\alpha}, \frac{1}{\alpha}, \epsilon_2)$-cluster graph, in the second inequality we use that, by \Cref{lem:correctness-of-ratio-estimator}, $\frac{w_{c_z}}{w_{c_{\cluster(b)}}} \geq \frac{1}{\beta_{\cluster(b)}}$, and the last inequality is by definition of $\beta_{\cluster(b)}$.

Let $K=\{\cluster(b) \mid b\in B\}$. By definition, $|K|\leq \Lambda(n)$, indeed, $K$ can only contain values in $\{z-1, \dots, z-\Lambda(n)\}$. Thus,
\begin{align}\label{eq:non-adaptive-real-distant-bound-B}
    \sum_{b\in B} \frac{w_b}{w_{c_z}} \overset{\eqref{eq:non-adaptive-real-distant-b-cz}}{\leq} \sum_{b\in B} \frac{\alpha\cdot \epsilon_1}{7\cdot |C_{\cluster(b)}|\cdot \Lambda(n)} = \frac{\alpha\epsilon_1}{7} \sum_{k\in K} \sum_{b\in C_k} \frac{1}{|C_k|\cdot \Lambda(n)} = \frac{\alpha\epsilon_1}{7} \cdot \frac{|K|\cdot|C_k|}{|C_k|\cdot\Lambda(n)} \leq \frac{\alpha\epsilon_1}{7}.
\end{align}
Consider now any $a \in A$. Since $z - \cluster(a) \geq \Lambda(n)+1$, by the properties of $(\frac{7}{\alpha}, \frac{1}{\alpha}, \epsilon_2)$-cluster graph we have that $w_{c_z} \geq \frac{1}{\alpha^{\Lambda(n)+1}} \cdot w_{c_{\cluster(a)}}$, and therefore:
\begin{align*}
    w_a \leq \frac{7}{\alpha} \cdot w_{c_{\cluster(a)}} \leq 7\alpha^{\Lambda(n)} \cdot w_{c_z} \leq 7\cdot \alpha^{\log_{1/\alpha}(\frac{49\cdot n}{\alpha\cdot \epsilon_1})} \cdot w_{c_z} \leq \frac{\alpha\cdot \epsilon_1}{7n} \cdot w_{c_z}.
\end{align*}
Thus, 
\begin{equation}\label{eq:non-adaptive-real-distant-bound-A}
    \sum_{a\in A} \frac{w_a}{w_{c_z}} \leq \frac{|A|\alpha\epsilon_1}{7n} \leq \frac{\alpha}{7}\cdot\epsilon_1. 
\end{equation}    
Note that, by the properties of the cluster graph and because $\cluster(u)\geq z$, we have that: $w_u \geq \frac{\alpha}{7} w_{c_{\cluster(u)}} \geq \frac{\alpha}{7} w_{c_z}$, and therefore,
\[
    \sum_{s\in H_v} \frac{w_s}{w_u} \leq \sum_{s\in H_v} \frac{7}{\alpha} \cdot \frac{w_s}{w_{c_z}} = \sum_{a\in A} \frac{7}{\alpha}\cdot \frac{w_a}{w_{c_z}} + \sum_{b\in B} \frac{7}{\alpha}\cdot\frac{w_b}{w_{c_z}} \overset{\eqref{eq:non-adaptive-real-distant-bound-B},\eqref{eq:non-adaptive-real-distant-bound-A}}{\leq} 2\epsilon_1\leq \epsilon.
\]

We now consider the case where $u$ and $v$ are in different connected components. Let $z$ be the minimum index such that $c_z$ is in the same connected component as $u$. By construction, $\cluster(u) \geq z > \cluster(v)$. Similarly to before, we can partition $H_v$ into $A= \{s\in H_v \mid z-\cluster(s) > \Lambda(n)\}$ and $B=H_v \setminus A$. By construction, for each $b\in B$, we observed $\AlgBalancedEstimateRatio(\dots, z,\cluster(b), \dots, {\beta_{\cluster(b)}},\dots)=\infty$. Thus, with an argument identical to before, we can prove $\frac{w_b}{w_{c_z}} \leq \frac{\alpha \epsilon_1}{7|C_{\cluster(b)}|\Lambda(n)}$ for each $b\in B$ and $\frac{w_a}{w_{c_z}} \leq \frac{\alpha\epsilon_1}{7n}$ for each $a\in A$, and this concludes the proof as before.
\end{proof}

We now show that even if we compute estimates along long paths, the estimates are still well-behaved. Again, \Cref{fig:proof-tree} can be used as a reference for the disposition of the vertices. 

\begin{lemma}\label{lem:non-adaptive-distant-vertices-estimates}
Let $\epsilon, \alpha, \delta\in(0,1)$, and suppose that each call to $\AlgBalancedEstimateRatio$ as well as the call to $\AlgQuicksortClustering$ is successful. Let $\mathcal{F}=(F=([n], E), r, (C_1, \dots, C_T), (c_1, \dots, c_T))$ be the output of $\AlgBuildForestNonAdaptive(\alpha, \epsilon, \delta)$. Suppose that $u,v\in[n]$ are in the same connected component $\mathcal{C}$ of $F$, and $\cluster(u)>\cluster(v)$ and $d(u,v)>5$. Then, $\sum_{s\in K_v} r(P(s,u)) \leq \epsilon$, where $K_v = \{s\in \mathcal{C} \mid \cluster(s)\leq \cluster(v)\}$.
\end{lemma}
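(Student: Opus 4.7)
The plan is to adapt the proof of \Cref{lem:nlogn-distant-vertices-estimates} to the non-adaptive algorithm. Two ingredients change: the threshold is now $\beta_k = \tfrac{\alpha^2 \epsilon_1}{49 |C_k| \Lambda(n)}$ in place of $\tfrac{\alpha^2 \epsilon}{8 Z_k}$, and the underlying cluster graph has parameters $(\tfrac{7}{\alpha}, \tfrac{1}{\alpha}, \epsilon_2)$ rather than $(\tfrac{2}{\alpha}, \tfrac{1}{\alpha}, \epsilon_1)$, so every $\tfrac{\alpha}{2}$ factor in the adaptive argument becomes $\tfrac{\alpha}{7}$ here. The overall tree topology—each spine vertex has a contiguous block of children of which only the leftmost becomes the next pivot—is the same as in the adaptive algorithm, so the geometric structure of the argument carries over.

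First, I would establish the non-adaptive analog of \Cref{lem:nlogn-ancestors-property}: in the tree rooted at its max-index center, $r(P(c_x, c_y)) \geq 1/\alpha^{x-y}$ whenever $c_x$ is an ancestor of $c_y$. The proof is unchanged because \Cref{alg:build-forest-non-adaptive} assigns every edge $(c_a, c_b)$ with $a > b$ a weight at least $1/\alpha^{a-b}$ (by the explicit maxima on Line~\ref{BBEF:line:First-Call-to-BER} and in the subsequent for-loop), so the bound telescopes along the unique tree path. I would then mirror the adaptive proof's setup: root $\mathcal{C}$ at $c_R$, define $c_z$ as the distance-$2$ ancestor of $c_{\cluster(v)}$ in $\mathcal{C}$, and let $c_x$ be the minimum-index sibling of $c_{\cluster(u)}$. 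Since $d(u,v) > 5$ forces $d(c_{\cluster(u)}, c_{\cluster(v)}) \geq 4$, the vertex $c_z$ exists with $\cluster(u) \geq z$; and because only the leftmost sibling of each spine vertex has further descendants, $c_x$ is an ancestor of $c_z$. Applying \Cref{lem:non-adaptive-close-vertices} on the path $u \to c_x$ (of length at most $3$) and the ancestor-path lemma on $c_x \to c_z$ then yields $r(P(u, c_z)) \geq (1-\epsilon_1)^3 \cdot \tfrac{\alpha}{7}$.

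Next, I would translate the algorithm's threshold observations into lower bounds on $r(P(c_z, s))$. Let $c_{z'}$ denote the parent of $c_{\cluster(v)}$ in $\mathcal{C}$; since $c_z$ is its grandparent, the inner while loop of \Cref{alg:build-forest-non-adaptive} at pivot $c_z$ must have set $j_m = z'$, so $r(c_z, c_j) = \infty$ was observed for every $j \in [\max(1, z - \Lambda(n)), z'-1]$, and by \Cref{lem:correctness-of-ratio-estimator} this gives $w_{c_j} < \beta_j \cdot w_{c_z}$. For every $s$ whose cluster index lies in this window, combining this with the close-vertex lemma on the length-$2$ tree path $c_z \to c_{z'} \to c_{\cluster(s)}$ and the cluster-graph bound $w_s \leq (7/\alpha) w_{c_{\cluster(s)}}$ yields $r(P(c_z, s)) \geq \Omega(1/\beta_{\cluster(s)})$. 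For the remaining $s \in K_v$ with $\cluster(s) < z - \Lambda(n)$, I would use the ancestor-path lemma together with the exponential decay forced by $\Lambda(n) = \lceil \log_{1/\alpha}(49n/(\alpha \epsilon_1)) \rceil$ (which gives $w_{c_\ell} \leq \tfrac{\alpha \epsilon_1}{49 n} w_{c_z}$), exactly as in the proof of \Cref{lem:non-adaptive-distant-vertices-real}.

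Composing these bounds with $r(P(u, c_z)) \geq (1-\epsilon_1)^3 \tfrac{\alpha}{7}$ and taking reciprocals gives, for each $s \in K_v$, an upper bound on $r(P(s, u))$ of order $\tfrac{\epsilon_1}{|C_{\cluster(s)}| \Lambda(n)}$ for in-window clusters and an exponentially smaller bound for out-of-window clusters. Summing over $s \in K_v$ and partitioning exactly as in the proof of \Cref{lem:non-adaptive-distant-vertices-real}, the in-window contributions total $O(\epsilon_1)$ (at most $\Lambda(n)$ cluster indices each contributing $O(\epsilon_1 / \Lambda(n))$, since the $|C_\ell|$ items in cluster $C_\ell$ cancel against $|C_\ell|$ in $\beta_\ell$), while the out-of-window contributions are $O(\epsilon_1)$ by the geometric cutoff, yielding $\sum_{s \in K_v} r(P(s,u)) \leq \epsilon$ after tuning constants. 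The main obstacle will be the out-of-window case, where $c_z$'s direct $\beta$-threshold observation is not available and one must instead invoke the ancestor-path lemma together with the cluster-graph decay to obtain a per-item estimate bound that still scales inversely with the cluster population—an analog of the $A$/$B$ split in \Cref{lem:non-adaptive-distant-vertices-real}, but more delicate since we work with estimates rather than true weights.
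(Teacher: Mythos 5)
Your overall architecture is the right one (an ancestor-path lemma giving $r(P(c_x,c_y))\ge \alpha^{-(x-y)}$, a split of $K_v$ by whether $z-\cluster(s)\le\Lambda(n)$, and composition of short-path estimate bounds with these structural lower bounds), and your handling of the out-of-window set and of the segment $u\to c_x\to c_z$ matches the paper. But there is a genuine gap in your in-window case. You claim that for every in-window $s$ the tree path from $c_z$ to $c_{\cluster(s)}$ is the length-two path $c_z\to c_{z'}\to c_{\cluster(s)}$. That is only true when $c_{\cluster(s)}$ is a child of $c_{z'}$. In general $K_v$ contains items whose cluster centers lie arbitrarily deep below $c_z$ (every descendant of the spine child of $c_{z'}$ has index below $\cluster(v)$, hence belongs to $K_v$, and can still satisfy $z-\cluster(s)\le\Lambda(n)$). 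For such $s$ two things break: the claimed length-two path does not exist, and---more fundamentally---the only fact you extract from the pivot-$c_z$ loop is a bound on the \emph{true} ratio $w_{c_{\cluster(s)}}/w_{c_z}\le\beta_{\cluster(s)}$, which cannot be transferred to the \emph{estimate} $r(P(c_z,s))$ along a long path, since \Cref{lem:non-adaptive-close-vertices} only controls estimates on paths of bounded length. Controlling sums of estimates along long paths is precisely what this lemma is supposed to establish, so this step cannot be waved through.

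The paper's proof closes this gap by localizing the threshold observation: for each in-window $b$ it takes $c_y$, the distance-two ancestor of $c_{\cluster(b)}$ (which need not equal $c_z$), argues that the while loop at pivot $c_y$ must have observed $\AlgBalancedEstimateRatio(\cdot,y,\cluster(b),\dots,\beta_{\cluster(b)},\dots)=\infty$ because $c_{\cluster(b)}$ is a grandchild rather than a child of $c_y$ and $y-\cluster(b)\le\Lambda(n)$, converts that true-weight bound into an estimate bound via \Cref{lem:non-adaptive-close-vertices} applied only to the length-two path $c_{\cluster(b)}\to c_y$, and then bridges from $c_y$ up to $c_z$ using the ancestor-path bound $r(P(c_y,c_z))\le\alpha^{z-y}$, which holds for estimates regardless of path length. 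You would need to add this per-vertex argument to make your in-window case go through. As a minor point, you identify the out-of-window case as the main obstacle, but that case is actually the easy one---it is handled purely by the ancestor-path lemma plus a single cluster-graph edge, with a per-item bound of order $\epsilon_1/n$ that does not scale with cluster population; the delicate case is the in-window one described above.
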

\begin{proof}
Since $d(u,v)\geq 6$, $d(c_{\cluster(u)}, c_{\cluster(v)})\geq 4$. Note that $\mathcal{C}$ is a tree. Let $R\in[T]$ be the maximum index such that $c_R\in \mathcal{C}$ and suppose the tree is rooted at $c_R$. Let $c_z$ be the ancestor of $c_{\cluster(v)}$ at distance 2 from $c_{\cluster(v)}$. Let $c_x$ be a vertex such that, (i) $\cluster(u) \geq x \geq z$, (ii) $d(c_{\cluster(u)}, c_x) \leq 2$, and (iii) $c_x$ is an ancestor of $c_{\cluster(v)}$. Note that there is always a sibling of $c_{\cluster(u)}$ with these properties (potentially, it might also be $c_x=c_{\cluster(u)}$ or $c_x=c_z$). Since $w_{c_{\cluster(u)}} \geq w_{c_x}$ and $d(u,c_x)\leq 3$, we have,
\begin{equation}\label{eq:non-adaptive-distant-est-P-cx-u}
    r(P(c_{x}, u)) \overset{\text{\Cref{lem:non-adaptive-close-vertices}}}{\leq}  (1+\epsilon_1)^3 \frac{w_{c_x}}{w_u}\leq (1+\epsilon_1)^3 \frac{w_{c_{\cluster(u)}}}{w_{u}} \leq \frac{7(1+\epsilon_1)^3}{\alpha},
\end{equation}
 where the last inequality follows by the properties of $(\frac{7}{\alpha}, \frac1\alpha, \epsilon_2)$-cluster graph. 

Note now that, if $c_i$ is an ancestor of $c_j$ for $i> j$, then, by construction, $r(P(c_i, c_j)) \geq \frac{1}{\alpha^{i-j}}$. Since $c_x$ is an ancestor of $c_z$, we have:
\begin{equation}\label{eq:ancestor-alpha-inequality}
    r(P(c_z, c_x)) = 1/r(P(c_x,c_z)) \leq \alpha^{x-z}.
\end{equation}

Let $K_v = \{s\in \mathcal{C} \mid \cluster(s) \leq \cluster(v)\}$. Let $A=\{s\in K_v \mid z-\cluster(s) > \Lambda(n)\}$ and $B= K_v \setminus A$. Note that $c_z$ is an ancestor of any vertex in $K_v$. Consider any $a\in A$, by using the properties of the $(\frac{7}{\alpha}, \frac{1}{\alpha},\epsilon_2)$-cluster graph, we have
\begin{align}\label{eq:non-adaptive-distant-est-P-a-cz}
    r(P(a, c_z)) &= r(a,c_{\cluster(a)})\cdot r(P(c_{\cluster(a)},c_z)) \leq (1+\epsilon_2)\cdot \frac{w_{a}}{w_{c_{\cluster(a)}}}\cdot \alpha^{z-\cluster(a)}  \leq 7(1+\epsilon_2) \alpha^{z-\cluster(a)-1} \nonumber\\
    & \leq 7(1+\epsilon_2) \alpha^{\Lambda(n)}  \leq 7(1+\epsilon_2) \alpha^{\log_{1/\alpha}(\frac{49\cdot n}{\alpha\cdot \epsilon_1})} \leq \frac{\alpha(1+\epsilon_1)\epsilon_1}{7n}.
\end{align}
Then, we have,
\begin{align}\label{eq:non-adaptive-distant-est-P-a-u}
    r(P(a, u)) &= r(P(a, c_z))\cdot r(P(c_z, c_x)) \cdot r(P(c_x, u)) \overset{\eqref{eq:ancestor-alpha-inequality}}{\leq} r(P(a, c_z))\cdot \alpha^{x-z} \cdot r(P(c_x, u)) \nonumber\\
    &\leq r(P(a, c_z)) \cdot r(P(c_x, u)) \overset{\eqref{eq:non-adaptive-distant-est-P-cx-u},\eqref{eq:non-adaptive-distant-est-P-a-cz}}{\leq} \frac{7\alpha(1+\epsilon_1)^4\epsilon_1}{7\alpha n} \leq \frac{2\epsilon_1}{n}, 
\end{align}
where we used that $\epsilon_1 \leq \frac{1}{10}$, and thus $(1+\epsilon_1)^4 < 2$.

Consider now any $b\in B$. Let $c_y$ be the ancestor of $c_{\cluster(b)}$ at distance 2 from $c_{\cluster(b)}$. Note that $c_y$ exists because $d(c_z, c_{\cluster(b)}) \geq 2$, and we have $z \geq y$ (it might also be $c_z=c_y$). In particular, $c_y$ is still a descendant of $c_z$. Since $z-\cluster(b) \leq \Lambda(n)$, we also have $y-\cluster(b) \leq \Lambda(n)$. Since $c_y$ is an ancestor of  $c_{\cluster(b)}$, $d(c_y, c_{\cluster(b)}) =2$ and $y-\cluster(b) \leq \Lambda(n)$, during the construction of the tree, it must have happened that $\AlgBalancedEstimateRatio(\dots, y, \cluster(b), \dots, \beta_{\cluster(b)}, \dots)=\infty$. But then, by \Cref{lem:correctness-of-ratio-estimator},

\begin{equation}\label{eq:non-adaptive-distant-est-cy-ccb}
    \frac{w_{c_y}}{w_{c_{\cluster(b)}}} \geq \frac{1}{\beta_{\cluster(b)}}.
\end{equation}
Putting these observations together, and also by using the definition of $(\frac{7}{\alpha}, \frac{1}{\alpha},\epsilon_2)$-cluster graph and \Cref{lem:non-adaptive-close-vertices}, we obtain:
\begin{align}\label{eq:non-adaptive-distant-est-P-b-cz}
    r(P(b,c_z)) & = r(P(b,c_{\cluster(b)})) \cdot r(c_{\cluster(b)}, c_y) \cdot r(c_y, c_z) \leq \frac{7(1+\epsilon_2)}{\alpha} \cdot (1+\epsilon_1)^2 \cdot \frac{w_{c_{\cluster(b)}}}{w_{c_y}} \cdot \alpha^{z-y} \nonumber\\
    & \overset{\eqref{eq:non-adaptive-distant-est-cy-ccb}}{\leq} 7(1+\epsilon_1)^3 \cdot \beta_{\cluster(b)} \cdot \alpha^{z-y-1} \leq \frac{(1+\epsilon_1)^3 \cdot \epsilon_1 \cdot \alpha^{z-y+1}}{7 \cdot |C_{\cluster(b)}| \cdot \Lambda(n)} \leq \frac{\alpha \cdot (1+\epsilon_1)^3 \cdot \epsilon_1}{7\cdot |C_{\cluster(b)}| \cdot \Lambda(n)}.
\end{align}
Observe that, since $\epsilon_1 \leq \frac{1}{10}$, it holds that $(1+\epsilon_1)^6 < 2$, and we have that,
\begin{align}\label{eq:non-adaptive-distant-est-P-b-u}
    r(P(b,u)) &= r(P(b,c_z)) \cdot r(P(c_z, c_x))\cdot r(P(c_x, u)) \overset{\eqref{eq:non-adaptive-distant-est-P-b-cz},\eqref{eq:ancestor-alpha-inequality},\eqref{eq:non-adaptive-distant-est-P-cx-u}}{\leq} \frac{\alpha \cdot (1+\epsilon_1)^3 \cdot \epsilon_1}{7\cdot |C_{\cluster(b)}| \cdot \Lambda(n)} \cdot \alpha^{x-z} \cdot \frac{7(1+\epsilon_1)^3}{\alpha} \nonumber\\
    & \leq \frac{ (1+\epsilon_1)^6 \epsilon_1}{ |C_{\cluster(b)}| \Lambda(n)} \leq \frac{2\epsilon_1}{|C_{\cluster(b)}| \Lambda(n)}.
\end{align}
Let $\mathcal{K}=\{\cluster(s) \mid s \in B\}$. Note that $|\mathcal{K}|\leq \Lambda(n)$. Finally, we have,
\begin{align*}
    \sum_{s\in K_v} r(P(s,u)) &= \sum_{a\in A} r(P(a,u))  + \sum_{b\in B} r(P(b,u)) \overset{\eqref{eq:non-adaptive-distant-est-P-a-u},\eqref{eq:non-adaptive-distant-est-P-b-u}}{\leq} \frac{2|A|\epsilon_1}{n} + \sum_{k\in \mathcal{K}} \sum_{b \in C_k} \frac{2\epsilon_1}{|C_k|\Lambda(n)} \\
    & \leq 2\epsilon_1 + \frac{2|\mathcal{K}|\epsilon_1}{\Lambda(n)} \leq 4\epsilon_1 \leq \epsilon. \qedhere
\end{align*}
\end{proof}

\begin{proof}[Proof of \Cref{thm:non-adaptive-estimation-forest-creation}] 
With probability at least $1-\frac{\delta}{4}$, $\AlgQuicksortClustering$ correctly computes a cluster graph and queries each pair at most $O(\frac{\log(n/\delta)}{\epsilon^2})$ times. Moreover, each call made to $\AlgBalancedEstimateRatio$ is correct with probability at least $1-\frac{\delta}{4n^2}$. We make no more than $n^2$ such calls, so they are all correct with probability at least $1-\frac{\delta}{4}$. Thus, by the union bound, with probability at least $1-\frac{\delta}{2}$, both the $\AlgQuicksortClustering$ call and all the $\AlgBalancedEstimateRatio$ calls are correct. Conditioning on this event, we have that \Cref{lem:non-adaptive-close-vertices}, \Cref{lem:non-adaptive-distant-vertices-real}, and \Cref{lem:non-adaptive-distant-vertices-estimates} hold and this gives the first three properties for a $(5,\epsilon)$-estimation-forest except for the last part of the third point.

Note that vertices with the same cluster index are at distance at most two and for any tree in the forest, if we let $x$ (resp. $y$) be the minimum (resp. maximum) cluster index in the tree, then the vertex set of the tree is $\cup_{i=x}^y C_i$. This ensures that all four properties are satisfied and therefore the algorithm returns a $(5,\epsilon)$-estimation-forest. Moreover, by \Cref{lem:non-adaptive-query-cost-algorithm}, each pair is queried at most $O(\frac{\log (n/\epsilon)\log(n/\delta)}{\epsilon^3})$ times with probability at least $1-\frac{\delta}{2}$. Thus, with probability at least $1-\delta$ the algorithm is both correct and has the desired query complexity.
\end{proof}

This in turn completes the proof of \Cref{thm:adaptive-balanced}.

\section{Lower Bounds}\label{sec:lower-bounds}
\newcommand{\maxindex}[1]{i(#1)}
In order to prove lower bounds on the query complexity of the MNL learning task we consider the following family of instances. 

We denote by $Sym(n)$ the set of permutations of $[n]$. For any even $n$, given a permutation $\pi\in Sym(n)$ we shall think of $\pi$ as a way of partitioning the set $[n]$ into $n/2$ pairs $P^{\pi}_1, \dots , P^{\pi}_{n/2}$ where:
\[
    P^{\pi}_i := \{\pi(2i-1), \pi(2i)\}.
\]
Given a non-empty set $S\subseteq[n]$ its \emph{highest pair} with respect to $\pi$ is the pair $P_{\maxindex{\pi,S}}^\pi$ where:
\[
    \maxindex{\pi,S} := \max_{P_i^\pi \cap S \neq \varnothing} i. 
\]
When $\pi$ is the identity permutation, we simply write $P_i$ and $\maxindex{S}$.

\begin{restatable}[Matching pseudo-MNL]{definition}{MatchingPseudoMNL}\label{def:matching-pseudo-mnl}
Given an even number $n \in \N$, a vector $\vec{p} \in [0,1]^{n/2}$, and a permutation $\pi \in Sym(n)$, the \textit{matching pseudo-MNL} $\overline{M}(n, \vec{p},\pi)$ supported on $[n]$ has the following $\maxsample$ distributions. The winner of a slate $S$ is always an item of its highest pair $P_{\maxindex{\pi,S}}^\pi = \{\pi(2\cdot \maxindex{\pi,S}-1), \pi(2\cdot \maxindex{\pi,S})\}$. If only one of these items belongs to $S$ then that item is the winner. Otherwise the winner is chosen to be $\pi(2\cdot \maxindex{\pi,S})$ with probability $\vec{p}_{\maxindex{\pi,S}}$ and $\pi(2\cdot \maxindex{\pi,S}-1)$ with the remaining probability.
\end{restatable}

As a shorthand, we will denote by $\overline{M}(n,\vec{p}):=\overline{M}(n,\vec{p},\identityperm)$ where $\identityperm:[n]\to [n]$ is the identity permutation.

\begin{figure}
    \centering
    \tikzset{every picture/.style={line width=0.75pt}} %

\begin{tikzpicture}[x=0.75pt,y=0.75pt,yscale=-1,xscale=1]
\draw  [fill={rgb, 255:red, 0; green, 0; blue, 0 }  ,fill opacity=1 ] (109,106) .. controls (109,102.69) and (111.69,100) .. (115,100) .. controls (118.31,100) and (121,102.69) .. (121,106) .. controls (121,109.31) and (118.31,112) .. (115,112) .. controls (111.69,112) and (109,109.31) .. (109,106) -- cycle ;
\draw  [fill={rgb, 255:red, 0; green, 0; blue, 0 }  ,fill opacity=1 ] (109,55) .. controls (109,51.69) and (111.69,49) .. (115,49) .. controls (118.31,49) and (121,51.69) .. (121,55) .. controls (121,58.31) and (118.31,61) .. (115,61) .. controls (111.69,61) and (109,58.31) .. (109,55) -- cycle ;
\draw  [fill={rgb, 255:red, 0; green, 0; blue, 0 }  ,fill opacity=1 ] (189,106) .. controls (189,102.69) and (191.69,100) .. (195,100) .. controls (198.31,100) and (201,102.69) .. (201,106) .. controls (201,109.31) and (198.31,112) .. (195,112) .. controls (191.69,112) and (189,109.31) .. (189,106) -- cycle ;
\draw  [fill={rgb, 255:red, 0; green, 0; blue, 0 }  ,fill opacity=1 ] (189,55) .. controls (189,51.69) and (191.69,49) .. (195,49) .. controls (198.31,49) and (201,51.69) .. (201,55) .. controls (201,58.31) and (198.31,61) .. (195,61) .. controls (191.69,61) and (189,58.31) .. (189,55) -- cycle ;
\draw  [fill={rgb, 255:red, 0; green, 0; blue, 0 }  ,fill opacity=1 ] (349,106) .. controls (349,102.69) and (351.69,100) .. (355,100) .. controls (358.31,100) and (361,102.69) .. (361,106) .. controls (361,109.31) and (358.31,112) .. (355,112) .. controls (351.69,112) and (349,109.31) .. (349,106) -- cycle ;
\draw  [fill={rgb, 255:red, 0; green, 0; blue, 0 }  ,fill opacity=1 ] (349,55) .. controls (349,51.69) and (351.69,49) .. (355,49) .. controls (358.31,49) and (361,51.69) .. (361,55) .. controls (361,58.31) and (358.31,61) .. (355,61) .. controls (351.69,61) and (349,58.31) .. (349,55) -- cycle ;
\draw  [fill={rgb, 255:red, 0; green, 0; blue, 0 }  ,fill opacity=1 ] (439,106) .. controls (439,102.69) and (441.69,100) .. (445,100) .. controls (448.31,100) and (451,102.69) .. (451,106) .. controls (451,109.31) and (448.31,112) .. (445,112) .. controls (441.69,112) and (439,109.31) .. (439,106) -- cycle ;
\draw  [fill={rgb, 255:red, 0; green, 0; blue, 0 }  ,fill opacity=1 ] (439,55) .. controls (439,51.69) and (441.69,49) .. (445,49) .. controls (448.31,49) and (451,51.69) .. (451,55) .. controls (451,58.31) and (448.31,61) .. (445,61) .. controls (441.69,61) and (439,58.31) .. (439,55) -- cycle ;
\draw    (70,50) -- (70,118) ;
\draw [shift={(70,120)}, rotate = 270] [color={rgb, 255:red, 0; green, 0; blue, 0 }  ][line width=0.75]    (10.93,-3.29) .. controls (6.95,-1.4) and (3.31,-0.3) .. (0,0) .. controls (3.31,0.3) and (6.95,1.4) .. (10.93,3.29)   ;
\draw [shift={(70,48)}, rotate = 90] [color={rgb, 255:red, 0; green, 0; blue, 0 }  ][line width=0.75]    (10.93,-3.29) .. controls (6.95,-1.4) and (3.31,-0.3) .. (0,0) .. controls (3.31,0.3) and (6.95,1.4) .. (10.93,3.29)   ;
\draw    (103,176) -- (467,176) ;
\draw [shift={(469,176)}, rotate = 180] [color={rgb, 255:red, 0; green, 0; blue, 0 }  ][line width=0.75]    (10.93,-3.29) .. controls (6.95,-1.4) and (3.31,-0.3) .. (0,0) .. controls (3.31,0.3) and (6.95,1.4) .. (10.93,3.29)   ;
\draw (260,61.4) node [anchor=north west][inner sep=0.75pt]  [font=\LARGE]  {$...$};
\draw (99,121.4) node [anchor=north west][inner sep=0.75pt]    {$\pi ( 1)$};
\draw (98,21.4) node [anchor=north west][inner sep=0.75pt]    {$\pi ( 2)$};
\draw (179,121.4) node [anchor=north west][inner sep=0.75pt]    {$\pi ( 3)$};
\draw (178,21.4) node [anchor=north west][inner sep=0.75pt]    {$\pi ( 4)$};
\draw (327,121.4) node [anchor=north west][inner sep=0.75pt]    {$\pi ( n-3)$};
\draw (327,21.4) node [anchor=north west][inner sep=0.75pt]    {$\pi ( n-2)$};
\draw (414,123.4) node [anchor=north west][inner sep=0.75pt]    {$\pi ( n-1)$};
\draw (431,23.4) node [anchor=north west][inner sep=0.75pt]    {$\pi ( n)$};
\draw (0,50) node [anchor=north west][inner sep=0.75pt]   [align=left] {\begin{minipage}[lt]{37.32pt}\setlength\topsep{0pt}
\begin{center}
the winner is \\sampled\\ according to $\vec{p}$
\end{center}

\end{minipage}};
\draw (162,188) node [anchor=north west][inner sep=0.75pt]   [align=left] {the winner is always a right-most element};

\end{tikzpicture}
    \caption{The instance $\overline{M}(n,\vec{p},\pi)$ used for the lower bounds. Among two items in the same pair $\pi(2i-1)$ and $\pi(2i)$ the latter wins with probability $\vec{p}_i$. The winner is always an item of the right-most pair.}
    \label{fig:lower-bound}
\end{figure}
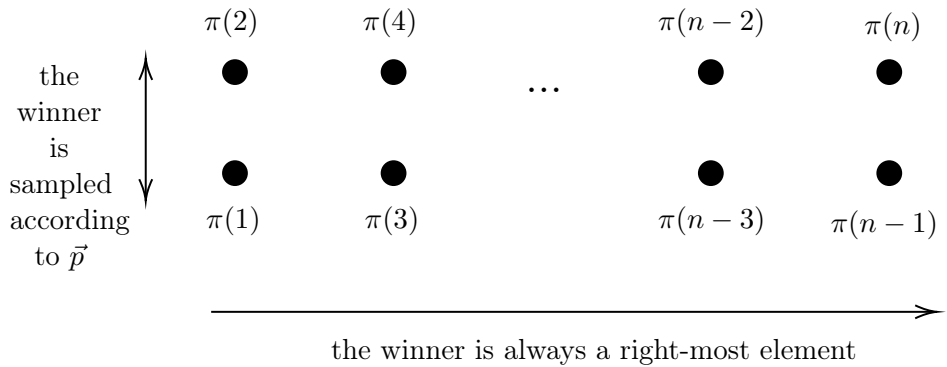

Note that the objects introduced in Definition~\ref{def:matching-pseudo-mnl} are not technically MNLs, but they are limits of a sequences of MNLs, and in particular, any lower bound on algorithms to learn objects of this kind immediately implies a lower bound on learning MNLs, as per the following result (which we prove in Appendix~\ref{sec:extension-to-pseudo-mnl}).

\begin{restatable}{proposition}{LearningMatchingPseudoMNL}\label{prop:learning-matching-psuedo-mnl}
    Suppose there exists a potentially randomized algorithm $\mathcal{A}$ that takes as input $\varepsilon, \delta\in(0,1)$ and access to a $\maxsample$ oracle for an MNL $M$ supported on $[n]$, and after making at most $m(n,\varepsilon,\delta)$ queries outputs an MNL $\hat{M}$ such that:
    \[
        \Pr[d_\infty(M,\hat{M}) \leq \varepsilon] \geq 1-\delta.
    \]
    Then, the same algorithm, when given as input $\varepsilon,\delta\in(0,1)$ and access to a $\maxsample$ oracle for a matching pseudo-MNL $\overline{M}$ makes at most $m(n,\epsilon,\delta)$ queries and outputs an MNL $\hat{M}$ such that:
    \[
        \Pr[d_\infty(\overline{M},\hat{M}) \leq \varepsilon] \geq 1-\delta.
    \] 
\end{restatable}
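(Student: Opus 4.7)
\medskip

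\noindent\textbf{Proof plan.} The idea is to realize the matching pseudo-MNL $\overline{M}=\overline{M}(n,\vec{p},\pi)$ as the limit of a sequence of genuine MNLs, and then transfer the guarantees of $\mathcal{A}$ across the limit. Concretely, for each integer $k\geq 1$ pick a scale $W_k\to\infty$ and a slack $\eta_k\to 0^+$, and define the MNL $M^{(k)}$ on $[n]$ by assigning to $\pi(2i-1)$ the weight $W_k^{\,i}(1-\vec p_i+\eta_k)$ and to $\pi(2i)$ the weight $W_k^{\,i}(\vec p_i+\eta_k)$. These weights are strictly positive, and a direct computation shows that, for every non-empty slate $S\subseteq[n]$ with highest pair index $i^\ast=i(\pi,S)$, the total weight of $S\setminus P^\pi_{i^\ast}$ is at most $O(n\,W_k^{\,i^\ast-1})$ whereas the weight inside $P^\pi_{i^\ast}$ is $\Theta(W_k^{\,i^\ast})$; hence $M^{(k)}_S(i)\to \overline{M}_S(i)$ for every $i\in S$. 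Since there are only finitely many slates, this in fact gives $d_\infty(M^{(k)},\overline{M})\to 0$.

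Next, I would argue that the full behavior of $\mathcal{A}$ is continuous in the oracle. Fix any realization of $\mathcal{A}$'s internal randomness and let $T_k$, $T$ denote the random transcripts (query sequence together with oracle answers and the final output $\hat M$) produced when $\mathcal{A}$ interacts with $M^{(k)}$ and with $\overline{M}$, respectively. Since $\mathcal{A}$ issues at most $m=m(n,\epsilon,\delta)$ queries, by a standard chain-rule / hybrid argument
\[
\|\mathrm{Law}(T_k)-\mathrm{Law}(T)\|_{\mathrm{TV}} \;\leq\; m\cdot \max_{S\subseteq[n]} \|M^{(k)}_S-\overline{M}_S\|_{\mathrm{TV}} \;\leq\; m\cdot d_\infty(M^{(k)},\overline{M})\cdot n/2,
\]
which tends to $0$ as $k\to\infty$. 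Averaging over $\mathcal{A}$'s internal randomness preserves this bound. In particular, the event ``$\mathcal{A}$ made at most $m$ queries'' has probability $1$ under each $\mathrm{Law}(T_k)$ and therefore (as a TV-continuous statement) also under $\mathrm{Law}(T)$; this handles the query-complexity half of the conclusion.

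For the accuracy half, I would combine this TV convergence with the triangle inequality $|d_\infty(M^{(k)},\hat M)-d_\infty(\overline{M},\hat M)|\leq d_\infty(M^{(k)},\overline{M})$. Fix any $\eta>0$. For all sufficiently large $k$, the TV bound gives
\[
\Pr_{T}\!\left[d_\infty(\overline{M},\hat M)\leq \epsilon+\eta\right] \;\geq\; \Pr_{T_k}\!\left[d_\infty(\overline{M},\hat M)\leq \epsilon+\eta\right] - \eta,
\]
and, picking $k$ large enough that $d_\infty(M^{(k)},\overline{M})\leq \eta$, the event $\{d_\infty(M^{(k)},\hat M)\leq \epsilon\}\subseteq\{d_\infty(\overline{M},\hat M)\leq \epsilon+\eta\}$, which has $T_k$-probability $\geq 1-\delta$ by the hypothesis on $\mathcal{A}$. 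Thus $\Pr[d_\infty(\overline{M},\hat M)\leq \epsilon+\eta]\geq 1-\delta-\eta$ for every $\eta>0$; by continuity of probability from above applied to the nested events $\{d_\infty(\overline{M},\hat M)\leq \epsilon+1/j\}$, sending $\eta\to 0^+$ yields $\Pr[d_\infty(\overline{M},\hat M)\leq\epsilon]\geq 1-\delta$, as required.

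\medskip

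\noindent\textbf{Expected main obstacle.} The only delicate point is the transcript-convergence step: one must be careful that $\mathcal{A}$ may be adaptive and may use unbounded internal randomness, so the coupling of executions on $M^{(k)}$ and $\overline{M}$ has to be done query-by-query via the hybrid argument above, rather than attempting a single coupling of all responses at once. The other ingredient that requires attention is a minor one—ensuring the construction of $M^{(k)}$ yields strictly positive weights even when some $\vec p_i\in\{0,1\}$ (which is why the auxiliary slack $\eta_k$ is included); everything else reduces to routine estimates.
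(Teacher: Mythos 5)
Your proposal is correct and follows essentially the same route as the paper, which factors the argument into (i) exhibiting an explicit sequence of MNLs converging to $\overline{M}$ in $d_\infty$ (the paper uses geometric weight ratios $j^3$ between pairs with a case analysis for $\vec p_i\in\{0,1\}$, where you use a uniform slack $\eta_k$) and (ii) a transcript-convergence argument followed by the same limit over nested events $\{d_\infty(\overline M,\hat M)\le\epsilon+1/j\}$. Your hybrid bound of $m\cdot\max_S\|M^{(k)}_S-\overline M_S\|_{\mathrm{TV}}$ on the transcript laws is a slightly cleaner quantitative version of the paper's explicit transcript-probability estimate (which carries a $2^m$ factor), but both vanish in the limit, so the arguments are interchangeable.
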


    \subsection{Lower Bound for Approximate Coin Selection}
    
    \begin{definition}[Approximate top-$n\over 2$ coin selection problem]
        Let $n\geq 2$ be even, $\varepsilon\in(0,1)$ and $\delta\in (0,1)$. The $(\varepsilon, \delta)$-approximate top-${n\over 2}$ coin selection problem is defined as follows. Given access to $n$ Bernoulli distributions with unknown parameters $p_1, \dots , p_n$, find, with probability at least $1-\delta$ a subset $C\subseteq [n]$, with $|C| = n/2$ such that:
        \[
            \forall c \in C : \hspace{3mm} p_c \geq p_{*} - \varepsilon,
        \]
        where $p_*$ is the $(n/2)$th largest element of $ \{p_1, \dots ,p_n\}$.
    \end{definition} 

    This problem is also known, in the multi-armed bandits literature, as the Explore-$m$ problem or PAC subset selection in Stochastic Bandits \citep{ktas12} for $m = n/2$. We can therefore leverage the lower bounds on this latter problem to obtain the following theorem.
    
    \begin{theorem}[Theorem 8 from \cite{ktas12}---Paraphrased]\label{thm:lower-bound-on-bandit-problem}
        Let $n\geq 2$ be even, $\varepsilon \in \left(0, \sqrt{1\over 32}\right)$, $\delta \in\left(0,{1\over 4}\right)$ then, for any (potentially randomized and adaptive) algorithm $\mathcal{A}$ for the $(\varepsilon,\delta)$-approximate top-{$n\over 2$} coin selection problem there is an instance on which the algorithm requires at least ${1\over 18375}\cdot  {n\over \varepsilon^2} \ln\left( {n\over 16\cdot \delta}\right)$ coin tosses.
    \end{theorem}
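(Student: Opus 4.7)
The theorem is a near-verbatim restatement of a result in \citep{ktas12}, so the plan is essentially to track what goes into its proof and verify that the paraphrased parameters match. I would present the argument as follows.

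First, I would exhibit a family of hard instances. Consider the baseline instance $I_0$ in which coins $1,\dots,n/2$ have bias $\tfrac12+2\varepsilon$ and coins $n/2+1,\dots,n$ have bias $\tfrac12$. Under $I_0$ the unique top-$n/2$ set is $\{1,\dots,n/2\}$ and the required answer is determined up to the $\varepsilon$ slack. For each pair $(i,j)$ with $i\in[n/2]$ and $j\in\{n/2+1,\dots,n\}$, let $I_{i,j}$ be the instance obtained from $I_0$ by swapping the biases of coins $i$ and $j$. Any algorithm that is $(\varepsilon,\delta)$-correct on both $I_0$ and $I_{i,j}$ must output, with probability $\ge 1-\delta$, a set containing $i$ on $I_0$ and a set not containing $i$ (but containing $j$) on $I_{i,j}$.

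Next, I would invoke the standard change-of-measure lemma for bandits (e.g., Kaufmann--Capp\'e--Garivier). Writing $\tau_i$ for the expected number of pulls of arm $i$ under the algorithm run on $I_0$, distinguishing $I_0$ from $I_{i,j}$ forces
\[
    \tau_i\cdot \mathrm{KL}\!\left(\tfrac12+2\varepsilon,\tfrac12\right) + \tau_j\cdot \mathrm{KL}\!\left(\tfrac12,\tfrac12+2\varepsilon\right) \;\geq\; \mathrm{kl}(\delta,\,1-\delta),
\]
where both Bernoulli KL terms are $O(\varepsilon^2)$ for $\varepsilon$ small (using that $\varepsilon<\sqrt{1/32}$ keeps us safely in the quadratic regime of Pinsker/reverse-Pinsker), and $\mathrm{kl}(\delta,1-\delta)=\Omega(\log(1/\delta))$ for $\delta<1/4$. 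This yields $\tau_i+\tau_j=\Omega(\varepsilon^{-2}\log(1/\delta))$ for every such pair. Summing over a matching of $i$'s to distinct $j$'s and then averaging over the $n/2$ choices of $i$, I get that the total number of pulls under $I_0$ is $\Omega(n\varepsilon^{-2}\log(1/\delta))$.

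The main obstacle is upgrading the $\log(1/\delta)$ factor to $\log(n/\delta)$. Here I would follow the refined argument of \citep{ktas12}: instead of pairing each $i$ with a single $j$, one considers for each $i$ all the $n/2$ possible swaps $I_{i,j}$ simultaneously, and uses a Fano/union-type argument together with the observation that the algorithm's output can coincide with the correct top-$n/2$ on at most one of the many instances consistent with a given execution. This refinement contributes an extra $\log n$ factor, yielding the claimed $\tfrac{1}{18375}\cdot \tfrac{n}{\varepsilon^2}\log\!\tfrac{n}{16\delta}$ bound; the explicit constant comes from carefully tracking the KL and Pinsker constants in the regime $\varepsilon<\sqrt{1/32}$ and $\delta<1/4$. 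Finally I would just verify that the paraphrased hypotheses ($n$ even, $\varepsilon\in(0,\sqrt{1/32})$, $\delta\in(0,1/4)$) are exactly the ones under which the referenced theorem was proved, and apply it as a black box.
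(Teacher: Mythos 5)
The paper gives no internal proof of this statement: it is used purely as a paraphrased black-box citation of Theorem~8 of \citep{ktas12}, instantiated with $m=\tfrac n2$ so that their $\ln\!\left(\tfrac{m}{8\delta}\right)$ becomes $\ln\!\left(\tfrac{n}{16\delta}\right)$ and the constant $\tfrac{1}{18375}$ carries over unchanged; your final step---check that the hypotheses ($n$ even, $\varepsilon<\sqrt{1/32}$, $\delta<1/4$) match and apply the cited theorem---is exactly what the paper does, so the proposal is consistent with the paper's approach. Just be aware that your preceding change-of-measure sketch is not itself a complete proof: the pairwise-swap argument as written only yields the $\log(1/\delta)$ rate, and the upgrade to $\log(n/\delta)$---which is precisely the nontrivial content of the cited theorem---is deferred back to \citep{ktas12} rather than established, so the sketch should be read as intuition accompanying the citation, not as an independent derivation.
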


    \subsection{Lower Bound for Adaptive Algorithms}\label{sec:adaptive-lower-bound}
    In this section, we prove the following lower bound.

    \AdaptiveLowerBound*

    The lower bound follows directly from the following lemma as well as \Cref{thm:lower-bound-on-bandit-problem}.
    \begin{lemma}\label{lem:adaptive-bandit-reduction}    
        Let $n\in\mathbb{N}$ be a multiple of $4$, and let $\varepsilon\in(0,\frac12)$,  $\delta\in(0,1)$. Suppose there is a (potentially randomized and adaptive) algorithm $\mathcal{A}$ that for any MNL $M$ on $[n]$ makes at most $m(n,\varepsilon,\delta)$ queries and then outputs an MNL $\hat{M}$ which satisfies $d_\infty(M,\hat{M})\leq \varepsilon$ with probability at least $1-\delta$. Then there exists an algorithm $\mathcal{B}$ for the $(2\varepsilon,\delta)$-approximate top-${n\over 4}$ coin selection problem with worst-case query complexity $m(n,\varepsilon,\delta)$.
    \end{lemma}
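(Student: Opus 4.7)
The plan is to use a direct reduction: given an instance of the $(2\varepsilon,\delta)$-approximate top-$n/4$ coin selection problem with $n/2$ biased coins $p_1,\dots,p_{n/2}$, the algorithm $\mathcal{B}$ will simulate the $\maxsample$ oracle for the matching pseudo-MNL $\overline{M}(n,\vec{p})$ using those very coins, run $\mathcal{A}$ to obtain an estimate $\hat{M}$, and then read off the desired top-$n/4$ set from $\hat{M}$. By \Cref{prop:learning-matching-psuedo-mnl}, the guarantees of $\mathcal{A}$ transfer from true MNLs to matching pseudo-MNLs, so $d_\infty(\overline{M},\hat{M})\leq\varepsilon$ will hold with probability at least $1-\delta$ and the query budget remains $m(n,\varepsilon,\delta)$.

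The key implementation point is that each $\maxsample$ query against $\overline{M}(n,\vec{p})$ can be answered using \emph{at most one} coin toss. Given a slate $S$, $\mathcal{B}$ computes $\maxindex{S}$ (the index of the highest pair intersecting $S$, which only depends on $S$, not on the hidden biases). If only one of $\{2\maxindex{S}-1,2\maxindex{S}\}$ is in $S$, that item is declared the winner with no toss. Otherwise $\mathcal{B}$ tosses coin $\maxindex{S}$: on heads, return $2\maxindex{S}$; on tails, return $2\maxindex{S}-1$. By construction this reproduces exactly the distribution in \Cref{def:matching-pseudo-mnl}, so the combined procedure uses at most $m(n,\varepsilon,\delta)$ coin tosses overall.

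Finally, we must extract a good top-$n/4$ set from $\hat{M}$. For each pair $P_i$, set $\hat{p}_i := \hat{M}_{P_i}(2i)$; since $\overline{M}_{P_i}(2i) = p_i$, the $d_\infty$ bound gives $|\hat{p}_i - p_i|\leq \varepsilon$ for all $i$ simultaneously on the $1-\delta$ success event. $\mathcal{B}$ then outputs the $n/4$ coins with largest $\hat{p}_i$, breaking ties arbitrarily. To verify this is a $(2\varepsilon,\delta)$-approximate top-$n/4$ solution, let $p_*$ be the $(n/4)$-th largest true bias and let $T$ be any set of $n/4$ coins realizing the top. If some selected $c$ had $p_c < p_* - 2\varepsilon$, then $\hat{p}_c < p_* - \varepsilon$, while every $t\in T$ satisfies $\hat{p}_t\geq p_t-\varepsilon\geq p_* - \varepsilon > \hat{p}_c$. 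This would place $c$ below all $n/4$ coins in $T$ in the $\hat{p}$-ranking, contradicting its selection. Combined with \Cref{thm:lower-bound-on-bandit-problem} applied to the $n/2$-coin top-$n/4$ problem (with error $2\varepsilon$), this reduction immediately yields \Cref{thm:adaptive-lower-bound}. There is no substantive obstacle here—the statement is essentially a careful bookkeeping of the reduction, and the only subtlety worth flagging is confirming that one coin toss suffices per $\maxsample$ query so that the query complexity transfers cleanly.
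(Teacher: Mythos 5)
Your proof is correct and follows essentially the same reduction as the paper: simulate the $\maxsample$ oracle for the matching pseudo-MNL $\overline{M}(n,\vec{p})$ using at most one coin toss per query (only when the slate contains both elements of its highest pair), invoke Proposition~\ref{prop:learning-matching-psuedo-mnl} to transfer $\mathcal{A}$'s $d_\infty$ guarantee to pseudo-MNLs without increasing the query count, and then output the $n/4$ pairs with the largest $\hat{M}_{\{2i-1,2i\}}(2i)$. One small remark: your concluding step, which argues by contradiction against the true top-$n/4$ set $T$ that every selected coin satisfies $p_c \geq p_* - 2\varepsilon$, is in fact slightly more explicit than the paper's; the paper's displayed chain only establishes $\vec{p}_{s_i} \geq \vec{p}_{s_{n/4}} - 2\varepsilon$ and leaves implicit the link between $\hat{p}_{s_{n/4}}$ (the $(n/4)$-th largest estimated probability) and the true threshold $p_*$, a gap that your argument closes directly.
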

    \begin{proof}
    Let $\vec{p}=p_1, \dots, p_{n/2}$ be the unknown parameters of the $n/2$ Bernoulli distributions for an instance of the $(\epsilon,\delta)$-approximate top-$\frac{n}{4}$ coin selection problem. By \Cref{prop:learning-matching-psuedo-mnl}, algorithm $\mathcal{A}$ must also satisfy $\Pr[d_\infty(\overline{M}(n,\vec{p}),\hat{M}) \leq \epsilon] \geq 1-\delta$, where $\overline{M}(n,\vec{p})$ is the matching pseudo-MNL of \Cref{def:matching-pseudo-mnl}. Moreover, $\mathcal{A}$ makes at most $m(n,\varepsilon,\delta)$ queries in this latter setting too.

    We now describe an algorithm $\mathcal{B}$ that solves the instance of $(\epsilon,\delta)$-approximate top-$\frac{n}{4}$ coin selection with at most $m(n,\epsilon,\delta)$ queries.
    Intuitively, $\mathcal{B}$ simulates algorithm $\mathcal{A}$ with access to $\overline{M}(n,\vec{p})$. 
    
    Let $P_{\maxindex{S}}$ be the highest pair in $S$. Whenever $\mathcal{A}$ makes a query $S_j \subseteq [n]$, if $|P_{\maxindex{S_j}} \cap S_j|=1$ then $\mathcal{B}$ returns the unique item in $P_{\maxindex{S_j}} \cap S_j$; if instead $|P_{\maxindex{S_j}} \cap S_j|=2$, then $\mathcal{B}$ samples from the $\maxindex{S_j}$th Bernoulli distribution (which has parameter $\vec{p}_{\maxindex{S_j}}$), and obtains a bit $b_j$ and then it returns $2\cdot \maxindex{S_j} -1+b_j$ to $\mathcal{A}$.

    When $\mathcal{A}$ terminates, it outputs the weights of an MNL $\hat{M}$. $\mathcal{B}$ then sorts the elements of $[n/2]$ into a sequence $(s_1, \dots, s_{n/2})$ so that:
    \[
        \hat{M}_{\{2s_{1}-1,2s_1\}}(2s_1) \geq \cdots \geq \hat{M}_{\{2s_{n/2}-1,2s_{n/ 2}\}}(2s_{n / 2})
    \]
    and returns $\{s_1, \dots ,s_{n/4}\}$.

    In order to see that this satisfies the required guarantees, we first note that, by construction, $\mathcal{B}$'s responses to $\mathcal{A}$'s queries are distributed like the responses of a $\maxsample$ oracle for $\overline{M}(n,\vec{p})$. Therefore, under the conditioning that $d_{\infty}(\overline{M}(n,\vec{p}), \hat{M}) \leq \varepsilon$, we have:

    \[
        \hat{M}_{\{2s_{i}-1,2s_i\}}(2s_i) \in \vec{p}_{s_i} \pm \varepsilon.
    \]
    Hence, for all $i\in [n/4]$:
    \[
        \vec{p}_{s_i} \geq \hat{M}_{\{2s_{i}-1,2s_i\}}(2s_i)-\varepsilon\geq \hat{M}_{\{2s_{n/4}-1,2s_{n/4}\}}(2s_{n/4})-\varepsilon \geq \vec{p}_{s_{n/4}}-2\varepsilon.
    \]
    Since $\Pr[d_\infty(\overline{M}(n,\vec{p}),\hat{M}) \leq \epsilon] \geq 1-\delta$, we have that $\mathcal{B}$ solves the $(2\varepsilon,\delta)$-approximate top-$n\over 4$ coin selection problem with $m(n,\epsilon,\delta)$ queries.
    \end{proof}

    \subsection{Lower Bounds for Non-Adaptive Algorithms}\label{sec:non-adaptive-lower-bound}
    In this section, we prove the following:
    
    \NonAdaptiveLowerBound*
    We first introduce some key definitions. Note that every $\maxsample$ query made by an algorithm can be identified with a subset $S_j \subseteq [n]$. We shall assume without loss of generality that all algorithms only make queries of cardinality at least $2$. Consider an algorithm that has access to a $\maxsample$ oracle for a matching pseudo-MNL $M(n,\vec{p}, \pi)$.

    We say that query $S_j$ \emph{highlights} the pair $P_i^\pi$ if $P_i^\pi$ is the highest pair in $S_j$ and $P_i^\pi \subseteq S_j$. We have the following result. 
    \begin{lemma}\label{lem:low-highlighting}
        Let $\mathcal{A}$ be a potentially randomized non-adaptive MNL learning algorithm that makes at most $m$ queries. Consider the process of running $\mathcal{A}$ with $\maxsample$ access to the matching pseudo-MNL $\overline{M}(n,\vec{p}, \pi)$ of \Cref{def:matching-pseudo-mnl}, where $\pi\sim Sym(n)$ is a permutation of $[n]$ chosen uniformly at random, and $\vec{p}\in[0,1]^{n/2}$. Let $\Lambda$ be the event that the number of queries made by the algorithm that highlight one of the pairs $P^\pi_1, \dots ,P^\pi_{\left\lfloor {n\over12e}\right\rfloor}$ is less than or equal to ${10m \over 7n}$, then:
        \[
            \Pr[\Lambda] \ge {9\over 10}.
        \]
    \end{lemma}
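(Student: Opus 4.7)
The plan is to treat the $m$ queries $S_1,\dots,S_m$ as deterministic. Because $\mathcal{A}$ is non-adaptive its queries do not depend on any $\maxsample$ responses, so by conditioning on $\mathcal{A}$'s internal random tape we may assume the $S_j$'s are fixed (any bound that holds pointwise over the algorithm's randomness transfers marginally). Write $L:=\lfloor n/(12e)\rfloor$, $k_j:=|S_j|\ge 2$, and let $Z_j$ be the indicator that $S_j$ highlights some pair in $\{P^\pi_1,\dots,P^\pi_L\}$. I would bound $\mathbb{E}_\pi\!\left[\sum_j Z_j\right]$ and then finish with Markov's inequality.

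The key step is to compute $\Pr_\pi[Z_j=1]$ for a fixed $S_j$. Letting $T_j:=\pi^{-1}(S_j)$, which is a uniformly random $k_j$-subset of $[n]$, the event ``$S_j$ highlights exactly $P^\pi_i$'' translates into $T_j\subseteq\{1,\dots,2i\}$ together with $\{2i-1,2i\}\subseteq T_j$; the number of such $T_j$ is $\binom{2i-2}{k_j-2}$. Summing over $i\in[L]$ and bounding the restricted (even-indexed) sum by the full hockey-stick sum $\sum_{m=0}^{2L-2}\binom{m}{k_j-2}=\binom{2L-1}{k_j-1}$ gives
\[
\Pr[Z_j=1]=\frac{1}{\binom{n}{k_j}}\sum_{i=1}^{L}\binom{2i-2}{k_j-2}\le\frac{\binom{2L-1}{k_j-1}}{\binom{n}{k_j}}=\frac{k_j}{n}\prod_{i=1}^{k_j-1}\frac{2L-i}{n-i}.
\]
Since $2L\le n/(6e)\le n$, an easy algebraic check shows $(2L-i)/(n-i)\le 2L/n\le 1/(6e)$ for every $i\ge 1$, so $\Pr[Z_j=1]\le \frac{k_j}{n\,(6e)^{k_j-1}}$. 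Viewed as a function of $k_j\ge 2$ this is maximized at $k_j=2$ (successive terms shrink by a factor $\frac{k_j+1}{k_j}\cdot\frac{1}{6e}<1$), giving the clean bound $\Pr[Z_j=1]\le \frac{1}{3en}$ for every $j$.

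To conclude, summing over queries yields $\mathbb{E}_\pi\!\left[\sum_j Z_j\right]\le \frac{m}{3en}$, and Markov's inequality gives
\[
\Pr\!\left[\sum_j Z_j>\frac{10m}{7n}\right]\le \frac{m/(3en)}{10m/(7n)}=\frac{7}{30e}<\frac{1}{10},
\]
so $\Pr[\Lambda]\ge 9/10$ as required. I expect the only non-trivial obstacle to be the combinatorial accounting: correctly translating ``$S_j$ highlights $P^\pi_i$'' into the uniform-random-subset event on $T_j$ and applying the hockey-stick identity to the right range. Everything else reduces to elementary factor-by-factor estimates and a one-line Markov bound; nothing relies on $\vec p$, so the argument applies uniformly to all matching pseudo-MNLs and, via \Cref{prop:learning-matching-psuedo-mnl}, will later transfer to genuine MNL instances.
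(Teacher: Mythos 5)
Your proposal is correct and follows essentially the same route as the paper: fix the (non-adaptive) queries, observe that $\pi^{-1}(S_j)$ is a uniform $k_j$-subset so that the probability of highlighting $P_i^\pi$ is $\binom{2i-2}{k_j-2}/\binom{n}{k_j}$, bound the expected number of highlighting queries by $O(m/n)$, and finish with Markov. The only difference is cosmetic: you sum over $i$ first via the hockey-stick identity and bound the resulting ratio $\binom{2L-1}{k_j-1}/\binom{n}{k_j}$ factor by factor, whereas the paper bounds each term $\binom{2i-2}{k_j-2}/\binom{n}{k_j}$ individually by $9/(2n^2)$; both yield a constant small enough for the $9/10$ conclusion.
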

    \begin{proof}
        It is sufficient to show that the statement holds for any fixed, deterministic choice of the queries $S_1, \dots , S_m$, as this will imply it holds for random queries, since $\pi$ and $(S_1, \dots ,S_m)$ are independent of each other.
       
        For every $i\in [n/2]$ let $M_i$ be the random variable defined as:
        \[
            M_i := |\{j\in[m]\mid S_j \text{ highlights }P_i^\pi\}|.
        \]

        Let $X_{ij}$ be the indicator random variable of the event that the query $S_j$ highlights the pair $P_i^\pi$. Consider first any query $S_j$ with $|S_j|>2$ and $i \leq {n\over 12 e}$. We have:
    \begin{align*}
        \E{}{X_{ij}} &= \Pr_\pi\left[S_j \text{ highlights }P_i^\pi \right]\\
        & = \frac{\binom{2i-2}{|S_j| - 2}}{\binom{n}{|S_j|}}\\
        &\leq \left(\frac{2i-2}{|S_j|-2}\right)^{|S_j|-2}\left(\frac{|S_j|}{n}\right)^{|S_j|}e^{|S_j|-2}\\ %
        &= \left(\frac{2i-2}{|S_j|-2}\cdot \frac{|S_j|}{n}\right)^{|S_j|-2}\left(\frac{|S_j|}{n}\right)^{2}e^{|S_j|-2}\\
        &\leq \left(3 \cdot \frac{2i-2}{n}\right)^{|S_j|-2}\left(\frac{|S_j|}{n}\right)^{2}e^{|S_j|-2}\\
        &\leq \left({1\over 2}\right)^{|S_j|-2}\left(\frac{|S_j|}{n}\right)^{2} & \left(\text{since }i \leq {n\over 12 e}\right)\\
        &\leq {9\over 2n^2},
    \end{align*}

    where the first inequality follows by the fact that $\left(\frac{n}{k}\right)^k\leq \binom{n}{k}$ for $1\leq k\leq n$ and $\binom{n}{k}\leq \left(\frac{n\cdot e}{k}\right)^k$ for $k\geq 1$, and the last step follows from the fact that the function $f(x) = x^22^{2-x}$ takes its maximum value over the positive integers at $x = 3$.
    Consider now any query $S_j$ with $|S_j|=2$ we have:
    \begin{align*}
        \E{}{X_{ij}} &= \Pr_\pi[S_j\text{ highlights }P_i^\pi] = {1\over \binom{n}{2}} = {2\over n(n-1)} \leq {9 \over 2n^2}.
    \end{align*}
    where the last inequality holds for all $n \geq 2$. Summing up over all terms, we have:
    \begin{align}\label{eq:expected-number-of-useful-queries}
        \E{\pi}{\sum_{i = 1}^{\floor{n\over 12 e}}  M_i} = \E{\pi}{\sum_{i = 1}^{\floor{n\over 12 e}}  \sum_{j=1}^m{X_{ij}}}  =\sum_{i = 1}^{\floor{n\over 12 e}}  \sum_{j=1}^m \E{\pi}{{X_{ij}}} \leq {9m \over 24 e \cdot n} \leq {m \over 7n}.
    \end{align}
    Let $\Lambda$ be the event that $\sum_{i = 1}^{\floor{n\over 12 e}}  M_i \leq {10m \over 7n}$. Finally, by Markov's inequality and \eqref{eq:expected-number-of-useful-queries}: $\Pr[\Lambda] \geq {9\over 10}$.
    \end{proof}

    \begin{lemma}\label{lem:non-adaptive-bandit-reduction}
        Consider any even $n\geq150$, and let $\varepsilon\in(0,\frac12)$,  $\delta\in(0,\frac{9}{10})$. Suppose there is a (potentially randomized) non-adaptive algorithm $\mathcal{A}$ that for any MNL $M$ on $[n]$ makes at most $m(n,\varepsilon,\delta)$ non-adaptive queries and then outputs an MNL $\hat{M}$ which satisfies $d_\infty(M,\hat{M})\leq \varepsilon$ with probability at least $1-\delta$. Then there exists an algorithm $\mathcal{B}$ for the $(2\varepsilon,\delta+\frac{1}{10})$-approximate top-${n_1\over 2}$ coin selection problem with at most $\frac{10\cdot m(n,\varepsilon,\delta)}{n}$ queries, where $n_1$ is the even number in $\{\floor{\frac{n}{12e}}, \floor{\frac{n}{12e}}-1\}$.
    \end{lemma}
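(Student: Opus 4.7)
The plan is to reduce the $(2\varepsilon, \delta+\tfrac{1}{10})$-approximate top-$n_1/2$ coin selection problem to the non-adaptive MNL learning task, mirroring the adaptive reduction in Lemma~\ref{lem:adaptive-bandit-reduction} but using the averaging bound of Lemma~\ref{lem:low-highlighting} to control the number of actual coin tosses. Given $n_1$ coins with unknown heads probabilities $q_1, \ldots, q_{n_1}$, $\mathcal{B}$ first samples a uniformly random permutation $\pi \in Sym(n)$ independently of $\mathcal{A}$'s internal randomness, and implicitly works with the matching pseudo-MNL $\overline{M}(n, \vec{p}, \pi)$ where $\vec{p}_i = q_i$ for $i \le n_1$ and $\vec{p}_i = \tfrac{1}{2}$ for $i > n_1$. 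By Proposition~\ref{prop:learning-matching-psuedo-mnl}, $\mathcal{A}$ still outputs $\hat{M}$ with $d_\infty(\overline{M}, \hat{M}) \le \varepsilon$ with probability at least $1-\delta$.

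Then $\mathcal{B}$ simulates $\mathcal{A}$ by answering each of its fixed non-adaptive queries $S_j$ as follows: identify the highest pair $P^\pi_{\maxindex{\pi, S_j}}$; if only one of its two elements lies in $S_j$, the winner is determined deterministically by \Cref{def:matching-pseudo-mnl}; if both lie in $S_j$ but $\maxindex{\pi, S_j} > n_1$, $\mathcal{B}$ samples the winner on its own using the known value $\vec{p}_{\maxindex{\pi, S_j}} = \tfrac{1}{2}$; and only if both lie in $S_j$ and $\maxindex{\pi, S_j} \le n_1$, i.e., $S_j$ \emph{highlights} one of the first $n_1$ pairs, does $\mathcal{B}$ toss the corresponding physical coin to decide the winner. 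To enforce a hard worst-case query bound, $\mathcal{B}$ aborts and outputs an arbitrary size-$n_1/2$ subset as soon as the running count of coin tosses would exceed $10 m(n,\varepsilon,\delta)/n$.

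Once $\mathcal{A}$ terminates and returns $\hat{M}$, $\mathcal{B}$ sorts the indices $i \in [n_1]$ in decreasing order of the estimate $\hat{M}_{P^\pi_i}(\pi(2i))$ and outputs the top $n_1/2$. Let $\Lambda$ be the event from Lemma~\ref{lem:low-highlighting} that the number of highlighting queries is at most $10m/(7n) \le 10m/n$, which holds with probability at least $9/10$; and let $\Xi$ be the event $d_\infty(\overline{M}, \hat{M}) \le \varepsilon$, which holds with probability at least $1-\delta$. A union bound yields $\Pr[\Lambda \cap \Xi] \ge 1 - \delta - \tfrac{1}{10}$. On $\Lambda \cap \Xi$, $\mathcal{B}$ does not abort, uses at most $10 m(n,\varepsilon,\delta)/n$ coin tosses, and every $i \in [n_1]$ satisfies $|\hat{M}_{P^\pi_i}(\pi(2i)) - q_i| \le \varepsilon$. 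A sorting argument identical to the one in the proof of Lemma~\ref{lem:adaptive-bandit-reduction} then gives that the chosen subset is a $2\varepsilon$-approximate top-$n_1/2$ selection.

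The main obstacle I anticipate is not the logical structure but the careful accounting around non-adaptivity and conditioning: Lemma~\ref{lem:low-highlighting} is stated for non-adaptive algorithms with queries either deterministic or independent of $\pi$, so one must verify that $\mathcal{B}$'s sampling of $\pi$ is independent of $\mathcal{A}$'s randomness and that the abort rule does not retroactively break non-adaptivity (it does not, since aborting only truncates the sequence of coin tosses, not the pre-committed set of MNL queries that could be asked). A secondary bookkeeping point is to verify that the hypothesis $n \ge 150$ ensures $n_1 \ge 2$ and $n_1$ is even, so that $n_1/2$ is a meaningful target size; this holds because $\lfloor n/(12e)\rfloor \ge \lfloor 150/(12e)\rfloor = 4$, and we pick the even element of $\{\lfloor n/(12e)\rfloor, \lfloor n/(12e)\rfloor - 1\}$.
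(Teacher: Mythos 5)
Your proposal is correct and follows essentially the same route as the paper's proof: simulate $\mathcal{A}$ on the matching pseudo-MNL $\overline{M}(n,\vec p,\pi)$ with a random $\pi$, spend real coin tosses only on queries highlighting one of the first $n_1$ pairs, invoke Lemma~\ref{lem:low-highlighting} plus Proposition~\ref{prop:learning-matching-psuedo-mnl} with a union bound, abort if the toss budget is exceeded, and finish with the same sorting argument as in Lemma~\ref{lem:adaptive-bandit-reduction}. The only cosmetic difference is the abort threshold ($10m/n$ versus the paper's $10m/(7n)$), which does not affect the stated bound.
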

    \begin{proof}
        Algorithm $\mathcal{B}$ is given access to $n_1$ Bernoulli distributions with parameters $q_1, \dots, q_{n_1}$ and needs to produce a subset of $n_1 \over 2$ indices in $[n_1]$ corresponding to the approximate top-$\frac{n_1}{2}$ items. We construct $\mathcal{B}$ as follows. First, $\mathcal{B}$ samples a uniformly random permutation $\pi \in Sym(n)$. Then, it simulates $\mathcal{A}$ and obtains a set of queries, given as a multiset $S_1, \dots , S_m$ of subsets of $[n]$. Then, it constructs a set $a_1, \dots, a_m$ of responses to the queries as follows. If the query $S_j$ highlights one of the pairs (the pair $P^\pi_{\maxindex{\pi,S_j}}$) and this pair is in $\{P_1^\pi, \dots, P^\pi_{n_1}\}$, then $\mathcal{B}$ samples $x_j \in \{0,1\}$ from the $\maxindex{\pi, S_j}$th Bernoulli distribution (with parameter $q_{\maxindex{\pi,S_j}}$) and sets $a_j = \pi(2\cdot \maxindex{\pi,S_j}-1+x_j)$. If the query $S_j$ highlights another pair $P^{\pi}_{\maxindex{\pi, S_j}}$ where $\maxindex{\pi, S_j} > n_1$ then $\mathcal{B}$ samples $x_j\sim \{0,1\}$ uniformly at random and returns $\pi(2 \cdot \maxindex{\pi,S_j}-i+x_j)$. Finally, if the query $S_j$ does not highlight its highest pair with respect to $\pi$, then $\mathcal{B}$ sets $a_j$ to the unique item in $S_j\cap P^\pi_{\maxindex{\pi,S_j}}$.

        Then, $\mathcal{B}$ feeds the responses $a_1, \dots, a_m$ back to $\mathcal{A}$, and $\mathcal{A}$ outputs an MNL $\hat{M}$. Finally, $\mathcal{B}$ sorts the elements of $[n_1]$ into a sequence $s_1, \dots, s_{n_1}$ satisfying:
        \[
            \hat{M}_{\{\pi(2s_1-1), \pi(2s_1)\}}(\pi(2s_1)) \geq \cdots \geq\hat{M}_{\{\pi(2s_{n_1}-1), \pi(2s_{n_1})\}}(\pi(2s_{n_1}))
        \]
        and outputs $s_1, \dots, s_{n_1/2}$. 
        
        We now prove that $\mathcal{B}$ is correct with high probability. First, observe that $\mathcal{B}$ is simulating access to $\maxsample$ oracle for the matching pseudo-MNL $\overline{M}(n,\vec{p},\pi)$, where $\pi$ is chosen uniformly at random and:
        \[
            \vec{p_i}:= \begin{cases}
                q_i &\text{if }i \leq n_1\\
                {1\over 2} &\text{if $n_1 < i \leq \frac{n}{2}$.}
            \end{cases}
        \]
        By \Cref{prop:learning-matching-psuedo-mnl}, we have that, with probability at least $1-\delta$, $d_\infty(\overline{M}(n,\vec{p}, \pi),\hat{M})\leq \varepsilon$. If this event happens, $s_1, \dots , s_{n_1/2}$ is a correct solution to the $(2\epsilon,\delta)$-approximate top-$n_1 \over 2$ coin selection instance---this can be proved with essentially the same argument as \Cref{lem:adaptive-bandit-reduction}.%
        
        As it is written, the number of queries made by $\mathcal{B}$ to the Bernoulli distributions could be higher than $10m \over 7n$. Hence, in order to meet the requirements of the lemma, we modify $\mathcal{B}$ slightly. We introduce the following exception to the description above: if $\mathcal{B}$ ever needs to make more than $10m \over 7 n$ queries to the Bernoulli distributions, it will instead output a uniformly random subset of $[n_1]$ of size $n_1\over 2$ and terminate. By \Cref{lem:low-highlighting} this happens with probability at most ${1\over 10}$, and the lemma follows.
    \end{proof}
    \Cref{thm:non-adaptive-lower-bound} then follows from \Cref{lem:non-adaptive-bandit-reduction} and \Cref{thm:lower-bound-on-bandit-problem}.

\section{Conclusions and Open Problems}\label{sec:conc}
In this paper, we considered the problem of learning an unknown MNL by making queries to a $\maxsample$ oracle so that the learned weights can be used to provide an estimate to the distribution of each slate within an $\ell_1$-error of $\epsilon$. We developed two algorithms for this task: one for the adaptive setting and one for the non-adaptive setting. 

Our adaptive algorithm has a query complexity of $O(\frac{n \log n}{\epsilon^3})$ for $\delta=\frac{1}{\poly(n)}$, which is nearly matched by our lower bound of $\Omega(\frac{n \log n}{\epsilon^2})$. The main open question left by our work is to resolve the gap in the accuracy parameter $\epsilon$. We have shown that the lower bound holds for $\ell_\infty$, while our algorithm's guarantees hold for the harder setting of $\ell_1$-error; this opens up the possibility that the optimal query complexity in $\varepsilon$ may differ for the $\ell_\infty$ and $\ell_1$ case.

Our non-adaptive algorithm has a query complexity of $O(\frac{n^2 \cdot  \log n\cdot  \log(n/\epsilon)}{\epsilon^3})$ nearly matching our $\Omega(\frac{n^2 \log n}{\epsilon^2})$ non-adaptive lower bound. Again, this leaves the analogue open problem of closing the gap between the upper and the lower bound. %

Finally, our non-adaptive algorithm is based on an adaptive algorithm that queries each pair at most polylogarithmic many times. However, the latter is different from the $O(\frac{n \log n}{\epsilon^3})$ algorithm we first design for the adaptive setting. A possible direction for future work would be to find a single algorithm which can be used to match the query complexity of our algorithms in both the adaptive and non-adaptive setting. %

\appendix
\section{Missing Proofs for \Cref{sec:algorithmic-primitives}}\label{sec:appendix-algo-primitives}

We first recall the following standard concentration result %
 (see, e.g.,\ \cite{blm13} Equation 2.10, page 36).
\begin{theorem}[Bernstein's Inequality]
Let $X_1, \dots, X_N$ be i.i.d. r.v.'s in $[0, 1]$ and each with mean $\mu$ and variance $\sigma^2$. Then, for $\lambda>0$,
\begin{align*}
    \Pr\left[ \frac{1}{N} \sum_{i=1}^N X_i - \mu \ge \lambda \right] &\leq \exp\left(-\frac{\lambda^2 N}{2\sigma^2+\frac{2}{3}\lambda}\right) \\ \Pr\left[ \frac{1}{N} \sum_{i=1}^N X_i - \mu \le -\lambda \right] &\leq \exp\left(-\frac{\lambda^2 N}{2\sigma^2+\frac{2}{3}\lambda}\right).
\end{align*}
\end{theorem}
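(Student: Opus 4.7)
The plan is to apply the standard Cram\'er--Chernoff (exponential moment) method. First, set $Y_i := X_i - \mu$, so that $Y_1,\dots,Y_N$ are i.i.d., centered, with $|Y_i|\leq 1$ almost surely and $\E{}{Y_i^2}=\sigma^2$. For any $t > 0$, Markov's inequality applied to $\exp\!\bigl(t\sum_i Y_i\bigr)$ together with independence yields
\[
\Pr\!\left[\tfrac{1}{N}\sum_{i=1}^N X_i - \mu \;\geq\; \lambda\right] \;\leq\; e^{-tN\lambda}\cdot \E{}{e^{tY_1}}^{N},
\]
and the lower-tail probability admits the symmetric bound obtained by replacing $Y_i$ with $-Y_i$ (the hypothesis $|Y_i|\leq 1$ is preserved). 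Everything therefore reduces to controlling the single-variable moment generating function $\E{}{e^{tY_1}}$ uniformly on a small interval around $t=0$.

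For that step I would exploit $|Y_i|\leq 1$, which forces $\E{}{Y_1^k}\leq \E{}{|Y_1|^k}\leq \E{}{Y_1^2}=\sigma^2$ for every integer $k\geq 2$. Expanding the exponential and using $\E{}{Y_1}=0$,
\[
\E{}{e^{tY_1}} \;=\; 1 + \sum_{k\geq 2}\frac{t^k}{k!}\E{}{Y_1^k} \;\leq\; 1 + \sigma^2 \sum_{k\geq 2}\frac{t^k}{k!}.
\]
The elementary inequality $k! \geq 2\cdot 3^{k-2}$ for $k\geq 2$ (immediate by induction) converts the remaining sum into a geometric series, giving $\sum_{k\geq 2} t^k/k! \leq \tfrac{t^2/2}{1-t/3}$ for every $0\leq t < 3$. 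Combined with $1+x\leq e^x$, this yields the sub-exponential moment bound $\E{}{e^{tY_1}} \leq \exp\!\bigl(\tfrac{t^2\sigma^2/2}{1-t/3}\bigr)$.

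Plugging this into the Markov estimate and taking logarithms, the upper-tail probability is at most $\exp\!\bigl(-tN\lambda + \tfrac{Nt^2\sigma^2/2}{1-t/3}\bigr)$. Optimizing over $t\in(0,3)$ by setting $t = \lambda/(\sigma^2 + \lambda/3)$ yields, after a short computation, the exponent $-\tfrac{N\lambda^2}{2\sigma^2 + \tfrac{2}{3}\lambda}$, exactly matching the statement, and the symmetric argument gives the lower tail. The only step that requires real care is the control of the geometric tail: the factorial lower bound $k!\geq 2\cdot 3^{k-2}$ is precisely what produces the characteristic $2/3$ factor in the denominator of the Bernstein exponent, so any looser constant would give a strictly weaker inequality than the one stated. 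The remaining algebra (the optimization over $t$ and verification that the chosen $t$ lies in $(0,3)$) is routine bookkeeping.
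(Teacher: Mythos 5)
The paper does not prove this statement; it is quoted as a standard result with a citation to Boucheron--Lugosi--Massart, so there is no in-paper argument to compare against. Your derivation is the standard Cram\'er--Chernoff proof of Bernstein's inequality and it is correct: centering gives $|Y_i|\le 1$, hence $\E{}{Y_1^k}\le \E{}{Y_1^2}=\sigma^2$ for $k\ge 2$; the bound $k!\ge 2\cdot 3^{k-2}$ turns the Taylor tail into the geometric sum $\tfrac{t^2/2}{1-t/3}$ for $t\in[0,3)$; and with $t=\lambda/(\sigma^2+\lambda/3)$ one has $1-t/3=\sigma^2/(\sigma^2+\lambda/3)$, so the exponent evaluates to $-tN\lambda+\tfrac{Nt^2\sigma^2/2}{1-t/3}=-\tfrac{N\lambda^2}{2\sigma^2+\tfrac{2}{3}\lambda}$, exactly as claimed, with $t<3$ automatic since $\lambda/(\sigma^2+\lambda/3)<3$. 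The lower tail follows by applying the same bound to $-Y_i$, which preserves both the boundedness and the variance. No gaps.
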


We will also use the following Chernoff-Bound (see, e.g., \cite[Theorem 1.1]{dp09}):
\begin{theorem}[Multiplicative Chernoff Bound]\label{thm:multiplicative-chernoff}
Let $X_1, \dots, X_N$ be i.i.d.\ r.v.'s in $[0,1]$ and each with mean $\mu$. Then, for $0<\delta<1$,
\[
    \Pr\left[\left|\frac{1}{N}\sum_{i=1}^N X_i - \mu \right| \geq \delta \cdot \mu \right] \leq 2\exp\left( -\frac{\delta^2 \mu  \cdot N}{3} \right).
\]
\end{theorem}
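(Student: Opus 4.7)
The plan is to prove both the upper and lower tail bound separately via the classical exponential moment (Chernoff) method, and then combine them with a union bound; the factor of $2$ in the statement already accounts for this combination.

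For the upper tail, I would first note the key inequality $e^{tx}\leq 1 + x(e^{t}-1)$ for $x\in[0,1]$ and $t\in\mathbb{R}$, which follows from convexity of $e^{tx}$ on $[0,1]$. Taking expectations and using $\mathbb{E}[X_i]=\mu$ gives $\mathbb{E}[e^{tX_i}]\leq 1+\mu(e^{t}-1)\leq \exp(\mu(e^{t}-1))$, where the last step uses $1+y\leq e^{y}$. Writing $S=\sum_{i=1}^{N}X_i$, Markov's inequality applied to the moment generating function yields, for any $t>0$:
\[
\Pr\!\left[\tfrac{1}{N}S \geq (1+\delta)\mu\right] \leq e^{-t(1+\delta)\mu N}\,\mathbb{E}[e^{tS}] \leq \exp\!\big(\mu N(e^{t}-1) - t(1+\delta)\mu N\big).
\]
Optimizing by choosing $t=\ln(1+\delta)$ gives the bound $\big(e^{\delta}/(1+\delta)^{1+\delta}\big)^{\mu N}$, i.e.\ an exponent $-\mu N\bigl((1+\delta)\ln(1+\delta)-\delta\bigr)$.

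For the lower tail I would run the same argument with $t<0$, writing $\Pr[S\leq (1-\delta)\mu N]=\Pr[e^{-tS}\geq e^{-t(1-\delta)\mu N}]$ and again using $\mathbb{E}[e^{-tX_i}]\leq \exp(\mu(e^{-t}-1))$; optimizing gives exponent $-\mu N\bigl((1-\delta)\ln(1-\delta)+\delta\bigr)$.

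The routine but slightly delicate step — the one I would flag as the main obstacle to stating cleanly — is the final analytic simplification of these two exponents into the single bound $\delta^{2}\mu N/3$. Concretely I would verify:
\begin{align*}
(1+\delta)\ln(1+\delta)-\delta &\;\geq\; \delta^{2}/3,\qquad \delta\in(0,1],\\
(1-\delta)\ln(1-\delta)+\delta &\;\geq\; \delta^{2}/2 \;\geq\; \delta^{2}/3,\qquad \delta\in(0,1),
\end{align*}
each by defining the appropriate function $f(\delta)$ equal to the left side minus the right side, checking $f(0)=0$, and showing $f'(\delta)\geq 0$ by a short calculation (e.g.\ for the first, $f'(\delta)=\ln(1+\delta)-2\delta/3$ which is nonnegative on $(0,1]$ by a Taylor bound on $\ln(1+\delta)$). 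Once both tail bounds are reduced to $\exp(-\delta^{2}\mu N/3)$, a union bound gives the factor of $2$ and completes the proof.
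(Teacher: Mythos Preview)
The paper does not give its own proof of this statement: it is stated as a standard concentration inequality with a citation to \cite[Theorem 1.1]{dp09} and used as a black box. Your proposal is the textbook exponential-moment derivation and is correct; the upper- and lower-tail Chernoff arguments and the analytic reductions $(1+\delta)\ln(1+\delta)-\delta\ge \delta^2/3$ and $(1-\delta)\ln(1-\delta)+\delta\ge \delta^2/2$ on $(0,1)$ are all valid as you outline, and the union bound yields the factor of $2$.
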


\CompareGuarantees*
\begin{proof}%
The bound on the number of queries follows directly from the pseudocode of the algorithm. We argue that the guarantees of the lemma hold with probability at least $1-{\delta\over 2}$ for $\hat{p}_i$. By symmetry and the union bound, this implies that they hold for both $\hat{p}_i$ and $\hat{p}_j$ with probability at least $1-\delta$.

Let $p = M_{\{i,j\}}(i)$ and $X_1, \dots, X_t\sim\Bern(p)$ be the indicator random variables of the events that each call to $\maxsample(\{i,j\})$ returns $i$, so that $\hat{p}_i = \frac{1}{m}\sum_{i=1}^m X_i$ and $\E{}{\hat{p}_i}= p$.

We divide the proof into three cases depending on the value of $p$, for all cases we prove that the guarantees hold with probability at least $1-{\delta \over 2}$. 

First, suppose that $p\leq c/4$. In this case, the algorithm can only fail if it returns $\hat{p}_i \neq 0$. By Bernstein's inequality, we find that the probability that $\hat{p}_i\neq 0$ is at most:
\begin{align*}
    \Pr[\hat{p}_i \geq c/2] &\leq \Pr[\hat{p}_i \geq p  + c/4] \leq \exp\left(-\frac{m(c/4)^2}{2p(1-p) + \frac{c}{6}} \right) \leq \exp\left(-\frac{m(c/4)^2}{\frac{c}{2} + \frac{c}{6}} \right) \\
    & = \exp\left(-\frac{3\cdot m\cdot c}{32}\right) \leq \frac{\delta}{6} \leq {\delta \over 2}.
\end{align*}
Second, suppose $c/4 < p < c$. In this case, the algorithm can fail if it returns a value of $\hat{p}_i$ that simultaneously satisfies $\hat{p}_i \ne 0$ and $\hat{p}_i\notin(1\pm \epsilon)p$. We now show that, if $p\geq c/4$, then the probability that $\hat{p}_i\notin(1\pm \epsilon)p$ is at most $\delta/2$. By the multiplicative Chernoff bound (\Cref{thm:multiplicative-chernoff}):
\begin{align}\label{eq:epsilon-bound-from-chernoff}
\Pr[|p - \hat{p}_i| \geq \epsilon p] \leq 2e^{-\frac{\epsilon^2\cdot m\cdot p}{3}} \leq 2e^{-\frac{\epsilon^2\cdot m\cdot c}{12}} \leq \frac{\delta}{3} \leq \frac{\delta}{2}.
\end{align}
So if $c/4 < p < c$ the guarantees hold with probability at least $1-{\delta\over 2}$.

Finally, consider the case $p\geq c$. In this case, the algorithm can fail either if $\hat{p}_i=0$ or if $\hat{p}_i\notin(1\pm \epsilon)p$. By \eqref{eq:epsilon-bound-from-chernoff}, the second event happens with probability at most $\delta/3$. By Bernstein's inequality, the first event happens with probability:
\begin{align}\label{eq:compare-general-bernstein-bound}
    \Pr[\hat{p}_i \leq c/2] & = \Pr[\hat{p}_i \leq p - (p-c/2)] \leq \exp\left( - \frac{m(p - c/2)^2}{2p(1-p) + \frac{2}{3}\cdot (p - c/2)} \right).
\end{align}
Here we consider two possibilities. Suppose that $c \leq p \leq 2c$. From \eqref{eq:compare-general-bernstein-bound}, we obtain:
\begin{align*}
    \Pr[\hat{p}_i \leq c/2] &\leq \exp\left( - \frac{m(c/2)^2}{4c + c} \right) = \exp\left( - \frac{m \cdot c}{20} \right) \leq \frac{\delta}{6}.
\end{align*}
If, instead, we have $p\geq 2c$, then $c/2 \leq p/4$. By \eqref{eq:compare-general-bernstein-bound}, we have:
\begin{align*}
    \Pr[\hat{p}_i \leq c/2] &\leq \exp\left( - \frac{m(p- p/4)^2}{2p(1-p) + \frac{2}{3} p} \right) \leq \exp\left( - \frac{m(\frac{3}{4}\cdot p)^2}{2p + p} \right) = \exp\left(-\frac{3\cdot m\cdot p^2}{16\cdot p}\right)\\
    &= \exp\left(-\frac{3\cdot m\cdot p}{16}\right) \leq \exp\left(-\frac{3\cdot m\cdot c}{8}\right) \leq \frac{\delta}{6}.
\end{align*}

Thus, when $p\geq c$, the algorithm can fail with probability at most $\frac{\delta}{3}+\frac{\delta}{6} = \frac{\delta}{2}$. %
\end{proof}

\EstimateRatioGuarantees*
\begin{proof}%
    We begin by noting that the query complexity bound for $\AlgEstimateRatio$ follows directly from the query complexity of $\AlgCompare$. 
   We now prove the rest of the guarantees. We will assume that $(\hat{p}_i,\hat{p}_j)$ satisfy the three guarantees in Lemma~\ref{lem:compare-guarantees} and show that under this assumption, $r(i,j)$ and $1/r(i,j)$ satisfy the four conditions in the statement of this lemma. Since the former happens with probability at least $1-\delta$ (by Lemma~\ref{lem:compare-guarantees}), Lemma~\ref{lem:weight-pair-ratio-estimate} will then follow.

    If ${w_i \over w_j} \leq {\alpha \over {3\alpha +4}}$ then:
    \[
        M_{\{i,j\}}(i)={w_i \over w_i + w_j} = {1 \over 1 + {w_j\over w_i}}\leq  {1 \over 1 + {3 \alpha +4\over \alpha}} = {\alpha \over 4(\alpha+1)} = {c\over 4},
    \]
    and hence $\hat{p}_i = 0$, which implies $r(i,j) = 0$. If ${w_i \over w_j}\ge {3\alpha +4\over \alpha}$ then ${w_j \over w_i} \leq {\alpha \over 3\alpha +4}$ and the same argument shows $r(i,j) = \infty$, yielding the first two conditions of this lemma.

    If $\alpha \le {w_i \over w_j } $, then:
    \[
        M_{\{i,j\}}(i) = {w_i \over w_i+w_j} \ge {w_i \over w_i+{1\over \alpha}w_i} = {\alpha \over \alpha +1} = c,
    \]
    and hence $\hat{p}_i$ satisfies:
    \begin{equation}\label{eq:pi-guarantee}
        \hat{p}_i \in \left(1\pm \frac{\varepsilon}{3}\right) \frac{w_i}{w_i + w_j},
    \end{equation}
    so that  $\hat{p}_i\neq 0$ giving that $r(i,j)\neq 0$.
    Similarly if ${w_i \over w_j } \le {1\over \alpha}$, then:
    \[
        M_{\{i,j\}}(j) = {w_j \over w_i+w_j} \ge {w_j \over {1\over \alpha }w_j+w_j} = {\alpha\over \alpha+1} = c,
    \]
    
    and hence we obtain estimates $\hat{p}_j$ of the winning probability of $j$  in the slate $\{i,j\}$ satisfying:
    
    \begin{equation}\label{eq:pj-guarantee}
        \hat{p}_j \in {\left(1\pm \frac{\varepsilon}{3}\right)} \frac{w_j}{w_i + w_j}.
    \end{equation}
    so that $r(i,j) \neq \infty$. This implies condition 3. 
    
    Moreover, if the algorithm ever outputs $r(i,j)$ different from $0$ and $\infty$, it must have been the case that both $\hat{p}_i$ and $\hat{p}_j$ were not zero. By Lemma~\ref{lem:compare-guarantees} this implies $\hat{p}_i$ and $\hat{p}_j$ satisfy \eqref{eq:pi-guarantee} and \eqref{eq:pj-guarantee}, and hence, %
    $r(i,j) = {\hat{p}_i \over \hat{p}_j}$ satisfies:
    \[ 
        r(i,j) = {\hat{p}_i \over \hat{p}_j} \leq {(1+\frac{\varepsilon}{3}) \over (1-\frac{\varepsilon}{3})}\cdot {w_i \over w_j} \leq \left(1 + 3\cdot \frac{\varepsilon}{3}\right) {w_i \over w_j} = (1 + \varepsilon) {w_i \over w_j}
    \]
    and:
    \[
         r(i,j) = {\hat{p}_i \over \hat{p}_j} \ge {(1-\frac{\varepsilon}{3}) \over (1+\frac{\varepsilon}{3})}\cdot {w_i \over w_j} \ge \left(1 - 2\cdot \frac{\varepsilon}{3}\right) {w_i \over w_j} \geq (1 - \varepsilon) {w_i \over w_j},
    \]
    where we used that $\frac{1+a}{1-a} \leq 1+3a$ and $\frac{1-a}{1+a} \geq 1-2a$ for $a\in(0,1/3)$. Similarly, ${1\over r(i,j)} = {\hat{p}_j \over \hat{p}_i}$ satisfies:
    \[ 
        {1\over r(i,j)} = {\hat{p}_j \over \hat{p}_i} \leq {(1+\frac{\varepsilon}{3}) \over (1-\frac{\varepsilon}{3})}\cdot {w_j \over w_i} \leq \left(1 + 3\cdot \frac{\varepsilon}{3}\right) {w_j \over w_i} = (1 + \varepsilon) {w_j \over w_i}
    \]
    and:
    \[
         {1\over r(i,j)} = {\hat{p}_j \over \hat{p}_i} \ge {\left(1-\frac{\varepsilon}{3}\right) \over (1+\frac{\varepsilon}{3})}\cdot {w_j \over w_i} \ge \left(1 - 2\cdot \frac{\varepsilon}{3}\right) {w_j \over w_i} \geq (1 - \varepsilon) {w_j \over w_i},
    \]
    yielding condition 4 above.
\end{proof}

\section{Missing Proofs for Section~\ref{sec:adaptive-algorithm}: Computing Approximate Orderings}\label{sec:computing-epsilon-orderings}
In this section we prove the \Cref{cor:falahatgar-for-MNLs}. Specifically, this is a simple corollary of a result of \cite{fjopr18} which considered the following notion of ordering: 

\begin{definition}[Additive $\beta$-ordering]
    An additive $\beta$-ordering for an MNL $M$ supported on $[n]$ is an ordering $(s_1, \dots , s_n)$ of the items of $[n]$ such that, for any pair $i,j$ with $i<j$:
    \[
        M_{\{s_i, s_j\}}(s_i) \leq \frac{1}{2} + \beta.  
    \]
\end{definition}
They showed that such an ordering can be computed efficiently. The following is an adaptation of their result, where we boost the success probability and make the running time explicit.

\newcommand{\AlgFalBinarySearchRanking}{\AlgName{Binary-Search-Ranking}}
\newcommand{\AlgFalRankCheck}{\AlgName{Rank-Check}}
\newcommand{\AlgFalBuildBinarySearchTree}{\AlgName{Build-Binary-Search-Tree}}
\newcommand{\AlgFalIntervalBinarySearch}{\AlgName{Interval-Binary-Search}}
\begin{theorem}[Adaptation of Theorem 9 of \cite{fjopr18}]\label{thm:additive-ordering-fal-adaptation}
Choose any $\beta\in(0,1/2)$, $\delta\in(0,1)$. There exists an algorithm that, with probability at least $1-\delta$, makes $O\left(\frac{n \cdot \log(n/\delta) }{\beta^2}\cdot \left(1+\frac{\log(1/\delta)}{\log n}\right)\right)$ queries and returns an additive $\beta$-ordering. Moreover, the running time of the algorithm is proportional to the query complexity and the algorithm only queries pairs.
\end{theorem}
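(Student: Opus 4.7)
The plan is to build on Theorem~9 of \cite{fjopr18}, which in its original form provides an algorithm $\AlgFalBinarySearchRanking$ that returns an additive $\beta$-ordering with success probability at least $1-1/n$ using $O(n\log n/\beta^2)$ pairwise noisy comparisons. In the MNL setting, each noisy pairwise comparison between items $i$ and $j$ is implemented by a single $\maxsample(\{i,j\})$ query, so the base algorithm makes $O(n\log n/\beta^2)$ queries on slates of size two, and its running time is proportional to its query complexity. The adaptation task is thus to boost the success probability from $1-1/n$ to the user-specified $1-\delta$ while keeping the query cost within the stated bound.

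For the regime $\delta \ge 1/n$, a single invocation of $\AlgFalBinarySearchRanking$ already meets the guarantee, and its cost $O(n\log n/\beta^2)$ coincides with the claimed bound $O(n\log(n/\delta)/\beta^2)$ since the multiplier $1+\log(1/\delta)/\log n$ is $O(1)$ in this regime. For $\delta < 1/n$, I would run $\AlgFalBinarySearchRanking$ independently $K := \lceil \log(1/\delta)/\log n\rceil$ times, obtaining candidate orderings $\pi^{(1)},\dots,\pi^{(K)}$; by independence of the runs, the probability that all $K$ of them produce an invalid ordering is at most $(1/n)^K \le \delta$, and the total cost of the $K$ runs is $O(K\cdot n\log n/\beta^2)$, within the claimed budget.

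To extract a single valid candidate from this pool, I would design a verification subroutine that, given $\pi$ and target confidence $1-\delta/(2K)$, decides whether $\pi$ is a valid additive $\beta$-ordering using $O(n\log(n/\delta)/\beta^2)$ $\maxsample$ queries on pairs, and return the first candidate that passes. A union bound over the $K$ runs and $K$ verification calls then controls the total failure probability at $O(\delta)$, which can be rescaled to $\delta$ by adjusting constants.

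The main obstacle is the design of this verification subroutine: naively estimating $M_{\{\pi_i,\pi_j\}}(\pi_i)$ for every pair $(i,j)$ costs $\Omega(n^2)$ queries, while checking only the $n-1$ adjacent pairs is insufficient in general---the additive gap on a non-adjacent pair is controlled by the multiplicative ratio $w_{\pi_i}/w_{\pi_j}$, which can compound geometrically along the chain of adjacent pair ratios. I expect to resolve this by exploiting structural properties of the output of $\AlgFalBinarySearchRanking$ (it is produced by noisy binary-search insertions, whose internal invariants constrain the form a violating ordering can take), so that testing a carefully chosen set of $O(n)$ pairs---each with amplification $O(\log(n/\delta)/\beta^2)$---suffices to certify the additive $\beta$-ordering property. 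An alternative route is to bypass explicit verification entirely by re-tuning the internal per-comparison confidence of $\AlgFalBinarySearchRanking$ itself and union-bounding over all internal comparisons to directly achieve the target failure probability in a one-shot algorithm; either route should complete the proof within the stated query budget.
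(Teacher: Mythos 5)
Your high-level plan matches the paper's: run the base $\AlgFalBinarySearchRanking$ from \cite{fjopr18} roughly $\eta=\lceil\log(1/\delta)/\log n\rceil$ times independently to boost the success probability, then use a verification routine to pick out a valid candidate ordering, with a union bound over the repetitions and verifications. However, the step you yourself identify as "the main obstacle" — the existence of a verifier that certifies an additive $\beta$-ordering in $O\bigl(\frac{n}{\beta^2}\log(n/\delta)\bigr)$ pair queries — is exactly the content the paper imports from prior work: Lemma~23 of \cite{fjopr18} already provides $\AlgFalRankCheck(\pi,\beta,\delta)$ with a soundness/completeness gap (it returns true if $\pi$ is a $\beta$-ordering, false if $\pi$ is not a $3\beta$-ordering, may go either way in between) at the required query cost. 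You do not construct such a verifier, and the two resolutions you sketch are not carried out and are not obviously viable: "exploiting structural properties of the output of noisy binary-search insertions" is vague, and the "re-tune internal confidence" route would multiply every internal comparison by $\Theta(\log(1/\delta))$, giving $\Theta\bigl(\frac{n\log n\log(1/\delta)}{\beta^2}\bigr)$ queries in the regime $\delta<1/n$, which is worse than the claimed $\Theta\bigl(\frac{n}{\beta^2}\cdot\frac{\log^2(n/\delta)}{\log n}\bigr)$ whenever $\log(1/\delta)\ll\log^2 n$. So the query-complexity bound you claim is not actually established by your argument.

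Two further points you gloss over. First, because $\AlgFalRankCheck$ only separates $\beta$ from $3\beta$, the paper runs $\AlgFalBinarySearchRanking$ at precision $\beta/3$ and verifies at threshold $\beta/3$; a passing candidate is then guaranteed only to be a $\beta$-ordering, not a $\beta/3$-ordering. Your proposal does not account for this gap, and a verifier with no gap cannot exist at this query cost (distinguishing "exactly $\beta$" from "slightly above $\beta$" on any single pair already requires unbounded samples). Second, you assert without justification that the base algorithm runs in time proportional to its query complexity; the paper spends a full paragraph showing this, because the published pseudocode contains subroutines (\AlgFalBuildBinarySearchTree, and a repeated $O(\log n)$-element sort inside \AlgFalIntervalBinarySearch) whose naive implementation would cost $\omega(\text{\#queries})$ time. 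Those implementation details are part of what the theorem claims and need to be addressed.
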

\begin{proof}
    Theorem 9 of \cite{fjopr18} provides an algorithm $\AlgFalBinarySearchRanking([n], \beta)$ that, with probability at least $1-1/n$, makes at most $O(\frac{n\log n}{\beta^2})$ queries and returns an additive $\beta$-ordering.\footnote{To be precise, \cite{fjopr18} showed this result in another model which generalizes MNLs (see, e.g., Appendix A of \cite{ybkj12}).} Lemma 23 of \cite{fjopr18} provides an algorithm $\AlgFalRankCheck(\pi, \beta, \delta)$ such that, with error probability at most $\delta$: (i) if $\pi$ is an additive $\beta$-ordering of $[n]$ it returns true, (ii) if $\pi$ is not an additive $3\beta$-ordering of $[n]$ it returns false. If $\pi$ is an additive $\beta'$-ordering, with $\beta'\in(\beta,3\beta)$, \AlgFalRankCheck{} can return either true or false. Moreover, \AlgFalRankCheck{} always makes $O\left(\frac{n}{\beta^2} \log({n\over \delta})\right)$ queries. 
    
    We now use these two subroutines to boost the success probability of the algorithm. Let $\eta=\ceiling{\frac{\ln(2/\delta)}{\ln(n)}}$. We run $\eta$ independent instances of $\AlgFalBinarySearchRanking([n], \beta/3)$ in parallel. Whenever an instance terminates and outputs an ordering $\pi$, we run $\AlgFalRankCheck(\pi, \frac{\beta}{3}, \frac{\delta}{2\eta})$ and if it returns true, we give $\pi$ in output, otherwise we ignore $\pi$ and continue running the other instances. 

    Observe that \AlgFalRankCheck{} is run at most $\eta$ times, and therefore, with probability at least $1-\delta/2$ all the outputs given by this subroutine are correct. Observe also that, with probability at least $1-\delta/2$, at least one of the $\eta$ instances of \AlgFalBinarySearchRanking{} makes at most $O(\frac{n\log n}{\beta^2})$ queries and returns an additive $\beta/3$-ordering. When this happens, \AlgFalRankCheck{} will surely return true. Moreover, if \AlgFalRankCheck{} were to return true even before that, it means that the returned $\pi$ must be an additive $\beta'$-ordering for $\beta'\in(\beta/3, \beta)$. Therefore, the output returned is correct. Moreover, the total query complexity is given by:
    \begin{align*}
        O\left(\eta \cdot \frac{n \log n}{\beta^2} + \eta\cdot \frac{n \log (\frac{n \eta}{\delta})}{\beta^2}\right) \leq O\left(\eta\cdot \frac{n \log (n/\delta)}{\beta^2} \right) \leq O\left(\frac{n \log(n/\delta)}{\beta^2} \cdot \left(1 + \frac{\log(1/\delta)}{\log n}\right)\right).
    \end{align*}

    Regarding the running time, by inspecting the pseudocode, one can see that \AlgFalRankCheck{} runs in time proportional to the query complexity. Algorithm \AlgFalBinarySearchRanking{} consists of several subroutines, some of which are provided in \citep{fops17}. All but two points of \AlgFalBinarySearchRanking{} run in time proportional to the query complexity. 
    
    First, the subroutine \AlgFalBuildBinarySearchTree{} \citep{fops17} runs in time $O\left(\frac{n}{\log^3n}\right)$ and therefore its running time can be charged to the query complexity. Specifically, note that, despite the fact that this subroutine is called multiple times in the pseudocode; the binary search tree generated is always the same, and hence it is sufficient to call it once. More precisely, their algorithm takes in input an integer $N = O\left(\frac{n}{\log^3n}\right)$ and builds a complete binary tree $T$, where each vertex maintains three values $(l, m, r)$, where it always holds $m = \lceil \frac{l + r}{2} \rceil$. The root starts with $l=1$ and $r=N$, and the generic vertex $(l,m,r)$, with $r-l>1$, generates a left child with values $(l, \lceil \frac{l+m}{2} \rceil, m)$ and a right child with values $(m, \lceil \frac{m+r}{2} \rceil, r)$.

    Second, line 4.b.(i) of the subroutine \AlgFalIntervalBinarySearch{} \citep{fops17} sorts an array of integers $Q$ such that $|Q|=O(\log n)$. This instruction is repeated $O(n)$ times. If implemented as written, this would lead to an $O(n \log n \log\log n)$ runtime which is not chargeable to the query complexity. However, we can exploit how $Q$ is constructed to make this step more efficient. Specifically, $Q$ is constructed as follows: the algorithm starts on the root of $T$ and it always moves to an adjacent vertex (either the parent or one of the two children); each time a vertex with values $(l,m,r)$ is visited, the values $l,m,r$ are added to $Q$. This random walk goes on for at most $O(\log n)$ times. We can therefore do the following: throughout the execution of this walk in $T$, we maintain a parallel binary search tree $T'$ which is a copy of $T$ but containing only the vertices visited during the random walk. Thus, $T'$ can be constructed in time at most $O(|Q|)$. After the random walk, we can obtain a sorted array of the (unique) values in $Q$ as follows: first put a value 1 (which will always be present in $Q$), then run an in-order visit of $T'$ and output the midpoint $m$ of each visited vertex, finally append a value $N$ at the end (which will always be present in $Q$). Thus, this sorted array can be constructed in $O(|Q|)$. We note that outputting the sorted array without the values multiplicity (i.e.,\ where each distinct value appears exactly once) is sufficient to correctly perform the other steps of the algorithm. However, for completeness, we remark that it would not be difficult to adapt the algorithm to work with multiplicity. In particular, we can do so as follows: in each vertex of $T'$ we can save the multiplicity of the midpoint value $m$ and two pointers pointing to the vertices that have the left (resp.\ right) value as their midpoint (except for values $1$ and $N$ which are dealt with separately)---all these pointers are easy to maintain. Then, each time the random walk moves to a new vertex, we need to update three counters, an operation which requires constant time. In summary, since this step takes time $O(|Q|)$, its time complexity can be charged to the query complexity of \AlgFalIntervalBinarySearch{}, and in general, it requires at most $O(n\log n)$-time overall. 
    
     By inspection one can also see that the algorithm only queries pairs.
\end{proof}

We can now prove the result for $\orderingerror$-orderings:
\EpsilonOrderingAlgorithm*
\begin{proof}
Use \Cref{thm:additive-ordering-fal-adaptation} to compute an additive $\beta$-ordering $(s_1, \dots, s_n)$ for $\beta=\frac{\orderingerror}{4}$. This ordering is also an $\orderingerror$-ordering. Indeed, for $i<j$,
\begin{align*}
    {w_{s_i}\over w_{s_i} + w_{s_j}} \leq {1\over 2} + \beta & \iff {w_{s_i}} \leq \left({1\over 2} + \beta\right)\left(w_{s_i} + w_{s_j}\right) \iff  \left({1-2\beta\over 1+2\beta}\right){w_{s_i}} \leq w_{s_j}\\
    &\implies (1-4\beta) w_{s_i} \leq w_{s_j} \iff (1-\orderingerror)w_{s_i} \leq w_{s_j},
\end{align*}
where we used $1-4\beta \leq {1-2\beta\over 1+2\beta}$ for $\beta\in [0,1]$.
\end{proof}

\section{Missing Proofs for \Cref{sec:non-adaptive-algorithm}}\label{sec:appendix-non-adaptive}

\QuicksortClusterGraph*
\begin{proof}
We describe the algorithm in simple steps. Starting from $S=[n]$, pick a uniform at random pivot $c\in S$. Compare the pivot $c$ with all other items $s\in S\setminus \{c\}$ by calling $\AlgEstimateRatio(c, s, \alpha, \epsilon, \frac{\delta}{n^2})$ and let $r(c,s)$ be the result of the comparison. Define three sets: $C=\{s\in S\setminus \{c\} \mid r(c,s)\notin\{0,\infty\}\}$, $R=\{s\in S\setminus\{c\} \mid r(c,s)=0\}$ and $L = S \setminus(C \cup R)$. The set $C\cup \{c\}$ becomes a new cluster with center $c$. The algorithm then recurs on $R$ and places the resulting clusters after $C\cup \{c\}$ and finally it recurs on $L$ and places the resulting clusters before $C\cup \{c\}$. 

Observe that, for each pair, the algorithm calls \AlgEstimateRatio{} at most once. Therefore, each pair is compared at most $O(\frac{\log(n/\delta)}{\alpha\epsilon^2})$ times. Moreover the total number of calls to \AlgEstimateRatio{} is at most $O(|S|^2)$. Thus, by a union bound, all these calls are successful with probability at least $1-\delta$. We prove that this algorithm is correct conditioning on this event.

Let $(C_1, \dots, C_T)$ and $(c_1, \dots, c_T)$ be the resulting ordered clustering. Consider any $c_i$ and $s\in C_i \setminus \{c_i\}$. Since $r(c_i, s)\notin\{0,\infty\}$, we have by \Cref{lem:weight-pair-ratio-estimate} that $r(c_i, s)\in (1\pm\epsilon)\frac{w_{c_i}}{w_s}$ and $1/r(c_i,s) \in (1\pm \epsilon)\frac{w_s}{w_{c_i}}$. Moreover, $\frac{w_{c_i}}{w_s} \in (\frac{\alpha}{3\alpha+4}, \frac{3\alpha+4}{\alpha})$, and thus, $\frac{w_{c_i}}{w_s} \in (\frac{\alpha}{7}, \frac{7}{\alpha})$. 

Consider now a center $c$. Consider any $v\in L$, since $r(c, v)=\infty$, by \Cref{lem:weight-pair-ratio-estimate} it must be the case that $\frac{w_c}{w_v}\geq \frac{1}{\alpha}$. Similarly, for any $v\in R$, since $r(c,v)=0$, by \Cref{lem:weight-pair-ratio-estimate} it must be true that $\frac{w_c}{w_v}\leq \alpha$. Now observe that for centers $c_i$, $c_{i+1}$, it must either be that, during the execution of the algorithm, $c_i$ was a pivot and $c_{i+1}$ was in $R$ or $c_{i+1}$ was a pivot and $c_i$ was in $L$. Thus, in either case, $\frac{w_{c_{i+1}}}{w_{c_{i}}} \geq \frac{1}{\alpha}$. 
\end{proof}

\section{Missing Proofs for Section~\ref{sec:lower-bounds}: Extension to Pseudo-MNLs}\label{sec:extension-to-pseudo-mnl}
We have observed that one of the issues encountered when learning MNLs is that the ratio of the weights of items might approach $\infty$. In this section, we show by a straight-forward limiting argument that any algorithm that approximately learns MNLs is also approximately learning objects that are not exactly MNLs, but behave exactly as if some of its weights were infinitely larger than others. This happens because these objects arise as the limits of sequences of MNLs.

In the rest of the section, we denote by $d(\cdot,\cdot)$ a distance metric. This can be taken to be $d_1(\cdot,\cdot)$, or $d_\infty(\cdot,\cdot)$, and the results will apply in either case. We define a \emph{subset distribution family} supported on $[n]$ as a collection of distributions $H = \{\nu_S\}_{S\in 2^{[n]}\setminus \{\varnothing\}}$, where each $\nu_S$ is a probability distribution over $S$ and $2^{[n]}$ is the power set of $[n]$. A $\maxsample$ oracle for a subset distribution family is defined as the oracle that on input $S\subseteq [n]$ with $S\neq \varnothing$ returns a sample from the distribution $\nu_S$. Let $\mathcal{F}([n])$ be the collection of subset distribution families supported on $[n]$. Note that $d$ is a metric on $\mathcal{F}([n])$. Let $\mathcal{M}([n]) \subseteq \mathcal{F}([n])$ be the set of MNLs supported on $[n]$ (where we identify an MNL with the collection of distributions it induces on the slates).

\begin{definition}[Pseudo-MNL]
    A pseudo-MNL $\overline{M}$ supported on $[n]$ is a limit point of $\mathcal{M}([n])$ in $\mathcal{F}([n])$ with respect to $d$.
\end{definition}
That is, a pseudo-MNL is a subset distribution family supported on $[n]$ that is the limit of a sequence of MNLs supported on $[n]$. A direct consequence of this definition is that every MNL is also a pseudo-MNL. From this observation it is clear that the task of learning pseudo-MNLs is no easier than that of learning MNLs, but as it turns out, it is no harder either. In fact, the key result we show is the following: any algorithm that solves the MNL learning problem, must also solve the problem of learning pseudo-MNLs.
\begin{theorem}\label{thm:pseudo-mnl-approximation}
    For any $n \in\mathbb{N}$, $\delta\in(0,1)$, and $\varepsilon\in(0,1)$, let $\mathcal{A}$ be an algorithm that given access to a $\maxsample$ oracle for any MNL $M$ supported on $[n]$, makes at most $m(n,\varepsilon,\delta)$ queries and then returns an MNL $\hat{M}$ such that:
    \[
        \Pr[d(M,\hat{M})\leq \varepsilon] \ge 1-\delta.
    \]
    Then, the same algorithm $\mathcal{A}$, when given access to a $\maxsample$ oracle for a pseudo-MNL $\overline{M}$ supported on $[n]$ makes at most $m(n,\varepsilon,\delta)$ queries and returns an MNL $\hat{M}$ such that:
    \[
        \Pr[d(\overline{M},\hat{M})\leq \varepsilon] \ge 1-\delta.
    \]
\end{theorem}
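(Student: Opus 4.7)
The plan is to exploit the very definition of a pseudo-MNL: there exists a sequence of MNLs $(M_k)_{k\geq 1}$ with $\eta_k := d(M_k,\overline{M}) \to 0$. I would argue that $\mathcal{A}$'s guarantee on each $M_k$ transfers, in the limit, to a guarantee on $\overline{M}$, because no algorithm making only finitely many queries can distinguish $\overline{M}$ from an MNL that is arbitrarily close to it in either $d_1$ or $d_\infty$.

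First I would show that the output distribution of $\mathcal{A}$ depends continuously on the oracle. Fix the internal random tape $r$ of $\mathcal{A}$. Conditioned on $r$, one execution of $\mathcal{A}$ produces a transcript $(S_1,w_1),\ldots,(S_{m'},w_{m'})$ with $m' \leq m := m(n,\varepsilon,\delta)$, taking values in a finite set, and where each $S_i$ is a deterministic function of $r$ and of $w_1,\ldots,w_{i-1}$. The probability of observing a given transcript under $M_k$ is $\prod_i (M_k)_{S_i}(w_i)$, and similarly under $\overline{M}$. Since $d_1$- or $d_\infty$-convergence on a distribution supported in $[n]$ implies pointwise convergence of every entry, each factor in the first product converges to the corresponding one in the second, so the transcript distribution converges in total variation. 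The output $\hat{M}$ is a deterministic function of $r$ and the transcript, so averaging over $r$ yields $\|Q_k - Q_\infty\|_{\mathrm{TV}} \to 0$, where $Q_k$ and $Q_\infty$ denote the laws of $\hat{M}$ when $\mathcal{A}$ is run on $M_k$ and on $\overline{M}$, respectively.

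Next I would combine this with the hypothesized guarantee $Q_k(\{d(M_k,\hat{M}) \leq \varepsilon\}) \geq 1-\delta$. The triangle inequality gives the inclusion $\{d(M_k,\hat{M}) \leq \varepsilon\} \subseteq \{d(\overline{M},\hat{M}) \leq \varepsilon + \eta_k\}$, so $Q_k(\{d(\overline{M},\hat{M}) \leq \varepsilon + \eta_k\}) \geq 1-\delta$. Combining with the total-variation bound, for any fixed $\eta > 0$ and all $k$ large enough that $\eta_k \leq \eta$ and $\|Q_k - Q_\infty\|_{\mathrm{TV}} \leq \eta$, we obtain $Q_\infty(\{d(\overline{M},\hat{M}) \leq \varepsilon + \eta\}) \geq 1 - \delta - \eta$. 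Letting $k$ vary while keeping $\eta$ fixed removes the dependence on $k$, and letting $\eta \to 0$ through this bound alone would leave a residual $\eta$ inside the event; hence a separate limiting step is needed.

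To remove this residual I would invoke continuity of measure from above. Pick a sequence $\eta_j \downarrow 0$: by the previous paragraph (applied with $\eta = \eta_j/2$, say), we have $Q_\infty(\{d(\overline{M},\hat{M}) \leq \varepsilon + \eta_j\}) \geq 1-\delta$ for every $j$. Since $d(\overline{M},\cdot)$ is a continuous function on the (finite-dimensional) space of subset distribution families, the closed sublevel sets $A_j := \{d(\overline{M},\hat{M}) \leq \varepsilon + \eta_j\}$ are decreasing with $\bigcap_j A_j = \{d(\overline{M},\hat{M}) \leq \varepsilon\}$, and continuity of measure yields $Q_\infty(\{d(\overline{M},\hat{M}) \leq \varepsilon\}) = \lim_j Q_\infty(A_j) \geq 1-\delta$. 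The only non-routine step is precisely this final one: the event of interest is closed rather than open, so one cannot simply take a pointwise limit of indicators; the rest of the argument is a direct consequence of the facts that $\mathcal{A}$ makes only finitely many queries and that convergence in $d_1$ or $d_\infty$ on finitely supported distributions implies convergence in total variation.
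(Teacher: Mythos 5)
Your proposal is correct and follows essentially the same route as the paper: approximate $\overline{M}$ by a sequence of MNLs, compare the (finite) transcript distributions to transfer the success guarantee to a slightly enlarged radius $\varepsilon+\eta$, and then remove the slack by continuity of measure over the decreasing closed sublevel sets (the paper's $\varepsilon^{(i)}=\varepsilon+1/i$ step), with your total-variation convergence of transcript laws playing the role of the paper's explicit $2^m\cdot d(M^{(i)},\overline{M})$ bound. One small quantifier fix: instantiating the intermediate bound at $\eta=\eta_j/2$ only yields $Q_\infty(\{d(\overline{M},\hat{M})\leq \varepsilon+\eta_j\})\geq 1-\delta-\eta_j/2$; to get the clean $1-\delta$ you should hold the event $\{d(\overline{M},\hat{M})\leq \varepsilon+\eta_j\}$ fixed and let $\eta\to 0$ (equivalently, let $k\to\infty$ as the paper does) before invoking continuity of measure.
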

\begin{proof}
    By definition of pseudo-MNL, there exists some sequence of MNLs $\{M^{(i)}\}_{i\in \N}$ such that:
    \begin{equation}\label{eq:MNLs-tend-to-pseudo-MNL}
        \lim_{i\to \infty} d(M^{(i)}, \overline{M}) = 0. 
    \end{equation}
    (Recall here, that $d$ is either $d_1$ or $d_\infty$). Without loss of generality, we assume that $\mathcal{A}$ is fully adaptive, that is, it queries a single subset $S_j$ and receives a sample $a_j \in S_j$ before choosing its next query. An analogous argument shows that the result holds for algorithms that make batches of queries before obtaining answers to all the queries made in the batch (e.g.,\ non-adaptive algorithms).
    
    Let $\hat{M}$ be the random variable representing the MNL output by the algorithm $\mathcal{A}$. For any choice of $i$ we consider the probability measure $\mathbb{P}_i$ assigning probabilities on events in the experiment in which $\mathcal{A}$ is run with access to a $\maxsample$ oracle for $M^{(i)}$. We also consider the probability measure $\mathbb{P}_\infty$, which assigns probabilities to events based on the experiment in which $\mathcal{A}$ is run with access to a $\maxsample$ oracle for $\overline{M}$.

    Let $T = \{(S_j,a_j)\}_{j\in [m]}$ be the the sequence of query-response pairs produced by the interaction between the algorithm and the oracle (i.e.,\ $T$ is the \emph{transcript}), where for brevity we use $m=m(n,\epsilon,\delta)$. Note that $T$ is random. Moreover, the number of possible transcripts is upper bounded by $f(n,m) = (2^{n-1}\cdot n)^m$, and since both $n$ and $m$ do not change with $i$, the cardinality of the set of possible transcripts is at most a constant with respect to $i$. Let:
    \[
        \mathcal{T}_\infty = \{\tau \text{ is a transcript} \mid \mathbb{P}_\infty[T=\tau] > 0\}.
    \]
    Observe that, since for each slate $S$ and $s\in S$, $M^{(i)}_S(s)>0$ for each $i$, we have that for any $\tau\in \mathcal{T}_\infty$ it holds that $\mathbb{P}_i[T=\tau]>0$. Note that, since any transcript possible under $\overline{M}$ is also possible with any MNL $M^{(i)}$ and the algorithm has a worst-case complexity of $m$ queries for MNLs, it must also make at most $m$ queries when it interacts with $\overline{M}$. 

    Let $\mathcal{S}=\{(S,a) \mid \overline{M}_S(a)>0\}$ and let $C=\min_{(S,a)\in \mathcal{S}} \{\overline{M}_S(a)\}$. Note that $C>0$ and it might possibly depend on $n$. By \Cref{eq:MNLs-tend-to-pseudo-MNL}, there exists an $N_0$ such that for each $i>N_0$, $d(M^{(i)}, \overline{M})\leq \frac{C}{2}$. Therefore, for each $(S,a)\in \mathcal{S}$ and $i>N_0$:
    \begin{equation}\label{eq:mnl-term-greater-than-distance}
        M_S^{(i)}(a) \geq \overline{M}_S(a) - d(M^{(i)}, \overline{M}) \geq \frac{C}{2} \geq d(M^{(i)}, \overline{M}).
    \end{equation}

    We will soon be interested in referring to specific parts of a transcript. To this end, we denote by $T_{\ell:k}$ the pairs $\{(S_j,a_j)\}_{j=\ell}^k$, by $T_{j}^q = S_j$ (the $j$th query in $T$) and by $T_{j}^a = a_j$ the $j$th answer or response in $T$. Fix a transcript $\tau = \{(S_j, a_j)\}_{j\in [m]} \in \mathcal{T}_\infty$. We have, for each $i>N_0$:  

    \begin{align*}
        \mathbb{P}_\infty[ T=\tau] &= \prod_{j\in[m]}\mathbb{P}_\infty[T_j^a = a_j \mid T_{1:j-1} = \tau_{1:j-1} \land T_j^q=S_j]\cdot \mathbb{P}_\infty[T_j^q = S_j \mid T_{1:j-1} = \tau_{1:j-1}]\\
        &=\prod_{j\in[m]}\overline{M}_{S_j}(a_j)\cdot \mathbb{P}_i[T_j^q = S_j \mid T_{1:j-1} = \tau_{1:j-1}]\\
        &\geq \prod_{j\in[m]} \left(M^{(i)}_{S_j}(a_j) - d(M^{(i)}, \overline{M})\right)\cdot \mathbb{P}_i[T_j^q = S_j \mid T_{1:j-1} = \tau_{1:j-1}]\\
        &= \prod_{j\in[m]} \left(\mathbb{P}_i[T_j^a = a_j \mid T_{1:j-1} = \tau_{1:j-1} \land T_j^q=S_j] - d(M^{(i)}, \overline{M})\right)\cdot \mathbb{P}_i[T_j^q = S_j \mid T_{1:j-1} = \tau_{1:j-1}]\\
        &\geq\mathbb{P}_i[ T=\tau] - 2^m \cdot d(M^{(i)},\overline{M}),
    \end{align*}
    where the first inequality is valid because by \Cref{eq:mnl-term-greater-than-distance} $M_{S_j}^{(i)}(a_j) - d(M^{(i)},\overline{M})\geq 0$. Consider now:
    \[
        \mathcal{T}^{(i)} := \{\tau \text{ is a transcript}\mid  \mathbb{P}_\infty[T=\tau] = 0,  \mathbb{P}_i[T=\tau] > 0\}.
    \]
    For any $\tau\in \mathcal{T}^{(i)}$, since $\mathbb{P}_\infty[T=\tau]=0$ but $\mathbb{P}_i[T=\tau]>0$, there must exists $(S_j,a_j)\in \tau$ such that $\overline{M}_{S_j}(a_j)=0$. But then, $\mathbb{P}_i[T=\tau] \leq M^{(i)}_{S_j}(a_j) \leq d(M^{(i)}, \overline{M})$. Therefore, for any $\tau \in \mathcal{T}_\infty \cup \mathcal{T}^{(i)}$, we have, for each $i>N_0$:
    \begin{equation}\label{eq:transcripts-have-similar-probability}
       \mathbb{P}_\infty[T=\tau] \geq \mathbb{P}_i[ T=\tau] - 2^m \cdot d(M^{(i)},\overline{M}).
    \end{equation}

    Fix some $\varepsilon_1 > \varepsilon$. Since $\lim_{i\to\infty} d({M}^{(i)}, \overline{M}) =0$ there exists some $N_1$ such that for all $i > N_1$, we have $d({M}^{(i)}, \overline{M}) <\varepsilon_1 - \varepsilon$. 
    And hence, for all $i > \max\{N_0, N_1\}$:
    \begin{align*}
        \mathbb{P}_\infty\left[d(\hat{M}, \overline{M})\leq \varepsilon_1 \right] &\ge \mathbb{P}_\infty\left[d(\hat{M}, M^{(i)}) + d({M}^{(i)}, \overline{M})\leq \varepsilon_1 \right]\\
        &=\mathbb{P}_\infty\left[d(\hat{M}, M^{(i)}) \leq\varepsilon_1- d({M}^{(i)}, \overline{M}) \right]\\
        &\ge \mathbb{P}_\infty\left[d(\hat{M}, M^{(i)}) \leq\varepsilon\right]\\
        &= \sum_{\tau \in \mathcal{T}_\infty}\mathbb{P}_\infty\left[d(\hat{M}, M^{(i)}) \leq \epsilon \,\middle|\, T= \tau\right] \mathbb{P}_\infty\left[T=\tau\right]\\
        &= \sum_{\tau \in \mathcal{T}_\infty}\mathbb{P}_i\left[d(\hat{M}, M^{(i)}) \leq \epsilon \,\middle|\, T= \tau\right] \mathbb{P}_\infty\left[T=\tau\right]\\
        & = \sum_{\tau \in \mathcal{T}_\infty \cup \mathcal{T}^{(i)}}\mathbb{P}_i\left[d(\hat{M}, M^{(i)}) \leq \epsilon \,\middle|\, T= \tau\right] \mathbb{P}_\infty\left[T=\tau\right]\\
        &\overset{\eqref{eq:transcripts-have-similar-probability}}{\geq} \sum_{\tau \in \mathcal{T}_\infty \cup \mathcal{T}^{(i)}}\mathbb{P}_i\left[d(\hat{M}, M^{(i)}) \leq \epsilon \,\middle|\, T= \tau\right] \left(\mathbb{P}_i\left[T=\tau\right] - 2^m \cdot d(M^{(i)},\overline{M})\right)\\
        & \geq \mathbb{P}_i\left[ d(\hat{M}, M^{(i)})\leq \varepsilon\right] - |\mathcal{T}_\infty \cup \mathcal{T}^{(i)}| \cdot 2^m \cdot d(\hat{M}, M^{(i)}) \\
        &\ge 1-\delta -  |\mathcal{T}_\infty \cup \mathcal{T}^{(i)}| \cdot 2^m \cdot d(M^{(i)},\overline{M}).
    \end{align*}
    The above holds for all choices of $i>\max\{N_0,N_1\}$. Note that $|\mathcal{T}_\infty \cup \mathcal{T}^{(i)}|$ can be upper bounded by the total number of valid transcripts which is upper bounded by $f(n,m)$ and does not depend on $i$. Then, by taking the limit $i\to \infty$, and by \eqref{eq:MNLs-tend-to-pseudo-MNL} and the fact that $m$ does not depend on $i$ we have:
    \[
        \mathbb{P}_\infty\left[d(\hat{M}, \overline{M})\leq \varepsilon_1 \right] \ge 1-\delta.
    \]
    This holds for all $\varepsilon_1 > \varepsilon$. Define the sequence $\{\varepsilon^{(i)}\}_{i\in\mathbb{N}}$, by:
    \[
        \varepsilon^{(i)} := \varepsilon+{1\over i},
    \]
    and for every $i\in \mathbb{N}$,  let $\mathcal{E}_i$ be the event that $d(\hat{M},\overline{M})\in(\varepsilon,\varepsilon^{(i)}]$. By the union bound, we have, for all $i\in \mathbb{N}$,
    \[
    \mathbb{P}_\infty[d(\hat{M},\overline{M}) \leq \varepsilon] \ge  \mathbb{P}_\infty[d(\hat{M},\overline{M})\leq \varepsilon^{(i)}] - \mathbb{P}_\infty[\mathcal{E}_i] \ge 1-\delta -\mathbb{P}_\infty[\mathcal{E}_i],
    \]
    where the second step follows from the derivation above. Note that $\forall i \in \N$, $\mathcal{E}_{i+1} \subseteq \mathcal{E}_i$, and that $\bigcap_{i\in\mathbb{N}}\mathcal{E}_i = \varnothing$. Hence, by taking the limit we obtain:
    \begin{align*}
        \mathbb{P}_\infty[d(\hat{M},\overline{M}) \leq \varepsilon] &\geq 1-\delta - \lim_{i \to \infty} \mathbb{P}_\infty[\mathcal{E}_i] = 1-\delta- \mathbb{P}_\infty[\varnothing] = 1-\delta. \qedhere
    \end{align*}
\end{proof}

Pseudo-MNLs have a very specific structure: they can be partitioned into an ordered sequence of disjoint MNLs so that the winner is always an item of the first MNL that intersects the queried slate, and the probability of winning among the items with this property is proportional to their weight in their respective MNL.

For concreteness, we now outline an example of pseudo-MNL. In \Cref{sec:lower-bounds} we introduced the following definition.

\MatchingPseudoMNL*

Where we defined: 
\[
\maxindex{S,\pi} := \max_{i:P_i^\pi\cap S\neq \varnothing} i
\]
as the index of the highest pair $P^{\pi}_1 , \dots , P^{\pi}_{n/2}$ that intersects with the slate $S$.

In \Cref{sec:lower-bounds}, we make use of \Cref{prop:learning-matching-psuedo-mnl} which states that that any algorithm that can learn MNLs, must also learn the matching pseudo-MNLs of \Cref{def:matching-pseudo-mnl}. We now show that matching pseudo-MNLs are indeed pseudo-MNLs. Therefore, \Cref{prop:learning-matching-psuedo-mnl} is an immediate corollary of  \Cref{thm:pseudo-mnl-approximation} and of \Cref{lem:matching-pseduo-MNLs-are-pseudo-MNLs} below.

\begin{lemma}\label{lem:matching-pseduo-MNLs-are-pseudo-MNLs}
    For any even $n\in \mathbb{N}$, $\vec{p} \in [0,1]^{n/2}$ and $\pi \in Sym(n)$, $\overline{M}(n,\vec{p}, \pi)$ is a pseudo-MNL.
\end{lemma}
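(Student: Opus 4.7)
The plan is to exhibit an explicit sequence of MNLs $\{M^{(k)}\}_{k\ge 2}$ supported on $[n]$ whose induced slate distributions converge (in either $d_1$ or $d_\infty$) to those of $\overline{M}(n,\vec{p},\pi)$. The construction must reflect two features: higher-indexed pairs must dominate lower-indexed ones in the limit, and within each pair $i$ the conditional split must tend to $\vec p_i$ even when $\vec p_i\in\{0,1\}$.

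Concretely, for each $k\ge 2$ and each $i\in[n/2]$, set
\[
  w^{(k)}_{\pi(2i-1)} \;=\; k^{2i}\cdot \tilde a_i^{(k)},
  \qquad
  w^{(k)}_{\pi(2i)} \;=\; k^{2i}\cdot \tilde b_i^{(k)},
\]
where $(\tilde a_i^{(k)},\tilde b_i^{(k)})=(1-\vec p_i,\vec p_i)$ when $\vec p_i\in(0,1)$, $(\tilde a_i^{(k)},\tilde b_i^{(k)})=(1,1/k)$ when $\vec p_i=0$, and $(\tilde a_i^{(k)},\tilde b_i^{(k)})=(1/k,1)$ when $\vec p_i=1$. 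All weights are strictly positive, so $M^{(k)}$ is a genuine MNL. I would then verify that $d(M^{(k)},\overline{M}(n,\vec p,\pi))\to 0$ as $k\to\infty$ by checking convergence on each of the finitely many slates (since both $d_1$ and $d_\infty$ are maxima over the $2^n-1$ non-empty slates, pointwise convergence on slates implies convergence in the metric).

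Fix a slate $S$ and let $i^\star=\maxindex{\pi,S}$. The combined weight of items of pair $P^\pi_{i^\star}$ lying in $S$ is $\Theta(k^{2i^\star})$, while the total weight in $S$ coming from strictly lower pairs is at most $n\cdot O(k^{2i^\star-2})=O(k^{2i^\star-2})$. Hence for every $s\in S$ with $\gamma(s)<i^\star$ we have $M^{(k)}_S(s)=O(1/k^2)\to 0$, matching $\overline{M}_S(s)=0$. Conditional on the winner being in $P^\pi_{i^\star}\cap S$, the distribution over the (at most two) items of that set tends to the desired one: if both items of $P^\pi_{i^\star}$ are in $S$, the probability of $\pi(2i^\star)$ equals $\tilde b_{i^\star}^{(k)}/(\tilde a_{i^\star}^{(k)}+\tilde b_{i^\star}^{(k)})$, which equals $\vec p_{i^\star}$ in the generic case and converges to $\vec p_{i^\star}\in\{0,1\}$ in the boundary cases by construction; if only one item is in $S$, it wins with probability $1-o(1)$, as required.

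The only subtle step is handling the boundary cases $\vec p_i\in\{0,1\}$, which is exactly what the auxiliary $1/k$ factor inside the pair is designed to address: without it the weights would be zero, breaking the MNL definition, while with it the within-pair ratio still converges to the right limit and the $1/k$ factor is never large enough to affect the cross-pair scaling argument (the pair-to-pair gap is $\Theta(k^2)$, which dominates the within-pair $\Theta(k)$ perturbation). Combining the per-slate convergence with the finiteness of the slate family yields $\lim_{k\to\infty}d(M^{(k)},\overline{M}(n,\vec p,\pi))=0$, so $\overline{M}(n,\vec p,\pi)$ is a limit point of $\mathcal{M}([n])$ and hence a pseudo-MNL.
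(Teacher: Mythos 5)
Your proof is correct and takes essentially the same route as the paper's: both construct an explicit sequence of MNLs with geometrically separated pairs, within-pair weight ratios matching $\vec{p}_i$ exactly in the generic case and perturbed by a factor $1/k$ in the boundary cases $\vec{p}_i\in\{0,1\}$, and then verify convergence slate by slate using the finiteness of the slate family. The only differences are cosmetic parametrizations (absolute scalings $k^{2i}$ with a $\Theta(k^2)$ inter-pair gap versus the paper's recursively defined weights with a $j^3$ gap), and your observation that the inter-pair gap dominates the within-pair $\Theta(k)$ perturbation is exactly the point the paper's argument also relies on.
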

\begin{proof}
    For simplicity we denote by $\overline{M}:= \overline{M}(n,\vec{p}, \pi)$ We show that there is a sequence $\{M^{(j)}\}_{j\in \mathbb{N}}$ of MNLs with the property that for all $\varepsilon > 0$ there exists some $j_\epsilon \in \mathbb{N}$ such that for all $j\geq j_\epsilon$ it holds that $d_\infty(M^{(j)},\overline{M}) \leq \varepsilon$. Note that this in turns implies that the same is true if $d_\infty$ is replaced by $d_1$, by simply picking a value of $\varepsilon$ that is $n$ times smaller.

    For each $j$ we define the MNL $M^{(j)}$ as the MNL induced by the weights $w^{(j)}_1, \dots , w^{(j)}_n$, defined in the following iterative way:
    \[
        w^{(j)}_{\pi(1)} = 1,
    \]
    and, for every odd $k \in [n]$, $k\geq 3$:     \[
        w^{(j)}_{\pi(k)} = j^3 \cdot w^{(j)}_{\pi(k-1)}
    \]
    and for every even $k \in [n]$:
    \begin{equation}\label{eq:even-weights}
         w^{(j)}_{\pi(k)} = \begin{cases}
         w^{(j)}_{\pi(k-1)} \cdot {\vec{p}_{k/2} \over 1- \vec{p}_{k/2}}&\text{if }\vec{p}_{k/2}\not\in \{0,1\},\\
         {w^{(j)}_{\pi(k-1)} \over j} &\text{if }\vec{p}_{k/2}=0,\\
         {w^{(j)}_{\pi(k-1)} \cdot j} &\text{if }\vec{p}_{k/2}=1.
         \end{cases}
    \end{equation}

    We now show that, for every choice of $\varepsilon >0$ for all sufficiently large $j$, we have that for every slate $S$:
    \[
        \norm{M^{(j)}_S-\overline{M}_S}_\infty \leq \varepsilon.
    \]
    In particular, given $\varepsilon$, we choose $j\in \mathbb{N}$ such that: %
    \begin{equation}\label{eq:j-is-sufficiently-large}
        j \geq \frac{2n}{\epsilon}. %
    \end{equation}

    \noindent
    Fix a slate $S$, and let $i^* = \maxindex{S,\pi}$. By construction, we have that for any $v \in P_{i^*}^\pi$, and $u \in S\setminus P_{i^*}^\pi$:
    \begin{equation}\label{eq:large-factor-ratio}
        {w^{(j)}_{v}\over w^{(j)}_u} \geq {j}.
    \end{equation}

    \noindent
    Suppose $|S \cap P_{i^*}^\pi| = 1$. Then the distribution of winners over $S$ for $\overline{M}(n,\vec{p},\pi)$ is given by:
    \[
        \forall u\in S: \quad \overline{M}_{S}(u) = \begin{cases}
            1 &\text{if }u\in  S \cap P_{i^*}^\pi,\\
            0 & \text{otherwise.}
        \end{cases}
    \]
    In this case, if $u \in S \cap P_{i^*}^\pi$:
    \[
        1 \geq M^{(j)}_{S}(u) = {w^{(j)}_u \over \sum_{\ell \in S} w^{(j)}_\ell } ={1 \over 1+ \sum_{\ell \in S\setminus \{u\}} {w^{(j)}_\ell\over w^{(j)}_u} } = {1\over  1+ \sum_{\ell \in S\setminus P_{i^*}^\pi} {w^{(j)}_\ell\over w^{(j)}_u} } \overset{\eqref{eq:large-factor-ratio}}{\geq} {j \over j+ n} \overset{\eqref{eq:j-is-sufficiently-large}}{\geq} 1-\varepsilon,
    \]
    giving:
    \[
        \left|M^{(j)}_{S}(u) - \overline{M}_{S}(u)\right| \leq \varepsilon,
    \]
    while if $u \in S \setminus P_{i^*}^\pi$, let $v$ be the unique item in $S \cap P_{i^*}^\pi$, we have:
    \begin{equation}\label{eq:match-pseudo-mnl-not-in-first-pair}
        0 \le M^{(j)}_{S}(u) = {w^{(j)}_u \over \sum_{\ell \in S} w^{(j)}_\ell } \le {w^{(j)}_u \over w^{(j)}_v } \overset{\eqref{eq:large-factor-ratio}}{\leq} {1\over j} \overset{\eqref{eq:j-is-sufficiently-large}}{\leq} {\varepsilon},
    \end{equation}
    and:
    \[
        |M^{(j)}_{S}(u) - \overline{M}_{S}(u)| \leq \varepsilon,
    \]
    giving:
    \[
        \norm{M^{(j)}_{S} - \overline{M}_{S}}_\infty \leq \varepsilon,
    \]
    as needed.

    Suppose now that $|S \cap P_{i^*}^\pi| = 2$. In this case, the distribution of winners over $S$ for $\overline{M}(n,\vec{p},\pi)$ is given by:
    \[
        \overline{M}_{S}(u) = \begin{cases}
            0 &\text{if } i \not\in P_{i^*}^\pi,\\
            \vec{p}_{i^*} &\text{if } u \in P_{i^*}^\pi \text{ and }\pi^{-1}(u)\text{ is even,}\\
            1-\vec{p}_{i^*} &\text{if } u \in P_{i^*}^\pi \text{ and }\pi^{-1}(u)\text{ is odd,}\\
        \end{cases}
    \]
    for every $u \in S$.

    We now show that for all $u \in S$, $|M^{(j)}_{S}(u) - {\overline{M}}_{S}(u)| \leq {\varepsilon}$. We divide the proof into three cases: (Case 1:) $u\not\in P_{i^*}^\pi$, (Case 2:) $u\in P_{i^*}^\pi$ and $\pi^{-1}(u)$ is even, and (Case 3:) $u\in P_{i^*}^\pi$ and $\pi^{-1}(u)$ is odd.

    \begin{description}
        \item[Case 1.] If ${u \not\in P_{i^*}^\pi}$, then, by \Cref{eq:match-pseudo-mnl-not-in-first-pair}, $|M^{(j)}_{S}(u) - {\overline{M}}_{S}(u)| = M^{(j)}_{S}(u)  \leq \varepsilon$.        
        \item[Case 2.] If $u \in P_{i^*}^\pi$ and $\pi^{-1}(u)$ is even, let $v$ be the unique item in $P_{i^*}^\pi$ such that $v\neq u$. If $\vec{p}_{i^*} = 0$, we have $\overline{M}_S(u)=0$, while:
        \[
            0\le {M}^{(j)}_S(u) = {w^{(j)}_u \over \sum_{\ell \in S} w^{(j)}_\ell} \leq {w^{(j)}_u \over  w^{(j)}_v} \overset{\eqref{eq:even-weights}}{\leq} {1\over j} \overset{\eqref{eq:j-is-sufficiently-large}}{\leq} \varepsilon
        \]
        and hence:
        \[
            |\overline{M}_S(u) - {M}^{(j)}_S(u)|\le \varepsilon.
        \]

        On the other hand, if $\vec{p}_{i^*} = 1$, we have $\overline{M}_S(u)=1$, and:
        \[
            1 \ge {M}^{(j)}_S(u) =  {w^{(j)}_u \over \sum_{\ell \in S} w^{(j)}_\ell} \overset{\eqref{eq:even-weights},\eqref{eq:large-factor-ratio}}{\ge} {j\over j+n} \overset{\eqref{eq:j-is-sufficiently-large}}{\ge} 1-\varepsilon
        \]
        and hence:
        \[
            |\overline{M}_S(u) - {M}^{(j)}_S(u)|\le \varepsilon.
        \]
        Finally, if $\vec{p}_{i^*} \not\in \{0,1\}$ we have $\overline{M}_S(u)=\vec{p}_{i^*}$ and:
        \[
            {M}^{(j)}_S(u) = {w_{u}^{(j)} \over \sum_{\ell \in S} w^{(j)}_{\ell}} = {w_{u}^{(j)} \over w_{u}^{(j)}+w_{v}^{(j)}}\cdot { w_{u}^{(j)}+w_{v}^{(j)}\over\sum_{\ell \in S} w^{(j)}_{\ell}}.
        \]

        We also have:
        \begin{equation}\label{eq:biggest-items-always-win}
            1\ge  { w_{u}^{(j)}+w_{v}^{(j)}\over\sum_{\ell \in S} w^{(j)}_{\ell}} ={ w_{u}^{(j)}+w_{v}^{(j)}\over w_{u}^{(j)}+w_{v}^{(j)} + \sum_{\ell \in S\setminus P_{i^*}^\pi} w^{(j)}_{\ell}} = { 1\over 1 + {\sum_{\ell \in S\setminus P_{i^*}^\pi} w^{(j)}_{\ell}\over w_{u}^{(j)}+w_{v}^{(j)}}} \overset{\eqref{eq:large-factor-ratio}}{\geq} { j\over j +n}\overset{\eqref{eq:j-is-sufficiently-large}}{\geq} 1-\varepsilon.
        \end{equation}
        
        Hence:
        \begin{equation}\label{eq:approximate-p_i-probability}
            \vec{p}_{i^*} =  {w_{u}^{(j)} \over w_{u}^{(j)}+w_{v}^{(j)}} \ge  {M}^{(j)}_S(u) ={w_{u}^{(j)} \over w_{u}^{(j)}+w_{v}^{(j)}}\cdot { w_{u}^{(j)}+w_{v}^{(j)}\over\sum_{\ell \in S} w^{(j)}_{\ell}} \ge  \vec{p}_{i^*}\cdot \left(1-\varepsilon\right).
        \end{equation}
        Giving:
        \[
             |\overline{M}_S(u) - {M}^{(j)}_S(u)|\le \varepsilon\vec{p}_{i^*} \leq \varepsilon.
        \]
    \item[Case 3.] If $u \in P_{i^*}^\pi$ and $\pi^{-1}(u)$ is odd, let $v$ be the unique item in $P_{i^*}^\pi$ such that $v\neq u$. We have $\overline{M}_S(u) = 1-\vec{p}_{i^*}$. Note that $\pi^{-1}(v)$ is even, thus, by using \eqref{eq:biggest-items-always-win} and \eqref{eq:approximate-p_i-probability} we get:
    \[
        M^{(j)}_{S}(u) = M_S^{(j)}(u) + M_S^{(j)}(v) -M_S^{(j)}(v)= { w_{u}^{(j)}+w_{v}^{(j)}\over\sum_{\ell \in S} w^{(j)}_{\ell}} -M_S^{(j)}(v) \in (1 - \vec{p}_{i^*}) \pm \varepsilon,
    \]
    and hence:
    \[
        |\overline{M}_S(u) - {M}^{(j)}_S(u)|\le \varepsilon.
    \]
    \end{description}

    This then gives $\norm{\overline{M}_S - {M}^{(j)}_S}_\infty \leq \varepsilon$ as needed.
\end{proof}

\section{An Additive Approximation for Pairs is not Sufficient}\label{sec:additive-approximation-is-not-sufficient}
In the next theorem we prove that, for any constant $\epsilon\in(0,1/9)$, if one has an additive $\alpha$-approximation to the distribution for all the slates of size $2$, then, one must have $\alpha \leq \frac{9\epsilon}{n}$ in order for this to guarantee an error of $\epsilon$ on all slates. 

\begin{theorem}
    For any $\varepsilon \in (0,1)$, there exist two families of MNLs $\{M_1^{(n)}\}_{n\in \N}$ and $\{M_2^{(n)}\}_{n\in \N}$ such that, for each $n$:
    \begin{enumerate}[nosep]
        \item $M_1^{(n)}$ and $M_2^{(n)}$ are supported on $[n]$,  
        \item For every pair $u,v\in[n]$:
        \[
            |M_{1,\{u,v\}}^{(n)}(u) - M_{2,\{u,v\}}^{(n)}(u)|\leq \frac{\varepsilon}{n},
        \]
         \item $d_\infty(M_1^{(n)},M_{2}^{(n)})  \geq {\varepsilon\over  9}$.
     \end{enumerate}
\end{theorem}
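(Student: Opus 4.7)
The plan is to produce explicit families of MNLs on $[n]$ that disagree only because of how mass concentrates when the anchor item is diluted by many small competitors. For each $n \geq 2$, I would define
\[
    M_1^{(n)}:\ w_1 = 1,\ w_i = \tfrac{1}{n}\ (i \geq 2) \qquad \text{and} \qquad M_2^{(n)}:\ w_1' = 1,\ w_i' = \tfrac{1+\varepsilon}{n}\ (i \geq 2).
\]
The design principle is that pairs among $\{2,\dots,n\}$ are invisible to the perturbation (both MNLs give probability $1/2$), while the distinction between the two common values $1/n$ and $(1+\varepsilon)/n$ can only be magnified by \emph{aggregating} them in a large slate, where the $n-1$ copies accumulate against the anchor.

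For the pairwise bound, for $i,j \geq 2$ both distributions assign $1/2$, so I only need to check pairs $\{1,i\}$. A direct calculation gives
\[
    M_{1,\{1,i\}}^{(n)}(1) - M_{2,\{1,i\}}^{(n)}(1) \;=\; \frac{n}{n+1} - \frac{n}{n+1+\varepsilon} \;=\; \frac{n\varepsilon}{(n+1)(n+1+\varepsilon)},
\]
and since $(n+1)(n+1+\varepsilon) \geq (n+1)^2 \geq n^2$, this is at most $\varepsilon/n$, establishing the second property.

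For the slate bound I use $S = [n]$, where $M_{1,[n]}^{(n)}(1) = n/(2n-1)$ and $M_{2,[n]}^{(n)}(1) = n/(2n-1+(n-1)\varepsilon)$, so that
\[
    M_{1,[n]}^{(n)}(1) - M_{2,[n]}^{(n)}(1) \;=\; \frac{n(n-1)\varepsilon}{(2n-1)\bigl(2n-1+(n-1)\varepsilon\bigr)}.
\]
Since $\varepsilon \in (0,1)$ implies $(n-1)\varepsilon \leq 2n-1$, the denominator is at most $2(2n-1)^2$, and the difference is at least $n(n-1)\varepsilon/[2(2n-1)^2]$.

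The only thing to verify is the elementary inequality $n(n-1)/(2n-1)^2 \geq 2/9$ for every integer $n \geq 2$, which is the real pinch point of the argument: it holds with equality at $n=2$ and only becomes stronger as $n$ grows (tending to $1/4$). Combined with the previous estimate, this yields $\|M_1^{(n)} - M_2^{(n)}\|_\infty \geq 2\varepsilon/9 \geq \varepsilon/9$ on the slate $[n]$, giving $d_\infty(M_1^{(n)}, M_2^{(n)}) \geq \varepsilon/9$ and completing the construction. The main (minor) obstacle is simply choosing the perturbation $\delta = \varepsilon/n$ finely enough to saturate the pairwise budget while still leaving an $\Omega(1)$ multiplier on the full slate; any smaller perturbation would lose a factor of $n$ in the slate bound, and any larger would violate the pair constraint.
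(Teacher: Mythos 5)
Your construction is the paper's own, up to a global rescaling of the weights by $1/n$ (which leaves every induced distribution unchanged), and you use the same full-slate witness $S=[n]$ with essentially the same algebra, so the proof is correct and matches the paper's approach. The only blemish is the claim of a $2\varepsilon/9$ lower bound: from $\tfrac{n(n-1)\varepsilon}{2(2n-1)^2}$ and $\tfrac{n(n-1)}{(2n-1)^2}\geq \tfrac{2}{9}$ one gets exactly $\varepsilon/9$, not $2\varepsilon/9$, but that is harmless since $\varepsilon/9$ is all that is needed.
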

\begin{proof}
    Consider the following families of MNLs: in $M_1^{(n)}$ there are $n-1$ items of weight $1$ and $1$ item (Item 1) of weight $n$. In $M_2^{(n)}$ there are $n-1$ items of weight $1+\varepsilon$ and one item (Item 1) of weight $n$.

    Condition $(1)$ above is satisfied by construction. We now verify Condition $(2)$.
    For any pair $u,v$ where both $u$ and $v$ are not $1$, we have $M_{1,\{u,v\}}^{(n)}(u) = M_{2,\{u,v\}}^{(n)}(u)$. On the other hand, if one of $u$ and $v$ is equal to $1$, we can assume without loss of generality that $v =1$, we then have:
    \[
          M_{1,\{u,v\}}^{(n)}(u) = {1\over n+1},
    \]
    and:
    \[
         M_{2,\{u,v\}}^{(n)}(u) = {1+\varepsilon \over n+1+\varepsilon} \geq {1\over n+1}.
    \]
    On the other hand:
    \begin{align*}
        M_{2,\{u,v\}}^{(n)}(u) &= {1\over n+1+\varepsilon} + {\varepsilon \over n+1+\varepsilon}\leq {1\over n+1} + {\varepsilon \over n}
    \end{align*}

    and hence:
    \[
        |M_{1,\{u,v\}}^{(n)}(u) - M_{2,\{u,v\}}^{(n)}(u)| \leq {\varepsilon\over n},
    \]
    as needed. Finally, we verify Condition $(3)$. Consider the full slate $[n]$. We have:
    \begin{align*}
        |M_{1,[n]}^{(n)}(1) - M_{2,[n]}^{(n)}(1)| &= M_{1,[n]}^{(n)}(1) - M_{2,[n]}^{(n)}(1) = {n \over 2n-1} - {n \over (2+\varepsilon)n -(1+\varepsilon)}\\
        &={ n \left((2+\varepsilon)n -(1+\varepsilon)\right) - n( 2n-1) \over (2n-1)((2+\varepsilon)n -(1+\varepsilon))} ={\varepsilon n^2 - \varepsilon n  \over (2n-1)((2+\varepsilon)n -(1+\varepsilon))} \\
        & = \varepsilon\cdot { n-1\over 2n-1}\cdot {n \over (2+\varepsilon)n -(1+\varepsilon)} \ge \varepsilon\cdot {1\over 3}\cdot {n \over (2+\varepsilon)n} = \varepsilon\cdot {1\over 3}\cdot {1 \over (2+\varepsilon)} \\
        &\ge \varepsilon\cdot {1\over 3}\cdot {1 \over 3} = {\varepsilon\over 9},
    \end{align*}
where we used the fact that for any $x\ge 2$, ${x-1 \over 2x-1} \geq {1\over 3}$.
\end{proof}

This entails that, if one were to use \citep[Theorem 12]{fjopr18} to approximate the winning distribution of all the slates within $\epsilon$, then, one would need to run their algorithm with $\epsilon' \leq \frac{9\epsilon}{n}$ incurring a cost of $\Omega(\frac{n^4 \log n}{\epsilon^2})$ queries. 

\section{A Non-adaptive Algorithm with $O\left({n2^n\over \varepsilon^2}\right)$ Query Complexity}\label{sec:exponential-time-algorithm-gives-epsilon-2}
In this section, we show that one can learn an MNL with $O(n2^n/\varepsilon^2)$ non-adaptive queries (of arbitrary size). While this algorithm is not practical, it may be evidence that the problem can be solved with $O(n\log n/\varepsilon^2)$ queries, since the dependency on $O(\varepsilon^3)$ can be reduced to $O(\varepsilon^2)$ by having an exponential dependency on $n$ instead. However, note that this algorithm is better than $O(n\log n/\epsilon^3)$ only when $\epsilon< 2^{-n}\log n$.

The algorithm works in two phases. In the first phase, the \emph{sampling phase}, we query every slate $q=O({n\over\varepsilon^2})$ times, and estimate, for every slate $S \subseteq [n]$, the probability of an item $i$ winning in $S$ as the empirical probability of $i$'s victory observed when querying $S$ (Algorithm~\ref{alg:get-estimates-on-all-slates}). This gives rise to a collection of $2^{n}-1$ empirical distributions $\{\hat{D}_S\}_{\varnothing \subset S\subseteq[n]}$. In the second phase, the \emph{interpolation phase}, the algorithm finds an MNL that approximately matches all the distributions $\{\hat{D}_S\}_{\varnothing \subset S\subseteq[n]}$. Note that this can be done by solving a large system of linear inequalities, using linear programming algorithms.

\begin{algorithm}
    \caption{\AlgGetEstimatesOnAllSlates$(M,\varepsilon,\delta)$}\label{alg:get-estimates-on-all-slates}
    \begin{algorithmic}[1]
    \State \textbf{Input:} Access to a $\maxsample$ oracle for an MNL $M$ supported on $[n]$, an accuracy parameter $\varepsilon$, and a confidence parameter $\delta \in (0,1)$.
    \State \textbf{Output:} A collection of distributions $\{\hat{D}_S\}_{\varnothing \subset S\subseteq [n]}$.
    
    \State $q\myassign {2\over \varepsilon^2}\left(n\ln3 +\ln{2\over \delta}\right)$
    \For{All (non-empty) slates $S \subseteq [n]$}
        \State Query the slate $S$, $q$ times
        \State Let $\hat{D}_S$ be the empirical probability distribution of the winners observed
    \EndFor
    \State \textbf{return } $\{\hat{D}_S\}_{\varnothing \subset S\subseteq [n]}$
    \end{algorithmic}
    \end{algorithm}

To show this strategy suffices, we prove the following lemma.

\begin{lemma}[Sampling Phase Yields Good Slate-Wise Approximation]
    Let $\{\hat{D}_S\}_{\varnothing \subset S\subseteq [n]}$ be the output of $\AlgGetEstimatesOnAllSlates(M,\varepsilon,\delta)$ then, with probability at least $1-\delta$, we have that 
    \[
        \forall S \subseteq [n] \text{ s.t. } S \neq \varnothing: \norm{M_S - \hat{D}_S}_1 \leq \varepsilon.
    \]
\end{lemma}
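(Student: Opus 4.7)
The proof plan is to apply a standard $\ell_1$ concentration inequality for empirical distributions slate-by-slate, and then take a union bound over all $2^n-1$ slates. The parameter choice $q = \frac{2}{\varepsilon^2}(n\ln 3 + \ln(2/\delta))$ is exactly what one obtains by balancing the union-bound term $3^n$ against the exponential $e^{-q\varepsilon^2/2}$.

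First, I would fix a non-empty slate $S\subseteq[n]$ and observe that the $q$ samples drawn by the algorithm are i.i.d.\ from the true distribution $M_S$ supported on $|S|$ elements. I would then invoke the Bretagnolle--Huber--Carol inequality, which states that for any distribution $D$ over a set of size $k$ and its empirical version $\hat{D}$ from $q$ i.i.d.\ samples,
\[
\Pr\bigl[\|\hat{D}-D\|_1 \ge \varepsilon\bigr] \;\le\; \bigl(2^{k}-2\bigr)\, e^{-q\varepsilon^2/2}.
\]
Applied to each slate $S$, this yields $\Pr[\|\hat{D}_S - M_S\|_1 \ge \varepsilon] \le 2^{|S|} e^{-q\varepsilon^2/2}$.

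Next, I would take a union bound over all non-empty $S\subseteq[n]$, grouping slates by their size:
\[
\Pr\bigl[\exists\, S\ne\varnothing:\,\|\hat{D}_S - M_S\|_1 \ge \varepsilon\bigr]
\;\le\; \sum_{k=1}^{n}\binom{n}{k}\,2^{k}\, e^{-q\varepsilon^2/2}
\;\le\; 3^{n}\, e^{-q\varepsilon^2/2},
\]
where the last inequality uses the binomial identity $\sum_{k=0}^{n}\binom{n}{k}2^{k}=3^{n}$.

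Finally, substituting $q=\frac{2}{\varepsilon^2}\bigl(n\ln 3+\ln(2/\delta)\bigr)$ into the exponent gives $q\varepsilon^2/2 = n\ln 3 + \ln(2/\delta)$, so the right-hand side collapses to
\[
3^{n}\cdot 3^{-n}\cdot \tfrac{\delta}{2} \;=\; \tfrac{\delta}{2} \;\le\; \delta,
\]
which is the claimed failure probability. There is no real obstacle here; the only subtlety is picking the right concentration inequality for the $\ell_1$-distance (as opposed to, say, applying Hoeffding item-by-item, which would lose a factor of $|S|$ and fail to match the stated query complexity). The Bretagnolle--Huber--Carol bound is tailored for this setting and produces exactly the $3^{n}$ factor that is absorbed by the $n\ln 3$ term in $q$.
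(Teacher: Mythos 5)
Your proof is correct and follows essentially the same route as the paper's: the Bretagnolle--Huber--Carol inequality you invoke is proved exactly by the argument the paper runs inline (write $\|\hat D_S - M_S\|_1 = 2\max_{T\subseteq S}\bigl(\hat D_S(T)-M_S(T)\bigr)$, apply Hoeffding to each subset $T$, and union bound), and your final union bound $\sum_{k}\binom{n}{k}2^{k}=3^{n}$ is the same $3^{n}$ count of pairs $(S,T)$ with $T\subseteq S\subseteq[n]$ that the paper uses. The only difference is that you package the per-slate step as a named citation rather than re-deriving it, and you carry the $\varepsilon$-threshold in $\ell_1$ directly while the paper works at threshold $\varepsilon/2$ per subset and converts at the end; the constants and the choice of $q$ work out identically.
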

\begin{proof}
    For any slate $S$ and any subset $T\subseteq S$, let $X_{S,T}$ be the number of times that an element of $T$ was the winner when the slate $S$ was queried.

    Consider the event $\mathcal{E}_{S,T}$ that:
    \[
        |\hat{D}_S(T) - M_S(T)| > {\varepsilon\over 2},
    \]
    where $M_S(T)$ (resp.\ $\hat{D}_S(T))$ is the probability that an element sampled from $M_S$ (resp.\ $\hat{D}_S$ lies in the set $T$.

    We have, for any choice of $S$ and $T$:
    \begin{align*}
        \Pr[\mathcal{E}_{S,T}] &= \Pr[ |\hat{D}_S(T) - M_S(T)| > {\varepsilon \over 2}]\\
        &=  \Pr\left[ \left|{X_{S,T}\over q} - M_S(T)\right| > {\varepsilon\over 2}\right]\\
        &\leq 2 e^{-{{q \varepsilon^2\over 2}}}\\
        &= {\delta \over 3^n},
    \end{align*}
    where the inequality is a direct application of Hoeffding's bound. Hence, by the union bound, we have:
    \[
        \Pr\left[\bigcup_{\substack{\varnothing  \subset T\subseteq S}}\mathcal{E}_{S,T}\right] \leq \sum_{\substack{\varnothing  \subset T\subseteq S}}\Pr\left[\mathcal{E}_{S,T}\right] \leq 3^n \max_{\varnothing  \subset T\subseteq S} \Pr\left[\mathcal{E}_{S,T}\right]\leq \delta.
    \]

    In particular, with probability at least $1-\delta$ we have:
    \[
        \forall S : \norm{M_S- \hat{D}_S}_1 = 2\cdot \max_{\varnothing  \subset T\subseteq S} |M_S(T)- \hat{D}_S(T)| \leq \varepsilon,
    \]
    as needed.
\end{proof}

In the interpolation phase, the algorithm solves a system of linear inequalities to compute an MNL $\hat{M}$ which approximately induces the distributions $\{\hat{D}_S\}_{\varnothing \subset S\subseteq [n]}$:
\begin{align*}
    \text{Find } \quad & w_1, \dots , w_n \in \mathbb{R}_{>0}^n\\
    \forall S \in 2^{[n]}\setminus \{\varnothing\}, \forall T\subseteq S:\quad & \sum_{i\in S} w_i \left(\hat{D}_S(T) -{\varepsilon\over 2}\right)\leq \sum_{i\in T}w_i \leq \sum_{i\in S} w_i \left(\hat{D}_S(T) +{\varepsilon\over 2}\right)
\end{align*}

This ensures that the MNL $\hat{M}$ with weights $w$ satisfies:
\[
    |\hat{M}_S(T) - \hat{D}_S(T)| = \left|{\sum_{i\in T}w_i \over \sum_{i\in S}w_i} - \hat{D}_S(T)\right| \leq {\varepsilon\over 2},
\]
and hence it gives a good approximation to the MNL the $\{\hat{D}_S\}_{\varnothing \subset S\subseteq [n]}$ were sampled from.

Combining the results, given access to a $\maxsample$ oracle for an MNL $M$, we can obtain the weights of an MNL $\hat{M}$ such that $d_1(M,\hat{M}) \leq 2\cdot \varepsilon$. One can then rescale $\varepsilon$ appropriately to achieve the desired accuracy.
    
\bibliographystyle{plainnat}
\bibliography{main}

\end{document}